\newtheorem{theorem}{ Theorem}[chapter]
\newtheorem{lemma}{ Lemma}[chapter]
\newtheorem{conjecture}{ Conjecture}[chapter]
\newtheorem{corollary}[theorem]{ Corollary}
\newenvironment{proof}%
 {\par\noindent{ \underline{Proof} \quad}}{\hfill$\Box$\bigskip}
\newenvironment{proofth}%
 {\par\noindent{ \underline{Proof} of the theorem\quad}}{\hfill$\Box$\bigskip}
\newenvironment{remark}%
 {\par\smallskip\noindent{ \underline{{\it Remark}} \quad}}{\par\smallskip}
\newenvironment{fact}%
 {\par\smallskip\noindent{ \underline{{\it Fact}} \quad}}{\par\smallskip}
\newenvironment{example}%
 {\par\smallskip\noindent{ \underline{{\it Example}} \quad}}{\par\smallskip}
 {\par\smallskip\noindent{{\it Assumotion} \quad}}{\par\smallskip}
 {\par\smallskip\noindent{{\it Condition} \quad}}{\par\smallskip}
\newcommand{\ds}{\displaystyle}
\newcommand{\ol}{\overline}
\newcommand{\tr}{{\rm tr}}
\newcommand{\Tr}{{\rm Tr}}
\newcommand{\R}{{\bf R}}
\newcommand{\C}{{\bf C}}
\newcommand{\argmin}{{\rm argmin}}
\newcommand{\argmax}{{\rm argmax}}
\newcommand{\rank}{{\rm rank}}
\newcommand{\abs}{{\rm abs}}
\newcommand{\diag}{{\rm diag}}
\newcommand{\CR}{{\rm CR}}
\newcommand{\pte}{\mathop{{\coprod}^{(e)}}}
\newcommand{\ptm}{\mathop{{\coprod}^{(m)}}}
\newcommand{\nablae}{\nabla^{(e)}}
\newcommand{\Te}{T^{(e)}}
\newcommand{\pttw}{\mathop{{\coprod}^{(w)}}}
\newcommand{\nablaw}{\nabla^{(w)}}
\newcommand{\Ttw}{T^{(w)}}
\newcommand{\hlift}{\mathop{h}}
\newcommand{\rgl}{\rangle}
\newcommand{\lgl}{\langle}
\newcommand{\rgll}{\rangle\rangle}
\newcommand{\lgll}{\langle\langle}
\newcommand{\rrgll}{\right\rangle\right\rangle}
\newcommand{\llgll}{\left\langle\left\langle}
\begin{document}

\begin{titlepage}
\vspace*{1.5cm}
\begin{center}
  \Huge A Geometrical Approach to Quantum Estimation Theory\\
\end{center}
\vspace{5cm}
\begin{center}
  \Huge Keiji Matsumoto\\
\end{center}
\end{titlepage}

\pagenumbering{roman}
\tableofcontents
\pagebreak
\pagenumbering{arabic}



\chapter{Introduction}
\section{The purposes of the thesis}
The most important purpose of the thesis is 
pursuit for  the geometrical theory of 
statistical  estimation of the quantum mechanical state. 

In the  statistical theory of the probability distribution,
S. Amari and his coworkers have formulated
a geometrical theory,  so called {\it information geometry},
and successfully applied to various statistical problems \cite{Amari}.
H. Nagaoka, one of Amari's  coworkers, pointed out that
the {\it duality} between $e$- and $m$-connections sits at the heart of
the information geometry.
He also formulated {\it quantum information geometry} by use of
this idea of the mutually dual connections, and applied to
characterization of the model which has the efficient estimator
\cite{Nagaoka:1989:2}\cite{Nagaoka:1994}.

However, his information geometry does not give any insight
into the problem 
of determination of the attainable CR (Cramer-Rao) type bound,
nor characterize the condition for the SLD CR bound,
which is of special interest for some reasons,
is attainable.
After all, Nagaoka's geometry deal with
the global properties of the model,
while the attainable CR type bound is related local properties
of the model.
Hence, it semms that another geometric structure
is needed for the thorough description 
of the quantum estimation theory.

On the other hand, Berry's phase, discovered by M. V. Berry
as the  non-integrable phase factor in the adiabatic motion\cite{Berry},
is naturally understood as a curvature of 
the natural connection in the principle fiber bundle
over the space of pure states whose structure group is $U(1)$ \cite{AA}.
In 1986, Uhlmann generalized the geometry to the space of mixed states.
Though the physical meaning of Uhlmann's geometry
is not known, Berry's phase is 
applied to the explanation of various phenomina\cite{Shapere}. 

The author conjectures
that Berry-Uhlmann's curvature
reflects local properties of the model.
To prove the statement, it is needed to determine
the attainable CR type bound for arbitrary models,
which is far out of our reach.
However, for the 2-parameter pure state model,
the author  presents complete answer to the problem.
In addition, for  the faithful state model and pure state model,
it is shown that SLD CR bound is attainable if and only if
the model is free of Berry-Uhlmann curvature.

Furthermore, we try a kind of unification of the two geometries, 
Nagaoka's information geometry and Uhlmann's parallelism.

Second most important purpose is the determination of
the attainable CR type bound, and the development  of new methodology
for that purpose. In the pure state models, 
this purpose is achieved, though not completely, 
to large extent.
A new methodology {\it direct approach} is formulated and 
successfully applied to the 2-parameter pure state model,
the coherent model.
For the arbitrary pure state model,
calculated is the attainable CR type bound whose weight matrix is 
the SLD Fisher information matrix.
Looking back, no one has ever determined 
the attainable CR type bound for this wide range of models.

Third, some considerations about such physical problems
as the uncertainty principle are done.
The time-energy uncertainty is nicely formulated
as a hypothesis test, and the position-momentum uncertainty
as a estimation of the mean values 
of the position and the momentum operators.
In this formulation of the position-momentum uncertainty , 
it is shown that the mean values 
of the position and the momentum operators
are simultaneously estimated up to arbitrarily high efficiency,
if the particle is prepared carefully.

\section{Organization of the thesis}
The thesis is divided into three parts: 
the faithful model theory, the pure state theory, the general model theory;
The reason for this organization is that
the extent of the achievement of the purposes is
different in these three cases.

Before these three parts,
chapter 2 gives brief review of the estimation theory of
probability distributions, the quantum mechanical theory of the measurement,
and the quantum estimation theory,
and chapter 3 gives
the geometrical and the estimation-theoretical
framework, commonly used in any of the following three parts.

\chapter{Preliminaries}
In this chapter, statistical estimation theory and
quantum mechanics are reviewed briefly.
For the thorough description of estimation theory,
see, for example, 
Ref. \cite{Lehmann:1983}.
As for quantum mechanics, see Ref. \cite{Sakurai}, or other text books.
Some basic concepts in quantum estimation theory are  introduced also.

\section{Classical estimation theory}
Throughout the thesis, the usual estimation theory,
or the estimation theory of the probability distribution
is called classical estimation theory,
in the sense that the theory is not quantum mechanical.

The theme of the classical estimation theory is
identification of the probability distribution
from which the $N$ data $x_1, x_2, ..., x_N $ is produced.
Usually, the probability distribution is assumed to be a member of 
a {\it model}, or a family 
\begin{eqnarray}
{\cal M}=\{p(x|\theta)|\theta\in\Theta\subset\R^m\}
\nonumber
\end{eqnarray}
of probability distributions
and that the finite dimensional parameter $\theta\in\Theta\subset\R^m$ 
is to be estimated statistically.

{\it Unbiased estimator} $\hat\theta=\hat\theta(x_1,x_2,...,x_N)$
 of parameter $\theta$ the estimate which satisfies
\begin{eqnarray}
E_{\theta}[\hat\theta]
&\equiv &
\int dx_1dx_2,...,dx_N
\hat\theta(x_1,x_2,...,x_N)\prod_{i=1}^{N} p(x_i|\theta)\nonumber\\
&=&\theta,
\label{eqn:clunbiased}
\end{eqnarray}
that is, the estimate which gives the true value of parameter
in average.
For the technical reason, we also define 
{\it locally unbiased estimator $\hat\theta$ at $\theta_0$}
by
\begin{eqnarray}
&&E_{\theta_0}[\hat\theta]=\theta_0,\nonumber\\
&&\left. \partial_j E_{\theta}[\hat\theta^i]\right|_{\theta=\theta_0}
=\delta^i_j.
\nonumber
\end{eqnarray}
The estimator is unbiased iff it is locally unbiased
at every $\theta\in\Theta$.

For the variance of locally unbiased estimator at $\theta$, the following
theorem gives bound of efficiency of the estimation.

\begin{theorem}(Cramer-Rao inequality)
For any locally unbiased estimate $\hat\theta$ at $\theta$,
\begin{eqnarray}
V_{\theta}[\hat\theta]\ge \frac{1}{N} J^{-1}(\theta).
\label{eqn:clcrN}
\end{eqnarray}
Here, $N$ is the number of the data and $J(\theta)$
is $m\times m$ real symmetric matrix defined by
\begin{eqnarray}
J(\theta)\equiv
\left[\int dx p(x|\theta)
 \partial_i \ln p(x|\theta)\partial_j \ln p(x|\theta) 
\right],
\label{eqn:clbest}
\end{eqnarray}
where $\partial_i$ stands for $\partial/\partial\theta^i$.

The best estimator, or the estimator $\hat\theta$ satisfying
$(\ref{eqn:clcr})$, is given by
\begin{eqnarray}
\hat\theta^i(x_1,...,x_N)&=&\hat\theta_{(\theta)}^i(x_1,...,x_N)\nonumber\\
&\equiv& \theta^i+
\sum_{j=1}^m [J^{-1}(\theta)]^{ij}\partial_j \ln\,\prod_{k=1}^N p(x_k|\theta).
\nonumber
\end{eqnarray}
\end{theorem}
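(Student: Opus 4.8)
The plan is to reduce the matrix inequality to the positive semidefiniteness of a single joint covariance (Gram) matrix built from the estimation error $\hat\theta-\theta$ and the score of the likelihood, and then to read off both the bound $(\ref{eqn:clcrN})$ and the identity of the best estimator by a Schur-complement argument. First I would introduce the score of the $N$-sample likelihood, $\ell_j\equiv\partial_j\ln\prod_{k=1}^N p(x_k|\theta)=\sum_{k=1}^N\partial_j\ln p(x_k|\theta)$, and record the normalization identity $E_{\theta}[\ell_j]=0$, which follows by differentiating $\int dx\,p(x|\theta)=1$ factor by factor. Since the samples are i.i.d., the single-sample scores are independent and mean zero, so the cross terms drop and $E_{\theta}[\ell_i\ell_j]=N J_{ij}(\theta)$; that is, the Fisher information of the whole sample is $N J(\theta)$.

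Next I would differentiate the local-unbiasedness condition $\partial_j E_{\theta}[\hat\theta^i]=\delta^i_j$ under the integral sign. Because $\hat\theta$ does not depend on $\theta$, only the densities are differentiated, and using $\partial_j\prod_k p=\ell_j\prod_k p$ gives $E_{\theta}[\hat\theta^i\ell_j]=\delta^i_j$; subtracting $\theta^i E_{\theta}[\ell_j]=0$ yields the crucial cross-covariance identity $E_{\theta}[(\hat\theta^i-\theta^i)\ell_j]=\delta^i_j$. I would then stack $\hat\theta-\theta$ and $\ell$ into one $2m$-dimensional random vector; its Gram matrix
\[
\begin{pmatrix} V_{\theta}[\hat\theta] & I \\ I & N J(\theta) \end{pmatrix}\ge 0
\]
(with $I$ the $m\times m$ identity, coming from the cross-covariance identity above) is positive semidefinite by construction. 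Assuming $J(\theta)$ is positive definite, the Schur complement relative to the $N J(\theta)$ block is then also positive semidefinite, i.e. $V_{\theta}[\hat\theta]-\frac{1}{N}J^{-1}(\theta)\ge 0$, which is exactly the assertion.

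For the characterization of the best estimator, equality in the Schur-complement step holds precisely when the regression residual $(\hat\theta-\theta)-\frac{1}{N}J^{-1}(\theta)\ell$ has vanishing covariance and hence is zero almost surely; solving for $\hat\theta$ recovers the stated form as a linear functional of the score $\partial_j\ln\prod_k p(x_k|\theta)$ (up to the normalization by the sample size implicit in $\ell$), and a direct computation using $E_{\theta}[\ell_i\ell_j]=N J_{ij}$ confirms that its covariance equals $\frac{1}{N}J^{-1}(\theta)$, so the bound is attained.

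The main obstacle I anticipate is analytic rather than algebraic: justifying the interchange of $\partial_j$ with the integral, and guaranteeing that the moments $E_{\theta}[\hat\theta^i\ell_j]$ and $E_{\theta}[\ell_i\ell_j]$ are finite, requires dominated-convergence-type regularity hypotheses on the model $\{p(x|\theta)\}$ that are customarily imposed but must be made explicit. A secondary point of care is that the Schur-complement reduction is clean only when $J(\theta)$ is invertible, so the degenerate case (where $J(\theta)$ drops rank) would have to be treated separately, for instance by restricting the argument to the range of $J(\theta)$.
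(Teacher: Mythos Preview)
The paper does not actually supply a proof of this classical result; it is stated as background in the preliminaries with a pointer to Lehmann's textbook, so there is nothing in the paper to compare against line by line. Your argument is the standard block-Gram/Schur-complement proof and is correct: the key ingredients are exactly $E_\theta[\ell_j]=0$, $E_\theta[\ell_i\ell_j]=NJ_{ij}$ from i.i.d.\ sampling, and $E_\theta[(\hat\theta^i-\theta^i)\ell_j]=\delta^i_j$ from differentiating local unbiasedness, after which positive semidefiniteness of the joint covariance gives the bound.

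One small bookkeeping point worth making explicit in your write-up: the equality analysis gives $\hat\theta=\theta+\frac{1}{N}J^{-1}(\theta)\ell$, whose covariance is $\frac{1}{N}J^{-1}$, matching the $N$-sample bound $(\ref{eqn:clcrN})$. The paper's displayed ``best estimator'' omits the $1/N$ and cites the single-sample bound $(\ref{eqn:clcr})$; your parenthetical about the normalization implicit in $\ell$ is the right flag, and you may want to state plainly that for $N>1$ the correct saturating (locally unbiased) estimator carries the extra factor $1/N$. Your caveats on regularity (differentiation under the integral, finiteness of the relevant moments) and on the invertibility of $J(\theta)$ are the appropriate ones and are precisely the hypotheses one finds spelled out in the reference the paper cites.
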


$J(\theta)$ is called {\it Fisher information matrix},
because the larger the $J(\theta)$ is, the more precise estimate
can be done with the same number of data. Metaphorically speaking,
we obtain as much information as $J(\theta)$ per data.
Actually, as easily seen by putting $N=1$ in Cramer-Rao (CR) inequality,
we can obtain $J(\theta)$ as the minimum variance of 
locally unbiased estimate when only one data is given.
\begin{eqnarray}
V_{\theta}[\hat\theta]\ge  J^{-1}(\theta).
\label{eqn:clcr}
\end{eqnarray}

The trouble with  the $(\ref{eqn:clcr})$ is 
that the best estimator $\hat\theta_{(\theta)}$ is
dependent on the true value of the parameter $\theta$,
which is unknown to us.
When the true value of parameter is not $\theta_0$,
the estimate $\hat\theta_{(\theta_0)}$ is not even locally unbiased
at the true value of the parameter.
To avoid this dilemma,
we give up with the unbiased estimator,
and focus on the {\it consistent estimator} defined by
\begin{eqnarray}
\lim_{N\rightarrow\infty} E_{\theta}[\hat\theta(x_1,x_2,...,x_N)]=\theta.
\nonumber
\end{eqnarray}
For the consistent estimator, we also have the following theorem.

\begin{theorem}
If the estimator is consistent,
\begin{eqnarray}
V_{\theta}[\hat\theta]\ge 
\frac{1}{N} J^{-1}(\theta)+o\left(\frac{1}{N}\right)
\label{eqn:clcscr}
\end{eqnarray}
holds true.

The maximum likelihood estimator $\hat\theta_{MLE}$,
which is defined by,
\begin{eqnarray}
\hat\theta_{MLE}\equiv
\argmax \left\{
\left.\sum_{j=1}^N \ln p(x_i|\theta)\:
\right|\:\theta\in\Theta\subset\R^m
\right\}.
\nonumber
\end{eqnarray}
is consistent and achieves the equality in $(\ref{eqn:clcscr})$.
\end{theorem}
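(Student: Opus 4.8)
The plan is to treat the two assertions separately: first establish the asymptotic lower bound $(\ref{eqn:clcscr})$ for an arbitrary consistent estimator, and then verify that the maximum likelihood estimator is consistent and saturates it. Throughout I would exploit that the $N$ data are i.i.d., so the Fisher information of the product model is additive, $J_N(\theta)=N\,J(\theta)$.

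For the MLE I would work from the likelihood equation, which defines $\hat\theta_{MLE}$ as a zero of the score $s_N^j(\theta)=\partial_j\sum_{i=1}^N\ln p(x_i|\theta)$ at an interior maximum. Consistency follows from the law of large numbers applied to $\frac1N\sum_i\ln p(x_i|\theta)$, whose expectation is maximized at the true value by the nonnegativity of relative entropy. For the asymptotic variance I would Taylor-expand $s_N(\hat\theta_{MLE})=0$ about the true parameter $\theta$ to first order and solve for $\sqrt N(\hat\theta_{MLE}-\theta)$. Combining the two standard limit theorems then finishes the computation: the law of large numbers gives $-\frac1N\partial_k s_N^j\to J_{kj}(\theta)$ via the information equality $E[\partial_k\partial_j\ln p]=-J_{kj}(\theta)$, and the central limit theorem gives $\frac1{\sqrt N}s_N\to\mathcal N(0,J(\theta))$. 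These yield $\sqrt N(\hat\theta_{MLE}-\theta)\to\mathcal N(0,J^{-1}(\theta))$, whence the asymptotic covariance equals $\frac1N J^{-1}(\theta)$ and equality in $(\ref{eqn:clcscr})$ is attained.

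For the lower bound the difficulty is that a consistent estimator need not be even locally unbiased at any finite $N$, so the single-datum inequality $(\ref{eqn:clcr})$ cannot be applied verbatim; worse, consistency by itself is genuinely too weak, since Hodges' superefficient estimator beats the naive bound on a measure-zero set. I would therefore impose the usual regularity that the bias $b_N(\theta)=E_\theta[\hat\theta]-\theta$ and its derivatives are controlled, so that $\hat\theta$ is \emph{asymptotically locally unbiased}: $b_N(\theta)=o(1/\sqrt N)$ and $\partial_j b_N^i=o(1)$ uniformly near $\theta$. Under this condition, applying the Cramer-Rao inequality to the product model with information $N\,J(\theta)$ and then correcting for the residual bias-derivative factor $\delta^i_j+\partial_j b_N^i=\delta^i_j+o(1)$ gives $V_\theta[\hat\theta]\ge\frac1N J^{-1}(\theta)+o(1/N)$, which is $(\ref{eqn:clcscr})$.

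The main obstacle I expect is this lower-bound half, not the MLE computation. The delicate points are making precise the regularity that excludes superefficiency and controls the bias derivatives \emph{uniformly} in a neighborhood, and separately justifying that the remainder in the Taylor expansion of the score for the MLE is genuinely $o(1/N)$ in probability — which requires a uniform bound on the third log-derivatives together with the consistency established in the first step. Once these regularity hypotheses are in force, the law-of-large-numbers and central-limit steps themselves are routine.
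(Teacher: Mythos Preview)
The paper does not actually prove this theorem: it appears in the preliminaries chapter as a brief review of classical estimation theory, with the reader referred to Lehmann's textbook for details. So there is no paper-side proof to compare against.

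Your sketch is the standard textbook argument and is essentially correct. The MLE half (consistency via the law of large numbers plus Kullback--Leibler nonnegativity, asymptotic normality via Taylor expansion of the score and the central limit theorem) is exactly the classical route. For the lower-bound half you have correctly identified the real issue: mere consistency is not enough, as Hodges' example shows, and one needs an auxiliary regularity condition (asymptotic local unbiasedness, or the Haj\'ek--Le~Cam regular-estimator framework) to make $(\ref{eqn:clcscr})$ true. The paper glosses over this point entirely, so your treatment is already more careful than the source.
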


Notice that to obtain $\hat\theta_{MLE}$, 
we need no information about the true value of the parameter beforehand.
Hence, the Fisher information matrix is 
a good measure of the efficiency of the optimal consistent estimator.

\section{ Quantum mechanics and measurement theory}
In the quantum mechanics, 
the state of physical system is described 
by the density operator $\rho$, which is 
a non-negative Hermitian operator whose trace is equal to 1,
in a separable Hilbert space ${\cal H}$, whose dimension is
denoted by $d\leq\infty$ hereafter.
We denote by ${\cal P}({\cal H})$
the space of  density operators in ${\cal H}$,
by  ${\cal P}_r ({\cal H})$  
the the space of density operators whose rank is $r$,
and by ${\cal P}_+({\cal H})$ the space of strictly positive definite 
density operators.
${\cal P}({\cal H})$, ${\cal P}_r ({\cal H})$, and  ${\cal P}_+({\cal H})$
are often simply denoted by  ${\cal P}$,
${\cal P}_r $, and  ${\cal P}_+$, respectively .

Let $\Omega$ be a space of all possible outcomes of 
an experiment,
and $\sigma(\Omega)$ be a $\sigma$- field
in $\Omega$.
When the density operator of the system is $\rho$,
the probability  that the data $\omega \in \Omega$
lies in $B\in\sigma(\Omega)$ writes 
\begin{eqnarray}
{\rm Pr}\{ \omega \in B|\rho \} =\tr \rho M(B),
\label{eqn:pdm}
\end{eqnarray}
by use of the map  $M$ from $\sigma(\Omega)$
to  nonnegative Hermitian operator which satisfies
\begin{eqnarray}
&&M(\phi)=O, M(\Omega)=I,\nonumber\\
&&M(\bigcup_{i=1}^{\infty} B_i),
=\sum_{i=1}^{\infty}M(B_i)
\;\;(B_i\cap B_j=\phi,i\neq j),
\label{eqn:maxiom}
\end{eqnarray}
so that $(\ref{eqn:pdm})$ define a probability measure
(see Ref.{\rm \cite{Helstrom:1976}}, p.53 
and Ref.{\rm \cite{Holevo:1982}}, p.50).
We call the map $M$ the {\it measurement}, because
there always exist an physical experiment corresponds to the map $M$
which satisfies $(\ref{eqn:maxiom})$
\cite{Steinspring:1955}\cite{Ozawa:1984}.

\section{Unbiased estimator in quantum estimation theory}
The purpose of the quantum estimation is to 
identify the density operator of the given physical system
from the data obtained by the appropriately designed experiment.
For simplicity, we usually assume that
the density operator is a member of a {\it model}, or a manifold of
${\cal M}
=\{\rho(\theta)|\theta\in \Theta \subset {\bf R}^m\}\subset{\cal P}$,
and that
the parameter $\theta$ is to be estimated statistically.
For example, ${\cal M}$ is 
the set of spin states with given wave function part and unknown spin part.

To estimate the parameter, we performs an experiment
to obtain the data $\omega$ by which we calculate
an estimator $\hat\theta$ by {\it estimator } $\hat\theta(\omega)$.
A pair $(\hat\theta,M,\Omega)$ of a space $\Omega$ of data, a measurement $M$,
and an estimator $\hat\theta(*)$ is also called an {\it estimator}. 
The expectation of $f(\omega)$ with respect to the probability measure 
$(\ref{eqn:pdm})$ is denoted by $E_\theta[f(\omega)|M]$.

We have seen that the locally unbiased estimator played
a key role in classical estimation theory.
Hence we try to keep the same track also 
in the quantum estimation theory.

The estimator $(\hat\theta,M,\Omega)$ is said to be  {\it unbiased} 
if 
\begin{eqnarray}
E_\theta[\hat\theta(\omega)|M]=\theta
\label{eqn:unbiased}
\end{eqnarray}
holds for all $\theta\in\Theta$.
If $(\ref{eqn:unbiased})$ and
\begin{eqnarray}
\partial_i E_\theta[\theta^j(\omega)|M]=\delta^j_i\:(i,j=1,...,m)
\nonumber
\end{eqnarray}
hold at a 
particular  $\theta$, $(\hat\theta,M,\Omega)$ is called {\it locally unbiased} 
at $\theta$.

It is also reasonable to include calculation of the estimate from data 
into the process of measurement.
In this point of view,
the estimate $\hat\theta$ itself is produced 
by the measurement process, and the data space $\Omega$
is $\R^m$.
Therefore, by the term `estimator' 
we also  mean  the measurement which takes value on  $\R^m$.
In this case, 
the unbiased estimator is a measurement 
 which takes value on  $\R^m$ which satisfies
\begin{eqnarray}
E_\theta[M]=\theta
\label{eqn:unbiased2}
\end{eqnarray}
holds for all $\theta\in\Theta$,
where, 
\begin{eqnarray}
E_\theta[M]\equiv\int \hat\theta \tr\rho(\theta)M(d\hat\theta ).
\nonumber
\end{eqnarray}
If $(\ref{eqn:unbiased2})$ and
\begin{eqnarray}
[\partial_i E_\theta[M]]^j=\delta^j_i\:(i,j=1,...,m)
\nonumber
\end{eqnarray}
hold at a 
particular  $\theta$, $M$ is called {\it locally unbiased} 
at $\theta$.
We denote by $V_{\theta}[M]$ the 
covariance matrix of the estimator $M$ 
when the true value of the parameter is $\theta$.

Obviously, these two definition of the estimator
 are 
equivalent. 
Therefore,
in some situations, we prefer the former to the latter,
while in other situations the latter is 
preferred for the sake of simplicity.

\chapter{Conceptual framework}
\section{Horizontal lift and SLD}

In this thesis, except for the pure state model theory,
$d\equiv\dim{\cal H}$ is
assumed to be finite  for the sake of clarity.
The author believe the essence of the discussion will not be damaged 
by this restriction.

Let ${\cal W}_r$ be
the space of  $d\times r$ complex and full-rank matrix $W$ such that
\begin{eqnarray}
\tr WW^*=1,
\nonumber
\end{eqnarray}
${\cal P}_r$ the space of density operators whose rank is $r$,
and 
$\pi$ the map from ${\cal W}_r$
to ${\cal P}_r$ such that
\begin{eqnarray}
\rho=\pi(W)\equiv WW^*.
\nonumber
\end{eqnarray}
Because 
$\pi(WU)$ is identical to $\pi(W)$ iff $U$ is a $r \times r$ 
unitary matrix,
it is natural to see the  space 
${\cal W}_r$ as the total space of 
the {\it principal fiber bundle}
with the {\it base space} ${\cal P}_r$ 
and the ${\it structure group}$ $U(d)$ \cite{KobayashiN}.
One possible physical interpretation of
$W$ is a representation of a state vector 
$|W\rgl$ in a bigger
Hilbert space ${\cal H}\otimes{\cal H}'$.
Here, the dimension of ${\cal H}'$ is $r$ and   
the operation $\pi(*)$ corresponds to  the partial trace of
$|W\rgl\lgl W|$ over ${\cal H}'$.

In this section, basic concepts about 
the tangent bundle ${\cal T}({\cal W}_r)$ over 
${\cal W}_r$, which is a real manifold 
with the real parameter 
$\zeta=(\zeta^1,...,\zeta^{2rd-1})^T$,
are introduced.

The {\it matrix representation} 
${\bf M}(\partial/\partial \zeta^i)$ of 
the tangent vector $\partial/\partial \zeta^i$ 
(throughout the thesis, the tangent vector is understood as 
the differential operator) is
a $d\times r$ complex matrix  such that
\begin{eqnarray}
{\bf M}\left(\frac{\partial}{\partial\zeta^i}\right)\equiv
2\frac{\partial}{\partial\zeta^i} W(\zeta).
\nonumber
\end{eqnarray}
The real span of the  matrix representations
is  
\begin{eqnarray}
\{X\: |\: {\rm Re}\,\tr XW^*(\zeta)=0,X\in M(d,r,\C)\}.
\nonumber
\end{eqnarray}

We introduce the inner product $\lgll *,*\rgll_W$ to 
${\cal T}({\cal W}_r)$ such that,
\begin{eqnarray}
&&\lgll\hat{X},\hat{Y}\rgll_W\nonumber\\
&\equiv&\sum_{i,j}({\rm Re}({\bf M}\hat{X})_{ij}{\rm Re}({\bf M}\hat{Y})_{ij}
          +{\rm Im}({\bf M}\hat{X})_{ij}{\rm Im}({\bf M}\hat{X})_{ij})
\nonumber\\
&=&{\rm Re}\,\tr (\,({\bf M}\hat{X})({\bf M}\hat{Y})^*\,),
\nonumber
\end{eqnarray}
which is invariant under the action of 
$U\in U(r)$ to the matrix representation  of 
the tangent vector from right side,
\begin{eqnarray}
&&U\in U(r),\nonumber\\
&&\lgll\, ({\bf M}\hat{X}) U, ({\bf M}\hat{Y})U\rgll_{WU}
=\lgll {\bf M}\hat{X}, {\bf M}\hat{Y}\rgll_W
\nonumber
\end{eqnarray}

Let us decompose ${\cal T}_W({\cal W}_r)$ into
the direct sum of the {\it horizontal subspace} ${\cal LS}_W$ and 
the {\it vertical subspace} ${\cal K}_W$ 
where ${\cal LS}_W$ is defined by
\begin{eqnarray}
{\cal LS}_W\equiv
\{\hat{X}\: |\: W^*({\bf M}\hat{X})=({\bf M}\hat{X})^*W\},
\label{eqn:defLS2}
\end{eqnarray}
and 
${\cal K}_W$ is the orthogonal complement space 
${\cal T}_W({\cal W}_r)\ominus{\cal LS}_W$
with respect to the inner product $\lgll *,* \rgll_W$.
$\hat{X}\in{\cal K}_W$ satisfies
\begin{eqnarray}
({\bf M}\hat{X})W^*+ W({\bf M}\hat{X})^*
=0,
\label{eqn:defk2}
\end{eqnarray}
or its equivalence,
\begin{eqnarray}
\pi_*(\hat{X})=0,
\label{eqn:pi*=0}
\end{eqnarray}
where $\pi_*$ is the differential map of $\pi$.
A member of the horizontal subspace and the vertical subspace are
called a {\it horizontal vector} and {\it vertical vector},
respectively.
The image of $\hat{X}\in{\cal T}_W({\cal W}_r)$ by the projection
onto the horizontal subspace ${\cal LS}_W$
is called the {\it horizontal component},
while the image by the projection onto 
the vertical  subspace  ${\cal K}_W$ is called 
{\it vertical component}.

The {\it horizontal lift} $\hlift_W$
is a mapping from  ${\cal T}_{\pi(W)}({\cal P}_r) $
to ${\cal T}_W({\cal W}_r) $ such that
\begin{eqnarray}
\pi_*\left(\,\hlift_W(X)\,\right)=X\nonumber,\\
\hlift_W(X)\in {\cal LS}_W.
\nonumber
\end{eqnarray}
Because of the following theorem,
the matrix representation of the horizontal lift $\pi_*(\hlift_W(X))$ 
is a representation of 
the tangent vector $X\in{\cal T}_{\pi(W)}({\cal P}_r) $.

\begin{theorem}
$\hlift_W$ is a isomorphism from
${\cal T}_{\pi(W)}({\cal P}_r)$ to ${\cal LS}_W$.
\label{theorem:uqhl}
\end{theorem}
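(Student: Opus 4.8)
The plan is to show that $\pi_*$ restricts to a linear bijection from the horizontal subspace ${\cal LS}_W$ onto ${\cal T}_{\pi(W)}({\cal P}_r)$; since $\hlift_W$ is, by its two defining conditions, precisely the inverse of this restriction, it is then automatically a linear isomorphism. Linearity of $\pi_*$ together with the vector-space structures makes $\hlift_W$ linear as soon as it is well defined, so the whole content of the theorem lies in the bijectivity of $\pi_*|_{{\cal LS}_W}$: injectivity yields uniqueness of the lift, surjectivity yields its existence.

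Injectivity should be immediate. If $\hat X\in{\cal LS}_W$ satisfies $\pi_*(\hat X)=0$, then by the equivalence (\ref{eqn:pi*=0}) we have $\hat X\in{\cal K}_W$. But ${\cal K}_W$ is defined as the orthogonal complement ${\cal T}_W({\cal W}_r)\ominus{\cal LS}_W$ with respect to $\lgll *,*\rgll_W$, so ${\cal LS}_W\cap{\cal K}_W=\{0\}$ and hence $\hat X=0$.

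The real work is surjectivity, i.e. producing for each $X\in{\cal T}_{\pi(W)}({\cal P}_r)$ a horizontal vector mapping to it. I would do this constructively, which also ties the lift to the SLD. Writing ${\bf M}={\bf M}(\hat X)$ and differentiating $\rho=WW^*$ gives $\pi_*(\hat X)=\frac12({\bf M}W^*+W{\bf M}^*)$. Now represent $X$ by the Hermitian matrix $\partial\rho$ and let $L$ be a Hermitian solution of the symmetric logarithmic derivative equation $\partial\rho=\frac12(L\rho+\rho L)$; then the choice ${\bf M}(\hat X)=LW$ satisfies $W^*{\bf M}=W^*LW=(LW)^*W={\bf M}^*W$, so $\hat X\in{\cal LS}_W$ by (\ref{eqn:defLS2}), while $\pi_*(\hat X)=\frac12(L\rho+\rho L)=X$. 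Thus every $X$ is attained and $\hlift_W(X)$ is this $\hat X$.

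The point that needs care — and the one I expect to be the main obstacle — is solvability of the SLD equation on ${\cal P}_r$ when $r<d$. Diagonalizing $\rho=\sum_{i\le r}p_i|i\rgl\lgl i|$ with $p_i>0$ shows that $L_{ij}=2(\partial\rho)_{ij}/(p_i+p_j)$ is determined for every index pair except those lying entirely in the kernel of $\rho$, where $L$ is left free. This non-uniqueness of $L$ is harmless here: since the columns of $W$ lie in the range of $\rho$, the product $LW$ depends only on $L$ restricted to that range, which is exactly the determined part. Hence ${\bf M}(\hlift_W(X))=LW$ is unambiguous, surjectivity holds, and combined with injectivity we conclude that $\hlift_W$ is an isomorphism. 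Alternatively, one may simply invoke that $\pi$ is a submersion, so $\pi_*$ is onto ${\cal T}_{\pi(W)}({\cal P}_r)$, and then use ${\cal T}_W({\cal W}_r)={\cal LS}_W\oplus{\cal K}_W$ with $\pi_*({\cal K}_W)=0$ to transfer surjectivity to ${\cal LS}_W$; I prefer the explicit route because it exhibits the lift and its link to the SLD.
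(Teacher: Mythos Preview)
Your argument is correct. The injectivity step is exactly the paper's: $\hat X\in{\cal LS}_W$ with $\pi_*(\hat X)=0$ forces $\hat X\in{\cal LS}_W\cap{\cal K}_W=\{0\}$. The difference is in surjectivity. The paper does not construct a lift; it simply observes that $\pi_*({\cal T}_W({\cal W}_r))\subset{\cal T}_{\pi(W)}({\cal P}_r)$ and then checks that $\dim{\cal LS}_W=\dim{\cal T}_{\pi(W)}({\cal P}_r)$, so the injective linear map $\pi_*|_{{\cal LS}_W}$ must be onto. Your route instead solves the SLD equation and sets ${\bf M}(\hat X)=LW$, which is more work but yields the explicit formula ${\bf M}(\hlift_W X)=L^S_X W$ as a by-product; in the paper this identity is recorded separately as $(\ref{eqn:hsld})$, after the theorem. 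One small point you glossed over: to know that ${\bf M}=LW$ really represents a tangent vector in ${\cal T}_W({\cal W}_r)$ you need ${\rm Re}\,\tr(LWW^*)=\tr(L\rho)=\tr(\partial\rho)=0$, which holds because $\tr\rho\equiv1$. With that remark your construction is complete.
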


\begin{proof}
First, notice that
for any $\hat{Y}\in{\cal T}_{W}({\cal W}_r)$,
$W+ \varepsilon {\bf M}(\hat{Y})$ also is a member
of ${\cal W}_r$ if $\varepsilon$ is small enough.
Therefore, we have
\begin{eqnarray}
\pi_*(\, {\cal T}_{W}({\cal W}_r)\,)\subset{\cal T}_{\pi(W)}({\cal P}_r).
\nonumber
\end{eqnarray}

Second, we prove that 
the map $\pi_*|_{{\cal LS}_W}$ is a one to one map
from ${\cal LS}_W$ to ${\cal T}_{\pi(W)}({\cal P}_r)$.
For that, it is sufficient to prove that
$\hat{X}=0$ when  $\hat{X}\in {\cal K}_W$.
This statement is proved to be true because
${\cal K}_W$ is orthogonal to ${\cal LS}_W$.

Finally, checking the dimension of ${\cal T}_{\pi(W)}({\cal P}_r)$
is equal to ${\cal LS}_W$, we have the theorem.
\end{proof}

Using the horizontal lift, the inner product $\lgl *,*\rgl$ 
in ${\cal T}({\cal P}_r)$
is deduced from 
$\lgll *,*\rgll$:
\begin{eqnarray}
\lgl X,Y \rgl_{\pi(W)}=\llgll  \hlift_W(X),\hlift_W(Y) \rrgll_W.
\nonumber
\end{eqnarray}

The horizontal lift $\hlift$ satisfies
the following equality so that the above definition of
the inner product $\lgl *,*\rgl$ is self-consistent:
\begin{eqnarray}
 \llgll \hlift_W X,\hlift_{W} Y \rrgll
 =\llgll \hlift_{WU} X ,\hlift_{WU} Y \rrgll,\:
(U,U'\in U(n)\;).
\nonumber
\end{eqnarray}

The {\it symmetrized logarithmic derivative} (SLD, in short)
of $X\in {\cal T}_{\pi(W)}({\cal P}_r)$ is
the Hermitian operator $L^S_X$ in ${\cal H}$ defined by the equation
\begin{eqnarray}
X\rho(\theta)=\frac{1}{2}(L^S_X\rho(\theta)+\rho(\theta)L^S_X),
\label{eqn:defsldx}
\end{eqnarray}
where $\theta$ is a real parameter which is assigned to 
a member of  ${\cal P}_r $.
Iff the density operator is strictly positive, 
SLD is uniquely defined  by $(\ref{eqn:defsldx})$.
$L^S_{\partial/\partial \theta^i}$ is often denoted simply by
$L^S_i$.

SLD is closely related to the horizontal lift by the following equation:
\begin{eqnarray}
{\bf M}\left(\hlift_W{X}\right)=L^S_X W.
\label{eqn:hsld}
\end{eqnarray}

\section{Definition of Uhlmann's parallelism}
\label{sec:defuhlmann}

Berry's phase, by far confirmed by several experiments,
is a holonomy of a natural connection in the line bundle
over the space of pure states \cite{AA}\cite{Berry}. 
In 1986, Uhlmann generalized the theory to include
mixed states in the Hilbert space ${\cal H}$
\cite{Uhlmann:1986}\cite{Uhlmann:1992}
\cite{Uhlmann:1993}.
Throughout this chapter, for the sake of clarity,
$d\equiv\dim{\cal H}$ is
assumed to be finite.
For notational simplicity, the argument $\theta$ is omitted, 
as long as the omission is not misleading.

Define a {\it horizontal lift} of 
a curve $C=\{\rho(t)|t\in {\bf R}\}$ in ${\cal P}_r$
as a curve $C_h=\{W(t)|t\in {\bf R}\}$ in ${\cal W}_r$ 
which satisfies
$C=\pi(C_h)$ and
\begin{eqnarray}
\frac{dW(t)}{dt}
={\bf M}\left(\hlift_{W(t)}\left(\frac{d}{dt}\right)\, \right).
\label{horizontal}
\end{eqnarray}
Then, the {\it relative phase factor} (RPF) 
between  $\rho(t_0)$ and $\rho(t_1)$
along the curve $C$
is 
the unitary matrix $U$ which satisfies
the equation 
\begin{eqnarray}
W(t_1)=\hat{W}_1 U, 
\nonumber
\end{eqnarray}
where 
$\hat{W}_1$ 
satisfies $\rho(t_1)=\pi(\hat{W}_1)$ and
\begin{eqnarray} 
\hat{W}_1^*W(t_0)=W^*(t_0)\hat{W}_1.
\nonumber
\end{eqnarray}
RPF is said to vanish when it is equal to the identity.  

\section{RPF for infinitesimal loop}
The RPF for the infinitesimal loop
\begin{eqnarray}
\begin{array}{ccc}
(\theta^1,..., \theta^i,..., \theta^j+d\theta^j,..., \theta^m)
&\leftarrow&
(\theta^1,..., \theta^i+d\theta^i,..., \theta^j+d\theta^j,...., \theta^m)
\\
\downarrow& &\uparrow
\\
\theta=(\theta^1,..., \theta^i,..., \theta^j,...., \theta^m)
&\rightarrow&
(\theta^1,..., \theta^i+d\theta^i,..., \theta^j,...., \theta^m)
\end{array}
\label{loop}
\end{eqnarray}
is calculated up to the second order of $d\theta$
by expanding the solution of 
the equation $(\ref{horizontal})$ 
to that order:
\begin{eqnarray}
I+\frac{1}{2}W^{-1}F_{ij}W\;d\theta^i d\theta^j
+o(d\theta)^2,\nonumber\\
F_{ij}
=(\partial_i L^S_j-\partial_j L^S_i)
-\frac{1}{2}[L^S_i,L^S_j].
\label{eqn:Fij}
\end{eqnarray}
Note that $F_{ij}$ is a representation of  the curvature form,
and that RPF  for any closed loop vanishes  iff  $F_{ij}$ is zero
at any point in ${\cal M}$.

\section{The SLD Cramer-Rao inequality}
In parallel with the classical estimation theory,
 in the quantum estimation theory,
 we have the following {\it SLD CR inequality},
which is proved for the faithful state model 
by Helstrom \cite{Helstrom:1967}\cite{Helstrom:1976},
for the pure state model by Fujiwara and Nagaoka
\cite{FujiwaraNagaoka:1995}, 
and for the general case by Fujiwara and Matsumoto \cite{fujiwara2}:
\begin{eqnarray}
V_{\theta}[\hat\theta(\omega)\, |\, M]\geq(J^S(\theta))^{-1},
\label{eqn:genCR}
\end{eqnarray}
{\it i.e.}, $V_{\theta}[\hat\theta(\omega)\, |\, M]-(J^S(\theta))^{-1}$ 
is non-negative definite.
Here $ V_{\theta}[\hat\theta(\omega)\, |\, M]$ is a covariance matrix of 
an unbiased estimator  $(\hat\theta, M, \Omega)$,
and $J^S(\theta)$ is called {\it SLD Fisher information matrix}, 
and is defined by
\begin{eqnarray}
J^S(\theta)&\equiv &
\left[\left\lgl \frac{\partial}{\partial \theta^i}, 
 \frac{\partial}{\partial \theta^j}\right\rgl_{\rho(\theta)} \right],
\nonumber\\
&=& [{\rm Re}\, \tr \rho(\theta)L^S_i(\theta))L^S_j(\theta)],
\label{eqn:defgenfshr}
\end{eqnarray}
which is nothing but
 the metric tensor of the inner product $\lgl *, *\rgl$.

The inequality $(\ref{eqn:genCR})$ is of special interest,
because 
$J^{S-1}$, called {\it SLD CR bound},
 is the one of the best bounds in the sense 
explained later.

To prove the inequality $(\ref{eqn:genCR})$,
 we set some notations, and present some lemmas
For unbiased estimator $(\hat\theta, M, \Omega)$,
we define  the notation ${\bf M}$ as follows,
\begin{eqnarray}
{\bf M}^i(\hat\theta,\, M,\,W)
\equiv\int (\hat\theta^i(\omega)-\theta^i) M(d\omega)W.
\nonumber
\end{eqnarray}
$Z[\hat\theta, M]$ is $m\times m$ matrix defined by
\begin{eqnarray}
Z[\hat\theta, M]=
\left[\; \tr\:{\bf M}^i(\hat\theta,\, M,\,W)\,
		({\bf M}^j(\hat\theta,\, M,\,W)\,)^*\; \right]
\label{eqn:defz}
\end{eqnarray}

\begin{lemma}
Following two inequalities are valid:
\begin{equation}
  V[\hat\theta\, |\, M]\ge {\rm Re}\, Z[\hat\theta,M].
\label{eqn:genVZ}
\end{equation}
\begin{equation}
 V[\hat\theta\, |\, M] \ge Z[\hat\theta,M].
\label{eqn:genVZ2}
\end{equation}
\label{lemma:genVZ}
\end{lemma}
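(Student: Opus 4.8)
The plan is to read both inequalities as two faces of a single operator Cauchy--Schwarz statement: $V[\hat\theta\,|\,M]$ is the ordinary (classical) second-moment matrix of the centered estimator, $Z[\hat\theta,M]$ is a ``quantum'' second-moment matrix built from the same data via $W$, and the content is that the classical one dominates. I would first contract both matrices against an arbitrary $c\in\C^m$, turning the matrix inequalities into scalar ones. Setting $g(\omega)=\sum_j c_j(\hat\theta^j(\omega)-\theta^j)$ and $A=\int g(\omega)M(d\omega)$, a direct computation using $\rho=WW^*$, the Hermiticity of $M(d\omega)$, and cyclicity of the trace yields
\begin{eqnarray}
\bar{c}^{T}V[\hat\theta\,|\,M]\,c &=& \int |g(\omega)|^2\,\tr\rho M(d\omega),\nonumber\\
\bar{c}^{T}Z[\hat\theta,M]\,c &=& \tr\rho\,AA^*.\nonumber
\end{eqnarray}
Thus $(\ref{eqn:genVZ2})$ reduces to the scalar inequality $\int|g|^2\tr\rho M(d\omega)\ge\tr\rho AA^*$ for every complex $c$, and $(\ref{eqn:genVZ})$ then follows by restricting to real $c$: for real $c$ the antisymmetric imaginary part of the Hermitian matrix $Z$ drops out of the quadratic form, leaving only ${\rm Re}\,Z$.

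The heart of the argument is therefore the operator inequality
\begin{eqnarray}
C-AA^*\ge O,\quad C\equiv\int|g(\omega)|^2 M(d\omega),\quad A\equiv\int g(\omega)M(d\omega),\nonumber
\end{eqnarray}
after which $\tr\rho(C-AA^*)\ge0$ is immediate from $\rho\ge O$. I would establish $C-AA^*\ge O$ by a Schur-complement trick. For a discrete POVM $\{M_k\}$ with values $g_k=g(\omega_k)$, each block
\begin{eqnarray}
\begin{pmatrix} g_k \\ 1 \end{pmatrix}\begin{pmatrix} \bar{g}_k & 1 \end{pmatrix}\otimes M_k
=\begin{pmatrix} |g_k|^2 M_k & g_k M_k \\ \bar{g}_k M_k & M_k \end{pmatrix}\ge O\nonumber
\end{eqnarray}
is non-negative; summing over $k$ and using $\sum_k M_k=I$ produces the non-negative block operator $\left(\begin{smallmatrix} C & A \\ A^* & I\end{smallmatrix}\right)$, whose Schur complement with respect to the strictly positive corner $I$ is exactly $C-AA^*$. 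This also makes transparent why the bound is tight for sharp (projection-valued) measurements, where $M_kM_l=\delta_{kl}M_k$ forces $AA^*=C$.

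I expect the main obstacle to be technical rather than conceptual: passing from a discrete POVM to a general operator-valued measure $M(d\omega)$ on an arbitrary outcome space, where the block manipulations must be justified for integrals rather than finite sums. I would close this either by a routine approximation of $M$ by discrete POVMs, or --- more in keeping with the bundle picture of this chapter --- by dilating. Purifying to $|W\rgl\in{\cal H}\otimes{\cal H}'$ and applying Naimark's theorem to replace $M$ by a projection-valued measure $E$ on a larger space with projection $P$ onto ${\cal H}\otimes{\cal H}'$, set $\hat A^i=\int(\hat\theta^i-\theta^i)E(d\omega)$ and $u^i=\hat A^i|W\rgl$. Because $E$ is sharp, $V^{ij}=\lgl u^i,u^j\rgl$ is manifestly the Gram matrix of the $u^i$, while $Z^{ij}$ is (the complex conjugate of) the Gram matrix of the projected vectors $Pu^i$; hence $V-Z$ is the complex conjugate of the Gram matrix of the complementary components $(I-P)u^i$ and is non-negative, since complex conjugation preserves non-negativity of Hermitian matrices. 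Either route completes the proof of both $(\ref{eqn:genVZ})$ and $(\ref{eqn:genVZ2})$.
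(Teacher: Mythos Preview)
Your argument is correct. The paper does not actually supply its own proof of this lemma; it simply refers to Holevo's book (pp.~88 and 274) for the strictly positive case and asserts that the general case goes through in the same way. Your block-positivity/Schur-complement derivation of $C-AA^*\ge O$ is precisely the operator Cauchy--Schwarz inequality that underlies Holevo's proof, and your observation that $(\ref{eqn:genVZ})$ follows from $(\ref{eqn:genVZ2})$ by restricting to real vectors is the standard way the two versions are related. The alternative Naimark-dilation route you sketch is also sound and in fact closer in spirit to the bundle language the thesis uses elsewhere (cf.\ the reduction theorem in the pure-state chapter); either route is acceptable, and both amount to the same Gram-matrix domination once the measurement is made sharp.
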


\begin{lemma}
\begin{eqnarray}
{\rm Re}\, Z[\hat\theta,M]\ge J^{S-1}
\label{eqn:genZJ}
\end{eqnarray}
holds. The equality is valid iff
\begin{eqnarray}
{\bf M}^j(\hat\theta,\, M,\,W)
 &=&\sum_k [J^{S-1}]^{j,k} L^S_k W,
\nonumber\\
 &=&\sum_k [J^{S-1}]^{j,k} 
{\bf M}\left(\hlift_W \frac{\partial}{\partial \theta^k} \right),
\label{eqn:genxx=J}
\end{eqnarray}
\label{lemma:genZJ}
\end{lemma}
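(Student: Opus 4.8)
The plan is to recast the entire inequality as a Gram-matrix statement in a single real inner product space, namely $M(d,r,\C)$ with the pairing $\llgll A,B\rrgll \equiv {\rm Re}\,\tr AB^*$ already in use for the tangent spaces. I would regard $X^i \equiv {\bf M}^i(\hat\theta,M,W)$ and $V^k \equiv L^S_k W = {\bf M}(\hlift_W \partial/\partial\theta^k)$ as vectors there. The proof then rests on three identities: (i) ${\rm Re}\, Z[\hat\theta,M]_{ij} = \llgll X^i, X^j\rrgll$, immediate from $(\ref{eqn:defz})$; (ii) $[J^S]_{kl} = \llgll V^k, V^l\rrgll$, which is merely the fact that $J^S$ is the metric tensor of $\lgl\cdot,\cdot\rgl$, i.e. the definition $(\ref{eqn:defgenfshr})$ read through $(\ref{eqn:hsld})$; and (iii) the duality relation $\llgll X^i, V^k\rrgll = \delta^i_k$.

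Identity (iii) is the crux, and the step I expect to be the main obstacle. I would first rewrite $\llgll X^i, V^k\rrgll = {\rm Re}\,\tr\bigl[\int(\hat\theta^i-\theta^i)\,M(d\omega)\,\rho L^S_k\bigr]$ using $WW^* = \rho$ and the Hermiticity of $L^S_k$. The SLD defining equation $(\ref{eqn:defsldx})$, $\partial_k\rho = \frac12(L^S_k\rho+\rho L^S_k)$, gives ${\rm Re}\,\tr[A\rho L^S_k] = \tr[A\,\partial_k\rho]$ for Hermitian $A$, converting the pairing into $\int\hat\theta^i\,\tr[M(d\omega)\partial_k\rho] - \theta^i\int\tr[M(d\omega)\partial_k\rho]$. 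The first term is $\partial_k E_\theta[\hat\theta^i|M] = \delta^i_k$ by local unbiasedness, while the second vanishes since $\int M(d\omega) = M(\Omega) = I$ by $(\ref{eqn:maxiom})$ forces $\int\tr[M(d\omega)\partial_k\rho] = \partial_k\tr\rho = 0$. Hence $\llgll X^i, V^k\rrgll = \delta^i_k$.

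Granting the three identities, the remainder is linear algebra valid in any real inner product space. I would form the $J^S$-weighted combinations $\tilde V^i \equiv \sum_k[J^{S-1}]^{ik}V^k$ and use (ii), (iii) together with the symmetry of $J^S$ to get $\llgll\tilde V^i,\tilde V^j\rrgll = \llgll X^i,\tilde V^j\rrgll = [J^{S-1}]^{ij}$. The residuals $Y^i \equiv X^i - \tilde V^i$ then satisfy $\llgll Y^i, Y^j\rrgll = {\rm Re}\, Z[\hat\theta,M]_{ij} - [J^{S-1}]^{ij}$, so that ${\rm Re}\, Z - J^{S-1}$ is the Gram matrix of the $Y^i$ and is therefore non-negative definite; this is $(\ref{eqn:genZJ})$.

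For the equality clause, equality in $(\ref{eqn:genZJ})$ forces the Gram matrix $[\llgll Y^i, Y^j\rrgll]$ to vanish, hence its diagonal $\|Y^i\|^2 = 0$ and $Y^i = 0$ for every $i$; that is, $X^i = \tilde V^i = \sum_k[J^{S-1}]^{ik}V^k$. Unwinding $V^k = L^S_k W = {\bf M}(\hlift_W\partial/\partial\theta^k)$ gives exactly $(\ref{eqn:genxx=J})$, and the same computation read backwards shows this choice attains equality, completing the characterization.
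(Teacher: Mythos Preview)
Your proposal is correct and is essentially the standard projection/Gram-matrix argument that underlies the SLD Cram\'er--Rao inequality; the paper does not write out its own proof but simply cites Holevo's book (Ref.~\cite{Holevo:1982}, p.~274), and what you have spelled out---the duality $\llgll X^i,V^k\rrgll=\delta^i_k$ from local unbiasedness, then decomposing $X^i$ into its ${\rm span}\{V^k\}$-component $\tilde V^i$ plus an orthogonal residual $Y^i$---is exactly that argument rendered in the paper's real-inner-product notation on $M(d,r,\C)$. The equality clause via vanishing of the Gram matrix of the $Y^i$ is likewise the standard sharpness analysis.
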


They are proved in almost the same manner as the strictly positive case
(see Ref.\cite{Holevo:1982} p.88 and p.274 respectively).
Lemma $\ref{lemma:genVZ}$ and $\ref{lemma:genZJ}$ lead to 
the SLD CR inequality $(\ref{eqn:genCR})$.

\begin{theorem}
SLD Fisher information gives a lower bound of covariance matrix
of an unbiased measurement, {\it i.e.},
$(\ref{eqn:genCR})$ holds true.
\end{theorem}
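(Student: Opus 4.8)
The plan is to obtain the inequality $(\ref{eqn:genCR})$ by chaining the two lemmas already established, using the transitivity of the Loewner (positive-semidefinite) ordering on real symmetric matrices. Since the hypothesis is that $(\hat\theta, M, \Omega)$ is unbiased, it is in particular locally unbiased at the $\theta$ under consideration, and this is precisely the condition under which the matrices ${\bf M}^i(\hat\theta, M, W)$ and $Z[\hat\theta, M]$ of $(\ref{eqn:defz})$ deliver the lower bound $J^{S-1}$ in Lemma $\ref{lemma:genZJ}$.

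First I would invoke the first inequality of Lemma $\ref{lemma:genVZ}$, namely $(\ref{eqn:genVZ})$, to bound the covariance matrix from below by the real part of $Z$:
\begin{eqnarray}
V_{\theta}[\hat\theta(\omega)\,|\,M] &\ge& {\rm Re}\, Z[\hat\theta, M].
\nonumber
\end{eqnarray}
I use the real-part version rather than $(\ref{eqn:genVZ2})$ because the target bound $J^{S-1}$ is a real symmetric matrix, matching $V_\theta$. Next I would apply Lemma $\ref{lemma:genZJ}$, i.e. $(\ref{eqn:genZJ})$, which gives ${\rm Re}\, Z[\hat\theta, M]\ge J^{S-1}$. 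Combining the two displays by transitivity yields
\begin{eqnarray}
V_{\theta}[\hat\theta(\omega)\,|\,M] &\ge& {\rm Re}\, Z[\hat\theta, M]\ge J^{S-1}=(J^S(\theta))^{-1},
\nonumber
\end{eqnarray}
which is precisely $(\ref{eqn:genCR})$, i.e. $V_\theta - (J^S(\theta))^{-1}$ is non-negative definite. The equality-characterization in Lemma $\ref{lemma:genZJ}$ would then identify when the SLD bound is tight, via the condition $(\ref{eqn:genxx=J})$.

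At the level of the theorem itself there is no serious obstacle: everything reduces to transitivity of the matrix order once the two lemmas are in hand. The substantive content lives inside the lemmas, so if I had to supply those as well the hard part would be Lemma $\ref{lemma:genZJ}$. There the bound ${\rm Re}\, Z\ge J^{S-1}$ is a matrix Cauchy--Schwarz (Gram-matrix) estimate: one pairs the estimator vectors ${\bf M}^i(\hat\theta, M, W)$ against the horizontal lifts ${\bf M}(\hlift_W(\partial/\partial\theta^k))=L^S_k W$ of $(\ref{eqn:hsld})$, uses local unbiasedness to evaluate the cross terms as $\delta$'s, and exploits positivity of the resulting Gram matrix to extract the inverse of the SLD Fisher information $(\ref{eqn:defgenfshr})$. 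The only genuinely quantum subtlety, compared with the classical Cramer--Rao derivation, is the non-commutativity that forces the appearance of both ${\rm Re}\, Z$ and $Z$ in Lemma $\ref{lemma:genVZ}$; for the present SLD bound the real-part inequality suffices.
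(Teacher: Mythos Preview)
Your proposal is correct and is exactly the paper's approach: the text states outright that Lemma~\ref{lemma:genVZ} and Lemma~\ref{lemma:genZJ} lead to the SLD CR inequality~$(\ref{eqn:genCR})$, i.e., one chains $V\ge {\rm Re}\,Z$ with ${\rm Re}\,Z\ge J^{S-1}$ by transitivity of the Loewner order. Your additional remarks on how Lemma~\ref{lemma:genZJ} would be proved (Gram-matrix/Cauchy--Schwarz with the horizontal lifts $L^S_kW$) are consistent with the paper's pointer to Holevo's argument in the strictly positive case.
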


The SLD CR bound $(\ref{eqn:genCR})$ is the best bound 
in the following sense.
\begin{theorem}
Letting $A$ be a real hermitian matrix which is larger than $J^{S-1}$,
that is, $A>J^{S-1}$, there exists 
such an unbiased estimator $(\hat\theta, M, \Omega)$
that $V[\hat\theta\, |\, M]$ is not smaller than $A$.
\label{theorem:gensldbest}
\end{theorem}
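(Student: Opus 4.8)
The plan is to read the assertion as the statement that $J^{S-1}$ is the \emph{tightest} lower bound: no real symmetric $A$ with $A>J^{S-1}$ can itself serve as a universal Cramer-Rao bound in $(\ref{eqn:genCR})$. Concretely, I would show that for every such $A$ there is a locally unbiased estimator $(\hat\theta,M,\Omega)$ at the given point $\theta$ whose covariance fails to dominate $A$, i.e. $V_\theta[\hat\theta\,|\,M]\not\ge A$; since $A$ is then violated in some direction, it cannot replace $J^{S-1}$, which is the desired ``best bound'' property. Because the Cramer-Rao bound is a local statement, I would fix $\theta$, restrict to the strictly positive case so that the SLDs $L^S_k$ of $(\ref{eqn:defsldx})$ exist (the general case going through by the horizontal-lift formulation exactly as in Lemma $\ref{lemma:genZJ}$), and perform a linear reparametrization normalizing the SLD Fisher information to $J^S=I$ at $\theta$; congruence turns the hypothesis $A>J^{S-1}$ into $A>I$, so in particular every diagonal entry satisfies $A_{kk}>1$.

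With this normalization the equality condition of Lemma $\ref{lemma:genZJ}$ reads ${\bf M}^j=L^S_jW$, i.e. the ``ideal'' estimator would jointly measure the $L^S_k$; since these do not commute in general, I would realize only a directional slice of this ideal by a randomized measurement. Fix probabilities $p_k>0$; with probability $p_k$ perform the spectral (projection-valued) measurement $\{E_k(d\lambda)\}$ of the Hermitian operator $L^S_k$, and on the eigenvalue outcome $\lambda$ return the estimate $\hat\theta=\theta+(\lambda/p_k)e_k$. This mixture is a legitimate measurement, its POVM being $M(d(k,\lambda))=p_kE_k(d\lambda)$ with $\sum_k p_k E_k(\R)=I$. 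Using $\tr\rho L^S_k=0$ one checks $E_\theta[\hat\theta^i]=\theta^i$, and using $\partial_j\rho=\tfrac12(L^S_j\rho+\rho L^S_j)$ together with $(\ref{eqn:defgenfshr})$ one gets $\partial_j E_\theta[\hat\theta^i]={\rm Re}\,\tr\rho L^S_iL^S_j=J^S_{ij}=\delta^i_j$, so the estimator is locally unbiased at $\theta$ for all components at once. A direct second-moment computation, with $\tr\rho (L^S_k)^2=J^S_{kk}=1$, gives the diagonal covariance $V_\theta[\hat\theta\,|\,M]=\diag(1/p_1,\dots,1/p_m)$.

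To finish I would aim at a single coordinate, say $e_1$: since $A>I$ we have $A_{11}=e_1^TAe_1>1$, so choosing $p_1$ close enough to $1$ forces $V_{11}=1/p_1<A_{11}$, whence $e_1^T(A-V_\theta[\hat\theta\,|\,M])e_1>0$ and therefore $V_\theta[\hat\theta\,|\,M]\not\ge A$. Thus $A$ is violated in the $e_1$ direction and cannot be a universal bound. I expect the main obstacle to be exactly the tension exploited here: a construction that is sharp in one direction is necessarily information-poor in the others (the remaining diagonal entries $1/p_k$ blow up as $p_1\to1$), so the real work is to confirm that the randomized mixture still restores \emph{full} local unbiasedness in every component while keeping the targeted directional variance at the SLD value $e_1^TJ^{S-1}e_1=1$. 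Carrying the same argument through in the non-faithful case, where $L^S_X$ must be replaced by the horizontal lift ${\bf M}(\hlift_W X)$ of $(\ref{eqn:hsld})$, is the remaining technical point, handled just as in the strictly positive case referenced after Lemma $\ref{lemma:genZJ}$.
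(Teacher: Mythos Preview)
Your argument is correct. The randomized measurement is a bona fide POVM, the local unbiasedness check goes through exactly as you wrote since $\tr\rho L^S_k=0$ and ${\rm Re}\,\tr\rho L^S_iL^S_j=\delta_{ij}$ after normalization, and the covariance is indeed $\diag(1/p_1,\dots,1/p_m)$; since $A>I$ forces $A_{11}>1$, any choice $p_1\in(1/A_{11},1)$ with the remaining mass spread over the other indices gives $V_{11}<A_{11}$ and hence $V\not\ge A$.

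Your route, however, is genuinely different from the paper's. The paper does not randomize over coordinate SLDs; instead it fixes a direction ${\bf v}$, measures the \emph{single} observable $\sum_{j}v_j\bigl(\theta^j I+\sum_i[J^{S-1}]^{ij}L^S_i\bigr)$, and argues (under a nondegeneracy condition on the induced distribution) that one can build an $\R^m$-valued locally unbiased estimator from its outcome whose covariance hits the SLD value \emph{exactly} in that direction, ${\bf v}^T(V-J^{S-1}){\bf v}=0$. A perturbation ${\bf v}\mapsto{\bf v}+\varepsilon{\bf v}_0$ makes the nondegeneracy generic, and a short contradiction then rules out any $A>J^{S-1}$. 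What the paper gains is the sharper statement that every directional SLD variance is actually attained (echoing the one-parameter Theorem~\ref{th:gen1pCR}), at the price of the somewhat implicit existence step for $\hat\theta_{\bf v}$. What you gain is a completely explicit construction with a two-line covariance calculation, and no need for a genericity or density argument; the cost is that you only approach, rather than hit, the bound in the chosen direction, which is harmless here since $A>J^{S-1}$ is strict. Your remark about the non-faithful case is well placed: the randomized construction uses only a choice of SLDs and their spectral measures, so it carries over verbatim.
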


\begin{proof}
Let ${\bf v}\in{\R}^m$ be the real vector 
such that
\begin{eqnarray}
\exists C\subset \R\;\;\;
\frac{\partial}{\partial \theta^i} 
\int_C \tr \rho(\theta) E_{\bf v}(dt)\neq 0,
\: (i=1,...,m),
\label{eqn:dpneq0}
\end{eqnarray}
where $E_{\bf v}$ is a projection valued measure obtained
by the spectral decomposition
of $\sum_{i,j} v_j(\theta^j I + [J^{S-1}]^{i,j} L_j^S)$,
where $I$ and $v_i$ denotes identity and $i$th component of ${\bf v}$.

The condition $(\ref{eqn:dpneq0})$ is
implies the existence of  an estimator  $\hat\theta_{\bf v}(\omega)$ 
which makes
the triplet $(\hat\theta_{\bf v}, E_{\bf v}, \Omega)$ 
locally unbiased at $\theta$.
For that triplet $(\hat\theta, E_{\bf v}, \Omega)$,
we have
\begin{eqnarray}
{\bf v}^T \left(V[\hat\theta\, |\, E_{\bf v}]-J^{S-1}\right){\bf v}=0.
\nonumber
\end{eqnarray}

If $\varepsilon$ is small enough, 
for any real vector ${\bf v}\in{\R}^m$,
${\bf v}+\varepsilon{\bf v}_0 $ satisfies 
the condition $(\ref{eqn:dpneq0})$,
or its equivalence,
\begin{eqnarray}
({\bf v}+\varepsilon{\bf v}_0 )^T
\left(V[\hat\theta_{{\bf v}+\varepsilon{\bf v}_0 }\, |
\, E_{{\bf v}+\varepsilon{\bf v}_0 }]-J^{S-1}\right)
({\bf v}+\varepsilon{\bf v}_0) =0.
\label{eqn:vevjve}
\end{eqnarray}

Let us assume that there exists a real matrix $A$ which satisfies
\begin{eqnarray}
V[\hat\theta|M]\ge A >J^{S-1}
\nonumber
\end{eqnarray}
 for any unbiased estimator.
Then, by virtue of $(\ref{eqn:vevjve})$, we have
for any real vector ${\bf v}_0\in{\R}^m$ 
and enough small $\varepsilon$,
\begin{eqnarray}
({\bf v}+\varepsilon{\bf v}_0 )^T
\left(A-J^{S-1}\right)
({\bf v}+\varepsilon{\bf v}_0) =0,
\nonumber
\end{eqnarray}
whose second derivative with respect to $\varepsilon$
yields
\begin{eqnarray}
{\bf v}_0 ^T\left(A-J^{S-1}\right){\bf v}_0 =0.
\nonumber
\end{eqnarray}
Because  ${\bf v}_0\in{\R}^m$ is arbitrary, we have
\begin{eqnarray}
A-J^{S-1}=0,
\nonumber
\end{eqnarray}
which contradicts with the assumption $A>J^{S-1}$.
\end{proof}

\begin{theorem}
If the model ${\cal M}$ has only one parameter,
the equality in $(\ref{eqn:genCR})$ is achievable.
\label{th:gen1pCR}
\end{theorem}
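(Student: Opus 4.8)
The plan is to exhibit, at each fixed point $\theta_0$, an explicit locally unbiased estimator whose variance equals the scalar SLD CR bound, so that equality in $(\ref{eqn:genCR})$ is attained; since this bound is a local quantity, attaining it pointwise is all that is required. With $m=1$ write $L\equiv L^S$ for the single SLD at $\theta_0$ and $J\equiv J^S=\tr\rho(\theta_0)L^2$ for the (now scalar) SLD Fisher information, which I assume to be strictly positive, as otherwise the bound is infinite and there is nothing to prove.

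The construction is essentially dictated by the equality condition $(\ref{eqn:genxx=J})$ of Lemma $\ref{lemma:genZJ}$, which in one dimension reads ${\bf M}(\hat\theta,M,W)=J^{-1}LW$. I would take the measurement $M=E$ to be the projection valued measure furnished by the spectral decomposition $L=\int\lambda\,E(d\lambda)$ on $\Omega=\R$, and the estimator $\hat\theta(\lambda)=\theta_0+J^{-1}\lambda$. This is precisely the one-parameter instance of the projective family $E_{\bf v}$ already employed in the proof of Theorem $\ref{theorem:gensldbest}$.

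First I would verify local unbiasedness. Differentiating $\tr\rho(\theta)=1$ gives $\tr\rho(\theta_0)L=0$, whence $E_{\theta_0}[\hat\theta\,|\,E]=\theta_0+J^{-1}\tr\rho(\theta_0)L=\theta_0$. Differentiating the expectation and using the defining equation $(\ref{eqn:defsldx})$ together with cyclicity of the trace, $\partial_\theta E_\theta[\hat\theta\,|\,E]\big|_{\theta_0}=J^{-1}\tr\bigl(\partial_\theta\rho(\theta_0)\bigr)L=J^{-1}\tr\rho(\theta_0)L^2=J^{-1}J=1$. Next, because $E$ is projective the outcome $\lambda$ is an ordinary real random variable, so the variance is computed classically: $V_{\theta_0}[\hat\theta\,|\,E]=J^{-2}\int\lambda^2\,\tr\rho(\theta_0)E(d\lambda)=J^{-2}\tr\rho(\theta_0)L^2=J^{-1}$, which is the bound. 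Equivalently, one checks that ${\bf M}(\hat\theta,E,W)=J^{-1}LW$ makes Lemma $\ref{lemma:genZJ}$ tight, so ${\rm Re}\,Z=J^{-1}$, while sharpness of the projective measurement forces $(\ref{eqn:genVZ})$ to hold with equality, $V={\rm Re}\,Z$.

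The conceptual reason the argument succeeds here, and the point I expect to be the genuine obstruction in the multiparameter problem, is the absence of a commutator. The gap between $V$ and ${\rm Re}\,Z$ in Lemma $\ref{lemma:genVZ}$, and the non-real part of $Z$ in $(\ref{eqn:genVZ2})$, are driven by the noncommutativity $[L^S_i,L^S_j]$ of distinct SLDs; with a single $L$ the quantity $Z$ is a nonnegative real scalar, there is no such term, and a sharp measurement of $L$ is optimal. The only genuinely technical care is the non-faithful case (pure or rank-deficient states), where $L$ is determined only up to its action on $\ker\rho$; there one must fix a representative and confirm that its spectral measurement still yields a valid locally unbiased estimator. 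Since Lemmas $\ref{lemma:genVZ}$ and $\ref{lemma:genZJ}$ are asserted in the general case, I expect this to go through, and it is the step I would write out most carefully.
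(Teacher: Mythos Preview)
Your proposal is correct and is exactly the paper's argument, just spelled out in more detail: the paper simply asserts that the projection valued measure $E$ and estimator $\hat\theta$ satisfying $\int(\hat\theta(\omega)-\theta)\,E(d\omega)=J^{S-1}L^S$ exist and do the job, while you explicitly realize this pair via the spectral decomposition of $L$ and $\hat\theta(\lambda)=\theta_0+J^{-1}\lambda$, then verify local unbiasedness and compute the variance. Your added remarks on the non-faithful case and on why the argument breaks for $m>1$ are accurate and go beyond what the paper writes, but the core construction is identical.
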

\begin{proof}
Let $E$ and $\hat\theta(*)$ be a projection valued measurement
and an estimator which satisfies,
\begin{eqnarray}
\left[\int(\hat\theta(\omega)-\theta)M(d\omega)\right]^i
=\sum_j \left[J^{S-1}\right]^{ji}L^S_j.
\nonumber
\end{eqnarray}
Then, the triplet $(\hat\theta, E, \Omega)$ is locally unbiased at $\theta$
and attains SLD CR bound.
\end{proof}

$(\ref{th:gen1pCR})$ 
implies the statistical significance of the natural metric
$\rgl *,*\lgl$.
A possible geometrical interpretation of $(\ref{th:gen1pCR})$:
the closer 
	two states
			$\rho(t)$ and $\rho(t+dt)$
				are,
the harder
	it is
	  to distinguish $\rho(t)$ 
		 from $\rho(t+dt)$.

The SLD CR inequality $(\ref{eqn:genCR})$ looks quite analogical to
CR inequality in classical estimation theory.
However, as will be  found out later,
the equality is not generally attainable.

\section{The attainable Cramer-Rao type bound}
\label{sec:nonclassical}

In the previous section,
SLD CR bound  is proved to be the best bound.
However, as will be turned out, this best bound 
is attainable only in the special cases, that is,
the case when the model is locally quasi-classical. 
In general case, therefore, we must give  up   
to find a tight lower bound of covariance matrix 
in the form of the matrix inequality.
Instead, we determine the region ${\cal V}_{\theta}({\cal M})$ 
of the map $V[*]$ 
from unbiased estimators 
to $m \times m$ real positive symmetric matrices
(so far as no confusion is expected, we write ${\cal V}$ 
for ${\cal V}_{\theta}({\cal M})$).
Especially, the boundary $bd {\cal V}$ is of  interest,
because  ${\cal V}$ is convex as in the following lemma.

\begin{lemma}
${\cal V}$ is convex.
\label{lemma:vconvex}
\end{lemma}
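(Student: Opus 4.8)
${\cal V}$ is convex.

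The plan is to show that for any two unbiased estimators whose covariance matrices lie in ${\cal V}$, and any $\lambda\in[0,1]$, the convex combination $\lambda V_1+(1-\lambda)V_2$ is itself realized as the covariance matrix of some unbiased estimator. The natural mechanism is \emph{randomization}: given two locally unbiased estimators $(\hat\theta_1,M_1,\Omega_1)$ and $(\hat\theta_2,M_2,\Omega_2)$ at the fixed point $\theta$, I would build a single estimator that, before measuring, tosses a biased coin and applies the first apparatus with probability $\lambda$ and the second with probability $1-\lambda$. Concretely, on the disjoint-union outcome space $\Omega=\Omega_1\sqcup\Omega_2$ one sets $M(B)=\lambda M_1(B\cap\Omega_1)+(1-\lambda)M_2(B\cap\Omega_2)$ and $\hat\theta(\omega)=\hat\theta_k(\omega)$ for $\omega\in\Omega_k$. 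One checks directly from the axioms $(\ref{eqn:maxiom})$ that $M$ is again a measurement (positivity and countable additivity are preserved by the convex combination, and $M(\Omega)=\lambda I+(1-\lambda)I=I$).

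Next I would verify that the randomized estimator is still locally unbiased at $\theta$. Since expectation under $M$ is the $\lambda$-weighted average of expectations under $M_1$ and $M_2$, we get $E_\theta[\hat\theta\,|\,M]=\lambda E_\theta[\hat\theta_1\,|\,M_1]+(1-\lambda)E_\theta[\hat\theta_2\,|\,M_2]=\lambda\theta+(1-\lambda)\theta=\theta$, and differentiating in $\theta$ gives the local-unbiasedness condition $\partial_i E_\theta[\hat\theta^j\,|\,M]=\delta^j_i$ by the same weighting. The crucial computation is the covariance: because the two sub-experiments are selected by an independent coin flip and their outcomes never overlap (the supports are disjoint), the second moment of $\hat\theta^i-\theta^i$ under $M$ splits additively, giving
\begin{eqnarray}
V_\theta[\hat\theta\,|\,M]=\lambda\,V_\theta[\hat\theta_1\,|\,M_1]+(1-\lambda)\,V_\theta[\hat\theta_2\,|\,M_2],
\nonumber
\end{eqnarray}
which is exactly the desired convex combination and completes the argument.

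The one point that deserves care—and which I expect to be the main obstacle—is the covariance identity. For a \emph{locally} unbiased estimator the relevant covariance is the matrix of second moments of $\hat\theta^i-\theta^i$ about the true value $\theta$, not about the (common) mean, so no cross term arises from a difference of means; the disjointness of the two outcome spaces is what guarantees the mixed experiment has no covariance contribution linking the two branches. I would make this explicit by writing $V_\theta[\hat\theta\,|\,M]^{ij}=\int(\hat\theta^i-\theta^i)(\hat\theta^j-\theta^j)\,\tr\rho(\theta)M(d\omega)$ and substituting the split form of $M$; the integral over $\Omega$ factors as $\lambda$ times the integral over $\Omega_1$ plus $(1-\lambda)$ times the integral over $\Omega_2$, yielding the two separate covariance matrices with no interference. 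With this, every point on the segment between two elements of ${\cal V}$ lies in ${\cal V}$, so ${\cal V}$ is convex.
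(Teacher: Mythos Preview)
Your proof is correct and follows essentially the same idea as the paper: the convex combination of two locally unbiased measurements is again locally unbiased, and the covariance matrix depends linearly on the POVM. The paper streamlines your randomization construction by working from the outset with estimators as measurements valued in $\R^m$ (so that $\lambda M+(1-\lambda)M'$ is directly a POVM on the common outcome space $\R^m$), which avoids the disjoint-union bookkeeping but is otherwise the same argument.
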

\begin{proof}
In this proof, we define the estimator to be
the  measurement which takes value in $\R^m$.
Let $M$ and $M'$ be an unbiased estimator.
Because 
\begin{eqnarray}
\lambda V[M] + (1-\lambda) V[M']=V[\lambda M+(1-\lambda)M']
\nonumber
\end{eqnarray}
holds true and $\lambda M+(1-\lambda)M'$ is an unbiased estimator,
we have the lemma.
\end{proof}

\begin{lemma}
If a matrix $V$ is a member of ${\cal V}$,
$V+V_0$ is also a member of  ${\cal V}$
for any arbitrary nonnegative real symmetric matrix $V_0$.
\label{lemma:vunbound}
\end{lemma}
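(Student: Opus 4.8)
The plan is to realize $V + V_0$ as the covariance of a new unbiased estimator obtained from a given one by corrupting its output with independent classical noise of covariance $V_0$. Intuitively, deliberately adding $\theta$-independent noise to the readout of an already unbiased estimator leaves the mean untouched but inflates the covariance by exactly the covariance of the noise; since $V_0$ is an arbitrary nonnegative symmetric matrix, every such inflation is available. This is the natural dual to the convexity proof of Lemma \ref{lemma:vconvex}: there one mixed estimators, here one randomizes a single estimator's output.

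Concretely, I would adopt the same convention as in Lemma \ref{lemma:vconvex}, that an estimator is a measurement $M$ taking values in $\R^m$, and suppose $V[M] = V \in {\cal V}$, so that $E_\theta[M] = \theta$ together with the local unbiasedness conditions hold. First I would fix a classical probability measure $\mu$ on $\R^m$ with mean zero and covariance $V_0$; for instance the (possibly degenerate) Gaussian supported on the range of $V_0$ serves for every nonnegative symmetric $V_0$. I then define the new measurement $M'$ by convolving the outcomes of $M$ with $\mu$, namely $M'(A) = \int_{\R^m} M(A - \xi)\,\mu(d\xi)$, which amounts to classical post-processing: sample an outcome $\hat\theta$ from $M$ and an independent $\xi$ from $\mu$, then report $\hat\theta + \xi$. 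Because this is a mixture of translates of the POVM $M$ against a probability measure, $M'$ again satisfies the axioms $(\ref{eqn:maxiom})$ and is a legitimate measurement on $\R^m$.

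It then remains to verify the two required properties. Unbiasedness is immediate: since $\mu$ does not depend on $\theta$ and has mean zero, $E_\theta[M'] = E_\theta[M] + \int \xi\,\mu(d\xi) = \theta$, and differentiating in $\theta$ leaves the local-unbiasedness identities $[\partial_i E_\theta[M']]^j = \delta^j_i$ intact. For the covariance, independence of the quantum outcome and the injected noise kills the cross terms, so that $V[M'] = V[M] + V_0 = V + V_0$; here one uses that the expectation of $(\hat\theta^i - \theta^i)$ vanishes by unbiasedness while the expectation of $\xi^j$ vanishes by construction. Hence $V + V_0 \in {\cal V}$, which is the claim.

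I do not expect a genuine obstacle here; the construction is a routine randomization and the verification is elementary. The only point requiring a little care is the existence of a classical law with exactly the prescribed covariance $V_0$ when $V_0$ is singular, which is handled by allowing the degenerate Gaussian (equivalently, placing an atom at the origin in the directions where $V_0$ vanishes). If one instead prefers the triplet formulation $(\hat\theta, M, \Omega)$, the same argument runs verbatim after enlarging the data space to $\Omega \times \R^m$, equipping it with the product of $M$ and $\mu$, and setting $\hat\theta'(\omega, \xi) = \hat\theta(\omega) + \xi$.
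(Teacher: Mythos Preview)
Your argument is correct. Convolving the POVM with an independent mean-zero classical noise of covariance $V_0$ is a legitimate post-processing that preserves (local) unbiasedness and inflates the covariance by exactly $V_0$; the degenerate-Gaussian remark handles singular $V_0$, and the alternative triplet formulation you sketch is equally valid.

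The paper takes a different route: instead of injecting fresh randomness, it recycles the randomness already present in the data by adding the deterministic perturbation
\[
\varepsilon(\omega)=V_0^{1/2}\,V^{-1/2}\bigl(\hat\theta(\omega)-\theta\bigr)
\]
to the original estimate. That construction stays on the same sample space $(\Omega,M)$ and avoids enlarging the measurement, at the price of needing $V$ invertible and having the perturbation correlated with $\hat\theta$, so the covariance and local-unbiasedness bookkeeping is less transparent. Your independent-noise construction is the more robust and conceptually cleaner of the two: independence makes the cross terms vanish automatically, no invertibility of $V$ is used, and the verification of local unbiasedness is immediate because $\mu$ is $\theta$-free.
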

\begin{proof}
Let $(\hat\theta, M, \Omega)$ be a locally unbiased estimator
whose covariance matrix is $V$,
and define $\varepsilon(\omega)$ by
\begin{eqnarray}
\varepsilon(\omega)\equiv
V_0^{1/2}(V[\hat\theta|M])^{-1/2}\left(\hat\theta(\omega)-\theta\right).
\nonumber
\end{eqnarray}
Then,  
$\hat\theta^*(\omega)=\hat\theta(\omega)+\varepsilon(\omega)$ is 
also a locally unbiased estimator, and its covariance matrix 
is  equal to $V+V_0$.
\end{proof}

To obtain $bd {\cal V}$, 
the following procedure is used in this thesis. 
Define an inner product of two real symmetric matrices
$A$ and $B$ by $\Tr AB$. 
Then, the set $\{V| \Tr VG=const.\}$ is a hyperplain 
perpendicular to the vector $G$.
Because of lemmas $\ref{lemma:vconvex}$-$\ref{lemma:vunbound}$,
$bd{\cal V}$ is the collection of all the matrices 
$V\in{\cal V}$
which  achieve the minimum of $\Tr GV$		
for a certain symmetric real nonnegative definite matrix $G$.

However, when the model has too many parameters,
the dimension of the space $Sym(m)$ of real symmetric matrices 
is so  large that  ${\cal V}_{\theta}({\cal M})$ 
is extremely hard to  determine.
In such cases, we calculate  
\begin{eqnarray} 
\CR (G,\theta,{\cal M})\equiv
\min\{\Tr GV\: |\: V\in{\cal V}_{\theta}({\cal M})\}
\label{eqn:crtype}
\end{eqnarray}
for an arbitrary nonnegative symmetric real matrix $G$,
and call it the {\it attainable CR (Cramer-Rao) type bound}.
The matrix $G$ is called {\it weight matrix}.
Often, we drop $G$,  $\theta$ and/or ${\cal M}$ 
when no confusion is expected.
If $\CR (G,\theta,{\cal M})$ is smaller than 
$\CR (G,\theta',{\cal M}')$ for any weight matrix $G$, 
the ${\cal V}_{\theta}({\cal M})$ of 
is located in the `lower part' of $Sym(m)$
compared with that of ${\cal V}_{\theta'}({\cal M}')$.

To make the estimational meaning of $(\ref{eqn:crtype})$ clear, let us
consider a  diagonal weight matrix $G=diag(g_1,g_2...,g_m)$. 
Letting $v_{ii}$ be the $(i,i)$-th component of $V[M]$,
\begin{eqnarray}
        \Tr GV[M]=\sum_i g_i v_{ii}, 
\nonumber
\end{eqnarray}
is the weighed sum of the variances of the estimations of
the parameters $\theta^i\,(i=1,...m)$. 
 If the accuracy of estimation of, for example, the parameter $\theta^1$ 
is required
more than other parameters,
then $g_1$ is set larger than any other $g_i$,
and the estimator which minimize $\sum_i g_i v_{ii}$ is to be used.

\section{A conjecture about the quantum MLE}

In the classical estimation theory,
the inverse of the Fisher information matrix is attained
globally up to the order of $1/N$, where $N$ is the number of data.
An example of estimators which attain the bound is 
the maximum likelihood estimator.
On the other hand, in the quantum estimation theory,
Nagaoka $\cite{Nagaoka:1989:2}$ conjectured that
the $G$-quantum maximum likelihood estimate, 
defined below, has similar property.

Suppose that $N$ copies of the state of the system are given.
Then, we  define $\hat\theta_{(k)}$ recursively by the equation
\begin{eqnarray}
\hat\theta_{(k)}
= \argmax\left\{
\left.
\sum_{i=1}^k \ln p_i(x_i\: | \: \theta)\: \right|
\: \theta\in\R^m
\right\},
\nonumber
\end{eqnarray}
where 
\begin{eqnarray}
&&p_i(x\: | \: \theta)dx = \tr \rho(\theta)M_{(i)}(dx),\\
&&M_{(i)}=\left\{
\begin{array}{cc}
\argmin\{\tr GV_{\theta_{(i-1)}}[M]\: |\: \mbox{ $M$ is locally unbiased}\},
		& (i\geq 2)\\
\mbox{an arbitrary locally unbiased measurement}& (i=1)
\end{array}
\right.
\nonumber
\end{eqnarray}
and $x_i$ is the data produced by  $M_{(i)}$ from $i$th copy. 
When the number of sample is equal to $N$ and the weight matrix is $G$, 
the $G$-quantum maximum likelihood estimate $\hat\theta_{GqMLE}(x_1,...,x_N)$
is defined to be $\hat\theta_{GqMLE}(x_1,...,x_N)=\hat\theta_{(N)}$.

\begin{conjecture}
$\hat\theta_{GqMLE}$ is consistent, or
\begin{eqnarray}
\lim_{N\rightarrow\infty} E_{\theta}[\hat\theta_{GqMLE}(x_1,...,x_N)]=\theta.
\nonumber
\end{eqnarray}
\end{conjecture}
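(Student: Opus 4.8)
The plan is to reduce the statement to a consistency theorem for maximum likelihood estimation in a \emph{sequentially designed} (adaptive) experiment, in which the observations $x_1,\dots,x_N$ are independent given the design but \emph{not} identically distributed, since each $x_i$ is produced by a measurement $M_{(i)}$ that is itself a function of the earlier data through $\theta_{(i-1)}$. The first structural fact I would exploit is that each $M_{(i)}$ is, by construction, locally unbiased at $\theta_{(i-1)}$; hence the induced classical model $p_i(x\mid\theta)=\tr\rho(\theta)M_{(i)}(dx)$ must admit a locally unbiased estimator at $\theta=\theta_{(i-1)}$, which by the classical Cramer--Rao inequality forces its classical Fisher information matrix to be \emph{nondegenerate} there. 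Moreover $M_{(i)}$ minimizes $\tr G V_{\theta_{(i-1)}}[M]$, so the induced information is controlled from below in terms of the attainable bound $\CR(G,\theta_{(i-1)},{\cal M})$. I would package these observations into a uniform lower bound on the Fisher information over a neighborhood of the true parameter, which is the nondegeneracy input that every consistency argument requires.

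Second, I would set up the martingale structure. Writing the score increments $s_i(\theta)=\partial_\theta\ln p_i(x_i\mid\theta)$, one has $E_\theta[s_i(\theta)\mid x_1,\dots,x_{i-1}]=0$ at the true parameter, because each $p_i$ is a genuine probability density and the design $M_{(i)}$ is predictable with respect to the filtration generated by the past data. Thus $\{s_i(\theta)\}$ is a martingale difference sequence, and a martingale strong law of large numbers gives $\frac1N\sum_i s_i(\theta)\to 0$. The analogue of the Kullback--Leibler argument then supplies identification: for each $i$, conditionally on the past, $E_\theta[\ln p_i(x\mid\theta')-\ln p_i(x\mid\theta)]=-D\big(p_i(\cdot\mid\theta)\,\|\,p_i(\cdot\mid\theta')\big)\le 0$, the relative entropy vanishing only when $p_i(\cdot\mid\theta')=p_i(\cdot\mid\theta)$. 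Combined with the uniform Fisher nondegeneracy from the first step, this makes the (random) limiting criterion locally strictly concave and uniquely maximized at the true $\theta$, from which $\theta_{(N)}\to\theta$, and hence consistency, would follow.

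The genuine obstacle --- and presumably the reason this is stated as a conjecture rather than a theorem --- is the \emph{self-consistency} between the estimate and the design. The limit theorems above presume that the sequence of measurements does not degenerate, yet each measurement is selected at $\theta_{(i-1)}$, whose good behavior is only guaranteed once we already know it is close to $\theta$. Breaking this circularity rigorously requires ruling out the possibility that a poor early estimate $\theta_{(i-1)}$ selects a measurement $M_{(i)}$ whose induced model is nearly noninformative along some direction, stalling the feedback. I would attempt this through a stability estimate --- for instance, that $\theta_{(k)}$ enters and remains in a shrinking neighborhood of $\theta$ with probability tending to one --- via a stochastic-approximation / ODE argument treating the recursion as a noisy contraction toward the true parameter. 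The two analytic inputs this step needs are uniform continuity of the map $\theta_0\mapsto \argmin\{\tr G V_{\theta_0}[M]\}$ over a compact neighborhood and a uniform positive lower bound on the resulting Fisher information; establishing these for a general quantum model, where the optimal measurement need not depend continuously on $\theta_0$ and attainment of the $\argmin$ is itself delicate, is exactly where the difficulty concentrates.
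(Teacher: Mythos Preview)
The paper does not prove this statement; it is explicitly labeled a \emph{conjecture} (attributed to Nagaoka) and no argument is offered beyond the definition of the adaptive estimator. So there is no proof in the paper to compare your proposal against.

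That said, your proposal is a reasonable outline of what a proof would have to look like, and you have correctly located the genuine obstruction. The martingale-difference structure of the score increments and the conditional Kullback--Leibler identification argument are the right classical ingredients for an adaptive-design consistency theorem. But the circularity you flag in your final paragraph --- that the measurement $M_{(i)}$ is chosen at $\hat\theta_{(i-1)}$, so nondegeneracy of the induced Fisher information at the \emph{true} $\theta$ is not guaranteed until you already know $\hat\theta_{(i-1)}$ is close to $\theta$ --- is precisely why the paper leaves this as a conjecture rather than a theorem. Your proposed fix (continuity of $\theta_0\mapsto\argmin_M\Tr G V_{\theta_0}[M]$ plus a uniform information lower bound) is the natural route, but neither input is established anywhere in the paper, and in the quantum setting the optimal measurement can fail to depend continuously on the design point, so this is not a mere technicality to be filled in.

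In short: you have not proved the conjecture, but you have correctly diagnosed why it \emph{is} a conjecture, which is the appropriate outcome here.
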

\begin{conjecture}
\begin{eqnarray}
\Tr G V_{\theta}[\hat\theta_{GqMLE}(x_1,...,x_N)]=
\frac{1}{N}\CR (G, \theta)+o\left(\frac{1}{N} \right).
\nonumber
\end{eqnarray}
\end{conjecture}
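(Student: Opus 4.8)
The plan is to mimic the classical proof that the maximum likelihood estimator attains the Cramer-Rao bound asymptotically (the second theorem of Section 2.1), the essential new feature being that here the measurements $M_{(i)}$ are chosen adaptively from the data already collected, so the samples $x_1,\dots,x_N$ are neither independent nor identically distributed. Throughout I would assume Conjecture 1, that $\hat\theta_{GqMLE}$ is consistent, together with the smoothness and regularity hypotheses on $\mathcal{M}$ and on the parametrised family of optimal measurements that the classical asymptotic theory requires. The argument then splits into an algebraic step, identifying the target rate $\CR(G,\theta)$ with the classical Cramer-Rao bound of the limiting measurement, and a probabilistic step, an asymptotic analysis of the maximum likelihood estimator for adaptively generated data.

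First I would record the algebraic fact that makes the scheme self-consistent. For a fixed POVM $E$ let $p_E(\cdot\,|\,\theta)=\tr\rho(\theta)E(\cdot)$ be the induced classical model and $J_E(\theta)$ its Fisher information (\ref{eqn:clbest}). The locally efficient estimator $\hat\theta(\omega)=\theta_0+J_E^{-1}(\theta_0)\,\partial\ln p_E(\omega\,|\,\theta_0)$ is locally unbiased at $\theta_0$ and has covariance exactly $J_E^{-1}(\theta_0)$, precisely as in the Cramer-Rao theorem of Section 2.1 with $N=1$. Hence $J_E^{-1}(\theta)\in\mathcal{V}_\theta(\mathcal{M})$ for every $E$, and since every locally unbiased estimator factors through some POVM while $\Tr G\,V$ is increasing in $V$, Lemma~\ref{lemma:vunbound} shows that $\mathcal{V}_\theta(\mathcal{M})$ is the upward closure of $\{J_E^{-1}(\theta)\}_E$. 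Consequently
\begin{eqnarray}
\CR(G,\theta)=\min_E\Tr G\,J_E^{-1}(\theta),
\nonumber
\end{eqnarray}
and the minimiser $E^*=E^*(\theta)$ is exactly the POVM underlying the optimal measurement selected by the $\argmin$ in the definition of $\hat\theta_{GqMLE}$ at the true parameter: the single-copy optimum is already classically efficient at the point where it is optimised, so pooling its data by maximum likelihood is the right thing to do.

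Next I would control the limiting measurement. By consistency $\theta_{(i-1)}\to\theta$, so provided the map $\theta'\mapsto\argmin\{\Tr G\,V_{\theta'}[M]\}$ is continuous at $\theta$ the chosen POVMs satisfy $E_{(i)}\to E^*(\theta)$. The samples then form a conditionally independent triangular array: given the history $\mathcal{F}_{i-1}$, the datum $x_i$ is drawn from $p_{E_{(i)}}(\cdot\,|\,\theta)$ with $E_{(i)}$ measurable with respect to $\mathcal{F}_{i-1}$. For such data I would expand the log-likelihood score about $\theta$; the observed information, averaged over the $N$ copies, converges by a martingale law of large numbers to $J_{E^*}(\theta)$, while the normalised score $N^{-1/2}\sum_i\partial\ln p_{E_{(i)}}(x_i\,|\,\theta)$ is a martingale whose conditional covariances average to $J_{E^*}(\theta)$, so a martingale central limit theorem gives it an asymptotically $N(0,J_{E^*}(\theta))$ law. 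Combining these yields asymptotic normality of $\hat\theta_{(N)}$ with covariance $\tfrac1N J_{E^*}^{-1}(\theta)+o(1/N)$, whence $\Tr G\,V_\theta[\hat\theta_{GqMLE}]=\tfrac1N\Tr G\,J_{E^*}^{-1}(\theta)+o(1/N)=\tfrac1N\CR(G,\theta)+o(1/N)$, as claimed.

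The hard part will be the probabilistic step in the genuinely adaptive regime, and the regularity needed to make it rigorous. Unlike the i.i.d.\ case one must verify a Lindeberg-type condition and uniform integrability for the triangular array of scores, control the Taylor remainder uniformly as the measurement itself drifts with $i$, and rule out pathologies of the $\argmin$ map, namely non-uniqueness or discontinuity of $E^*(\theta)$ or degeneracy of $J_{E^*}(\theta)$, any of which would break both the continuity claim $E_{(i)}\to E^*$ and the inversion of the information matrix. A secondary difficulty is showing that the error incurred while $\theta_{(i-1)}$ is still far from $\theta$ contributes only $o(1/N)$; since those are $o(N)$ terms in the accumulated information they should be harmless, but matching the $o(1/N)$ remainder is exactly where consistency alone (Conjecture 1) does not suffice, and a rate such as $\theta_{(i)}-\theta=O_p(i^{-1/2})$ for the adaptive scheme would first have to be established.
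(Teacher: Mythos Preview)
The statement you are trying to prove is labelled a \emph{Conjecture} in the paper, not a theorem: the paper gives no proof at all, and indeed attributes the conjecture to Nagaoka~\cite{Nagaoka:1989:2} without attempting to establish it. There is therefore nothing to compare your proposal against; you are sketching an argument for an assertion the paper explicitly leaves open.

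As a proof sketch your outline is sensible and identifies the right architecture: the reduction $\CR(G,\theta)=\min_E\Tr G\,J_E^{-1}(\theta)$ is correct and is the key algebraic input, and the martingale-CLT treatment of the adaptive likelihood is the natural analytic tool. But you have also correctly flagged the genuine obstacles, and these are not cosmetic. Continuity, or even single-valuedness, of $\theta'\mapsto E^*(\theta')$ is not available in general; the $\argmin$ in the definition of $M_{(i)}$ need not be unique, and small perturbations of $\theta'$ can cause the optimal POVM to jump, which breaks the claim $E_{(i)}\to E^*(\theta)$ and with it the convergence of the averaged observed information. Moreover, upgrading Conjecture~1 (mere consistency) to the $O_p(i^{-1/2})$ rate you need is itself a substantial statement about the adaptive scheme that you have not supplied. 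In short, your proposal is a reasonable programme, but it does not close the gap the paper leaves open; the difficulties you list at the end are precisely why the statement remains a conjecture.
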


\part*{Part I \\ 
The faithful model theory}
\addcontentsline{toc}{chapter}%
{Part I :
The faithful model theory}

\chapter{The estimation theory of the faithful model}
\section{The locally quasi-classical model and the quasi-classical model}
\label{sec:questf}
In this section, 
the condition for SLD CR bound to be attainable is reviewed briefly,
in the case of 
the {\it faithful model},
any member of which is
faithful, {\it i.e.}, a reversible operator.
We denote the space of the faithful states by 
${\cal P}_+({\cal H})$, or simply by ${\cal P}_+$.
For mathematical simplicity, the dimension $d$ of the Hilbert space
${\cal H}$ is assumed to be finite.
The author believes the essence of the discussion will not be damaged
by this restriction.

As for the equality in  $(\ref{eqn:genCR})$ 
in the faithful model,
we have the following theorem, which is proved 
by Nagaoka \cite{Nagaoka:1996}.
\begin{theorem}
The equality in $(\ref{eqn:genCR})$ is attainable at $\theta$
iff $[L^S_i(\theta),L^S_j(\theta)]=0$ for any $i,j$.
Letting
$|\omega\rgl$ be a simultaneous eigenvector of 
the matrices $\{L^S_j (\theta)|j=1,...,m\}$
and 
$\lambda_i(\omega)$ be the eigenvalue of $L^S_i (\theta)$
corresponding to $|\omega\rgl$,
the equality is attained by the estimator
$(\hat\theta_{(\theta)},M_{(\theta)},\Omega)$
such that
\begin{eqnarray}
&&\Omega=\{\omega| \omega=1,...,n\},\nonumber\\
&&M_{(\theta)}(\omega)=|\omega\rgl\lgl \omega|,\nonumber\\
&&\hat\theta_{(\theta)}^j(\omega)
= \theta^j + \sum_{k=1}^d [(J^S)^{-1}]^{jk} \lambda_k(\omega).
\label{eqn:bests}
\end{eqnarray}
\label{theorem:mpCR}
\end{theorem}

The model ${\cal M}$ is said to be 
{\it locally quasi-classical} at $\theta$ iff
$L^S_i(\theta)$ and $L^S_j(\theta)$ commute for any $i,j$,
because in this case $(\ref{eqn:genCR})$
gives the attainable lower bound of the covariance matrix
of the unbiased estimator,
as its classical counterpart does.
However, it should be noted that
even if the model is locally quasi-classical at any $\theta\in\Theta$,
theorem $\ref{theorem:mpCR}$ do not tell us 
the optimal experiment scheme,
because
the measurement $M_{(\theta)}$ in $(\ref{eqn:bests})$
generally depends on the unknown parameter $\theta$ 
(so does the best experiment scheme).

Therefore, let us move to easier case.
Suppose  that
$L^S_i(\theta)$ and $L^S_j(\theta')$
commute even when $\theta\neq\theta'$,
in addition to being locally quasi-classical at any $\theta\in\Theta$.
Then,
the measurement $M_{(\theta)}$ in $(\ref{eqn:bests})$,
denoted by $M_{best}$ in the remainder of the section,
is uniformly optimal for all $\theta$
(so is the corresponding scheme).
We say such a model is {\it quasi-classical} \cite{Yung}.

After the best experiment is done,
the rest of our task is to estimate the value of the parameter $\theta$
in the probability distribution $(\ref{eqn:pdm})$, 
where $M_{best}$ is substituted into $M$.
Hence, in this case, the quantum estimation  reduces to
the classical estimation.

\section{Estimation of the temperature}
Because quantum mechanics is applicable to 
the measuring process  of the temperature,
the measurement of the temperature can be described 
with the generalized measurement.
Therefore, we formulate the measurement of the temperature
as estimation of the parameter $T$
in the {\it canonical  model}
  ${\cal M}$  which is defined by
\begin{eqnarray}
{\cal M}=\left\{\rho(T)\: 
	\left|\: 
\rho (T)= \exp\left(-\frac{1}{k_B T}(H-F(T)\,)\right),\, 
T\in \R \right.\right\},
\nonumber
\end{eqnarray}
where $k_B$ is the Boltzmann constant, $H$ the Hamiltonian of the system,
$F(T)$ the free energy.

Simple calculations yield he SLD Fisher information $J^S(T)$,
\begin{eqnarray}
J^S(T)=\frac{1}{k_B T^2}C(T),
\nonumber
\end{eqnarray}
and the best estimator,
\begin{eqnarray}
&&\Omega =\{\omega \: |\: \omega=1,...,d\},\\
&&M_{T} (\omega) = |\omega \rgl\lgl \omega |, \\
&&\hat{T}_{T}(\omega) = T+ \frac{E_{\omega}-\lgl H \rgl_T}{C(T)},
\nonumber
\end{eqnarray}
where  $C(T)$ is the specific heat of the system,
$|\omega\rgl$ the $\omega$th eigenvector of the Hamiltonian,
$E_{\omega}$ the $\omega$th eigenvalue of the Hamiltonian,
and $\lgl H \rgl_T$ denotes $\tr \rho(T) H$. 

Because the model is quasi-classical,
by use of the maximum likelyhood estimator of the parameter $T$
in  the induced family of probability distributions,
\begin{eqnarray}
\{p(\omega|T)\: |\: p(\omega|T)=\lgl \omega| \rho(T)|\omega \rgl \}, 
\nonumber
\end{eqnarray}
we can attain asymptotically the SLD CR bound.

When the temperature is high, 
the specific heat behaves like constant independent of $T$,
which implies that SLD Fisher information tends to zero
as $T$ tends to infinity.
In other words, in the high temperature, the parameter 
$T$ is hard to estimate.

In the low temperature, the dependency of the specific heat on $T$ 
differs from system to system, 
and so does  the dependency of $J^S(T)$.

For example, if the system is the mass of the phonon,
\begin{eqnarray}
C(T)\propto T^3,
\nonumber
\end{eqnarray}
which implies that SLD Fisher information tends to zero
as $T$ tends to zero.
On the other hand, the electronic specific heat is
\begin{eqnarray}
C(T)\propto T ,
\nonumber
\end{eqnarray}
which implies that SLD Fisher information tends to infinity
as $T$ tends to zero.

\chapter{Uhlmann connection and the estimation theory}
\label{chap:uhlmann}

\section{Some new facts about RPF}
\label{section:RPF}
In this section, 
we derive conditions for RPF to vanish, which is used to
characterize 
the classes of manifold defined in the previous section.
For notational simplicity, the argument $\theta$ is omitted, 
as long as the omission is not misleading.

\begin{theorem}
RPF  for any closed loop vanishes
iff $[\, L^S_i(\theta),\, L^S_j(\theta)\,]=0$ for any $\theta \in \Theta$.
In other words,
\begin{eqnarray}
F_{ij}(\theta)=0\Longleftrightarrow
\left[\, L^S_i(\theta),\, L^S_j(\theta)\,\right]=0.
\label{eqn:f0lsls0}
\end{eqnarray}
\label{theorem:f0lsls0}
\end{theorem}

\begin{proof}
If $F_{ij}$ equals zero,  
then both of the two terms in the left-hand side of $(\ref{eqn:Fij})$ 
must vanish,
because the first term is Hermitian and the second term is skew Hermitian.
Hence, if $F_{ij}=0$, $[\, L^S_i,\, L^S_j\, ]$ vanishes.

On the other hand, the identity
$\partial_i\partial_j\rho-\partial_j\partial_i\rho=0$, 
or its equivalence
\begin{eqnarray}
\left(\partial_iL^S_j-\partial_jL^S_i 
-\frac{1}{2}[ L^S_i,\, L^S_j ]\right)\rho+
\rho\left(\partial_iL^S_j-\partial_jL^S_i +\frac{1}{2}[ L^S_i,\, L^S_j ]\right)
=0,\nonumber
\label{eqn:didj}
\end{eqnarray}
implies
that
$\partial_i L^S_j-\partial_j L^S_i$ vanishes if $[\, L^S_i,\, L^S_j\, ]=0$,
because 
$\partial_i L^S_j-\partial_j L^S_i$ is Hermitian
and $\rho$ is positive definite.
Thus we see $F_{ij}=0$  if $L^S_i$ and $L^S_j$ commute.
\end{proof}

A manifold ${\cal M}$ is said to be {\it parallel} 
when the RPF between any two points along any curve vanishes.
From the definition, if ${\cal M}$ is parallel, RPF along
any closed loop vanishes, but the reverse is not necessarily true.
The following theorem is a generalization of Uhlmann's 
theory of $\Omega$-horizontal real plane \cite{Uhlmann:1993}.

\begin{theorem}
The following three conditions are equivalent.
\begin{itemize}
\item[{\rm (1)}] ${\cal M}$ is parallel.
\item[{\rm (2)}] Any element $\rho(\theta)$ of ${\cal M}$ writes
		$M(\theta)\,\rho_0\, M(\theta)$,
		where $M(\theta)$ is Hermitian  and 
		$M(\theta_0)\,M(\theta_1)=M(\theta_1)\,M(\theta_0)$
		for any $\theta_0, \theta_1 \in\Theta$.
\item[{\rm (3)}]$
		\forall i,j,\:\forall \theta_0,\,\theta_1\in\Theta,\:\;
		\left[\,L^S_i(\theta_0),\, L^S_j(\theta_1)\,\right]=0.
		$
\end{itemize}
\label{theorem:m-inphase}
\end{theorem}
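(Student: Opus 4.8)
The plan is to prove the cycle $(1)\Rightarrow(2)\Rightarrow(3)\Rightarrow(1)$, using throughout that in the faithful case $W$ is a square invertible matrix and the SLD is unique. For $(1)\Rightarrow(2)$ I would first turn parallelism into a single global section $W(\theta)=T(\theta)W_0$ of the bundle that is the horizontal lift along \emph{every} curve simultaneously: since the RPF around every closed loop is trivial the holonomy vanishes, so the horizontal lift to each $\theta$ is path-independent and yields $W(\theta)$ with $W(\theta_0)=W_0$, $T(\theta_0)=I$. Horizontality in each coordinate direction reads $\partial_i W=\frac12 L^S_i W$, hence $\partial_i T=\frac12 L^S_i T$. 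The in-phase condition of the parallel lift with the base point, namely that $W(\theta)^*W_0$ is Hermitian, becomes $T(\theta)^*=T(\theta)$ after cancelling the invertible $W_0$, so each $T(\theta)$ is Hermitian. Applying parallelism between an arbitrary pair $\theta_1,\theta_2$ gives $W(\theta_2)^*W(\theta_1)$ Hermitian, and substituting $W(\theta_i)=T(\theta_i)W_0$ and cancelling $W_0$ turns this into $T(\theta_2)T(\theta_1)=T(\theta_1)T(\theta_2)$. Thus $\{T(\theta)\}$ is a commuting family of Hermitian operators, and $\rho(\theta)=W(\theta)W(\theta)^*=T(\theta)\rho_0 T(\theta)$ with $\rho_0=\rho(\theta_0)$, which is exactly $(2)$ with $M(\theta)=T(\theta)$.

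For $(2)\Rightarrow(3)$ I would diagonalize the commuting Hermitian family $\{M(\theta)\}$ in a common, $\theta$-independent orthonormal basis, $M(\theta)=\mathrm{diag}(m_k(\theta))$. In that basis $\rho(\theta)_{kl}=m_k(\theta)(\rho_0)_{kl}m_l(\theta)$, and a short check shows that the diagonal operator with entries $2\,\partial_i\ln m_k(\theta)$ satisfies the defining equation $\partial_i\rho=\frac12(L^S_i\rho+\rho L^S_i)$; by uniqueness of the SLD for faithful states this diagonal operator \emph{is} $L^S_i(\theta)$. Since every $L^S_i(\theta)$ is diagonal in the same fixed basis, $[L^S_i(\theta_0),L^S_j(\theta_1)]=0$ for all indices and all parameter values, which is $(3)$.

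For $(3)\Rightarrow(1)$ I would observe that if all $L^S_i(\theta)$ mutually commute across points, then along any curve the generators $L^S_{d/dt}(t)$ at different times lie in one commutative algebra, so the time-ordered exponential solving $\frac{dW}{dt}=\frac12 L^S_{d/dt}W$ collapses to an ordinary exponential $W(t_1)=P\,W(t_0)$ with $P=\exp(\frac12\int L^S_{d/dt}\,dt)$ Hermitian. Then $W(t_1)^*W(t_0)=W(t_0)^*P\,W(t_0)$ is Hermitian, i.e.\ the endpoints are in phase and the RPF vanishes for every curve and every pair of endpoints, giving $(1)$.

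The main obstacle is the first implication, specifically isolating which consequences of parallelism are actually needed. Horizontality alone only yields $F_{ij}=0$ and hence the \emph{pointwise} relation $[L^S_i(\theta),L^S_j(\theta)]=0$ of Theorem~\ref{theorem:f0lsls0}, which is strictly weaker than $(3)$; it is the finite, global in-phase statements — with the base point (forcing $T=T^*$) and between arbitrary pairs (forcing the commuting family) — that upgrade local flatness to the rigid product structure $(2)$. Care is also needed to justify path-independence of the global section from triviality of the holonomy and to legitimately cancel $W_0$, both of which rely on faithfulness ($W$ square and invertible); the degenerate-spectrum bookkeeping, which would complicate a direct SLD-based argument for $(1)\Rightarrow(3)$, is precisely what routing through the intermediate condition $(2)$ avoids.
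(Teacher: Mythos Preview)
Your argument for $(1)\Rightarrow(2)$ is essentially the paper's: write the horizontal lift as $W(\theta)=M(\theta)W_0$, read Hermiticity of $M(\theta)$ from the in-phase condition with the base point, and read pairwise commutativity from the in-phase condition between arbitrary pairs. You add the step of justifying a well-defined global section from triviality of the loop holonomy, which the paper leaves implicit.

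The paper proves only $(1)\Leftrightarrow(2)$ and then delegates $(2)\Leftrightarrow(3)$ to a reference (Yung). Your cycle $(1)\Rightarrow(2)\Rightarrow(3)\Rightarrow(1)$ therefore goes beyond what the paper actually writes: your $(2)\Rightarrow(3)$ via simultaneous diagonalization of the commuting Hermitian family $\{M(\theta)\}$ and uniqueness of the faithful-state SLD is a clean self-contained argument, and your $(3)\Rightarrow(1)$ via collapse of the time-ordered exponential in a commutative algebra to a Hermitian propagator is correct as well. The overall route is the same in spirit; the value added is that you supply the steps the paper outsources.
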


\begin{proof}
Let $W(\theta_t)=M(\theta_t)W_0$ be a horizontal lift of
$\{\rho(\theta_t)\, |\, t\in{\bf R}\}\subset{\cal M}$.
Then, $W^*_0\, W(\theta_t)=W^*(\theta_t)\, W_0$ implies
$M(\theta_t)=M^*(\theta_t)$, and
$W^*(\theta_{t_0})\, W(\theta_{t_1})
=W^*(\theta_{t_1})\, W(\theta_{t_0})$ implies 
$M(\theta_{t_0})\, M(\theta_{t_1})=M(\theta_{t_1})\, M(\theta_{t_0})$.
Thus we get $(1)\Rightarrow(2)$. Obviously, the reverse also holds true.
For the proof of $(2)\Leftrightarrow(3)$,
see Ref. \cite{Yung}, pp.31-33.
\end{proof}

\section{Uhlmann's parallelism in the quantum estimation theory}
\label{sec:estphase}
To conclude the chapter, we present the theorems which 
geometrically characterize 
the locally quasi-classical model and
quasi-classical model,
by the  vanishing conditions of RPF, 
implying the close tie between
Uhlmann parallel transport and the quantum estimation theory.
They are straightforward consequences of
the definitions of the terminologies  and
theorems \ref{theorem:mpCR} -\ref{theorem:m-inphase}.
\begin{theorem}
${\cal M}$ is locally quasi-classical at $\theta$
iff $F_{ij}(\theta)=0$ for any $i,j$.
${\cal M}$ is locally quasi-classical at any $\theta\in\Theta$
iff the RPF for any loop vanishes.
\label{th:estuhl1}
\end{theorem}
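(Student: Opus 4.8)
The plan is to derive both assertions as immediate corollaries of the pointwise curvature--commutator equivalence already established in Theorem \ref{theorem:f0lsls0}, combined with the definition of local quasi-classicality and the infinitesimal-loop expansion $(\ref{eqn:Fij})$. First I would recall that, by definition, ${\cal M}$ is locally quasi-classical at $\theta$ precisely when $[L^S_i(\theta),L^S_j(\theta)]=0$ for all $i,j$. The first assertion then follows at once from the equivalence $(\ref{eqn:f0lsls0})$, which states pointwise that $F_{ij}(\theta)=0$ if and only if $[L^S_i(\theta),L^S_j(\theta)]=0$. Chaining these two facts yields ``locally quasi-classical at $\theta$'' $\Longleftrightarrow$ ``$F_{ij}(\theta)=0$ for all $i,j$,'' which is exactly the first half of the theorem.

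For the second assertion I would invoke the observation recorded just after $(\ref{eqn:Fij})$, namely that the RPF for any closed loop vanishes if and only if $F_{ij}$ vanishes at every point of ${\cal M}$. Applying the pointwise equivalence $(\ref{eqn:f0lsls0})$ at each $\theta\in\Theta$ converts the condition ``$F_{ij}(\theta)=0$ for all $\theta$'' into ``$[L^S_i(\theta),L^S_j(\theta)]=0$ for all $\theta$,'' i.e. local quasi-classicality at every $\theta\in\Theta$. Equivalently, one may simply quote the global statement of Theorem \ref{theorem:f0lsls0} verbatim, since it already packages ``RPF for any closed loop vanishes'' together with the global commutator condition. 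Either route closes the second half.

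Since both halves are purely logical consequences of the definitions and of Theorem \ref{theorem:f0lsls0}, there is no genuine computational obstacle here; as the surrounding text notes, these are straightforward corollaries. The only point requiring care is to observe that the proof of Theorem \ref{theorem:f0lsls0} is genuinely pointwise---it fixes a single $\theta$ and uses the Hermitian/skew-Hermitian splitting of $F_{ij}$ in one direction, and the second-order flatness identity $\partial_i\partial_j\rho=\partial_j\partial_i\rho$ together with the positive-definiteness of $\rho$ in the other. This justifies invoking the equivalence $(\ref{eqn:f0lsls0})$ at an isolated $\theta$ for the first assertion, and then quantifying it over all $\theta\in\Theta$ for the second.
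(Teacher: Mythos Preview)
Your proposal is correct and matches the paper's own treatment: the paper does not give an explicit proof but states that this theorem is a ``straightforward consequence of the definitions of the terminologies and theorems \ref{theorem:mpCR}--\ref{theorem:m-inphase},'' which is exactly the chain of reasoning you spell out. In particular, your use of the pointwise equivalence $(\ref{eqn:f0lsls0})$ from Theorem \ref{theorem:f0lsls0} together with the definition of local quasi-classicality is precisely the intended argument.
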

\begin{theorem}
 ${\cal M}$ is quasi-classical
iff ${\cal M}$ is parallel.
\end{theorem}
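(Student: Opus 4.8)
The plan is to reduce this equivalence entirely to the already-established Theorem~\ref{theorem:m-inphase}, so that essentially no new computation is required; all the substance has been packed into the prior results, as foreshadowed in the preamble to this section.

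First I would unwind the definition of \emph{quasi-classical}. By the definition given at the close of Section~\ref{sec:questf}, ${\cal M}$ is quasi-classical exactly when two requirements hold simultaneously: it is locally quasi-classical at every $\theta\in\Theta$, which by definition means $[L^S_i(\theta),L^S_j(\theta)]=0$ for all $i,j$; and, in addition, the symmetrized logarithmic derivatives taken at distinct parameter points commute, i.e. $[L^S_i(\theta),L^S_j(\theta')]=0$ whenever $\theta\neq\theta'$. Merging the coincident case $\theta=\theta'$ (covered by local quasi-classicality) with the distinct case, these two requirements collapse into the single statement that $[L^S_i(\theta_0),L^S_j(\theta_1)]=0$ for all indices $i,j$ and all $\theta_0,\theta_1\in\Theta$.

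The key observation is then that this merged statement is word-for-word condition~(3) of Theorem~\ref{theorem:m-inphase}. Since that theorem records the equivalence of condition~(3) with condition~(1), namely that ${\cal M}$ is parallel, chaining the two equivalences yields: ${\cal M}$ quasi-classical $\Longleftrightarrow$ condition~(3) $\Longleftrightarrow$ ${\cal M}$ parallel, which is exactly the assertion.

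The only point demanding any care --- and it is hardly an obstacle --- is to verify that the two separately phrased commutation clauses in the definition of quasi-classical genuinely exhaust the full two-parameter family $[L^S_i(\theta_0),L^S_j(\theta_1)]=0$: the diagonal case $\theta_0=\theta_1$ must be read off from local quasi-classicality rather than from the ``distinct points'' clause. Once that bookkeeping is acknowledged, the theorem follows immediately from Theorem~\ref{theorem:m-inphase} with no further argument.
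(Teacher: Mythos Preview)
Your proposal is correct and matches the paper's approach exactly: the paper states that this theorem is a ``straightforward consequence of the definitions of the terminologies and theorems \ref{theorem:mpCR}--\ref{theorem:m-inphase},'' which is precisely the reduction you carry out by identifying the definition of quasi-classical with condition~(3) of Theorem~\ref{theorem:m-inphase}.
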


\chapter{Nagaoka's quantum information geometry 
and Uhlmann's parallelism}
\section{Nagaoka's quantum information geometry}

In this section, we give brief review of 
Nagaoka's quantum information geometry,
which is another geometrical theory of the quantum 
statistical model than Uhlmann's parallelism.

In the Nagaoka's geometry, metric tensor is 
chosen to be SLD Fisher information matrix.
Letting ${\bf L}^S$ denote the linear mapping  from
the tangent vector to its SLD,
{\it $e$-parallel transport} $\pte$ is 
defined as follows:
\begin{eqnarray}
{\bf L}^S\left(\pte_{\rho\rightarrow\sigma} X\right)
&\equiv&{\bf L}^SX  
-\tr (\sigma {\bf L}^S X).
\label{eqn:ptte}
\end{eqnarray}
Note that in the faithful model, ${\bf L}^S$ is one to one mapping,
and the equation $(\ref{eqn:ptte})$ defines the connection
uniquely.
The dual $\ptm$ of $e$-parallel transport
with respect to SLD inner product 
is called {\it $m$-parallel transport},
\begin{eqnarray}
\forall X,Y\in{\cal T}_{\rho}({\cal P}_+)\:\:
\left\lgl \ptm_{\rho\rightarrow\sigma}X,
\pte_{\rho\rightarrow\sigma}Y \right\rgl_{\sigma}
=\lgl X,Y \rgl_{\sigma}.
\nonumber
\end{eqnarray}

For the  autoparallel manifold in $e$-connection,
the $e$- covariant derivative is calculated as
\begin{eqnarray}
{\bf L}^S\left(\,\nablae_X Y\,\right)=XL^S_Y-\tr \rho XL^S_Y,
\nonumber
\end{eqnarray}
and  
the tortion of $e$-connection $\Te$ as,
\begin{eqnarray}
{\bf L}^S 
\left(\,\Te(X,Y)\,\right)
&=&{\bf L}^S\left(\,\nablae_X Y-\nabla_Y X -[X,Y]\,\right)\nonumber\\
&=& X{\bf L}^S(Y)-Y{\bf L}^S(X),
\label{eqn:etortion}
\end{eqnarray}
or, equivalently,
\begin{eqnarray}
\Te(X,Y)\rho(\theta)
&=&\frac{1}{4}[[L^S_X,L^S_Y], \rho],
\label{eqn:etortion2}
\end{eqnarray}
where $X$ and $Y$ are understood as differential operators.

Nagaoka showed that 
these $e$- and $m$-connections 
nicely characterize the estimation theoretical properties of models,
in a different manner than Uhlmann's parallelism.
Is there any relation between the two geometrical structures?

\section{$w$-connection in ${\cal W}$}

To elucidate the relations between Uhlmann's parallelism and 
Nagaoka's information geometry, we consider
the geometry of the tangent bundle over the total space ${\cal W}_d$.

the logarithmic derivative ${\bf L}(\hat{X}) $ of 
$X\in{\cal T}_W({\cal W}_d)$ is a $d\times d$ complex matrix 
which satisfies the equation,
\begin{eqnarray}
{\bf L}(\hat{X}) W
={\bf M}(\hat{X}).
\label{eqn:deflogd}
\end{eqnarray}
${\bf L}(\partial/\partial\zeta^i)$ is often denoted by
$L_{\partial/\partial \zeta^i}$ for simplicity.
Notice the logarithmic derivative is uniquely defined iff
the rank $r$ of $W$ is equal to the dimension $d$ of 
the Hilbert space ${\cal H}$.

The real span of the logarithmic derivatives of at $\zeta$ is
\begin{eqnarray}
\{L\: |\: {\rm Re}\,\tr(L \pi(W(\zeta)))=0,
	L\in M_d(\C)\}.
\nonumber
\end{eqnarray}

We introduce the {\it $w$-connection} in ${\cal T}({\cal W})$
by the {\it $w$-parallel transport} defined by
\begin{eqnarray}
{\bf L}\left(\pttw_{W_0\rightarrow W_1}\hat{X}\right)
\equiv {\bf L}\hat{X}  -{\rm Re}\,\tr(\pi(W_1){\bf L}\hat{X}).
\label{eqn:defpttw}
\end{eqnarray} 
This $w$-parallel transport is left invariant 
 under the action of $U\in U(d)$ in the following sense:
\begin{eqnarray}
{\bf L}\left(\pttw_{W_0\rightarrow W_1}\hat{X}\right)
={\bf L}\left(\pttw_{W_0U\rightarrow W_1}\hat{X}\right)
={\bf L}\left(\pttw_{W_0\rightarrow W_1U}\hat{X}\right).
\nonumber
\end{eqnarray}

The covariant derivative $\nablaw$ 
and the tortion $w$-tortion $\Ttw(\hat{X},\hat{Y})$
are  easily calculated for the $w$-autoparallel 
submanifold of ${\cal W}$ as
\begin{eqnarray}
{\bf L}\left(\nablaw_{\hat{X}}\hat{Y} \right)
&=&\hat{X}{\bf L}(\hat{Y})
-{\rm Re}\,\tr \left(\pi(W)\hat{X}{\bf L}(\hat{Y})\right)I
\nonumber\\
{\bf L}\Ttw(\hat{X},\hat{Y})
&=&
{\bf L}\left(\,\nablaw_{\hat{X}} \hat{Y}-\nablaw_{\hat{Y}}\hat{X}
-[\hat{X},\,\hat{Y}]\,\right)\nonumber\\
&=&\frac{1}{2}[L_X,L_Y]
-{\rm Re}\,\tr \left\{\pi(W)(\hat{X}L_Y-\hat{Y}L_X)\right\}I,
\nonumber\\
\label{eqn:defTtwmtrx}
\end{eqnarray}
where 
the tangent vector $\hat{X}$ is understood as a differential operator,
and $I$ is the identity in the Hilbert space ${\cal H}$.
The latter equation is equivalent to
\begin{eqnarray}
{\bf M}\Ttw(\hat{X},\hat{Y})
=
\frac{1}{2}[L_X,L_Y]W
-{\rm Re}\,\tr \left\{ \pi(W)(\hat{X}L_Y-\hat{Y}L_X)\right\}W,
\label{eqn:mttw}
\end{eqnarray}
which is of use when the theory is generalized to non-faithful models.

\section{Projection of geometric structures}
In the beginning, we show that the Nagaoka's information geometry
is naturally induced from the geometry of the ${\cal T}({\cal W})$.

As in the definition,
the metric $\lgl *,* \rgl$ in Nagaoka's quantum information 
geometry is  induced from 
the natural metric $\lgll *,* \rgll$ in ${\cal T}({\cal W})$.

Not only the metric $\lgl *,* \rgl$, but also
the transport $\pte$ is  induced from 
$\pttw$ :
\begin{eqnarray}
\begin{array}{ccc}
         \rho=\pi(W)&         &\sigma=\pi(V)\\
  &\pttw&               \\
 \ds{\hat{X}=\hlift_{W}(X)\in{\cal T}_W({\cal W})}&--\longrightarrow
&\ds{\pttw_{W\rightarrow V}}\hat{X}\in{\cal T}_V({\cal W})
\\
\hlift\uparrow& &\downarrow\pi_*
\\
X\in{\cal T}_{\rho}({\cal P}_+)&--\longrightarrow
&\ds{\pte_{\rho\rightarrow\sigma}}
X\in{\cal T}_{\sigma}({\cal P}_+)
\\
  &\pte&
\end{array}
\label{eqn:ptwte}
\end{eqnarray}

The horizontal lift $\hlift$ satisfies
the following requirements so that the definition of
the transport $\pte$ by the diagrams above are
consistent:
\begin{eqnarray}
 \pi_*\left(\pttw_{W\rightarrow V}\hlift_{W}(X)\:\right)
       =\pi_*\left(\pttw_{WU\rightarrow VU'}\hlift_{WU}(X)\;\right),
\nonumber
\end{eqnarray}

Because of the diagram,
it is quite easy to see that
if the submanifold ${\cal N}$ of ${\cal W}$
is $w$-autoparallel,
the model ${\cal M}=\pi({\cal N})$ is $e$-autoparallel.

Some elementary calculations leads to  the following theorem,
which illustrates the relation between the two geometries.

\begin{theorem}
Let ${\cal M}$ be a submanifold of ${\cal P}_+$
which is induced from $w$-autoparallel submanifold 
${\cal N}$ of ${\cal W}$ by 
${\cal M}=\pi({\cal N})$
and 
$X$ and $Y$  tangent vectors to ${\cal P}_+$ such that,
\begin{eqnarray}
X&=&\sum_i x^i\frac{\partial}{\partial\theta^i},\nonumber\\
Y&=&\sum_i y^i\frac{\partial}{\partial\theta^i}.
\nonumber
\end{eqnarray}
$\Ttw (\hlift(X),\hlift(Y))$
is decomposed into the sum such that 
\begin{eqnarray}
{\bf L}\Ttw (\hlift(X),\hlift(Y))
&=&{\bf L}^S\Te(X,Y) - F_{XY},\nonumber\\
F_{XY}&=& \sum_{i,j}F_{i j}x^i y^j,
\nonumber
\end{eqnarray}
where ${\bf L}^{-1}({\bf L}^S\Te(X,Y))$ is a horizontal vector
and ${\bf L}^{-1} F_{XY}$ is a vertical subspace.
\label{theorem:lslsdecom}
\end{theorem}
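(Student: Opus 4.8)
The plan is to reduce the whole statement to an algebraic identity among Hermitian and skew-Hermitian operators, feeding off three formulas already established: the $w$-torsion formula $(\ref{eqn:defTtwmtrx})$, the $e$-torsion formula $(\ref{eqn:etortion})$, and the definition $(\ref{eqn:Fij})$ of $F_{ij}$. The first thing I would record is that the logarithmic derivative of a horizontal lift coincides with the SLD: combining $(\ref{eqn:hsld})$ with the defining relation $(\ref{eqn:deflogd})$ gives ${\bf L}(\hlift_W X)\,W={\bf M}(\hlift_W X)=L^S_X W$, and since $W$ is full rank in the faithful case ($r=d$) we get ${\bf L}(\hlift_W X)=L^S_X$. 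Because the torsion is a tensor I may take the $x^i,y^j$ constant and treat $X,Y$ as coordinate combinations; then $\hlift X$ projects under $\pi_*$ to $X$, so it acts on the $\pi$-pullback $L^S_Y=\mathbf{L}(\hlift Y)$ simply as $X$.

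Substituting ${\bf L}(\hlift X)=L^S_X$, ${\bf L}(\hlift Y)=L^S_Y$ into $(\ref{eqn:defTtwmtrx})$ and using $\pi(W)=\rho$ yields
\begin{eqnarray}
{\bf L}\Ttw(\hlift X,\hlift Y)
=\frac{1}{2}[L^S_X,L^S_Y]
-{\rm Re}\,\tr\left\{\rho\,(XL^S_Y-YL^S_X)\right\}I.
\nonumber
\end{eqnarray}
The next step is to kill the scalar trace term. Since $\tr\rho=1$ forces $\tr(\rho L^S_i)=\tr(\partial_i\rho)=0$, differentiating $\tr(\rho L^S_Y)=0$ along $X$ gives $\tr\{\rho(XL^S_Y)\}=-\tr\{(X\rho)L^S_Y\}$, and inserting $X\rho=\frac12(L^S_X\rho+\rho L^S_X)$ reduces this to $-\frac12\tr\{\rho(L^S_XL^S_Y+L^S_YL^S_X)\}$. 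The term coming from $YL^S_X$ is identical by symmetry of the anticommutator, so the two cancel and the trace term vanishes; hence ${\bf L}\Ttw(\hlift X,\hlift Y)=\frac12[L^S_X,L^S_Y]$.

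It then remains to identify this commutator as ${\bf L}^S\Te(X,Y)-F_{XY}$ and to certify the horizontal/vertical split. From $(\ref{eqn:etortion})$ one has ${\bf L}^S\Te(X,Y)=XL^S_Y-YL^S_X$, which is Hermitian; by the horizontality criterion (that ${\bf L}\hat X$ be Hermitian, obtained from $(\ref{eqn:defLS2})$ via ${\bf M}\hat X=({\bf L}\hat X)W$ and full rank of $W$), its ${\bf L}$-preimage is horizontal. From $(\ref{eqn:Fij})$, $F_{XY}=(XL^S_Y-YL^S_X)-\frac12[L^S_X,L^S_Y]$, so ${\bf L}^S\Te(X,Y)-F_{XY}=\frac12[L^S_X,L^S_Y]$, matching the computation above. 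Finally I would check $F_{XY}\rho+\rho F_{XY}^*=0$, the verticality criterion read off from $(\ref{eqn:defk2})$ through ${\bf M}\hat X=({\bf L}\hat X)W$: writing $S=XL^S_Y-YL^S_X$ (Hermitian) and $C=[L^S_X,L^S_Y]$ (skew-Hermitian), the left side equals $(S\rho+\rho S)-\frac12[C,\rho]$, which vanishes exactly by the integrability identity $S\rho+\rho S=\frac12[C,\rho]$ obtained from $\partial_i\partial_j\rho=\partial_j\partial_i\rho$ (the relation displayed in the proof of Theorem $\ref{theorem:f0lsls0}$).

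The main obstacle is not any single long calculation but correctly assembling the two independent structural facts: the vanishing of the trace term, which is where the normalization $\tr\rho=1$ enters, and the verticality of $F_{XY}$, which is precisely the content of the second-order integrability of $\rho$. Recognizing that the single identity $\partial_i\partial_j\rho=\partial_j\partial_i\rho$ both reconciles $(\ref{eqn:etortion})$ with $(\ref{eqn:etortion2})$ and produces the horizontal/vertical splitting is the conceptual heart of the argument; everything else is bookkeeping with Hermitian and skew-Hermitian parts.
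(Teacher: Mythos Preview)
Your proof is correct and is precisely the ``elementary calculation'' the paper alludes to without writing out; the paper offers no explicit argument beyond that phrase, and you have supplied it in full, correctly using $(\ref{eqn:defTtwmtrx})$, $(\ref{eqn:etortion})$, $(\ref{eqn:Fij})$, the normalization $\tr\rho=1$ to kill the scalar term, and the integrability identity from the proof of Theorem~$\ref{theorem:f0lsls0}$ to verify verticality of $F_{XY}$.
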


In other words, the horizontal component of the $w$-tortion  is
the $e$-tortion and the vertical component is 
the curvature form of the Uhlmann parallelism.

\chapter{The duality between the observed system and the hidden system}
\section{The duality of SLD and RLD}
First, we define the {\it right logarithmic derivative} (RLD, in short),
which played quite important role 
in the estimation theory of the Gaussian model,
which is a superposition of the coherent states by the Gaussian kernel
( see Ref. \cite{YuenLax:1973}).

RLD $L^R_i(\theta)$ of the parameter $\theta^i$ is defined by
the equation
\begin{eqnarray}
  \frac{\partial \rho(\theta)}{\partial \theta^i}
=L^R_i(\theta)\rho(\theta),
\nonumber
\end{eqnarray}
and RLD ${\bf L}^R X$ of the tangent vector 
by the equation
\begin{eqnarray}
\hat{X}\rho(\theta)
=({\bf L}^R X)\rho(\theta).
\nonumber
\end{eqnarray}

Our question is why we need two types of logarithmic derivatives,
 SLD and RLD namely, and what the relations between them.
To answer the question, we interpret
the total space ${\cal W}_d$,
like in the section $\ref{sec:defuhlmann}$,
as the space of the state vector $|\Phi\rgl$ 
in the bigger Hilbert space ${\cal H}\otimes{\cal H}'$,
where we took the dimension of ${\cal H}'$ to be $d$, 
the dimension of ${\cal H}$.
Let us call ${\cal H}$ the observed system, and ${\cal H}'$ the hidden system,
and the partial trace over ${\cal H}$ and ${\cal H}'$ are 
denoted by $\pi$ and $\pi'$, respectively.

In terms of $W$, $\pi$ and $\pi'$ write
\begin{eqnarray}
\pi(W)&=&WW^*=\rho,\nonumber\\
\pi'(W)&=&W^*W=\sigma,
\nonumber
\end{eqnarray}
and $\pi_*$ and $\pi'_*$ write
\begin{eqnarray}
\pi_*(\hat{X})\rho&=&
	\frac{1}{2}(\:
	({\bf M}\hat{X})W^*+W({\bf M}\hat{X})^*\: ),\nonumber\\
\pi'_*(\hat{X})\sigma&=&
	\frac{1}{2}(\:
	({\bf M}\hat{X})^*W+W^*({\bf M}\hat{X})\: ),
\label{eqn:defpi'*}
\end{eqnarray}

The RLD subspace ${\cal LR}_W$ of ${\cal T}_W({\cal W}_d)$ is
the space of all vectors which satisfies,
\begin{eqnarray}
{\bf L}(\hat{X})={\bf L}^R(\pi_*(\hat{X})),
\label{eqn:defLR1}
\end{eqnarray}
or, its equivalence,
\begin{eqnarray}
	({\bf M}\hat{X})W^*=W({\bf M}\hat{X})^*.
\label{eqn:defLR2}
\end{eqnarray}

Then, from $(\ref{eqn:defpi'*})$,
$(\ref{eqn:defLR2})$ and $(\ref{eqn:defLS2})$,
we have
\begin{eqnarray}
{\cal LR}_W&=&
\{\hat{X}\; |\;{\bf L}(\hat{X})={\bf L}^S(\pi'_*(\hat{X}))\},\\
{\cal LS}_W&=&
\{\hat{X}\; |\;{\bf L}(\hat{X})={\bf L}^R(\pi'_*(\hat{X}))\}.
\label{eqn:dualsr}
\end{eqnarray}
In other words, looking from the hidden system,
the RLD subspace looks like the horizontal subspace,
and the horizontal subspace looks like the RLD subspace.
We call the fact $(\ref{eqn:dualsr})$ 
the {\it duality between SLD and RLD}.

The orthogonal complement subspace ${\cal SK}_W$ of the RLD subspace
is also 
dual of   ${\cal K}_W$ in the following sense.
${\cal SK}_W$ is the space of all the tangent vectors which satisfies
\begin{eqnarray}
{\bf L}\hat{X}= -({\bf L}\hat{X})^*
\label{eqn:defsk1}
\end{eqnarray}
or, equivalence,
\begin{eqnarray}
({\bf M}\hat{X})W^*+W({\bf M}\hat{X})^*=0,
\label{eqn:defsk2}
\end{eqnarray}
which yields
\begin{eqnarray}
\pi'_*(\hat{X})=0.
\label{eqn:pi'*=0}
\end{eqnarray}
Metaphorically speaking,
$(\ref{eqn:pi*=0})$ and $(\ref{eqn:pi'*=0})$
implies that ${\cal SK}_W$ looks like ${\cal K}_W$
seen from the hidden system.

$(\ref{eqn:defsk1})$ means that 
for any member $\hat{X}$ of ${\cal SK}_W$,
$\pi_* (\hat{X})$ corresponds to  a unitary motion of the observed system.
The dual of this statement is also valid:
for any member $\hat{X}$ of ${\cal K}_W$,
$\pi'_*(\hat{X})$ corresponds to a unitary motion of the hidden system.
This statement reflects the physical fact that the unitary motion of 
the hidden system do not affect the observed system.

\begin{lemma}
The intersection of the ${\cal LS}_W$ and ${\cal LR}_W$ is
given by
\begin{eqnarray}
&&{\cal LS}_W\cap{\cal LR}_W\nonumber\\
&=&\{\hat{X}\: |\:  [{\bf L}^S (\pi_*(\hat{X})),\pi(W)]=0,\, 
				({\bf L}\hat{X})^*={\bf L}\hat{X},\,\}\nonumber\\
&=&\{\hat{X}\: |\:  [{\bf L}^S (\pi'_*(\hat{X})),\pi'(W)]=0,\, 
				({\bf L}\hat{X})^*={\bf L}\hat{X},\,\}.
\label{eqn:lscaplr}
\end{eqnarray}

The intersection of the ${\cal K}_W$ and ${\cal SK}_W$ is
given by
\begin{eqnarray}
&&{\cal K}_W\cap{\cal SK}_W\nonumber\\
&=&\{\hat{X}\: |\:  [{\bf L}(\hat{X}),\pi(W)]=0,\, 
				({\bf L}\hat{X})^*=-{\bf L}\hat{X},\,\}\nonumber\\
&=&\{\hat{X}\: |\:  [{\bf L}'(\hat{X}),\pi'(W)]=0,\, 
				({\bf L}\hat{X})^*=-{\bf L}\hat{X},\,\},
\label{eqn:kcapsk}
\end{eqnarray}
where ${\bf L}'(\hat{X})$ is defined by
\begin{eqnarray}
{\bf L}'(\hat{X})\equiv W^{-1}{\bf M}(\hat{X})
\nonumber
\end{eqnarray}
\label{lemma:lscaplr}
\end{lemma}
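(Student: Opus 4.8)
The plan is to reduce every subspace membership to an algebraic condition on the logarithmic derivative ${\bf L}(\hat{X})$ (and on its conjugate ${\bf L}'(\hat{X})=W^{-1}{\bf M}(\hat{X})$), and then simply intersect these conditions. The dictionary comes from substituting ${\bf M}(\hat{X})={\bf L}(\hat{X})W$ into the defining relations (\ref{eqn:defLS2}), (\ref{eqn:defLR2}), (\ref{eqn:defsk1}) and (\ref{eqn:defk2}) and cancelling the invertible factors $W,W^*$. Writing ${\bf L}={\bf L}(\hat{X})$, $\rho=\pi(W)=WW^*$ and $\sigma=\pi'(W)=W^*W$, this gives
\begin{eqnarray}
\hat{X}\in{\cal LS}_W &\Longleftrightarrow& {\bf L}={\bf L}^*,\nonumber\\
\hat{X}\in{\cal LR}_W &\Longleftrightarrow& {\bf L}\rho=\rho{\bf L}^*,\nonumber\\
\hat{X}\in{\cal SK}_W &\Longleftrightarrow& {\bf L}=-{\bf L}^*,\nonumber\\
\hat{X}\in{\cal K}_W &\Longleftrightarrow& {\bf L}\rho=-\rho{\bf L}^*.\nonumber
\end{eqnarray}

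First I would treat ${\cal LS}_W\cap{\cal LR}_W$. Combining ${\bf L}={\bf L}^*$ with ${\bf L}\rho=\rho{\bf L}^*$ yields ${\bf L}\rho=\rho{\bf L}$, i.e.\ $[{\bf L},\rho]=0$ together with Hermiticity of ${\bf L}$. Since a vector with Hermitian ${\bf L}$ lies in the horizontal subspace, ${\bf L}={\bf L}^S(\pi_*(\hat{X}))$ by (\ref{eqn:hsld}), and the pair of conditions becomes exactly the first (unprimed) description $[{\bf L}^S(\pi_*(\hat{X})),\pi(W)]=0$, $({\bf L}\hat{X})^*={\bf L}\hat{X}$. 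The intersection ${\cal K}_W\cap{\cal SK}_W$ is handled the same way: ${\bf L}=-{\bf L}^*$ and ${\bf L}\rho=-\rho{\bf L}^*$ again force $[{\bf L},\rho]=0$, now with ${\bf L}$ skew-Hermitian, which is the first line of (\ref{eqn:kcapsk}).

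The remaining work is to match these unprimed descriptions with the primed ones. For ${\cal K}_W\cap{\cal SK}_W$ this is a one-line conjugation: with ${\bf L}'=W^{-1}{\bf L}W$ one finds $[{\bf L}',\sigma]=W^{-1}{\bf L}\rho W-W^*{\bf L}W$, and substituting ${\bf L}\rho=\rho{\bf L}=WW^*{\bf L}$ makes this vanish, so $[{\bf L},\rho]=0$ is equivalent to $[{\bf L}'(\hat{X}),\pi'(W)]=0$. For ${\cal LS}_W\cap{\cal LR}_W$ I would pass through the hidden system via the polar decomposition $W=UP$ with $P=\sqrt{\sigma}$, so that $\rho=UP^2U^*$ and $[{\bf L},\rho]=0$ is equivalent to $B:=U^*{\bf L}U$ commuting with $P$, hence with $\sigma$. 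A short computation then identifies the Hermitian $B$ as the solution of the SLD equation for $\pi'_*(\hat{X})$ at $\sigma$, i.e.\ ${\bf L}^S(\pi'_*(\hat{X}))=U^*{\bf L}U$, which manifestly commutes with $\sigma$; this is the primed description. Equivalently, this step is precisely the SLD--RLD duality (\ref{eqn:dualsr}): lying in both ${\cal LS}_W$ and ${\cal LR}_W$ forces the SLD and RLD of $\pi'_*(\hat{X})$ at $\sigma$ to coincide, and SLD $=$ RLD holds iff the common derivative commutes with $\sigma$.

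I expect the only genuinely delicate point to be this last identification for ${\cal LS}_W\cap{\cal LR}_W$, where ${\bf L}^S(\pi'_*(\hat{X}))$ is \emph{not} ${\bf L}$ itself but its unitary conjugate $U^*{\bf L}U$. Keeping straight which logarithmic derivative (${\bf L}$, versus ${\bf L}'$, versus the SLD computed on the hidden side) appears in each of the four characterizations is where care is needed; everything else is the routine substitution ${\bf M}(\hat{X})={\bf L}(\hat{X})W$ followed by cancellation of the invertible factors $W,W^*$, exactly the kind of manipulation used throughout this chapter.
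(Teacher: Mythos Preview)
Your proposal is correct and follows essentially the same route as the paper: translate each subspace membership into an algebraic condition on the logarithmic derivative, intersect, and then invoke the SLD--RLD duality for the hidden-side descriptions. The paper's proof does exactly this but tersely, deriving $[{\bf L}(\hat X),\pi(W)]=0$ from the $\mathcal{LR}_W$ condition together with Hermiticity of ${\bf L}$ on $\mathcal{LS}_W$, and then dismissing the primed identities and the $\mathcal{K}_W\cap\mathcal{SK}_W$ case with ``obtained in the same manner.''

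One remark on your ``delicate point'': the polar-decomposition computation is fine, but you can avoid it entirely by repeating the whole argument with ${\bf L}'(\hat X)=W^{-1}{\bf M}(\hat X)$ in place of ${\bf L}(\hat X)$. Substituting ${\bf M}(\hat X)=W{\bf L}'$ into the defining relations swaps the roles: $\mathcal{LR}_W$ becomes ${\bf L}'={\bf L}'^*$ and $\mathcal{LS}_W$ becomes $\sigma{\bf L}'={\bf L}'^*\sigma$, so the intersection is $[{\bf L}',\sigma]=0$ with ${\bf L}'$ Hermitian, and Hermiticity of ${\bf L}'$ is precisely what makes ${\bf L}'={\bf L}^S(\pi'_*(\hat X))$ on the hidden side. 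This is presumably what the paper means by ``in the same manner,'' and it sidesteps the bookkeeping of which conjugate of ${\bf L}$ one is looking at.
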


The former statement of the theorem
means that for any vector $\hat{X}\in {\cal LS}_W \cap {\cal LR}_W$,
$\pi_*(\hat{X})$ and $\pi'_*(\hat{X})$ correspond
to the change of eigenvalues of $\pi(W)$ and $\pi'(W)$, respectively.
The latter statement implies that
 for any vector $\hat{X}\in {\cal K}_W\cap {\cal SK}_W$,
$\pi_*(\hat{X})$ and $\pi'_*(\hat{X})$ correspond
to the change of the phase of the eigenvectors of of
$\pi(W)$ and $\pi'(W)$, respectively.

\begin{proof}
$(\ref{eqn:defLR1})$
yields 
\begin{eqnarray}
\pi_*(\hat{X})=({\bf L}\hat{X})\pi(W)=\pi(W)({\bf L}\hat{X})^*,
\nonumber
\end{eqnarray}
which, combined with $(\ref{eqn:hsld})$, yields
\begin{eqnarray}
[({\bf L}\hat{X}),\pi(W)]=0
\nonumber
\end{eqnarray}
Because  ${\bf L}(\hat{X})={\bf L}^S (\pi_*(\hat{X})$ holds true
for any $\hat{X}\in{\cal LS}_W$, we have the first equality
in $(\ref{eqn:lscaplr})$. The second equality in $(\ref{eqn:lscaplr})$ 
and the equalities in 
$(\ref{eqn:kcapsk})$ are obtained in the same manner.
\end{proof}

\section{Canonical distribution}
In this section, as an application of the duality of SLD and RLD,
we try an estimation theoretical characterization 
of the canonical  model.

One conspicuous feature of the canonical distribution model
is that only the eigenvalue of the density matrix 
is dependent on the parameter, and that the eigenvector
is left unchanged even if the parameter changed.
Other thermodynamical models, 
for example, the $T-p$ model 
\begin{eqnarray}
\left\{\rho(T,p)\: \left|\: 
\rho(T,p)=
\sum_{\omega} |\omega\rgl\lgl \omega|
\exp\left[-\frac{1}{k_B T}
\left(E_{\omega}-pV_{\omega} -G(T,p)\,\right)\right] 
\right.\right\}
\nonumber
\end{eqnarray}
and 
the grand canonical model
\begin{eqnarray}
\left\{\rho(T,\mu)\: \left|\: 
\rho(T,\mu)=
\sum_{\omega} |\omega\rgl\lgl \omega|
\exp\left[ -\frac{1}{k_B T}
\left(E_{\omega}-\mu N_{\omega}+Y(T, \mu)\,\right)\right]
\right.\right\}
\nonumber
\end{eqnarray}
also share this feature.
Here,  $H$ is the Hamiltonian and 
$E_{\omega}$  the   $\omega$th eigenvalue of $H$, 
and $|\omega\rgl$ the  $\omega$th  eigenvector of $H$.
We call the model which has this feature the {\it classical model}.

Let us require  first that the canonical distribution is
 a pure state in the composite Hilbert space
${\cal H}\otimes {\cal H}'$, 
where ${\cal H}$ is for the system and ${\cal H}'$  for the heat bath
( taking trace over the heat bath, we have the canonical distribution).
In other words, we assume that  the pure state model
\begin{eqnarray}
{\cal N}\equiv
\left\{|W(T)\rgl\lgl W(T)|\:\: |\: |W(T)\rgl\in {\cal H}\otimes {\cal H}'
\right\},
\label{eqn:totmdl}
\end{eqnarray}
the partial trace $\pi$ over the hidden system ${\cal H}'$
reduces to the canonical model ${\cal M}$.
We denote by ${\cal M}'$ the model induced from
${\cal N}$ by the partial trace $\pi'$ 
over the observed system ${\cal H}$.

Second, we assume the following situation: 
the optimization of the measurement in the estimator of 
the temperature over any of the following three range  
\begin{enumerate}
\item all the measurements in ${\cal H}$
\item all the measurements in ${\cal H}'$
\item all the measurements in ${\cal H}\otimes{\cal H}'$
\end{enumerate}
achieves  exactly the same extent of the efficiency.
In usual situation, we can achieve more efficiency in the case of $3$ than
the other cases, for the range of the optimization is larger.
However, as for the macroscopic parameter like 
the temperature, it is natural to assume that
the measurement of the total system do not bring about more information
than the measurement of the system.
In addition, the measurement of the temperature in the system and 
the heat bath must yield same amount of information, 
because they are in the thermal equilibrium.
In this situation, we  say that  
the model ${\cal M}$ and ${\cal N}$ are {\it maximally entangled}.

\begin{theorem}
The  model ${\cal M}$ and ${\cal M}'$ 
induced by the projection $\pi$ and $\pi'$ from 
the pure state model ${\cal N}$ are classical
iff they are maximally entangled.
\end{theorem}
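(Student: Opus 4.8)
The plan is to read the phrase \emph{maximally entangled} as an equality of three SLD Fisher informations and then to convert that equality, through the bundle geometry, into the commutation condition defining a classical model. Since the canonical model carries the single parameter $T$, Theorem \ref{th:gen1pCR} guarantees that the SLD CR bound is attained in each of the three situations: the state accessible to a measurement confined to ${\cal H}$ is $\rho=\pi(W)=WW^*$, that confined to ${\cal H}'$ is $\sigma=\pi'(W)=W^*W$, and that available on ${\cal H}\otimes{\cal H}'$ is the pure state $|W\rgl\lgl W|$. Hence the three optimal efficiencies are $1/J^S_{\cal M}$, $1/J^S_{{\cal M}'}$ and $1/J^S_{\cal N}$, and maximal entanglement is precisely the statement $J^S_{\cal M}=J^S_{{\cal M}'}=J^S_{\cal N}$. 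Since $\pi,\pi'$ are partial traces, the inequalities $J^S_{\cal M},J^S_{{\cal M}'}\le J^S_{\cal N}$ hold automatically, and (as re-emerges below in the form of a non-negative squared length) all the content lies in the equality case.

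Next I would express these informations geometrically. Lifting the tangent $\hat X=d/dT$ to ${\cal T}_W({\cal W}_d)$, with ${\bf M}\hat X=2\dot W$ where $\dot W=dW/dT$, the induced metric gives $J^S_{\cal M}=\lgll\hlift_W X,\hlift_W X\rgll_W$, the squared length of the horizontal (i.e.\ ${\cal LS}_W$) component of $\hat X$, whereas $J^S_{\cal N}=\lgll\hat X,\hat X\rgll_W$ once the $U(1)$ phase part of the vertical component is removed (for the natural purification of the canonical state one may take $\tr W^*\dot W=0$, and all three $J^S$ are phase invariant). Subtracting, $J^S_{\cal N}-J^S_{\cal M}$ equals the squared length of the part of $\hat X$ lying in the $SU({\cal H}')$ vertical directions of ${\cal K}_W$; it therefore vanishes exactly when $\hat X$ carries no such component, i.e.\ when, in the phase gauge just fixed, $\hat X\in{\cal LS}_W$. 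Interchanging the observed and hidden systems through the duality $(\ref{eqn:dualsr})$ gives the mirror statement $J^S_{{\cal M}'}=J^S_{\cal N}\iff\hat X\in{\cal LR}_W$; since the phase component of $\hat X$ is the single number $\lgll\hat X,\hat P\rgll_W$ (with $\hat P$ the unit phase generator, ${\bf M}\hat P=iW$) seen identically from either side, one phase adjustment lands $\hat X$ in ${\cal LS}_W\cap{\cal LR}_W$. Thus maximal entanglement is equivalent to $\hat X\in{\cal LS}_W\cap{\cal LR}_W$.

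Finally I would invoke Lemma \ref{lemma:lscaplr}. Its first description of the intersection forces $[L^S(\pi_*\hat X),\pi(W)]=0$, that is $[L^S_\rho,\rho]=0$, while its second gives $[L^S_\sigma,\sigma]=0$; conversely these two commutators, together with the automatic Hermiticity of ${\bf L}\hat X$ in the chosen gauge, put $\hat X$ back into the intersection. Since $[L^S_\rho,\rho]=0\iff[\dot\rho,\rho]=0\iff$ the spectral projections of $\rho(T)$ are $T$-independent $\iff{\cal M}$ is classical, and likewise for $\sigma$ and ${\cal M}'$, the chain closes: ${\cal M}$ and ${\cal M}'$ are classical iff they are maximally entangled.

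The forward direction is in fact the easy one: if the models are classical, writing $W(T)=\sum_\omega\sqrt{p_\omega(T)}\,|\omega\rgl\lgl\omega'|$ in a fixed Schmidt frame collapses all three informations to the classical value $\sum_\omega\dot p_\omega^2/p_\omega$ in one line. The real work, and the step I expect to be the main obstacle, is the equality analysis of the monotonicity inequality: showing cleanly that $J^S_{\cal M}$ is the squared horizontal length and that the deficit $J^S_{\cal N}-J^S_{\cal M}$ is exactly the $SU({\cal H}')$ vertical norm, so that its vanishing is equivalent to ${\cal LS}_W$ membership, and then disposing of the $U(1)$ phase gauge carefully enough that both equalities can be realised in one common gauge. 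Once that is in place, Lemma \ref{lemma:lscaplr} and the elementary characterization of a classical model finish the proof.
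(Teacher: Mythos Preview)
Your approach is essentially the paper's: translate maximal entanglement into $\hat X=d/dT\in{\cal LS}_W\cap{\cal LR}_W$ via the equalities of Fisher information with the pure-state total, and then invoke Lemma~\ref{lemma:lscaplr}. The paper's argument is terser and less careful than yours in two respects. First, it states the attainable bound on ${\cal N}$ as $1/\tr A(T)A^*(T)$ for the tangent $A=2\dot W$ without discussing the $U(1)$ phase, implicitly assuming the lift $W(T)$ already satisfies $\tr W^*\dot W=0$; your explicit gauge discussion fills that gap. Second, the paper only argues the direction ``maximally entangled $\Rightarrow$ classical'' in its proof, whereas you supply the easy converse via a fixed Schmidt frame. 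So your write-up is the same route, but more complete.
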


To prove the theorem, we need the following fact 
in the pure state estimation theory,
which is explained in the later chapters in detail.

\begin{fact}
The attainable lower bound of the variance of 
the unbiased estimator of the pure state model $(\ref{eqn:totmdl})$
is given by $1/\tr A(T)A^*(T)$, where $A(T)$ is 
the matrix representation of 
the tangent vector $d/dT|_{W}\in {\cal T}_W({\cal W})$
to the ${\cal W}$.
\end{fact}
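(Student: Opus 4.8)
The plan is to reduce the assertion to the one--parameter SLD Cram\'er--Rao bound and then evaluate that bound explicitly in terms of the purification $W(T)$. Since ${\cal N}$ in $(\ref{eqn:totmdl})$ is a model with the single real parameter $T$, Theorem $\ref{th:gen1pCR}$ applies and tells us that the equality in the SLD CR inequality $(\ref{eqn:genCR})$ is achievable. Consequently the attainable lower bound of the variance of an unbiased estimator of $T$ is exactly $1/J^S(T)$, where $J^S(T)$ is the scalar SLD Fisher information $(\ref{eqn:defgenfshr})$ of the pure state $\rho(T)=|W(T)\rgl\lgl W(T)|$ regarded as a density operator on ${\cal H}\otimes{\cal H}'$. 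Thus everything comes down to showing $J^S(T)=\tr A(T)A^*(T)$.

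To evaluate $J^S(T)$ I would use the pure--state form of the SLD. Writing $\rho=|W\rgl\lgl W|$ with $\tr W^*W=1$, one checks directly from $(\ref{eqn:defsldx})$ that $L^S=2\partial_T\rho$ solves the defining equation for the SLD of a rank--one state, so that the Fisher information $(\ref{eqn:defgenfshr})$ reduces to the familiar Fubini--Study expression
\begin{eqnarray}
J^S(T)&=&4\left(\lgl\dot W|\dot W\rgl-|\lgl W|\dot W\rgl|^2\right)\nonumber\\
&=&4\left(\tr\dot W^*\dot W-|\tr W^*\dot W|^2\right),
\nonumber
\end{eqnarray}
where $\dot W=dW/dT$. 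On the other hand, by the definition of the matrix representation, $A(T)={\bf M}(d/dT)=2\dot W$, so $\tr A(T)A^*(T)=4\tr\dot W\dot W^*=4\tr\dot W^*\dot W$. Comparing the two, the desired identity $J^S=\tr AA^*$ holds precisely when the cross term $\lgl W|\dot W\rgl=\tr W^*\dot W$ vanishes.

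The crux of the argument --- and the step I expect to cause the most trouble --- is therefore disposing of this cross term. Differentiating the normalization $\tr W^*W=1$ gives $2\,{\rm Re}\,\tr W^*\dot W=0$, so its real part already vanishes; what remains is the imaginary part, which is nothing but the Berry (Uhlmann) connection one--form of the curve $T\mapsto W(T)$. Because only the ray $|W\rgl\lgl W|$ carries physical meaning, the phase of $W(T)$ is a gauge freedom: replacing $W(T)$ by $e^{i\phi(T)}W(T)$ with $\dot\phi=i\,\tr W^*\dot W\in\R$ produces a lift for which $\tr W^*\dot W=0$, while neither $J^S$ nor the estimation problem is affected. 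I would phrase this as choosing $W(T)$ so that $d/dT|_W$ is horizontal; then $A(T)=2\dot W$ satisfies $\tr A A^*=4\tr\dot W^*\dot W=J^S(T)$, and the attainable bound equals $1/\tr A(T)A^*(T)$ as claimed.

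A point worth checking before committing to the above is that Theorem $\ref{th:gen1pCR}$ genuinely applies in the pure--state case: the SLD $(\ref{eqn:defsldx})$ is not unique when $\rho$ fails to be strictly positive, so one must confirm that the optimal locally unbiased measurement invoked in that theorem can still be built from $L^S=2\partial_T\rho$ for a rank--one state. This is established in the pure--state estimation theory developed in the later chapters, to which the statement explicitly defers, so I would cite that development rather than reprove attainability here.
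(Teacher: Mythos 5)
Your proposal is correct and follows essentially the route the paper itself intends: the paper states this Fact without proof, deferring to the Part II pure-state theory (one-parameter attainability of the SLD CR bound, as in Theorem \ref{th:gen1pCR} and Theorem \ref{theorem:purelsls0}), and your reduction to $1/J^S(T)$ with $J^S(T)=4\left(\tr \dot{W}^*\dot{W}-|\tr W^*\dot{W}|^2\right)$ is exactly what that machinery yields. Your explicit disposal of the cross term $\tr W^*\dot{W}$ by a phase gauge choice, i.e.\ taking $d/dT|_W$ horizontal, makes precise a point the paper leaves implicit (horizontality of $d/dT|_W$ is in any case forced in the subsequent proof of the maximal-entanglement theorem, where $d/dT|_W\in{\cal LS}_W$ together with the normalization gives $\tr W^*\dot{W}=0$), so nothing essential is missing.
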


\begin{proofth}
Here, we assume $\dim {\cal H}'$ is equal to $d=\dim {\cal H}$.
For the efficiency of the estimation in the case of $1$
is equal to that in the case of $3$, 
$d/dT|_{W}\in {\cal T}_W({\cal W})$
need to be a member of the horizontal subspace ${\cal LS}_W$,
because 
the length of the horizontal component
of $d/dT|_{W}\in {\cal T}_W({\cal W})$ gives
the SLD Fisher information of the model ${\cal M}$.

Mostly in the same manner, it can be proved that
$d/dT|_{W}\in {\cal T}_W({\cal W})$ is the member of ${\cal LR}_W$
for the efficiency of the estimation in the case of $2$
to be equal to that in the case of $3$.
Therefore,
\begin{eqnarray}
\left. \frac{d}{dT}\right|_{W} \in{\cal LS}_W\cap {\cal LR}_W,
\nonumber
\end{eqnarray}
which, mixed with lemma $\ref{lemma:lscaplr}$
 leads to the statement of the theorem.
\end{proofth}

This theorem, 
which also applies to the grand canonical model 
and the $T-p$ model,
 characterize the entanglement between
the heat bath and the system.


\part*{Part II \\ 
The pure state  model theory}
\addcontentsline{toc}{chapter}%
{Part II :
The pure state  model theory}

\chapter{The pure state estimation theory}
\section{Histrical review of the theory and the purpose of the chapter}
First, we review the history of quantum estimation theory
to clarify the purpose of the chapter.

In parallel with the classical estimation theory,
in 1967, Helstrom show that in the faithful state model,
the covariance matrix is larger than or equal to the inverse of 
SLD Fisher information matrix, and that in the one parameter faithful model,
the bound is attainable \cite{Helstrom:1967}\cite{Helstrom:1976}.

On the other hand, in the multi-parameter model,
it is proved 
that there is no matrix 
which makes attainable lower bound of covariance matrix,
because of non-commutative nature of quantum theory.
Therefore, we deal with the attainable CR type bound defined 
in the section $\ref{sec:nonclassical}$.
Lower bounds of $\Tr GV_{\theta}[M]$
is, attainable or not,  called {\it Cramer-Rao (CR) type bound}.

Five years after Helstrom's work, 
Yuen and Lax found out the exact form of the attainable CR type bound 
of the Gaussian state model,
which is a faithful 2-parameter model obtained by superposition of
coherent states by Gaussian kernel\cite{YuenLax:1973}.
Their work is remarkable not only 
because it was first calculation of the attainable CR type bound of 
a multi-parameter model, 
but also because 
they established a kind of methodology,
which we call {\it indirect approach} hereafter,
 to calculate the attainable CR type bound.
First an auxiliary bound which is not generally attainable
is found out 
and then it is proved to be attained in the specific cases.
In their work, they used so-called RLD bound, 
which was used and generalized by several authors.

Holevo completed their work by solving analytically subtle problems
and generalizing RLD bound
\cite{Holevo:1982}. 
In 1998, Nagaoka calculated the attainable CR type bound of
the faithful 2-parameter spin-$1/2$ model
using Nagaoka bound, which is also another auxiliary bound
\cite{Nagaoka:1989}.

The models had been assumed to be faithful
till
Fujiwara and Nagaoka 
formulated the problem in the pure state model,
and calculated the CR type bound for the 1-parameter model
by use of generalized SLD bound 
and that of the 2-parameter coherent model 
by use of generalized RLD bound in 1995
\cite{FujiwaraNagaoka:1995}\cite{FujiwaraNagaoka:1996}.

The approach in this chapter,  called {\it direct approach} 
in contrast with {\it indirect approach},
is essentially different from the approaches of other authors.
We reduce the given minimization problem to the problem
which is easy enough to be solved directly
by elementary calculus.
The methodology is successfully applied to 
the general 2-parameter pure state model,
coherent model with arbitrary number of parameters,
and the minimization of $\Tr J^S(\theta)V_{\theta}[M]$
for  arbitrary sure state model.
These 
are relatively general category in comparison with the cases
treated by other authors. 
In the 2-parameter pure state model,
the existence of the order parameter $\beta$ which
is a good index of noncommutative nature between the parameters.

As a by-product, we have remarkable corollary,
which asserts that even for non-commutative cases,
a simple measurement attains the lower bound.

\section{Notations}
\label{sec:sld}
In this chapter, we consider data space $\Omega$ to be $\R^m$,
and each $\omega\in\Omega$ to be the estimate $\hat\theta$.

The pure state  model ${\cal M}$ is assumed to 
be induced by ${\cal M}=\pi({\cal N})$ from the manifold 
\begin{eqnarray}
{\cal N}=\{ |\phi(\theta)\rgl\, |\, |\phi(\theta)\rgl\in \tilde{\cal H}\},
\nonumber
\end{eqnarray}
where $\tilde{\cal H}$ is the set of the members of ${\cal H}$
with unit length,
\begin{eqnarray}
\tilde{\cal H}=\{|\phi\rgl\:|\:|\phi\rgl\in{\cal H},\lgl\phi|\phi\rgl=1\}.
\nonumber
\end{eqnarray}
If this assumption is made, the horizontal lift of the tangent vector
is taken for granted.
Uniqueness is proved mostly in the same way as
the proof of theorem $\ref{theorem:uqhl}$.

We denote by $|l_X\rgl$ 
the matrix representation of the horizontal lift
of $X\in {\cal T}_{\rho(\theta)}({\cal M})$,
and $|l_i\rgl$ is short for  $|l_{\partial/\partial \theta^i}\rgl$. 
$|l_X\rgl$ satisfies
\begin{eqnarray}
X\rho(\theta)=
\frac{1}{2}(|l_X\rgl\lgl\phi(\theta)|+|\phi(\theta)\rgl\lgl l_X|),
\label{eqn:sld:lift}
\end{eqnarray}
and
\begin{eqnarray}
\lgl l_X|\phi(\theta)\rgl=0.
\label{eqn:sld:horizontal}
\end{eqnarray}

Notice that 
$span_{\R}\{|l_i\rgl\;|\: i=1,...,m\}$ is a representation of 
${\cal T}_{\rho(\theta)}({\cal M})$ 
because of unique existence of $|l_X\rgl$.
Therefore, we often also call the matrix representation $|l_X\rgl$
horizontal lift.

We call ${\bf M}^i(M,\, |\phi(\theta)\,)$
 {\it estimation vector} of the parameter $\theta^i$ 
by a measurement $M$ at $|\phi(\theta)\rgl$, 
An estimation vector ${\bf M}^i(M,\, |\phi(\theta)\,)$
is said to be {\it locally unbiased}
iff $M$ is locally unbiased.
The local unbiasedness conditions for estimating vectors writes
\begin{eqnarray}
\lgl \phi(\theta)|{\bf M}^i(M,\, |\phi(\theta)\rgl \,)&=&0,\\
{\rm Re}\lgl l_j(\theta)|{\bf M}^i(M,\,|\phi(\theta)\rgl\,)
&=&\delta^i_j
\:(i,j=1,...,m).
\label{eqn:sld:hunbiased}
\end{eqnarray}
Often, we omit the argument  $\theta$ is
$| l_j(\theta)\rgl,|\phi(\theta)\rgl, \rho(\theta)$, and $J^S(\theta)$
and denote them simply by $| l_j\rgl,|\phi\rgl, \rho, J^S$.

We denote the ordered pair of vectors 
\begin{eqnarray}
[\,|l_1\rgl,\,|l_2\rgl,...\,|l_m\rgl\,]
\nonumber
\end{eqnarray}
by ${\sf L}$> In this notation, 
the SLD Fisher information matrix $J^S$ writes
\begin{eqnarray}
J^S={\rm Re}{\sf L}^*{\sf L}\equiv {\rm Re}\,\left[\,\lgl l_i|l_j\rgl\,\right],
\nonumber
\end{eqnarray}
and the imaginary part of ${\sf L}^*{\sf L}$
is denoted by $\tilde{J}$.
Generally,
for the ordered pairs 
\begin{eqnarray}
{\sf X}=[\,|x^1\rgl, \,|x^2\rgl,...,\,|x^m\rgl\,],\;\;
{\sf Y}=[\,|y^1\rgl, \,|y^2\rgl,...,\,|y^m\rgl\,],\nonumber
\nonumber
\end{eqnarray}
of vectors,
we define
\begin{eqnarray}
{\sf X}^*{\sf Y}=\left[\,\lgl x^i|y^j\rgl\,\right]\nonumber
\end{eqnarray}
for notational simplicity.

Then, letting ${\sf X}$ be
\begin{eqnarray}
\left[\,{\bf M}^1(M,\,|\phi\rgl\,), {\bf M}^2(M,\,|\phi\rgl\,),
	...,{\bf M}^m(M,\,|\phi\rgl\,)\,\right],
\nonumber
\end{eqnarray}
the unbiasedness conditions  $(\ref{eqn:sld:hunbiased})$ 
writes
\begin{eqnarray}
{\rm Re}\,{\sf X}^*{\sf L}=I_m,\:
{\rm Re}\,\left\{\,({\bf M}^i(M,\, |\phi\rgl\,)\,)^*|l^j \rgl\,\right\}=I_m,
\label{eqn:lagrange:restriction1}
\end{eqnarray}
where $I_m$ is the $m\times m$ unit matrix,
and 
the matrix $Z_{\theta}[M]$ defined
 in the equation $(\ref{eqn:defz})$ writes
\begin{eqnarray}
Z_{\theta}[M]={\sf X}^*{\sf X}
\nonumber
\end{eqnarray}

\section{The commuting theorem and the locally quasi-classical model}
\label{sec:classical}
In this section,
the necessary and sufficient condition 
for SLD CR bound to be attainable is studied. 
Fujiwara proved the following theorem \cite{Fujiwara}.

\begin{theorem}(Fujiwara \cite{Fujiwara})
SLD CR bound is attainable iff
 SLD's $\{L^S_i|i=1,...,m\}$ can be chosen so that
\begin{eqnarray}
[L^S_i, L^S_j]=0, (i,j=1...,m).
\nonumber
\end{eqnarray}
\end{theorem}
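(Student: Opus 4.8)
The plan is to prove the stated equivalence by routing everything through the intermediate condition $\tilde J = \mathrm{Im}\,{\sf L}^*{\sf L}=0$, i.e. to establish the three-way equivalence: SLD CR bound attainable $\Longleftrightarrow \tilde J=0 \Longleftrightarrow$ the SLDs can be chosen to commute. The bridge $\tilde J=0$ is convenient because, in the pure-state normalization where $L^S_i|\phi\rangle=|l_i\rangle$ and $\langle l_i|\phi\rangle=0$, a one-line computation gives $L^S_iL^S_j=|l_i\rangle\langle l_j|+\langle l_i|l_j\rangle\rho$, hence $\langle\phi|[L^S_i,L^S_j]|\phi\rangle=2i\tilde J_{ij}$. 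Since every valid SLD differs from the canonical one by a Hermitian $A_i$ with $A_i|\phi\rangle=0$, this quantity is the same for all admissible choices, so a commuting choice forces $\tilde J=0$ immediately.

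For the direction $\tilde J=0\Rightarrow$ attainable, I would run the pure-state analogue of Theorem \ref{theorem:mpCR}. Diagonalize a commuting family simultaneously in an orthonormal basis $\{|e_\alpha\rangle\}$ with $L^S_k|e_\alpha\rangle=\lambda^k_\alpha|e_\alpha\rangle$, take the projective measurement $M(\alpha)=|e_\alpha\rangle\langle e_\alpha|$ and the estimator $\hat\theta^j(\alpha)=\theta^j+\sum_k[J^{S-1}]^{jk}\lambda^k_\alpha$ exactly as in $(\ref{eqn:bests})$. The induced classical model has $p_\alpha=|\langle e_\alpha|\phi\rangle|^2$, score $\lambda^k_\alpha$, and Fisher information $\sum_\alpha p_\alpha\lambda^k_\alpha\lambda^{k'}_\alpha=\langle\phi|L^S_kL^S_{k'}|\phi\rangle=\langle l_k|l_{k'}\rangle=J^S_{kk'}$ (real precisely because $\tilde J=0$), so the problem collapses to the classical efficient estimator and attains $V=J^{S-1}$.

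The substantive implication is attainable $\Rightarrow\tilde J=0$. Suppose $(\hat\theta,M,\Omega)$ is locally unbiased with $V=J^{S-1}$. Chaining Lemmas \ref{lemma:genVZ} and \ref{lemma:genZJ} gives $J^{S-1}=V\ge\mathrm{Re}\,Z\ge J^{S-1}$, so all inequalities are equalities. Equality in Lemma \ref{lemma:genZJ}, i.e. $(\ref{eqn:genxx=J})$, pins the estimation vectors to $\mathbf M^j=\sum_k[J^{S-1}]^{jk}|l_k\rangle$, from which $Z={\sf X}^*{\sf X}=J^{S-1}(J^S+i\tilde J)J^{S-1}=J^{S-1}+i\,J^{S-1}\tilde J J^{S-1}$ is computed outright. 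Now the Hermitian inequality $(\ref{eqn:genVZ2})$, $V\ge Z$, together with $V=\mathrm{Re}\,Z$, yields $0\le V-Z=-i\,\mathrm{Im}\,Z$. Since $\mathrm{Im}\,Z=J^{S-1}\tilde J J^{S-1}$ is real antisymmetric, the eigenvalues of $-i\,\mathrm{Im}\,Z$ occur in pairs $\pm\mu$, so nonnegativity forces $\mathrm{Im}\,Z=0$, whence $\tilde J=0$ because $J^S$ is invertible. I emphasize that this uses both inequalities of Lemma \ref{lemma:genVZ} and thereby avoids any delicate analysis of the fuzziness or extremality of $M$.

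The step I expect to be the main obstacle is the remaining construction $\tilde J=0\Rightarrow$ commuting SLDs exist, since it must genuinely consume the hypothesis. When $\tilde J=0$ every inner product among $\{|\phi\rangle,|l_1\rangle,\dots,|l_m\rangle\}$ is real, so their Gram matrix is real symmetric. I would realize it by a classical model: pick a strictly positive weight vector $(p_\alpha)$ and real scores $\lambda^i_\alpha$ with $\sum_\alpha p_\alpha\lambda^i_\alpha=0$ and $\sum_\alpha p_\alpha\lambda^i_\alpha\lambda^j_\alpha=J^S_{ij}$ (possible for any positive semidefinite $J^S$ with enough outcomes), set $c_\alpha=\sqrt{p_\alpha}$, and form $|\phi'\rangle=\sum_\alpha c_\alpha|e_\alpha\rangle$, $|l'_i\rangle=\sum_\alpha c_\alpha\lambda^i_\alpha|e_\alpha\rangle$. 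These share the Gram matrix of the original system, so a unitary $U$ carries $|\phi'\rangle\mapsto|\phi\rangle$ and $|l'_i\rangle\mapsto|l_i\rangle$; transporting the basis and setting $L^S_i=\sum_\alpha\lambda^i_\alpha\,U|e_\alpha\rangle\langle e_\alpha|U^*$ gives commuting Hermitian operators with $L^S_i|\phi\rangle=|l_i\rangle$, hence valid SLDs (their difference from the canonical SLD being an admissible $A_i$ annihilating $|\phi\rangle$). The delicate points are producing the classical realization with the mean-zero constraint and checking that the equal-Gram-matrix isometry extends to a unitary on all of $\mathcal H$ (which needs only that $\dim\mathcal H$ exceed the rank of the augmented system); both are routine linear algebra, but they are exactly where the reality of the Gram matrix, and thus $\tilde J=0$, is indispensable.
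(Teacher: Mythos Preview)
Your proposal is correct and follows the same overall route as the paper: establish the three-way equivalence through the intermediate condition $\tilde J=\mathrm{Im}\,{\sf L}^*{\sf L}=0$, with the substantive direction attainable $\Rightarrow\tilde J=0$ handled exactly as in the proof of Theorem~\ref{theorem:purelsls0} (chain Lemmas~\ref{lemma:genVZ}--\ref{lemma:genZJ} to force equality, read off ${\sf X}={\sf L}J^{S-1}$ from $(\ref{eqn:genxx=J})$, then use $V\ge Z$ to kill the imaginary part of $Z$).

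The one place you do more work than the paper is the step $\tilde J=0\Rightarrow$ commuting SLDs exist. The paper does not build this from scratch; it extracts it in one line from the commuting theorem (Theorem~\ref{theorem:commute}): apply the converse half with ${\sf X}={\sf L}$ to obtain a projection-valued measure $E$ supported on $m+1$ atoms with ${\bf M}^i(E,|\phi\rangle)=|l_i\rangle$, and then the first-moment operators $L^S_i=\int(\hat\theta^i-\theta^i)\,E(d\hat\theta)$ are simultaneously diagonal in the projections of $E$, hence commuting Hermitian SLDs. Your classical-realization-plus-unitary-transport construction is in substance the same Gram--Schmidt-and-real-rotation argument that sits inside the proof of Theorem~\ref{theorem:commute}, so you are effectively reproving that theorem rather than invoking it. Using it directly also dissolves the dimension worry you flag, since the $m+1$ rank-one projections already live inside $\mathrm{span}_{\bf C}\{|\phi\rangle,{\sf L}\}\subset\mathcal H$. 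Similarly, the paper proves $\tilde J=0\Rightarrow$ attainable directly from Theorem~\ref{theorem:commute} applied to ${\sf X}={\sf L}J^{S-1}$, rather than first constructing commuting SLDs and then rerunning the recipe of Theorem~\ref{theorem:mpCR}; your detour through step~4 is valid but unnecessary.
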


We prove another necessary and sufficient condition which is much easier
to check for the concrete examples, by use of
the following {\it commuting theorem}, which
plays key role in our direct approach to pure state estimation.

\begin{theorem}
If there exists a unbiased measurement $M$ such that
\begin{eqnarray}
V[M]={\rm Re}{\sf X}^*{\sf X},
\label{eqn:V=Z}
\end{eqnarray}
where
the ordered pair ${\sf X}$ is
\begin{eqnarray}
 {\sf X}=\left[{\bf M}^1(M,\,|\phi\rgl\,), {\bf M}^2(M,\,|\phi\rgl\,),
		...,{\bf M}^m(M,\,|\phi\rgl\,) \right],
\label{eqn:xm}
\end{eqnarray}
then,
\begin{eqnarray}
{\rm Im}{\sf X}^*{\sf X}=0
\label{eqn:imxx=0}
\end{eqnarray}
holds true.
conversely, if $(\ref{eqn:imxx=0})$ holds true for some ordered pair ${\bf X}$
of vectors, then 
there exists  a simple, or projection valued, 
unbiased estimator $E$ which satisfies 
$(\ref{eqn:V=Z})$, $(\ref{eqn:xm})$, and
\begin{eqnarray}
&&E(\{\hat\theta_{\kappa}\})^2 = E(\{\hat\theta_{\kappa}\}),\nonumber\\
&&E(\{\hat\theta_{0}\}) = E_0,\nonumber\\
&&E\left( {\R}^m/\bigcup_{\kappa=0}^m \{\hat\theta_{\kappa}\} 
                                                 \right)=0,
\label{eqn:commes}
\end{eqnarray}
for some 
$\{\hat\theta_{\kappa}|\hat\theta_{\kappa}\in {\R}^m,\kappa = 0,...,m+1\}$,
where $E_0$ is a projection onto orthogonal complement subspace
of $span_{\C}\{{\sf X}\}$.
\label{theorem:commute}
\end{theorem}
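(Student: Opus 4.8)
The statement splits into an implication and its converse, and I would treat the two by completely different means.

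For the forward implication I would lean entirely on the matrix inequality $(\ref{eqn:genVZ2})$ of Lemma \ref{lemma:genVZ}, namely $V[\hat\theta\,|\,M]\ge Z[\hat\theta,M]={\sf X}^*{\sf X}$ as Hermitian matrices. Writing $Z={\rm Re}\,{\sf X}^*{\sf X}+i\,{\rm Im}\,{\sf X}^*{\sf X}$, note that since $Z$ is Hermitian its real part is symmetric while ${\rm Im}\,{\sf X}^*{\sf X}$ is a real antisymmetric matrix. Under the hypothesis $(\ref{eqn:V=Z})$ that $V[M]={\rm Re}\,{\sf X}^*{\sf X}$, the inequality $V-Z\ge0$ collapses to $-i\,{\rm Im}\,{\sf X}^*{\sf X}\ge0$. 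The matrix $-i\,{\rm Im}\,{\sf X}^*{\sf X}$ is Hermitian, but the eigenvalues of a real antisymmetric matrix occur in pairs $\pm i\mu$ together with zeros, so the eigenvalues of $-i\,{\rm Im}\,{\sf X}^*{\sf X}$ occur in pairs $\pm\mu$. A matrix whose spectrum is symmetric about the origin can be nonnegative only if it vanishes; hence ${\rm Im}\,{\sf X}^*{\sf X}=0$, which is $(\ref{eqn:imxx=0})$.

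For the converse I would exploit the decisive elementary fact that for a projection valued measure the matrices $Z$ and $V$ coincide. Indeed, if $E=\sum_\kappa E_\kappa$ with $E_\kappa E_\lambda=\delta_{\kappa\lambda}E_\kappa$, then in $Z^{ij}=\sum_{\kappa,\lambda}(\hat\theta^i_\kappa-\theta^i)(\hat\theta^j_\lambda-\theta^j)\lgl\phi|E_\kappa E_\lambda|\phi\rgl$ only the diagonal survives and one recovers exactly $V^{ij}[E]$. Thus it suffices to manufacture a projection valued, locally unbiased $E$ whose estimation vectors are the prescribed $|x^i\rgl$; its covariance then automatically equals ${\sf X}^*{\sf X}={\rm Re}\,{\sf X}^*{\sf X}$ by hypothesis $(\ref{eqn:imxx=0})$. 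To build such an $E$ I would use the reality of the Gram matrix: since $(\ref{eqn:imxx=0})$ makes every $\lgl x^i|x^j\rgl$ real and the unbiasedness conditions $(\ref{eqn:lagrange:restriction1})$ give $\lgl\phi|x^i\rgl=0$ with $\lgl\phi|\phi\rgl=1$, the full Gram matrix of $\{|\phi\rgl,|x^1\rgl,\dots,|x^m\rgl\}$ is real symmetric. A family with real Gram matrix spans a totally real subspace, i.e. there is a conjugation $J$ fixing each of these vectors; equivalently the complex span $\tilde{\cal R}=span_{\C}\{|\phi\rgl,|x^1\rgl,\dots,|x^m\rgl\}$ carries a real orthonormal basis $\{|v_\kappa\rgl\}_{\kappa=1}^{m+1}$ in which all relevant inner products are real. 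I would pick such a basis, generic enough that $\lgl v_\kappa|\phi\rgl\neq0$ for every $\kappa$, set $E_\kappa=|v_\kappa\rgl\lgl v_\kappa|$ together with $E_0=I-\sum_\kappa E_\kappa$ projecting onto the orthogonal complement of the span, and define the estimate values by $\hat\theta^i_\kappa-\theta^i=\lgl v_\kappa|x^i\rgl/\lgl v_\kappa|\phi\rgl$, which is real by the reality of the basis. Expanding $|x^i\rgl$ and $|\phi\rgl$ in $\{|v_\kappa\rgl\}$ shows at once that the estimation vectors of $E$ are exactly the $|x^i\rgl$, and a short computation using $|\phi\rgl,|x^i\rgl\in\tilde{\cal R}$ confirms both local unbiasedness and $V[E]={\sf X}^*{\sf X}$, yielding the discrete form $(\ref{eqn:commes})$.

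The main obstacle is the construction in the converse rather than any computation: extracting from the mere reality of the Gram matrix the conjugation $J$, equivalently the real orthonormal basis of $\tilde{\cal R}$, and then arguing that the basis can be chosen so that no $|v_\kappa\rgl$ is orthogonal to $|\phi\rgl$, since such a $\kappa$ would force the corresponding estimate value to blow up. I expect this last point to require a genericity argument: the bases violating $\lgl v_\kappa|\phi\rgl\neq0$ form a proper closed subset of the orthogonal group acting on $\tilde{\cal R}$, so a suitable rotation avoids them, after which the verification of $(\ref{eqn:V=Z})$, $(\ref{eqn:xm})$ and $(\ref{eqn:commes})$ is routine.
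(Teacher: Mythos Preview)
Your proof is correct and follows essentially the same route as the paper's. The forward implication is identical (the paper writes $0\ge i\,{\rm Im}\,{\sf X}^*{\sf X}$ and concludes immediately, while you spell out the spectral symmetry), and your converse---real Gram matrix $\Rightarrow$ real orthonormal basis of the span of $\{|\phi\rgl,{\sf X}\}$, then rotate so that no basis vector is orthogonal to $|\phi\rgl$, then read off the PVM and estimate values---is exactly the paper's Schmidt orthogonalization followed by an orthogonal rotation $O$, only phrased in the language of totally real subspaces and genericity rather than explicit Gram--Schmidt.
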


\begin{proof}
If $(\ref{eqn:V=Z})$ holds,
inequality $(\ref{eqn:genVZ2})$ in lemma $\ref{lemma:genVZ}$
leads to
\begin{eqnarray}
{\rm Re}{\sf X}^* {\sf X}\ge {\sf X}^*{\sf X},
\nonumber
\end{eqnarray}
or
\begin{eqnarray}
0 \ge  i{\rm Im}{\sf X}^*{\sf X},
\nonumber
\end{eqnarray}
which implies  ${\rm Im}{\sf X}^*{\sf X}=0$.

Conversely, Let us assume that  $(\ref{eqn:imxx=0})$ holds true.
Applying Schmidt's orthogonalization to the system
$\{|\phi\rgl, {\sf X}\}$ of vectors, 
we obtain the orthonormal system $\{|b^i\rgl\;|\; i=1,...,m+1\}$
by which the system ${\sf X}$ of vectors write
of vectors such that,
\begin{eqnarray}
{\sf X}=\left[\sum_{j=1}^{m+1} \lambda^1_j|b^j\rgl, 
\sum_{j=1}^{m+1} \lambda^2_j|b^j\rgl,...,
\sum_{j=1}^{m+1} \lambda^m_j|b^j\rgl\right],
\nonumber
\end{eqnarray}
where $\lambda^i_j\, (i=1,...,m,\,j=1,...,m+1)$ are
real numbers.
Letting  $O=[o^i_j]$ be a $(m+1)\times (m+1)$ real orthogonal matrix
such that
\begin{eqnarray}
\lgl\phi|\sum_{j=1}^{m+1} o^i_j |b^j\rgl\neq 0,
\nonumber
\end{eqnarray}
and denoting $\sum_{j=1}^{m+1} o^i_j |b^j\rgl$ by $|b'^{i}\rgl$,
$i$th member of the ordered pair ${\sf X}$ writes
\begin{eqnarray}
&&\sum_{j=1}^{m+1} \lambda^i_j\sum_{k=1}^{m+1} o^k_j |b'^k\rgl\nonumber\\
&=&\sum_{k=1}^{m+1}
	\left(\sum_{j=1}^{m+1} \lambda^i_j o^k_j \right)|b'^k\rgl\nonumber\\
&=&\sum_{k=1}^{m+1}
	\frac{\sum_{j=1}^{m+1} \lambda^i_j o^k_j }{\lgl b'^k|\phi\rgl}
		|b'^k\rgl\lgl b'^k|\phi\rgl.
\nonumber
\end{eqnarray}
Therefore, since the system 
$\{|b'^i\rgl\:|\: i=1,...,m+1\}$ of vectors
is orthonormal,
we obtain an unbiased measurement which satisfies $(\ref{eqn:commes})$
as follows:
\begin{eqnarray}
&&\hat\theta_{\kappa}=
\frac{\sum_{j=1}^{m+1} \lambda^i_j o^{\kappa}_j }{\lgl b'^{\kappa}|\phi\rgl},
\:\:\kappa =1,...,m+1,\nonumber\\
&&\hat\theta_{0}=0,\nonumber\\
&&E(\{\hat\theta_{\kappa}\})=|b'^{\kappa}\rgl\lgl b'^{\kappa}|,\:\:
\kappa=1,...,m+1\nonumber,\\
&&E(\hat\theta_{0})=
I-\sum_{\kappa=1}^{m+1}|b'^{\kappa}\rgl\lgl b'^{\kappa}|.
\nonumber
\end{eqnarray}
Here, $I$ is the identity in ${\cal H}$.
\end{proof}

\begin{theorem}
SLD CR bound is attainable iff
\begin{eqnarray}
\tilde{J}={\rm Im}{\sf L}^*{\sf L}=0
\label{eqn:lsls0}
\end{eqnarray}
$\lgl l_j|l_i\rgl$ is real for any $i,j$.
Conversely, if $(\ref{eqn:lsls0})$ holds true, SLD CR bound is achieved 
by a simple measurement, i.e., a projection valued measurement.
\label{theorem:purelsls0}
\end{theorem}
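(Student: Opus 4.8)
The plan is to recognize that attainability of the SLD CR bound forces both inequalities in the chain $V[M]\ge{\rm Re}\,Z[M]\ge(J^S)^{-1}$ (Lemmas \ref{lemma:genVZ} and \ref{lemma:genZJ}) to become equalities, and then to feed the resulting estimation vectors into the commuting theorem (Theorem \ref{theorem:commute}). The central computation is to evaluate ${\sf X}^*{\sf X}$ for the candidate optimal estimation vectors ${\sf X}={\sf L}(J^S)^{-1}$, that is, the ordered pair whose $j$th member is $\sum_k[(J^S)^{-1}]^{jk}|l_k\rgl$ singled out by the equality clause $(\ref{eqn:genxx=J})$ of Lemma \ref{lemma:genZJ}. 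Since $(J^S)^{-1}$ is real symmetric and ${\sf L}^*{\sf L}=J^S+i\tilde J$, one obtains
\begin{eqnarray}
{\sf X}^*{\sf X}=(J^S)^{-1}+i(J^S)^{-1}\tilde J(J^S)^{-1},
\nonumber
\end{eqnarray}
so that ${\rm Re}\,{\sf X}^*{\sf X}=(J^S)^{-1}$ while ${\rm Im}\,{\sf X}^*{\sf X}=(J^S)^{-1}\tilde J(J^S)^{-1}$. This identity is what converts the condition $(\ref{eqn:imxx=0})$ of the commuting theorem into the condition $(\ref{eqn:lsls0})$ on $\tilde J$.

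For the ``only if'' direction I would suppose an unbiased $M$ attains $V[M]=(J^S)^{-1}$. The chain $(J^S)^{-1}=V[M]\ge{\rm Re}\,Z[M]\ge(J^S)^{-1}$ squeezes ${\rm Re}\,Z[M]=(J^S)^{-1}$, so by the equality condition $(\ref{eqn:genxx=J})$ the estimation vectors of $M$ must be exactly ${\sf X}={\sf L}(J^S)^{-1}$. Then $V[M]={\rm Re}\,{\sf X}^*{\sf X}$, which is precisely the hypothesis $(\ref{eqn:V=Z})$ of the commuting theorem; its forward implication yields ${\rm Im}\,{\sf X}^*{\sf X}=0$, i.e. $(J^S)^{-1}\tilde J(J^S)^{-1}=0$. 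Since $J^S$ is invertible this forces $\tilde J=0$, which is $(\ref{eqn:lsls0})$.

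For the converse I would assume $\tilde J=0$ and take the same ${\sf X}={\sf L}(J^S)^{-1}$. The computation above then gives ${\rm Im}\,{\sf X}^*{\sf X}=0$, the hypothesis $(\ref{eqn:imxx=0})$ of the converse half of the commuting theorem. Before invoking it I would verify that ${\sf X}$ is a legitimate locally unbiased choice: the horizontality $(\ref{eqn:sld:horizontal})$ gives $\lgl\phi|{\bf M}^i\rgl=0$, and ${\rm Re}\,{\sf X}^*{\sf L}={\rm Re}\,[(J^S)^{-1}(J^S+i\tilde J)]=I_m$ is exactly the unbiasedness constraint $(\ref{eqn:lagrange:restriction1})$. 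The commuting theorem then produces a projection valued unbiased estimator $E$ whose estimation vectors are ${\sf X}$ and whose covariance is $V[E]={\rm Re}\,{\sf X}^*{\sf X}=(J^S)^{-1}$; hence the SLD CR bound is attained, and by a simple measurement.

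The only delicate point is bookkeeping rather than a genuine obstacle: one must make sure that the estimation vectors singled out by the equality clause of Lemma \ref{lemma:genZJ} are precisely ${\sf L}(J^S)^{-1}$, so that ${\rm Im}\,{\sf X}^*{\sf X}$ can be rewritten in terms of $\tilde J$, and that this same ${\sf X}$ meets the local unbiasedness conditions so that Theorem \ref{theorem:commute} really returns an \emph{unbiased} estimator. Once these identifications are in place the commuting theorem does all the heavy lifting, so I expect no substantive difficulty beyond the linear-algebra verification.
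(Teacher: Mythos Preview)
Your proposal is correct and follows essentially the same route as the paper's proof: both directions hinge on identifying the optimal estimation vectors as ${\sf X}={\sf L}(J^S)^{-1}$ via the equality clause $(\ref{eqn:genxx=J})$ of Lemma~\ref{lemma:genZJ}, and then invoking the two halves of the commuting theorem (Theorem~\ref{theorem:commute}) to pass between ${\rm Im}\,{\sf X}^*{\sf X}=0$ and the existence of a simple measurement. Your explicit computation ${\sf X}^*{\sf X}=(J^S)^{-1}+i(J^S)^{-1}\tilde J(J^S)^{-1}$ and your verification of the unbiasedness conditions spell out steps the paper compresses into ``lead directly to'' and ``elementary calculations show,'' but the argument is the same.
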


\begin{proof}
If SLD CR bound is attainable, 
by virtue of lemma $\ref{lemma:genVZ}$-$\ref{lemma:genZJ}$,
we have $(\ref{eqn:imxx=0})$ and 
$(\ref{eqn:genxx=J})$, which lead directly to $(\ref{eqn:lsls0})$.

Conversely, if ${\rm Im}\lgl l_j|l_k\rgl =0$ for any $j,k$,
by virtue of commuting theorem, 
there exists such a simple measurement $E$  that
\begin{eqnarray}
\sum_k [J^{S-1}]^{j,k}|l_k\rgl={\bf M}(E,\,|\phi\rgl).
\nonumber
\end{eqnarray}
Elementary calculations show that the covariance matrix of this measurement
equals  $J^{S-1}$.
\end{proof}

Our theorem is equivalent to Fujiwara's one,
because 
by virtue of commuting theorem,
$\lgl l_j|l_i\rgl$ is real iff there exist such SLD's that
$L_i^S$ and $L_j^S$ commute for any $i,j$.
However, our condition is much easier to be checked,
because 
SLD's are not unique in the pure state model.

\begin{example}
Often, a model is defined by an initial state and  generators,
\begin{eqnarray}
&&\rho(\theta_0)=\rho\nonumber\\
&&\partial_i\rho(\theta)=i[H_i(\theta),\rho(\theta)]\nonumber\\
&&\rho(\theta)=\pi(|\phi(\theta)\rgl)\nonumber
\end{eqnarray}
Then, $\lgl l_j|l_i\rgl$ is real iff
$\lgl\phi(\theta)|[\, H_i(\theta),\, H_j(\theta)\, ]|\phi(\theta)\rgl$ is $0$.
Because of theorem $\ref{theorem:commute}$,\\
$\lgl\phi(\theta)|[\, H_i(\theta),\, H_j(\theta)\, ]|\phi(\theta)\rgl=0$ 
is equivalent to the existence of  
generators which commute with each other, $[H_i(\theta),H_j(\theta)]=0$.
\end{example}

Because of this example  and the theorem by Fujiwara, 
 we may metaphorically say 
that SLD CR bound is attainable iff any two parameter has 
`classical  nature' at $\theta$,
because often classical limit of a quantum system is obtained
by taking such a limit that commutation relations of observables tend to 0.
Throughout the paper, we say that a manifold ${\cal M}$ is
locally quasi-classical at $\theta$  
iff $\lgl l_j|l_i\rgl$ is real at $\theta$.
The following remark describes another `classical' aspect of 
the condition ${\rm Im}\lgl l_j|l_i\rgl=0$.

\begin{example}
The model ${\cal M}=\pi({\cal N})$, where
${\cal N}$ is a real span of some orthonormal basis of ${\cal H}$,
is locally quasi-classical at any point.
\end{example}

As is illustrated in this example, 
when the model ${\cal M}=\pi({\cal N})$ 
is locally quasi-classical at $\theta_0$,
${\cal N}$ 
behaves like an element of real Hilbert space around  $\theta_0$.
Metaphorically speaking, 
$|\phi(\theta)\rgl$'s phase parts  don't change around $\theta_0$
at all, and ${\cal N}$ looks like the family of
the square root of the probability distributions.

\section{The reduction theorem and the direct approach}
\label{sec:naimark}

\begin{theorem}
(Naimark's theorem, see Ref. $\cite{Holevo:1982}$, pp. 64-68.)\ \\
Any generalized measurement $M$ in ${\cal H}$ can be dilated
to a simple measurement $E$ 
in a larger Hilbert space ${\cal K}\subset{\cal H}$,
so that
\begin{eqnarray}
M(B)=PE(B)P
\label{eqn:naimark}
\end{eqnarray}
will hold,
where $P$ is the projection from ${\cal K}$ onto ${\cal H}$.
\end{theorem}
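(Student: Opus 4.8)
The plan is to build the larger space ${\cal K}$ directly out of the data of $M$ by a reproducing-kernel (GNS-type) dilation, and then to obtain the projection valued measure $E$ as the tautological ``multiplication by an indicator'' action on that space. This dilation is minimal, and it has exactly the feature demanded by the statement: ${\cal H}$ embeds isometrically into ${\cal K}$ as the range of the projection $P$, so that $P$ is the orthogonal projection from ${\cal K}$ onto (a copy of) ${\cal H}$.

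First I would introduce the complex vector space ${\cal D}$ spanned by formal symbols $\chi_B\otimes\psi$ ($B\in\sigma(\Omega)$, $\psi\in{\cal H}$), linear in the argument $\psi$, and equip it with the sesquilinear form fixed by
\begin{eqnarray}
\lgl\chi_B\otimes\psi,\ \chi_{B'}\otimes\psi'\rgl_{\cal K}
&\equiv& \lgl\psi,\ M(B\cap B')\psi'\rgl_{\cal H}.
\nonumber
\end{eqnarray}
The one genuine obstacle is to verify that this form is nonnegative. I would do this by refinement: for a finite combination $x=\sum_k c_k\,\chi_{B_k}\otimes\psi_k$, the sets $B_k$ generate a finite algebra with disjoint atoms $A_1,\dots,A_n$, and finite additivity of $M$ gives $M(B_j\cap B_k)=\sum_{l:\,A_l\subset B_j\cap B_k}M(A_l)$. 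Substituting and regrouping recasts $\lgl x,x\rgl$ as $\sum_l\lgl\eta_l,M(A_l)\eta_l\rgl$ with $\eta_l=\sum_{k:\,A_l\subset B_k}c_k\psi_k$; every summand is nonnegative because each $M(A_l)$ is a nonnegative operator, so $\lgl x,x\rgl\ge 0$. Quotienting ${\cal D}$ by the null space of the form and completing then produces the Hilbert space ${\cal K}$.

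Next I would embed ${\cal H}$ into ${\cal K}$ by $\iota(\psi)\equiv[\chi_\Omega\otimes\psi]$; since $M(\Omega)=I$ this is isometric, so $\iota({\cal H})$ is a closed subspace and I take $P$ to be the orthogonal projection onto it, identifying ${\cal H}$ with $P{\cal K}$. The projection valued measure is defined on the dense image of ${\cal D}$ by $E(A)[\chi_B\otimes\psi]\equiv[\chi_{A\cap B}\otimes\psi]$; one checks it respects the null space, is bounded, and satisfies $E(A)^2=E(A)=E(A)^*$ together with $E(\Omega)=I$ and countable additivity, the latter two being where the normalization and $\sigma$-additivity of $M$ re-enter. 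The dilation identity is then immediate: for $\psi,\psi'\in{\cal H}$,
\begin{eqnarray}
\lgl\iota\psi,\ E(B)\iota\psi'\rgl
&=&\lgl[\chi_\Omega\otimes\psi],\ [\chi_{B\cap\Omega}\otimes\psi']\rgl\nonumber\\
&=&\lgl\psi,\ M(B)\psi'\rgl,
\nonumber
\end{eqnarray}
i.e. $\iota^*E(B)\iota=M(B)$, which under the identification above is $M(B)=PE(B)P$.

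The only real work is thus the nonnegativity step and, in the non-discrete case, the measure-theoretic verification of countable additivity of $E$; in the finite-dimensional, discretely supported situations of this thesis both are routine. Indeed, when $M$ is a discrete family $\{M_i\}$ one may bypass the abstract construction entirely and exhibit the dilation concretely through the isometry $V\psi=\sum_i(M_i^{1/2}\psi)\otimes|i\rgl$ into ${\cal H}\otimes\C^n$, against which $E(\{i\})=I\otimes|i\rgl\lgl i|$ is manifestly a projection valued measure and $V^*E(\{i\})V=M_i$.
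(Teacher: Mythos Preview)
Your proof is correct and follows the standard GNS/reproducing-kernel construction of the Naimark dilation. However, there is nothing to compare against: the paper does not give its own proof of this theorem. It is stated with a citation (``see Ref.~\cite{Holevo:1982}, pp.~64--68'') and used as a black box to derive the reduction theorem that follows. Your argument is essentially the one found in Holevo's book, so in that sense you have supplied what the paper only references.
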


 Naimark's theorem, mixed with commuting theorem,
leads to the following reduction theorem, 
which sits at the heart of our direct approach.

\begin{theorem}
Let  ${\cal M}$ be a $m$-dimensional manifold in ${\cal P}_1$,
and ${\sf B}_{\theta}$ be a system 
$\{|\phi'\rgl,\,|l'_i\rgl\: |\; i=1,...,m\}$
of vectors in $2m+1$-dimensional Hilbert space ${\cal K}_{\theta}$
such that
\begin{eqnarray}
\lgl \phi'| l'_j\rgl&=&\lgl \phi| l_j\rgl=0,\nonumber\\
\lgl l'_i| l'_j\rgl&=&\lgl l_i| l_j\rgl,\nonumber\\
\lgl \phi'| \phi'\rgl&=&\lgl \phi | \phi\rgl=1
\nonumber
\end{eqnarray}
for any $i,j$.
Then,
for any locally unbiased estimator $M$ at $\theta$ in ${\cal H}$,
there is a simple `locally unbiased' measurement $E$ in ${\cal K}_{\theta}$,
\begin{eqnarray}
|x^i\rgl={\bf M}^i(E, |\phi'\rgl)\in{\cal K}_{\theta}
\label{eqn:xe}
\end{eqnarray}
\begin{eqnarray}
\lgl x^i|\phi'\rgl&=&0,\\
{\rm Re}\lgl x^i| l'_j\rgl&=&\delta^i_j
\:(i,j=1,...,m),
\label{eqn:naimark:unbiased}
\end{eqnarray}
whose  `covariance matrix' $V[E]$ equals $V[M]$,
\begin{eqnarray}
V[M]=V[E]\equiv 
\left[\int (\hat\theta^i-\theta^i)(\hat\theta^j-\theta^j)
						\lgl \phi'| E(d\hat\theta )| \phi'\rgl\right], 
\label{eqn:ve}                                  
\end{eqnarray}
\label{theorem:simple}
\end{theorem}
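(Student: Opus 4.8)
The plan is to chain Naimark's theorem with the commuting theorem (Theorem \ref{theorem:commute}), exactly as the sentence preceding the statement advertises. First I would dilate the given locally unbiased estimator $M$ on ${\cal H}$ to a simple (projection-valued) measurement $\tilde E$ on a larger Hilbert space $\tilde{\cal K}\supseteq{\cal H}$, with $M(B)=P\tilde E(B)P$ and $P$ the projection onto ${\cal H}$. Since $|\phi\rgl\in{\cal H}$ gives $P|\phi\rgl=|\phi\rgl$, we have $\lgl\phi|M(B)|\phi\rgl=\lgl\phi|\tilde E(B)|\phi\rgl$, so the outcome distribution and hence the covariance are unchanged, $V[M]=V[\tilde E]$. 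The crucial observation is that for a projection-valued $\tilde E$ the product $\tilde E(d\hat\theta)\tilde E(d\hat\theta')$ is supported on the diagonal, so the dilated estimation vectors $|x^i\rgl={\bf M}^i(\tilde E,|\phi\rgl)\in\tilde{\cal K}$ satisfy $\lgl x^i|x^j\rgl=V[\tilde E]^{ij}$, a real matrix. In other words, passing to the dilation automatically forces ${\rm Im}\,{\sf X}^*{\sf X}=0$ and absorbs the excess $V[M]-{\rm Re}\,Z\ge 0$ (Lemma \ref{lemma:genVZ}) into the lengths of the estimation vectors, since $\|x^i\|^2=V^{ii}\ge (\mathrm{Re}\,Z)^{ii}$. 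Because $|\phi\rgl$ and the $|l_j\rgl$ lie in ${\cal H}$ and $P|x^i\rgl={\bf M}^i(M,|\phi\rgl)$, the overlaps $\lgl\phi|x^i\rgl=0$ and ${\rm Re}\,\lgl l_j|x^i\rgl=\delta^i_j$ are inherited from the local unbiasedness of $M$.

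Second, I would transplant this data into ${\cal K}_\theta$. Although $\tilde{\cal K}$ may be large, only the $2m+1$ vectors $\{|\phi\rgl,|l_i\rgl,|x^i\rgl\}$ matter, and their Gram matrix is fixed by $\lgl l_i|l_j\rgl$, $\lgl\phi|l_j\rgl=0$, $\lgl\phi|\phi\rgl=1$ together with $\lgl\phi|x^i\rgl=0$, the complex overlaps $\lgl l_j|x^i\rgl$, and $\lgl x^i|x^j\rgl=V[M]^{ij}$. By hypothesis $\{|\phi'\rgl,|l'_i\rgl\}$ already realizes the $(m+1)\times(m+1)$ principal block of this Gram matrix inside ${\cal K}_\theta$. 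Since the full Gram matrix is positive semidefinite (being a genuine Gram matrix in $\tilde{\cal K}$) of rank at most $2m+1=\dim{\cal K}_\theta$, the isometry carrying $\{|\phi\rgl,|l_i\rgl\}$ to $\{|\phi'\rgl,|l'_i\rgl\}$ extends to vectors in ${\cal K}_\theta$, again written $|x^i\rgl$, realizing the remaining entries: one appends to ${\rm span}\{|\phi'\rgl,|l'_i\rgl\}$ at most $m$ further orthonormal directions and solves the linear system fixing the prescribed inner products. The transplanted $|x^i\rgl$ then satisfy $\lgl x^i|\phi'\rgl=0$, ${\rm Re}\,\lgl x^i|l'_j\rgl=\delta^i_j$, and $\lgl x^i|x^j\rgl=V[M]^{ij}$; in particular ${\rm Im}\,{\sf X}^*{\sf X}=0$ for ${\sf X}=[|x^1\rgl,\dots,|x^m\rgl]$.

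Finally, with ${\rm Im}\,{\sf X}^*{\sf X}=0$ secured, I would invoke the converse direction of the commuting theorem (Theorem \ref{theorem:commute}): it produces a simple measurement $E$ in ${\cal K}_\theta$ whose estimation vectors are precisely ${\sf X}$ and whose covariance is ${\rm Re}\,{\sf X}^*{\sf X}={\sf X}^*{\sf X}=V[M]$. That construction lives inside ${\rm span}\{|\phi'\rgl,{\sf X}\}$, of dimension at most $m+1\le 2m+1$, so it fits in ${\cal K}_\theta$. The unbiasedness relations $(\ref{eqn:naimark:unbiased})$ are exactly the overlap conditions already arranged, so $E$ is locally unbiased with $V[E]=V[M]$, which is the assertion.

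I expect the delicate step to be the transplant of the second paragraph: one must check that the dimension budget $2m+1$ is never exceeded and that prescribing the complex overlaps $\lgl l_j|x^i\rgl$ together with the real block $\lgl x^i|x^j\rgl=V[M]^{ij}$ admits a positive semidefinite completion extending the given $\{|\phi'\rgl,|l'_i\rgl\}$ block. Both hold because the target is the actual Gram matrix of the dilated system in $\tilde{\cal K}$, hence automatically positive semidefinite, and because the dilated estimation vectors enlarge ${\rm span}\{|\phi\rgl,|l_i\rgl\}$ by at most $m$ dimensions. Everything else is routine inner-product bookkeeping.
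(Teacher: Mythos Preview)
Your proposal is correct and follows essentially the same three-step route as the paper: Naimark dilation, isometric transplant of the relevant $2m+1$ vectors into ${\cal K}_\theta$, then the commuting theorem. You are in fact more explicit than the paper on the key point: the paper's proof, as written, transplants the span of $\{|\phi\rgl,|l_i\rgl,{\bf M}^i(M,|\phi\rgl)\}$ and then invokes the commuting theorem without saying why ${\rm Im}\,{\sf X}^*{\sf X}=0$, whereas you correctly work with the \emph{dilated} estimation vectors ${\bf M}^i(\tilde E,|\phi\rgl)$ and spell out that the PVM property forces $\lgl x^i|x^j\rgl=V[M]^{ij}\in\R$, which is precisely what licenses the commuting theorem and simultaneously guarantees $V[E]=V[M]$.
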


\begin{proof}
 For any locally unbiased measurement $M$,
there exists a Hilbert space ${\cal H}_{M}$ and a simple measurement 
$E_{M}$ in ${\cal H}_{M}$ which satisfies $(\ref{eqn:naimark})$ 
by virtue of Naimark's theorem. Note that $E_{M}$ is also locally unbiased.
Mapping 
$span_{\bf C}\{|\phi\rgl, |l_i\rgl,{\bf M}^i(M, |\phi\rgl)\: |\:i=1,...,m\}$
isometrically onto ${\cal K}_{\theta}$ 
so that $\{|\phi\rgl,|l_i\rgl\: |\:i=1...m\}$ 
are mapped to $\{|\phi'\rgl,|l'_i\rgl\: |\: i=1,...,m\}\:$,
we denote the images of $\{{\bf M}^i(M, |\phi\rgl)\: |\:i=1,...,m\}$ by 
$\{|x^i\rgl\: |\: i=1,...,m\}$.

Then, by virtue of the commuting theorem,
we can construct a simple measurement $E$ 
in ${\cal K}_{\theta}$ satisfying the equations 
$(\ref{eqn:xe})$ - $(\ref{eqn:ve})$.
\end{proof}

In our direct approach,
the reduction theorem reduces
the determination of ${\cal V}$ to 
the determination of the set of matrices
\begin{eqnarray}
V={\rm Re}{\sf X}^*{\sf X}
\nonumber
\end{eqnarray}
where a system
${\bf X}=[|x^1\rgl, |x^2\rgl, ..., |x^m\rgl]$ of 
elements of ${\cal K}_{\theta}\ominus\{|\phi'\rgl\}$
which satisfies 
$(\ref{eqn:imxx=0})$ and $(\ref{eqn:naimark:unbiased})$.
In the same way,
we minimize
\begin{eqnarray}
\Tr G{\rm Re}{\sf X}^*{\sf X},
\nonumber
\end{eqnarray}
where
$\{|x^i\rgl\: |\: i=1,...,m\}\in {\cal K}_{\theta}\ominus\{|\phi'\rgl\}$,
under the restriction such that
the equations $(\ref{eqn:imxx=0})$ and  $(\ref{eqn:naimark:unbiased})$
are satisfied,
instead of minimization of $\tr GV$
where $V$ runs through ${\cal V}$.

Now, the problem is simplified to the large extent,
because 
we only need to treat 
with vectors $\{|x^i\rgl\: |\: i=1,...,m\}$
in finite dimensional Hikbert space  ${\cal K}_{\theta}$
instead of measurements, or operator valued measures.

We  conclude this  section with 
a corollary of reduction theorem,
which is seemingly paradoxical, since
historically, non-projection-valued measurement
is introduced to describe  measurements of
non-commuting observables.

\begin{corollary}
When the dimension of ${\cal H}$ is larger than or equal to $2m+1$,
for any unbiased measurement $M$ in ${\cal H}$,
there is a simple measurement $E$ in ${\cal H}$
which has the same covariance matrix as that of $M$.
\label{corollary:ve=vm1}
\end{corollary}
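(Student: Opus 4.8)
The plan is to read the corollary straight off the reduction theorem (Theorem \ref{theorem:simple}), the point being that when $\dim{\cal H}\geq 2m+1$ the auxiliary space ${\cal K}_{\theta}$ introduced there may be taken to be a subspace of ${\cal H}$ itself, so that the passage through a genuinely larger Hilbert space (needed in general to dilate a non-projection-valued $M$) becomes unnecessary.

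First I would observe that an unbiased $M$ is in particular locally unbiased at the point $\theta$ at which the covariance is evaluated, so Theorem \ref{theorem:simple} is applicable. That theorem requires a system ${\sf B}_{\theta}=\{|\phi'\rgl,|l'_i\rgl\}$ sitting in a $(2m+1)$-dimensional Hilbert space ${\cal K}_{\theta}$ whose Gram matrix reproduces that of $\{|\phi\rgl,|l_i\rgl\}$. Since the $m+1$ vectors $|\phi\rgl,|l_1\rgl,\dots,|l_m\rgl$ already live in ${\cal H}$ and span at most $m+1\leq 2m+1$ dimensions, the hypothesis $\dim{\cal H}\geq 2m+1$ lets me pick a $(2m+1)$-dimensional subspace ${\cal K}_{\theta}\subseteq{\cal H}$ containing their span and simply set $|\phi'\rgl=|\phi\rgl$ and $|l'_i\rgl=|l_i\rgl$. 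The inner-product hypotheses of Theorem \ref{theorem:simple} then hold tautologically, since these are literally the same vectors.

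Applying the reduction theorem inside ${\cal H}$ yields a simple, locally unbiased measurement $E$ supported on ${\cal K}_{\theta}$ with $V[E]=V[M]$, assembled by the commuting theorem (Theorem \ref{theorem:commute}) from finitely many mutually orthogonal rank-one projections $|b'^{\kappa}\rgl\lgl b'^{\kappa}|$ together with a projection $E_0$ onto the orthogonal complement of their span within ${\cal K}_{\theta}$. The remaining step is to promote $E$ from a measurement on ${\cal K}_{\theta}$ to one on all of ${\cal H}$: I would reassign the exceptional outcome by replacing $E_0$ with $E_0+P_{{\cal H}\ominus{\cal K}_{\theta}}$, where $P_{{\cal H}\ominus{\cal K}_{\theta}}$ is the projection onto the orthogonal complement of ${\cal K}_{\theta}$ in ${\cal H}$. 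This keeps the measurement axioms $(\ref{eqn:maxiom})$ satisfied, and because $|\phi\rgl\in{\cal K}_{\theta}$ one has $\lgl\phi|P_{{\cal H}\ominus{\cal K}_{\theta}}|\phi\rgl=0$, so every outcome probability $\lgl\phi|E(d\hat\theta)|\phi\rgl$, and hence $V[E]$, is left unchanged. Thus $E$ is a simple measurement in ${\cal H}$ with $V[E]=V[M]$.

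The argument is essentially bookkeeping once Theorem \ref{theorem:simple} is granted; the one point I would check with care, and which I expect to be the only real obstacle, is the harmlessness of this extension of $E$ across ${\cal H}\ominus{\cal K}_{\theta}$ — namely confirming that folding the unused complement into the null outcome neither spoils the normalization $M(\Omega)=I$ on all of ${\cal H}$ nor disturbs any probability, which rests precisely on the fact that the state vector $|\phi\rgl$ already lies in ${\cal K}_{\theta}$.
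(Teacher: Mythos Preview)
Your proposal is correct and follows exactly the paper's approach: the paper's entire proof is the one-line instruction to choose $|l'_i\rgl=|l_i\rgl$ (and implicitly $|\phi'\rgl=|\phi\rgl$), which is precisely your choice of ${\cal K}_\theta$ as a $(2m+1)$-dimensional subspace of ${\cal H}$ containing the original vectors. Your additional care in extending $E$ to all of ${\cal H}$ by absorbing $P_{{\cal H}\ominus{\cal K}_\theta}$ into the null outcome is a detail the paper leaves implicit, and your verification that this preserves $V[E]$ because $|\phi\rgl\in{\cal K}_\theta$ is correct.
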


\begin{proof}
Chose
 $\{ |l'_i\rgl\: |\:i=1,...,m\}$
to be $\{ |l_i\rgl\: |\:i=1,...,m\}$.
\end{proof}

Especially, if  ${\cal H}$ is infinite dimensional,
as is the space of wave functions,
the assumption of the corollary is always satisfied.

\section{Lagrange's method of indeterminate 
coefficients in the pure state estimation theory}
\label{sec:lagrange}

Now, we apply our direct approach to the problems presented in the end of 
section $\ref{sec:nonclassical}$.
To minimize the functional 
$\Tr G{\rm Re}{\sf X}^*{\sf X}$ of vectors in ${\cal K}_{\theta}$,
 Langrange's indeterminate coefficients method is employed.
First,
denoting an ordered pair $\{|l_i'\rgl\: |\: i=1,...,\}$
of vectors in ${\cal K}_{\theta}$ also by ${\sf L}$,
the symbol which is used also for 
an ordered pair $\{|l_i\: |\: i=1,...,\}$
of vectors in ${\cal H}$,
we define a function $Lag({\sf X})$ by 
\begin{eqnarray}
Lag({\sf X})\equiv
{\rm Re}{\Tr}{\sf X}^*{\sf X}G
-2{\Tr}(({\rm Re}{\sf X}^*{\sf L}-I_m)\Xi)
-{\Tr}{\rm Im}{\sf X}^*{\sf X}\Lambda,
\label{eqn:lagrangean}
\end{eqnarray}
where $\Xi, \Lambda$ are matrices whose components are 
Langrange's indeterminate coefficients.
Here, $\Lambda$ can be chosen to be antisymmetric, for
\begin{eqnarray} 
{\Tr}{\rm Im}{\sf X}^*{\sf X}\Lambda
={\Tr}{\rm Im}{\sf X}^*{\sf X}(\Lambda-\Lambda^T)/2
\nonumber
\end{eqnarray}
holds true and only skew symmetric part of $\Lambda$ appears
in $(\ref{eqn:lagrangean})$.

From here, we follow the routine of  
Langrange's method of indeterminate coefficients.
Differentiating $L({\sf X}+\varepsilon\delta{\sf X})$ 
with respect to $\varepsilon$
and substituting  $0$ into $\varepsilon$ in the derivative,
we get
\begin{eqnarray}
{\rm Re}{\Tr}(\delta{\sf X}^*
(2{\sf X}G-2{\sf L}\Xi-2i{\sf X}\Lambda))=0.
\nonumber
\end{eqnarray}
Because $\delta{\sf X}$ is arbitrary,
\begin{eqnarray}
{\sf X}(G-i\Lambda)={\sf L}\Xi
\label{eqn:xg-il}
\end{eqnarray}
is induced. 

Multipling ${\sf X}^*$ to both sides of 
$(\ref{eqn:xg-il})$,
the real part of the outcomming equation, 
together with $(\ref{eqn:lagrange:restriction1})$, yields
\begin{eqnarray}
\Xi={\rm Re}{\sf X}^*{\sf X}G=VG.
\label{eqn:lagrange:xi}
\end{eqnarray}
Substituting $(\ref{eqn:lagrange:xi})$ into $(\ref{eqn:xg-il})$ ,
we obtain
\begin{eqnarray}
{\sf X}(G-i\Lambda)={\sf L}VG.
\label{eqn:basic0.1}
\end{eqnarray}

If $(\ref{eqn:lagrange:restriction1})$,
$(\ref{eqn:imxx=0})$,
$(\ref{eqn:basic0.1})$ and $V={\rm Re}{\sf X}^*{\sf X}$
are solved for $V$, ${\sf X}$ and 
real skew symmetric matrix $\Lambda$,
our problems will be perfectly solved.
However,
so far, solutions only for special cases are known.

The rest of this section is devoted to the proof of 
the theorem which claim a little stronger assertion than
the corollary $\ref{corollary:ve=vm1}$.

Though the  real linear space $span_{\bf R}\{ {\sf L}\}$ is 
always $m$-dimensional for the parameters not to be redundant,
the dimension of the complex linear space 
$span_{\bf C}\{ {\sf L}\}$, or  the rank of ${\sf L}$,
is not necessarily equal to $m$.
If ${\rm rank}_{\C}{\sf L}={\rm rank}G=m$ is assumed,
since the rank of the matrix $V$ is $m$ as is proved soon,
the rank of the left hand side of $(\ref{eqn:basic0.1})$
is $m$, so is the rank of the right hand side,
implying 
that $G-i\Lambda$ is invertible.
The rank of the matrix $V$ is $m$
because $(\ref{eqn:lagrange:restriction1})$ implies that
the dimension of 
$span_{\bf R}\{ {\sf X}\}$ is $m$.

Since
${\sf X}$ is given by ${\sf L}VG(G-i\Lambda)^{-1}$,
we can conclude that $span_{\bf C}\{ {\sf X}\}$ 
should be a subspace of
$span_{\bf C}\{ {\sf L}\}$.
Therefore, by the same argument as in the proof of 
the corollary $\ref{corollary:ve=vm1}$,
we obtain the following theorem.

\begin{theorem}
Suppose that the dimension ${\cal H}$ is larger than or equal to $m+1$,
and that the dimension of the complex linear space
$span_{\bf C}\{ {\sf L}\}$ is $m$.
Then, for any strictly positive weight matrix $G$,
the attainable CR type bound $\CR(G)$
is attained by a simple measurement.
\end{theorem}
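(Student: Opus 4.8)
The plan is to combine the Lagrange stationarity conditions already derived for the functional $\Tr G\,{\rm Re}\,{\sf X}^*{\sf X}$ with the reduction and commuting theorems, the decisive point being that the hypothesis $\rank_{\C}{\sf L}=m$ forces every optimal estimation system ${\sf X}$ to lie in the $m$-dimensional span $span_{\C}\{{\sf L}\}$, so that the whole relevant configuration already fits inside an $(m+1)$-dimensional subspace of ${\cal H}$. First I would secure the existence of a minimizer. By the reduction theorem (theorem \ref{theorem:simple}), evaluating $\CR(G)$ amounts to minimizing ${\rm Re}\,\Tr{\sf X}^*{\sf X}G$ over systems ${\sf X}=[|x^1\rgl,\dots,|x^m\rgl]$ of vectors in the finite-dimensional space ${\cal K}_\theta$ subject to the unbiasedness constraint $(\ref{eqn:naimark:unbiased})$ and the commuting constraint $(\ref{eqn:imxx=0})$. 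Since $G>0$, one has ${\rm Re}\,\Tr{\sf X}^*{\sf X}G\ge\lambda_{\min}(G)\sum_i\lgl x^i|x^i\rgl$, so the sublevel sets of the objective are bounded; the constraint set is closed and nonempty, so a minimizer ${\sf X}$ exists and satisfies the stationarity equation $(\ref{eqn:basic0.1})$, that is ${\sf X}(G-i\Lambda)={\sf L}VG$ with $V={\rm Re}\,{\sf X}^*{\sf X}$ and $\Lambda$ real skew-symmetric.

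Next comes the rank argument, which I expect to be the crux. The unbiasedness relation $(\ref{eqn:lagrange:restriction1})$ forces the $|x^i\rgl$ to be real-linearly independent, so $span_{\R}\{{\sf X}\}$ is $m$-dimensional and hence $\rank V=m$. Together with $\rank G=m$ (from $G>0$) and $\rank_{\C}{\sf L}=m$ (hypothesis), the right-hand side ${\sf L}VG$ of $(\ref{eqn:basic0.1})$ has complex rank $m$. Were $G-i\Lambda$ singular, some nonzero $v$ would give $(G-i\Lambda)v=0$, making the columns of ${\sf X}(G-i\Lambda)$ linearly dependent and its rank strictly less than $m$, contradicting the rank on the right. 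Thus $G-i\Lambda$ is invertible, whence ${\sf X}={\sf L}\,VG(G-i\Lambda)^{-1}$ and therefore $span_{\C}\{{\sf X}\}\subseteq span_{\C}\{{\sf L}\}$.

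Finally I would transport the optimal system back into ${\cal H}$ and invoke the commuting theorem. Because ${\sf X}$ lies in $span_{\C}\{{\sf L}'\}$, the entire configuration sits in the $(m+1)$-dimensional subspace $span_{\C}\{|\phi'\rgl,{\sf L}'\}$ of ${\cal K}_\theta$; since its Gram data coincide with those of $\{|\phi\rgl,{\sf L}\}$, this subspace is isometric to $span_{\C}\{|\phi\rgl,{\sf L}\}\subseteq{\cal H}$, which is available precisely because $\dim{\cal H}\ge m+1$. Carrying ${\sf X}$ across this isometry produces a system $\tilde{\sf X}$ in ${\cal H}$ still satisfying $(\ref{eqn:imxx=0})$ and $(\ref{eqn:naimark:unbiased})$ with the same $V$, and applying the commuting theorem (theorem \ref{theorem:commute}) in ${\cal H}$ yields a simple, projection-valued, unbiased measurement $E$ whose covariance is $V$; since $V$ is the minimizer, $E$ attains $\CR(G)$.

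The main obstacle is the invertibility of $G-i\Lambda$: this is exactly the step at which all three rank inputs ($G>0$, $\rank_{\C}{\sf L}=m$, and the forced $\rank V=m$) are consumed, and it is what sharpens the dimension requirement from the $2m+1$ of corollary \ref{corollary:ve=vm1} down to $m+1$. A secondary point requiring care is the justification that the constrained minimum is attained and regular enough for the Lagrange conditions $(\ref{eqn:basic0.1})$ to be genuinely necessary, which is why the coercivity afforded by $G>0$ is essential rather than incidental.
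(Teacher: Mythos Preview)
Your proof is correct and follows essentially the same route as the paper: both arguments hinge on the rank analysis of $(\ref{eqn:basic0.1})$, using $\rank_{\C}{\sf L}=m$, $\rank G=m$, and $\rank V=m$ to force $G-i\Lambda$ invertible, whence ${\sf X}\in span_{\C}\{{\sf L}\}$ and the commuting theorem applies inside the $(m+1)$-dimensional span of $\{|\phi\rgl,{\sf L}\}$. Your treatment is in fact more careful than the paper's, which omits the coercivity/existence discussion and phrases the rank step somewhat tersely.
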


\section{The model with two parameters}
\label{sec:2parameter}

In this section, we determine ${\cal V}$ for the arbitrary 2-parameter
pure state model.

The equation$(\ref{eqn:basic0.1})$, 
mixed  with $(\ref{eqn:imxx=0})$,  leads to
\begin{eqnarray}
(G-i\Lambda)V(G-i\Lambda)
=GV{\sf L}^*{\sf L}VG.
\label{eqn:basic1.0}
\end{eqnarray}
whose 
real part and imaginary part  are
\begin{eqnarray}
GVG-\Lambda V \Lambda =GV J^S VG,
\label{eqn:basic1.1}
\end{eqnarray}
and
\begin{eqnarray}
GV\Lambda +\Lambda VG=-GV\tilde{J}VG,
\label{eqn:basic1.2}
\end{eqnarray}
where $\tilde J$ denotes ${\rm Im}{\sf L}^*{\sf L}$,
respectively.

We assert that
when the matrix $G$ is strictly positive,
$(\ref{eqn:basic1.0})$ is equivalent to the existence of ${\sf X}$
which satisfies
$(\ref{eqn:lagrange:restriction1})$,
$(\ref{eqn:imxx=0})$,
$(\ref{eqn:basic1.0})$,
and $V={\rm Re}{\sf X}^*{\sf X}$.
If real positive symmetric matrix $V$ and 
real antisymmetric matrix $\Lambda$ satisfying $(\ref{eqn:basic1.0})$ exist,
${\sf X}$ which satisfies  $(\ref{eqn:basic0.1})$ and 
$(\ref{eqn:imxx=0})$ is given by 
${\sf X}=UV^{1/2}$, where $U$ is a $2m+1$ by $m$  complex matrix such that
$U^*U=I_m$.
If $G$ is strictly positive, ${\sf X}=UV^{1/2}$ also satisfies
$(\ref{eqn:lagrange:restriction1})$,
because 
\begin{eqnarray}
VG={\rm Re}{\sf X}^*{\sf L}VG\nonumber
\nonumber
\end{eqnarray}
is obtained by
multiplication of ${\sf X}^*$ to 
and taking real part of the both sides of $(\ref{eqn:basic0.1})$,
and our assertion is proved.

Hence, our task is to solve $(\ref{eqn:basic1.1})$ and  $(\ref{eqn:basic1.2})$
for real positive symmetric matrix $V$ and real antisymmetric matrix $\Lambda$,
if $G$ is strictly positive.
When $G$ is not strictly positive, 
after solving $(\ref{eqn:basic1.1})$ and  $(\ref{eqn:basic1.2})$,
we must check 
whether there exists an ordered pair ${\sf X}$ of vectors 
which satisfies
$(\ref{eqn:lagrange:restriction1})$, $(\ref{eqn:imxx=0})$
and 
$V={\rm Re}{\sf X}^*{\sf X}$.

In the remainder of  this section, we use the coordinate system 
where  $J^S$ is equal to the identity $I_m$.
Given an arbitrary coordinate system $\{\theta^i|\: i=1,...,m\}$,
such a coordinate system $\{\theta'^i|\: i=1,...,m\}$
is obtained by the following coordinate transform:
\begin{eqnarray}
\theta'^i=\sum_{j=1}^m [(J^{S})^{1/2}]_{ij}\:\theta^j \:(i=1,...,m)
\label{eqn:2para:trans1}
\end{eqnarray}
By this coordinate transform,
$V$ is transformed as: 
\begin{eqnarray}
V'&=&(J^{S})^{1/2} V (J^{S})^{1/2}.
\label{eqn:2para:trans2}
\end{eqnarray}
Therefore, the result in the originally given coordinate is 
obtained as a transformation of the result 
in the coordinate system $\{\theta'^i|\: i=1,...,m\}$,
by using
$(\ref{eqn:2para:trans2})$ in the converse way.

So far, we have not assumed $\dim{\cal M}=2$.
When $\dim{\cal M}=2$, covariance matrices are members of
 the space $Sym(2)$ of  $2\times 2$ symmetric matrices
which is parameterized by the real variables $x,y,$ and $z$, where
\begin{eqnarray}
Sym(2)=
\left\{V\left|V=
\left[
\begin{array}{cc}
z+x & y\\
y   & z-x
\end{array}
\right]
\right.
\right\}.
\nonumber
\end{eqnarray}
Before tackling the equations $(\ref{eqn:basic1.1})$ 
and  $(\ref{eqn:basic1.2})$, three useful facts about this parameterization
are noted.
First, letting $A$ is a symmetric real matrix
which is represented by $(A_x, A_y, A_z)$ in the $(x,y,z)$-space,
the set ${\cal PM}(A)$ of all matrices larger than $A$
is
\begin{eqnarray}
{\cal PM}(A)
=\{(x,y,z)\:|\: (z-A_z)^2-(x-A_x)^2-(y-A_y)^2\ge 0,\:(z-A_z)\ge 0 \},
\nonumber
\end{eqnarray}
that is, interior
 of a upside-down corn with its vertex at $A=(A_x, A_y, A_z)$.
Hence,
$\cal V$ is a subset of ${\cal PM}(I_m)$, or inside of 
an upside-down corn with its vertex 
at $(0,0,1)$ because of SLD CR bound. 
When the model $\cal M$ is locally quasi-classical at $\theta$,
$\cal V$ coincides with  ${\cal PM}(I_m)$.

Second, an  action of 
rotation matrix $R_{\theta}$ to $V$ such that 
$R_{\theta}V R_{\theta}^T$,
where 
\begin{eqnarray}
R_{\theta}=\left[
\begin{array}{cc}
\cos\theta & -\sin\theta\\
\sin\theta & \cos\theta
\end{array}
\right],
\nonumber
\end{eqnarray}
corresponds to the rotation in the $(x,y,z)$-space around 
$z$-axis by the angle $2\theta$.

Third, we have the following lemma.
\begin{lemma}
${\cal V}$ is rotationally symmetric around $z$-axis,
if the coordinate in $\cal M$ is chosen so that
$J^S$ writes the unit matrix $I_m$.
\label{lemma:Vrot}
\end{lemma}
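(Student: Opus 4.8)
The plan is to exploit the second of the useful facts recorded just above the lemma: the map $V\mapsto R_{\alpha}VR_{\alpha}^T$ on $Sym(2)$ corresponds to a rotation of the $(x,y,z)$-space about the $z$-axis through the angle $2\alpha$. Since $2\alpha$ sweeps out every angle as $\alpha$ does, it suffices to prove the single invariance
\[
{\cal V}=R_{\alpha}\,{\cal V}\,R_{\alpha}^T\qquad(\forall\alpha\in\R),
\]
where $R_{\alpha}$ is the planar rotation acting on the two parameters. I would establish this by composing two observations, one a standard tensorial transformation law and one the structural fact furnished by the reduction theorem.

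First I would record how a linear reparametrization $\theta'=R_{\alpha}\theta$ acts. A locally unbiased estimator for $\theta$ is carried bijectively to a locally unbiased estimator for $\theta'$ whose covariance is $R_{\alpha}VR_{\alpha}^T$, and correspondingly ${\sf L}$ and ${\sf X}$ transform to ${\sf L}R_{\alpha}^T$ and ${\sf X}R_{\alpha}^T$; indeed ${\rm Re}\,(({\sf X}R_{\alpha}^T)^*({\sf L}R_{\alpha}^T))=R_{\alpha}({\rm Re}\,{\sf X}^*{\sf L})R_{\alpha}^T=I_m$ and ${\rm Im}\,({\sf X}R_{\alpha}^T)^*({\sf X}R_{\alpha}^T)=R_{\alpha}({\rm Im}\,{\sf X}^*{\sf X})R_{\alpha}^T=0$, so the unbiasedness condition $(\ref{eqn:lagrange:restriction1})$ and the commuting condition $(\ref{eqn:imxx=0})$ are preserved. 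Writing ${\cal V}'$ for the covariance region computed in the $\theta'$-coordinate, this gives ${\cal V}'=R_{\alpha}{\cal V}R_{\alpha}^T$.

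Second --- and this is where the hypothesis $J^S=I_m$ does the work --- I would argue that ${\cal V}'={\cal V}$, i.e.\ the region is literally unchanged by the reparametrization. By the reduction theorem (theorem~\ref{theorem:simple}) the region depends on the model at $\theta$ only through the Gram data of $\{|\phi'\rgl,|l'_i\rgl\}$, namely through ${\sf L}^*{\sf L}=J^S+i\tilde{J}$: two such systems with the same Gram matrix inside the fixed $(2m+1)$-dimensional ${\cal K}_{\theta}$ are unitarily equivalent, and the constraints ${\rm Re}\,{\sf X}^*{\sf L}=I_m$, ${\rm Im}\,{\sf X}^*{\sf X}=0$ together with the objective $V={\rm Re}\,{\sf X}^*{\sf X}$ are invariant under a simultaneous unitary. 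Under $\theta'=R_{\alpha}\theta$ these data become $J^{S'}=R_{\alpha}J^SR_{\alpha}^T$ and $\tilde{J}'=R_{\alpha}\tilde{J}R_{\alpha}^T$. Because we have normalized $J^S=I_m$, we get $J^{S'}=I_m$; and because here $\tilde{J}$ is a $2\times 2$ real antisymmetric matrix it is a scalar multiple of $\bigl[\begin{smallmatrix}0&1\\-1&0\end{smallmatrix}\bigr]$, which commutes with every $R_{\alpha}\in SO(2)$, so $\tilde{J}'=\tilde{J}$. Thus the Gram data are identical in the two coordinates, forcing ${\cal V}'={\cal V}$.

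Combining the two steps yields ${\cal V}=R_{\alpha}{\cal V}R_{\alpha}^T$ for all $\alpha$, and the second useful fact then converts this into rotational symmetry about the $z$-axis. The only genuinely delicate point is the structural claim of the second step --- that ${\cal V}$ is a function of $(J^S,\tilde{J})$ alone --- so I would spend the most care verifying, via the reduction theorem and the unitary covariance of the constraint/objective pair, that no finer feature of the embedding $|\phi(\theta)\rgl\in{\cal H}$ survives in ${\cal V}$; everything else is bookkeeping, and the normalization $J^S=I_m$ is exactly what makes the antisymmetric remainder $\tilde{J}$ rotation-invariant in dimension two.
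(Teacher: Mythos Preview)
Your proposal is correct and follows essentially the same route as the paper: both arguments hinge on the observation that, once $J^S=I_m$, the Gram matrix ${\sf L}^*{\sf L}=I_m+i\tilde{J}$ is invariant under $R_\alpha(\,\cdot\,)R_\alpha^T$ in dimension two, and then use a unitary on ${\cal K}_\theta$ to transport admissible ${\sf X}$'s. The only cosmetic difference is that the paper writes down the unitary $U$ with ${\sf L}^*U=R_\theta{\sf L}^*$ and exhibits ${\sf Y}=U{\sf X}R_\theta^T$ explicitly, whereas you invoke the reduction theorem to say ${\cal V}$ depends only on the Gram data and then appeal to unitary equivalence; these are the same step packaged two ways.
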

\begin{proof}
$\cal V$ is the set of every matrix  which writes 
${\rm Re}{\sf X}^*{\sf X}$ 
using a $2m+1$ by $m$ complex matrix ${\sf X}$ satisfying 
$(\ref{eqn:lagrange:restriction1})$ and $(\ref{eqn:imxx=0})$.
Therefore,  the  rotational symmetry of $\cal V$
around $z$-axis is equivalent to 
the existence of  a $2m+1\times m$ complex matrix ${\sf Y}$ satisfying 
$(\ref{eqn:lagrange:restriction1})$, $(\ref{eqn:imxx=0})$
and
\begin{eqnarray}
\forall\theta\;\; R_{\theta}{\sf X}^*{\sf X}R_{\theta}^T={\sf Y}^*{\sf Y},
\label{eqn:rxxr}
\end{eqnarray}
for any $2m+1\times m$ complex matrix ${\sf X}$ 
which satisfies
$(\ref{eqn:lagrange:restriction1})$ and $(\ref{eqn:imxx=0})$.
Because of ${\sf L}^*{\sf L}=I_m+i\tilde{J}$,
elementary calculation shows that
\begin{eqnarray}
{\sf L}^*{\sf L}={\rm Re}R_{\theta}{\sf L}^*{\sf L}R_{\theta}^T,
\nonumber
\end{eqnarray}
or, that for some unitary matrix $U$ in ${\cal K}_{\theta}$,
\[
{\sf L}^*U=R_{\theta}{\sf L}^*,
\]
which leads, together with $(\ref{eqn:lagrange:restriction1})$, to
\begin{eqnarray}
{\rm Re}{\sf L}^*U{\sf X}=R_{\theta}.
\end{eqnarray}
Therefore,
\[
{\sf Y}=U{\sf X}R_{\theta}^T
\]
satisfies $(\ref{eqn:rxxr})$, and we have the lemma.
\end{proof}

Because of lemma $\ref{lemma:Vrot}$,
${\cal V}$ is determined
if the boundary of the intersection $\tilde{\cal V}$ of $\cal V$ 
and the $zx$-plane
is calculated.
Note that
the `inner product' $\Tr GV$
of $G$ and $V$
does not take its minimum at $V\in \tilde{\cal V}$
unless $G$ is in the $zx$-plane.
Therefore, to obtain $bd\tilde{\cal V}$,
only diagonal weight matrix is needed to be considered.

Let us begin with the case of a positive definite weight matrix.
In this case, we only need to deal with 
$(\ref{eqn:basic1.1})$ and $(\ref{eqn:basic1.2})$.
Let
\begin{eqnarray}
\tilde{J}=\left[
\begin{array}{cc}
0 & -\beta\\
\beta & 0
\end{array}
\right],
\nonumber
\end{eqnarray}
and 
\begin{eqnarray}
\Lambda=\left[
\begin{array}{cc}
0 & -\lambda\\
\lambda & 0
\end{array}
\right],\:
G=\left[
\begin{array}{cc}
1 & 0 \\
0 & g
\end{array}
\right],\:
V=\left[
\begin{array}{cc}
u & 0 \\
0 & v
\end{array}
\right],
\nonumber
\end{eqnarray}
where $g,v,$ and $u$ are positive real real numbers.
Note that
\begin{eqnarray} 
|\beta|\leq 1
\nonumber
\end{eqnarray}
holds, because ${\sf L}^*{\sf L}=I_m+\tilde{J}$
is nonnegative definite.
Then,  $(\ref{eqn:basic1.1})$ and $(\ref{eqn:basic1.2})$ writes
\begin{eqnarray}
u+v\lambda^2-u^2=0,\nonumber\\
vg^2+u\lambda^2-v^2g^2=0,\nonumber\\
vg\lambda+u\lambda+uv\beta g=0.
\label{eqn:2para:uvglmd}
\end{eqnarray}
The necessary and sufficient condition for 
$\lambda$ and positive $g$ to exist is,
after some calculations,
\begin{eqnarray}
(u-1)^{1/2}+(v-1)^{1/2}-|\beta|(uv)^{1/2}=0.
\label{eqn:2para:uv}
\end{eqnarray}
Note that  $u$ and $v$ are larger than  or equal to $1$, 
because $V\ge J^{S-1}=I_m$.
Substitution of  $u=z+x$ and $v=z-x$ into $(\ref{eqn:2para:uv})$,
after some calculations, leads to
\begin{eqnarray}
|\beta|(z+x-1)^{1/2}(z-x-1)^{1/2}
\pm(1-\beta^2)^{1/2}((z+x-1)^{1/2}+(z-x-1)^{1/2})=|\beta|\nonumber\\
\label{eqn:zx+-1}
\end{eqnarray} 
Fig.1 shows that
the lower sign in the equation $(\ref{eqn:zx+-1})$,
\begin{eqnarray}
|\beta|(z+x-1)^{1/2}(z-x-1)^{1/2}
+(1-\beta^2)^{1/2}((z+x-1)^{1/2}+(z-x-1)^{1/2})=|\beta|\nonumber\\
\label{eqn:2para:zx1}
\end{eqnarray}
gives a part of $bd\tilde{\cal V}$.
In $(\ref{eqn:2para:zx1})$, $x$ takes value ranging 
from $-\beta^2/(1-\beta^2)$ to  $\beta^2/(1-\beta^2)$ if $|\beta|$ is smaller 
than $1$.
When $|\beta|=1$, $x$ varies from $-\infty$ to $\infty$.
This restriction on the range of $x$ comes from
the positivity of $z-x-1$ and $z+x-1$.

\begin{figure}
\centering
\includegraphics[keepaspectratio,scale=0.8]{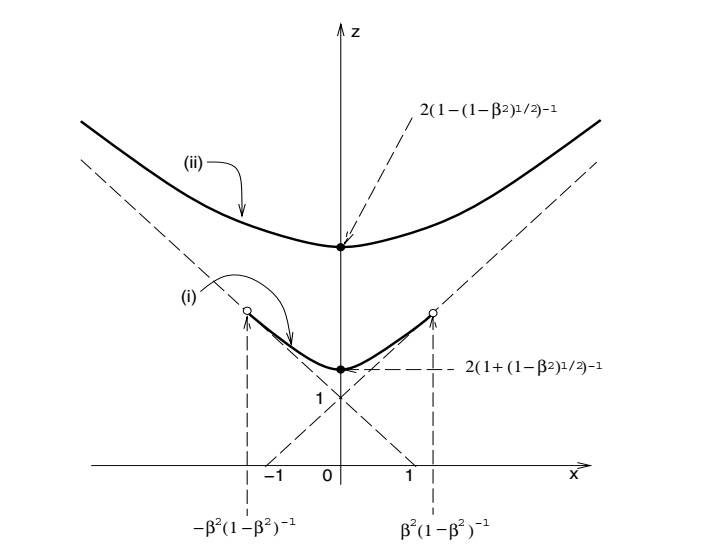}
\caption[Two stationary lines]{
Two stationary lines;\\
$(i)\: |\beta|(z+x-1)^{1/2}(z-x-1)^{1/2}
+(1-\beta^2)^{1/2}((z+x-1)^{1/2}+(z-x-1)^{1/2})=|\beta|;$\\
$(ii)\: |\beta|(z+x-1)^{1/2}(z-x-1)^{1/2}
-(1-\beta^2)^{1/2}((z+x-1)^{1/2}+(z-x-1)^{1/2})=|\beta|;$
}
\label{fig:2p:fig1}
\end{figure}

When the weight matrix $G$ is 
\begin{eqnarray}
G=\left[
\begin{array}{cc}
1 & 0\\
0 & 0
\end{array}
\right],\:\:\mbox{or}\:\:
G=\left[
\begin{array}{cc}
0 & 0\\
0 & 1
\end{array}
\right],
\label{eqn:2para:G}
\end{eqnarray}
we must treat the case of  $|\beta|=1$ and 
the case of $|\beta|<1$ separately.
If $|\beta|=1$, 
there exists no  $(2m+1)\times m $ complex matrix ${\sf X}$
which satisfies  $V={\rm Re}{\sf X}^*{\sf X}$, 
$(\ref{eqn:basic1.0})$,
$(\ref{eqn:lagrange:restriction1})$, and $(\ref{eqn:imxx=0})$.
On the other hand, if $|\beta|<1$,
such complex matrix ${\sf X}$ always exists and 
$V={\rm Re}{\sf X}^*{\sf X}$ is given by, in terms of $(x,y,z)$,
\begin{eqnarray}
z&=&-x+1,\: x\leq -\frac{\beta^2 }{1-\beta^2}\nonumber\\
&&\mbox{or}\nonumber\\
z&=&x+1,\: x\ge \frac{\beta^2 }{1-\beta^2}
\label{eqn:2para:zx2}
\end{eqnarray}

With the help of  $(\ref{eqn:2para:zx1})$ and $(\ref{eqn:2para:zx2})$, 
${\cal V}$ is depicted as Fig.2. 
The intersection of $z$-axis and $bd{\cal V}$ gives
\begin{eqnarray}
\CR(J^S)=\min\{J^S V|V\in {\cal V}\}=\frac{4}{1+(1-|\beta|^2)^{1/2}},
\label{eqn:minvv}
\end{eqnarray}
where the equality holds in any  coordinate of the model ${\cal M}$.
Simple calculation leads to following theorem.

\begin{theorem}
If a  model ${\cal M}$ has lager value of $|\beta|$ at $\theta$ than
another model ${\cal N}$ has at $\theta'$, 
the ${\cal V}_{\theta}({\cal M})$
is a subset of ${\cal V}_{\theta'}({\cal N})$.
\label{theorem:2pcalv}
\end{theorem}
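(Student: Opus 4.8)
The plan is to reduce the asserted set inclusion to an elementary monotonicity statement about the explicit boundary curve, after observing that, once we pass to normalized coordinates, the region $\cal V$ depends on the model only through $|\beta|$.

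First I would adopt the coordinate system of (\ref{eqn:2para:trans1})--(\ref{eqn:2para:trans2}) in which $J^S = I_m$ for both models, the region transforming by (\ref{eqn:2para:trans2}). In these coordinates the defining data of $\cal V$ enter only through ${\sf L}^*{\sf L} = I_m + i\tilde{J}$, and $\tilde{J}$ is fixed by the single number $\beta$; since $\beta \mapsto -\beta$ is realized by the complex conjugation ${\sf X} \mapsto \bar{\sf X}$, which preserves ${\rm Re}\,{\sf X}^*{\sf X}$, the region depends only on $|\beta|$. Write ${\cal V}(|\beta|)$ for it. By Lemma~\ref{lemma:Vrot} each ${\cal V}(|\beta|)$ is a solid of revolution about the $z$-axis, so containment of two such regions is equivalent to containment of their cross-sections $\tilde{\cal V}$ with the $zx$-plane. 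The theorem thus becomes $\tilde{\cal V}(|\beta_1|) \subseteq \tilde{\cal V}(|\beta_2|)$ whenever $|\beta_1| > |\beta_2|$ (with $|\beta_1| = |\beta_{\cal M}|$).

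Next I would describe $bd\,\tilde{\cal V}(|\beta|)$ from the formulas already derived: the central arc (\ref{eqn:2para:zx1}) for $|x| \le x_\beta := \beta^2/(1-\beta^2)$, and the cone lines $z = |x| + 1$ for $|x| \ge x_\beta$ coming from (\ref{eqn:2para:zx2}); the region lies on and above this boundary and, by the SLD CR bound, inside the cone $\{z \ge |x|+1\}$. Comparing lower boundaries at each fixed $x$: since $x_\beta$ is increasing in $|\beta|$ we have $x_{\beta_2} < x_{\beta_1}$; for $|x| > x_{\beta_1}$ both boundaries are the common cone line; for $x_{\beta_2} < |x| < x_{\beta_1}$ the $\beta_1$-boundary is the arc, which satisfies $z \ge |x|+1$, hence sits above the $\beta_2$-boundary (the cone line itself). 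It then remains to treat $|x| < x_{\beta_2}$, where both boundaries are arcs, and to show the arc rises with $|\beta|$.

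The heart of the matter is this last monotonicity, and it is clean. Setting $P = z+x-1$, $Q = z-x-1$ and $\Phi(z,x,\beta) = |\beta|\sqrt{PQ} + \sqrt{1-\beta^2}(\sqrt{P}+\sqrt{Q}) - |\beta|$, one has $\partial\Phi/\partial z > 0$ termwise, while using the arc equation $\Phi = 0$ to eliminate $\sqrt{P}+\sqrt{Q}$ gives $\partial\Phi/\partial|\beta| = (\sqrt{PQ}-1)/(1-\beta^2)$ on the arc. But $\Phi = 0$ forces $|\beta|\sqrt{PQ} \le |\beta|$, i.e. $\sqrt{PQ} \le 1$, so $\partial\Phi/\partial|\beta| \le 0$ and hence $\partial z/\partial|\beta| = -(\partial\Phi/\partial|\beta|)/(\partial\Phi/\partial z) \ge 0$: at fixed $x$ the arc height is nondecreasing in $|\beta|$. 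Combining the three ranges shows the $\beta_1$-boundary is everywhere at least as high as the $\beta_2$-boundary, giving $\tilde{\cal V}(|\beta_1|) \subseteq \tilde{\cal V}(|\beta_2|)$ and, by revolution, the theorem; the value (\ref{eqn:minvv}) on the $z$-axis provides a consistency check. I expect the only real friction to be the bookkeeping at the transition radii $x_{\beta_i}$ and the degenerate endpoint $|\beta| = 1$, where the lines (\ref{eqn:2para:zx2}) are absent; there the inclusion should be recovered as a limit from $|\beta| < 1$.
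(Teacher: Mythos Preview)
Your proposal is correct and is essentially the same approach the paper has in mind: the paper's own ``proof'' is only the sentence ``Simple calculation leads to following theorem,'' so what you have written is exactly that simple calculation carried out in full, using the boundary description $(\ref{eqn:2para:zx1})$--$(\ref{eqn:2para:zx2})$ and the rotational symmetry of Lemma~\ref{lemma:Vrot}. Your implicit-differentiation step, giving $\partial\Phi/\partial|\beta|=(\sqrt{PQ}-1)/(1-\beta^2)\le 0$ on the arc and hence $\partial z/\partial|\beta|\ge 0$, together with the three-range case split in $|x|$, is a clean and correct way to organize the verification; the only minor bookkeeping point is that the transition value $x_\beta$ at which the arc meets the cone line can be read off directly from $(\ref{eqn:2para:zx1})$ with $Q=0$, and all you actually use is that it is increasing in $|\beta|$, which is immediate.
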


By virtue of this theorem, 
$|\beta|$ can be seen as a measure of `uncertainty'
between the two parameters.
Two extreme cases are worthy of special attention;
When $|\beta|=0$, the model $\cal M$ is locally quasi-classical at $\theta$
and ${\cal V}$ is largest.
On the other hand,
if $|\beta|=1$ , ${\cal V}$ is smallest  and 
uncertainty between  $\theta^1$ and $\theta^2$ is maximum.
In the latter case, we say that the model is {\it coherent} at $\theta$. 

\begin{figure}[htp]
\centering
\includegraphics[keepaspectratio,scale=0.4]{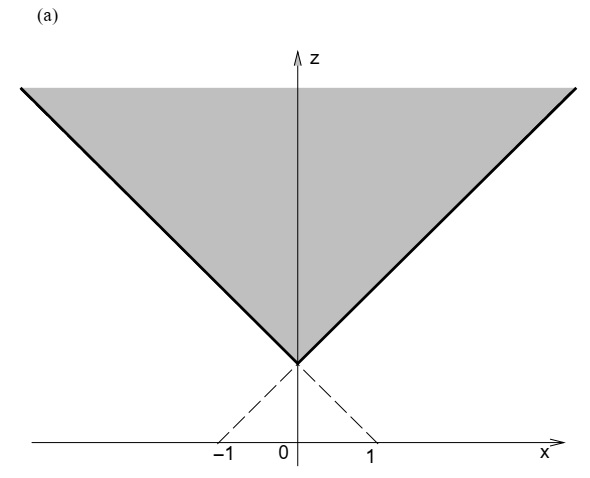}
\includegraphics[keepaspectratio,scale=0.4]{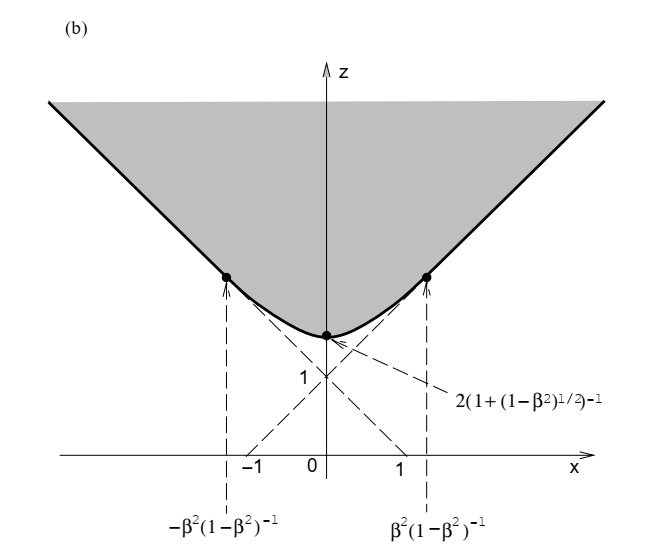}
\includegraphics[keepaspectratio,scale=0.4]{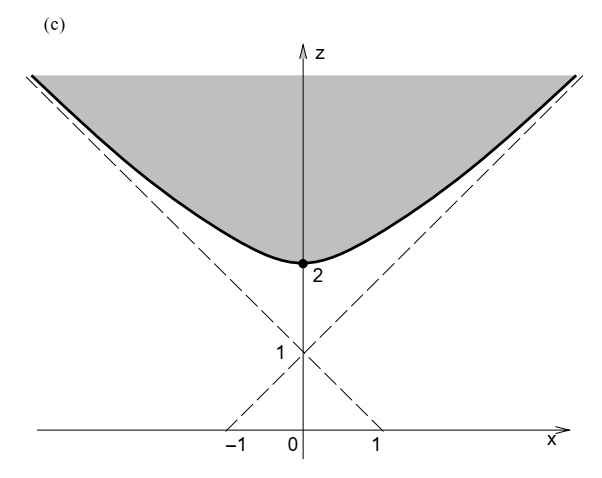}
\caption{(a)$\;|\beta|=0;\:$(b)$\;0<|\beta|<1;\;$(c)$\;|\beta|=1$.}
\label{c1}
\end{figure}

\pagebreak

\begin{example}
We define {\it generalized spin coherent model} \cite{Abe} by
\begin{eqnarray}
&{\cal M}_{s,m}&=\pi({\cal N})\nonumber\\
&{\cal N}_{s,m}&\nonumber\\
&=&\{ |\phi(\theta)\rgl\, |\, 
	|\phi(\theta)\rgl=\exp i\theta^1(\sin\theta^2 S_x-\cos\theta^2 S_y)
															|s, m\rgl,\,
	0\leq\theta^1<\pi,\,0\leq\theta^2<2\pi\},\nonumber\\
\label{eqn:gscoherent}
\end{eqnarray}
where $S_x$, $S_y$, $S_z$ are spin operators,
and $|s,m\rgl$ is defined by,
\begin{eqnarray}
&&S_z|j,m\rgl=\hbar m |s, m\rgl\nonumber\\
&&(S^2_x+S^2_y+S^2_z)|s,m\rgl=\hbar^2 s(s+1)|s, m\rgl.
\nonumber
\end{eqnarray}
$s$ takes value of half integers,
and m is a half integer such that $-j\leq m\leq j$.
Then after tedious calculations, we obtain 
\begin{eqnarray}
{\bf M}\left(\hlift_{|s, m\rgl}\,\partial_1\right)&=&
2i(\sin\theta^2 S_x-\cos\theta^2 S_y)|s, m\rgl\nonumber\\
{\bf M}\left(\hlift_{|s, m\rgl}\, \partial_2\right)&=&
2i\{-\sin\theta^1(\cos\theta^2 S_x + \sin\theta^2 S_y )
	+(\cos\theta-1)(S_z-m\hbar)\}|s, m\rgl,\nonumber\\
\nonumber
\end{eqnarray}
and
\begin{eqnarray}
J^S&=&2\hbar^2(s^2+s-m^2)\left[
\begin{array}{cc}
1 & 0\\
0 & \sin^2\theta^1
\end{array}
\right],\nonumber\\
\tilde{J}&=&\left[
\begin{array}{cc}
0 & 2m\hbar^2\sin\theta^1 \\
-2m\hbar^2\sin\theta^1 & 0
\end{array}
\right],\nonumber\\
\beta_{s,m} &=& \frac{m}{s^2+ s -m^2}.
\nonumber
\end{eqnarray}
If $m=\alpha s$, where $\alpha<1$ is a constant,
$\beta_{s,m}$ tends to zero as $s\rightarrow\infty$,
and the model ${\cal M}_{s,m}$ becomes locally quasi-classical.
However, if  $m=s$, the model ${\cal M}_{s,m}$ is coherent for any $s$.
\end{example}

\section{Multiplication of the imaginary unit}
\label{sec:complex}
As is shown in the previous section,
in the 2-parameter model,
$|\beta|$,
a good index of `uncertainty' between two parameters,
or a measure of how distinct the model is from the classical model.
It can be easily shown that, whatever coordinate
of the model ${\cal M}$ is chosen, 
$\pm i\beta$ are the eigenvalues of   
the matrix $J^{S-1}\tilde{J}$, 
which 
is deeply related to the complex structure of the model.
Actually,
that  matrix 
stands for the linear map ${\bf D}$ from ${\cal T}_{\rho}({\cal M})$
onto ${\cal T}_{\rho}({\cal M})$ defined as in the followings;
First, we  multiply the imaginary unit $i$ to 
$|l_X\rgl={\bf M}(\hlift(|l_X\rgl))$ and
and ${\bf M}^{-1}$ and $\pi_*$ are applied successively to
$i|l_X\rgl$.
since
$\pi_*({\bf M}^{-1}(i|l_X\rgl))$ 
is not a member of ${\cal T}_{\rho}({\cal M})$ generally,
we project $\pi_*({\bf M}^{-1}(i|l_X\rgl))\in {\cal T}_{\rho}({\cal P}_1)$
to  ${\cal T}_{\rho}({\cal M})$
with respect to the metric $\lgl *|* \rgl_{\rho}$,
and we obtain  ${\bf D}X\in {\cal T}_{\rho}({\cal M})$.



\begin{eqnarray}
\begin{array}{ccc}
        &\mbox{multiplication of}\:\: i& 	
\\
|l_X\rgl  \in span_{\bf R}{\sf L}&--\longrightarrow
&i|l_X\rgl  \in span_{\bf R}\{{\sf L}, i{\sf L}\}	
\\
\uparrow& &\:\downarrow {\bf M}^{-1},\, \pi_*
\\
\hlift_{|\phi\rgl},\, {\bf M}& &\pi_*({\bf M}^{-1}(i|l_X\rgl))\in {\cal T}_{\rho}({\cal P}_1)
\\
\uparrow& &\mbox{project}\,\downarrow \mbox{w.r.t.\,  $\lgl*,*\rgl_{\rho}$}
\\
X\in{\cal T}_{\rho}({\cal M})&--\longrightarrow
& {\bf D}X\in{\cal T}_{\rho}({\cal M})
\\
  &{\bf D}&
\end{array}
\nonumber
\end{eqnarray}

The following theorems are  straightforward 
 consequences of the above discussion.

\begin{theorem}
The  absolute value
of an  eigenvalues of ${\bf D}$, or equivalently, of $J^{S-1}\tilde{J}$, 
is smaller than or equal to $1$.
\label{theorem:coherent:beta_i}
\end{theorem}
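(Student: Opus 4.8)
The plan is to reduce the whole statement to the single fact that the Gram matrix ${\sf L}^*{\sf L}$ is non-negative definite, and then extract the eigenvalue bound by a congruence transformation.

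First I would record the structural facts. The matrix ${\sf L}^*{\sf L}=[\lgl l_i|l_j\rgl]$ is a Gram matrix, hence Hermitian and non-negative definite, and by definition it splits as ${\sf L}^*{\sf L}=J^S+i\tilde J$ with $J^S$ real symmetric and $\tilde J$ real antisymmetric. Since the parameters are non-redundant the vectors $|l_i\rgl$ are linearly independent over $\R$, so $v^TJ^Sv=\|\sum_i v_i|l_i\rgl\|^2>0$ for $v\neq0$; thus $J^S$ is strictly positive and $(J^S)^{-1/2}$ is well defined.

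Second, I would conjugate by $(J^S)^{-1/2}$. Because congruence preserves non-negativity, $(J^S)^{-1/2}({\sf L}^*{\sf L})(J^S)^{-1/2}=I+i\tilde K\ge 0$, where $\tilde K\equiv(J^S)^{-1/2}\tilde J(J^S)^{-1/2}$ is again real antisymmetric. The key observation of this step is that $\tilde K$ is similar to $J^{S-1}\tilde J$, since $J^{S-1}\tilde J=(J^S)^{-1/2}\tilde K(J^S)^{1/2}$; hence the two matrices share the same spectrum, and it suffices to bound the eigenvalues of $\tilde K$.

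Third, I would invoke the spectrum of a real antisymmetric matrix: write the eigenvalues of $\tilde K$ as $\pm i\mu_k$ with $\mu_k\ge0$. Then $i\tilde K$ is Hermitian with real eigenvalues $\pm\mu_k$, so $I+i\tilde K$ has eigenvalues $1\pm\mu_k$, and the non-negativity $I+i\tilde K\ge0$ forces $1-\mu_k\ge0$, i.e.\ $\mu_k\le1$ for every $k$. Since the eigenvalues of $J^{S-1}\tilde J$ are exactly the $\pm i\mu_k$, each has absolute value $\mu_k\le1$, which is the assertion; the eigenvalues of ${\bf D}$ require nothing further, the equivalence with $J^{S-1}\tilde J$ having already been noted above.

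The argument is entirely elementary linear algebra, and I do not expect a genuine obstacle. The only point that demands care is the passage from the operator inequality $I+i\tilde K\ge0$ for the Hermitian matrix $i\tilde K$ to the scalar bound $\mu_k\le1$ for the underlying real antisymmetric $\tilde K$; one must go through the purely imaginary spectrum of $\tilde K$ to make this translation. In the two-parameter case the bound specialises to $|\beta|\le1$, in agreement with the remark in Section~\ref{sec:2parameter}.
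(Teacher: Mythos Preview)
Your proof is correct. The Gram-matrix argument is clean: the non-negativity of ${\sf L}^*{\sf L}=J^S+i\tilde J$, the congruence by $(J^S)^{-1/2}$, and the extraction of the eigenvalue bound from $I+i\tilde K\ge0$ all go through exactly as you state, and the similarity $J^{S-1}\tilde J=(J^S)^{-1/2}\tilde K(J^S)^{1/2}$ is the right bridge.

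The paper, however, takes a different route. It does not manipulate the Gram matrix at all; instead it observes that ${\bf D}$ is, by construction, the composition of the horizontal lift (an isometry onto $span_{\R}{\sf L}$), multiplication by $i$ (an isometry of the ambient Hilbert space preserving horizontality), and orthogonal projection back onto ${\cal T}_\rho({\cal M})$ with respect to $\lgl*,*\rgl_\rho$. Since an orthogonal projection following an isometry is a contraction in the SLD norm, the operator norm of ${\bf D}$ is at most $1$, and the eigenvalue bound follows immediately. Your algebraic argument and the paper's geometric one rest on the same underlying positivity, but yours is self-contained linear algebra that never invokes the interpretation of ${\bf D}$, while the paper's explains \emph{why} the bound holds --- projections shorten vectors --- and makes the equality case (the coherent case, where $i|l_X\rgl$ already lies in $span_{\R}{\sf L}$) geometrically transparent.
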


\section{The coherent model}
\label{sec:coherent}
As for the model with arbitrary number of parameters,
the model is said to be {\it coherent} at $\theta$
iff all of the eigenvalues of $(J^S)^{-1}\tilde{J}$ are $\pm i$.
When the number of parameters is $2$,
this definition of coherency reduces to $|\beta|=1$.
It should be noted that the eigenvalues of $J^{S -1}\tilde{J}$ are ,
whether the model is coherent or not, of the form $\pm i\beta_j$ or 0.
Therefore, the number of parameters of the coherent model is even.

In this section, we determine the attainable CR type bound of 
the coherent model with arbitrary numbers of parameters.
The coherent model is worthy of attention firstly 
because the coherent model is `the maximal uncertainty' model,
secondly because there are many physically important coherent models.


Because $J^{S-1}\tilde{J}$ is 
a representation of ${\bf D}$, 
or of multiplication of the imaginary unit $i$,
the following theorem.
\begin{theorem}
The model $\cal M$ is coherent at $\theta$ iff
\begin{eqnarray}
(J^{S-1}\tilde{J})^2=-I_m
\label{eqn:dd=-1}
\end{eqnarray}
holds true.
\end{theorem}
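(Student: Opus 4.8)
The plan is to exploit the fact that $J^{S-1}\tilde J$ is conjugate to a real antisymmetric matrix, which turns the spectral condition defining coherency into a single polynomial identity. First I would introduce $P\equiv (J^S)^{1/2}$, the symmetric positive-definite square root of the SLD Fisher information matrix, which exists because $J^S$ is strictly positive. Since $J^{S-1}=P^{-2}$, the conjugate of $J^{S-1}\tilde J$ by $P$ is
\begin{eqnarray}
B\equiv P\,(J^{S-1}\tilde J)\,P^{-1}=P^{-1}\tilde J\, P^{-1}.
\nonumber
\end{eqnarray}
Because $\tilde J$ is antisymmetric and $P^{-1}$ is symmetric, one checks $B^T=P^{-1}\tilde J^T P^{-1}=-B$, so $B$ is a real antisymmetric matrix. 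In particular $B$ is normal ($BB^T=-B^2=B^TB$), hence diagonalizable over $\C$ with purely imaginary eigenvalues; this is exactly the earlier observation that the eigenvalues of $J^{S-1}\tilde J$ have the form $\pm i\beta_j$ or $0$. Since $B$ and $J^{S-1}\tilde J$ are similar, they share eigenvalues, so coherency --- every eigenvalue being $\pm i$ --- is equally a statement about $B$.

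The ``if'' direction is immediate: if $(J^{S-1}\tilde J)^2=-I_m$, then any eigenvalue $\lambda$ obeys $\lambda^2=-1$, forcing $\lambda=\pm i$, which is precisely the definition of coherency.

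For the converse I would use the diagonalizability of $B$. Assume the model is coherent, so every eigenvalue of $B$ lies in $\{+i,-i\}$. Being normal, $B$ is unitarily diagonalizable, so its minimal polynomial is a product of distinct linear factors taken over its spectrum; as that spectrum is contained in $\{i,-i\}$, the minimal polynomial divides $(\lambda-i)(\lambda+i)=\lambda^2+1$, whence $B^2=-I_m$. Transporting back through the similarity gives
\begin{eqnarray}
(J^{S-1}\tilde J)^2=P^{-1}B^2P=-P^{-1}P=-I_m,
\nonumber
\end{eqnarray}
as desired.

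The one step that genuinely requires the structure of the problem, and which I expect to be the crux, is the diagonalizability of $J^{S-1}\tilde J$. Knowing only that the eigenvalues are $\pm i$ would not suffice, since nontrivial Jordan blocks could then survive and $B^2=-I_m$ would fail. It is the antisymmetry of $\tilde J$ combined with the strict positivity of $J^S$ --- packaged in the conjugation to the normal matrix $B$ --- that supplies this diagonalizability; everything else is routine.
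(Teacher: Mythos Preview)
Your proof is correct. The paper's own justification is extremely terse --- essentially the single sentence ``Because $J^{S-1}\tilde{J}$ is a representation of ${\bf D}$, or of multiplication of the imaginary unit $i$'' --- which leans on the geometric interpretation of $J^{S-1}\tilde J$ as the operator ${\bf D}$ (horizontal lift, multiply by $i$, project back to ${\cal T}_\rho({\cal M})$). Your argument makes explicit precisely the step the paper glosses over: the diagonalizability of $J^{S-1}\tilde J$. Your conjugation $B=P^{-1}\tilde J\,P^{-1}$ with $P=(J^S)^{1/2}$ is nothing other than the matrix of ${\bf D}$ in a basis orthonormal for $\lgl\,*,*\,\rgl_\rho$, so the antisymmetry of $B$ is the coordinate expression of the fact that ${\bf D}$ is skew with respect to the SLD metric. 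In that sense the two arguments are the same idea; yours is simply self-contained linear algebra, while the paper assumes the reader will supply this from the geometric picture. Your closing remark that diagonalizability is the genuine crux is exactly right and is the point the paper leaves implicit.
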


\begin{theorem}
The model $\cal M$ is coherent at $\theta$ iff 
$span_{\R}\{i{\sf L}\}$ is identical to $span_{\R}{\sf L}$,
or equivalently, iff 
$span_{\R}\{ {\sf L}, i{\sf L}\}$ is identical to 
$span_{\R}{\sf L}$.
\label{theorem:coherent:spaniL}
\end{theorem}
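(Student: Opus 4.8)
The plan is to recognize ${\bf D}$ as the compression to $span_{\R}{\sf L}$ of the genuine complex structure given by multiplication by the imaginary unit, and then to use the elementary fact that a compression of an isometry can square to $-{\rm id}$ only when the isometry already preserves the subspace. First I would make ${\bf D}$ explicit at the level of horizontal lifts. Write $V=span_{\R}{\sf L}$ for the real $m$-dimensional space of matrix representations of horizontal lifts, let ${\cal J}$ denote multiplication by $i$, and let $P$ be the orthogonal projection onto $V$ with respect to the real inner product $g(\cdot,\cdot)\equiv{\rm Re}\lgl\cdot|\cdot\rgl$, which is exactly the metric $\lgl *,*\rgl_{\rho}$ carried back by $\hlift$. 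The observation that makes the defining diagram collapse is that ${\cal J}$ preserves the horizontality condition $(\ref{eqn:sld:horizontal})$: since $\lgl i l_X|\phi\rgl=-i\lgl l_X|\phi\rgl=0$, the vector $i|l_X\rgl$ is again a horizontal lift, so the passage $\pi_*\circ{\bf M}^{-1}$ discards no vertical part and ${\bf D}=P\circ{\cal J}|_V$. I would also record the two structural facts ${\cal J}^2=-{\rm id}$ and that ${\cal J}$ is a $g$-isometry (because ${\rm Re}\lgl ix|iy\rgl={\rm Re}\lgl x|y\rgl$), and recall from $(\ref{eqn:dd=-1})$ that coherence at $\theta$ is the same as ${\bf D}^2=(J^{S-1}\tilde{J})^2=-I_m$.

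The implication from the span condition to coherence is immediate. If $span_{\R}\{i{\sf L}\}=span_{\R}{\sf L}$, that is ${\cal J}(V)\subseteq V$, then $P$ fixes ${\cal J}x$ for every $x\in V$, so ${\bf D}={\cal J}|_V$ and ${\bf D}^2={\cal J}^2|_V=-{\rm id}_V$; hence ${\cal M}$ is coherent.

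The converse carries the real content. Assuming ${\bf D}^2=-{\rm id}_V$, I would run, for an arbitrary $x\in V$ and writing $\|\cdot\|$ for the $g$-norm, the chain $\|x\|=\|{\bf D}^2 x\|\le\|{\bf D}x\|=\|P{\cal J}x\|\le\|{\cal J}x\|=\|x\|$, in which the two inequalities use that ${\bf D}$ is a contraction (a projection after an isometry) and the last equality uses that ${\cal J}$ is a $g$-isometry. Equality throughout forces $\|P{\cal J}x\|=\|{\cal J}x\|$, and for an orthogonal projection this occurs only when ${\cal J}x\in V$. Letting $x$ range over $V$ gives ${\cal J}(V)\subseteq V$, and since ${\cal J}$ is invertible a comparison of real dimensions upgrades this to ${\cal J}(V)=V$, i.e. $span_{\R}\{i{\sf L}\}=span_{\R}{\sf L}$. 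The two stated phrasings are then interchanged trivially: $iV=V$ gives $span_{\R}\{{\sf L},i{\sf L}\}=V$, while $span_{\R}\{{\sf L},i{\sf L}\}=V$ gives $iV\subseteq V$, hence $iV=V$ by equality of dimensions. The main obstacle is the bookkeeping in the first step, namely verifying that multiplication by $i$ keeps one inside the horizontal subspace so that ${\bf D}$ really is the honest compression $P\,{\cal J}|_V$ with no hidden vertical correction; once that identity is secured, the norm-chain argument for the hard direction is short and robust.
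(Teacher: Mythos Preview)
Your proof is correct, and it makes explicit exactly the argument the paper leaves to the reader. The paper states this theorem immediately after describing ${\bf D}$ as the projection to ${\cal T}_\rho({\cal M})$ of multiplication by $i$ and after recording the equivalent characterization $(J^{S-1}\tilde J)^2=-I_m$, but gives no separate proof; your identification ${\bf D}=P\,{\cal J}|_V$ and the norm chain $\|x\|=\|{\bf D}^2x\|\le\|{\bf D}x\|\le\|{\cal J}x\|=\|x\|$ are precisely the missing details, and they sit entirely inside the paper's own framework. The one step you flagged as needing care, that $i|l_X\rgl$ remains horizontal so that no vertical correction is discarded, is indeed the only nontrivial point and you handle it correctly via $\lgl i l_X|\phi\rgl=-i\lgl l_X|\phi\rgl=0$.
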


This theorem  leads to the following theorem.

\begin{theorem}
The model $\cal M$ is coherent at $\theta$ iff
the dimension of $span_{\C}{\sf L}$ is $m/2$.
\end{theorem}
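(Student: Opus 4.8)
The plan is to derive this theorem directly from Theorem \ref{theorem:coherent:spaniL} together with an elementary count of real versus complex dimensions, so that no fresh computation is required. The crucial observation I would record first is that the real subspace $span_{\R}\{{\sf L}, i{\sf L}\}$ is nothing but $span_{\C}{\sf L}$ regarded as a real vector space: writing an arbitrary complex combination $\sum_j c_j|l_j\rgl$ with $c_j = a_j + i b_j$ in the form $\sum_j a_j |l_j\rgl + \sum_j b_j\, i|l_j\rgl$ shows that the two spanning systems generate exactly the same set of vectors. Consequently the condition ``$span_{\R}\{{\sf L}, i{\sf L}\}$ is identical to $span_{\R}{\sf L}$'' furnished by Theorem \ref{theorem:coherent:spaniL} can be rephrased as ``$span_{\C}{\sf L}$, viewed as a real space, coincides with $span_{\R}{\sf L}$.''

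Next I would carry out the dimension bookkeeping. Because the parameters of the model are assumed non-redundant, $span_{\R}{\sf L}$ has real dimension exactly $m$. On the other hand, a complex subspace of complex dimension $k$ has real dimension $2k$, so $span_{\C}{\sf L}$ has real dimension $2\dim_{\C} span_{\C}{\sf L}$. Since the inclusion $span_{\R}{\sf L}\subseteq span_{\C}{\sf L}$ always holds, it is an equality precisely when these two real dimensions agree, that is, when $m = 2\dim_{\C} span_{\C}{\sf L}$.

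Combining the two steps, ${\cal M}$ is coherent at $\theta$ iff $span_{\R}\{{\sf L}, i{\sf L}\} = span_{\R}{\sf L}$, iff $m = 2\dim_{\C} span_{\C}{\sf L}$, i.e. $\dim_{\C} span_{\C}{\sf L} = m/2$, which is the assertion. I do not expect a genuine obstacle here; the only points demanding care are the real/complex dimension counting and the justification that $span_{\R}{\sf L}$ is honestly $m$-dimensional (non-redundancy of the coordinates). As a consistency check, the resulting formula forces $m$ to be even, which recovers the earlier remark that a coherent model necessarily has an even number of parameters.
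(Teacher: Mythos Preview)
Your argument is correct. The key identification $span_{\R}\{{\sf L},i{\sf L}\}=span_{\C}{\sf L}$ (as real vector spaces) is valid, and once that is in place both implications follow from the single dimension comparison $m=\dim_{\R}span_{\R}{\sf L}$ versus $2\dim_{\C}span_{\C}{\sf L}=\dim_{\R}span_{\C}{\sf L}$, together with the trivial inclusion $span_{\R}{\sf L}\subseteq span_{\C}{\sf L}$.

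The paper proceeds somewhat differently. For the implication $\dim_{\C}span_{\C}{\sf L}=m/2\Rightarrow$ coherent it uses essentially the same dimension count you give. For the converse, however, instead of re-using the dimension count it builds an explicit basis: starting from an orthonormal basis $\{e_j\}$ of ${\cal T}_{\rho}({\cal M})$ with $e_{j+m/2}={\bf D}e_j$ for $j=1,\dots,m/2$, it observes that ${\bf M}(\hlift(e_{j+m/2}))=i\,{\bf M}(\hlift(e_j))$ and hence that every real combination $\sum_{j=1}^{m}a_j{\bf M}(\hlift(e_j))$ rewrites as $\sum_{j=1}^{m/2}(a_j+ia_{j+m/2}){\bf M}(\hlift(e_j))$, exhibiting $m/2$ complex generators. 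Your route is shorter and treats the two directions symmetrically; the paper's route is more constructive in that it produces an explicit complex basis of $span_{\C}{\sf L}$ adapted to the almost-complex structure ${\bf D}$, which is informative but not logically necessary for the bare statement.
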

\begin{proof}
First, we  assume that 
\begin{eqnarray}
\dim_{\C}span_{\C}{\sf L}=m/2.
\label{eqn:dimL=m/2}
\end{eqnarray}
Because $span_{\R}{\sf L}$ is a $m$-dimensional subspace of 
$span_{\R}\{{\sf L},i{\sf L}\}$ whose dimension is
smaller than or equal to $m$ because of $(\ref{eqn:dimL=m/2})$,
we have 
$span_{\R}\{{\sf L},i{\sf L}\}=span_{\R}{\sf L}$, 
or coherency of the model at $\theta$.

Conversely, let us assume that the model is coherent at $\theta$.
If we take an orthonormal basis $\{e_j|\: j=1,...m\}$ of 
${\cal T}_{\rho}({\cal M})$ such that
$e_{j+m/2}={\bf D}e_i\:(i=1,2,...,m/2)$,
then 
${\bf M}(\hlift(e_{j+m/2}))=i{\bf M}(\hlift(e_{j}))\:(i=1,2,...,m/2)$
holds true,
and any element $|u\rgl$ of  
$span_{\R}{\sf L}= span_{\R}\{{\sf L}, i{\sf L}\}$ 
writes
\begin{eqnarray}
|u\rgl &=&\sum_{j=1}^m a_j {\bf M}(\hlift(e_{j})) \nonumber\\
&=&\sum_{j=1}^{m/2} (a_j+ i a_{j+(m/2)}){\bf M}(\hlift(e_{j})),
\nonumber
\end{eqnarray}
implying that the dimension of $span_{\C}{\sf L}$ is $m/2$.
\end{proof}

In 1996, Fujiwara and Nagaoka \cite{FujiwaraNagaoka:1996}
determined the attainable CR type bound of 
the two parameter coherent model.
In the following, more generally, 
we calculate the bound of the coherent model with arbitrary
number of parameters.

\begin{lemma}
In the case of the coherent model, ${\rm Re}{\sf L}^*{\sf X}=I_m$ 
or its equivalence ${\rm Re}{\sf L}^*({\sf X}-{\sf L}J^{S-1})=0$,
implies
\begin{eqnarray}
{\sf L}^*{\sf X}=I_m+i{\tilde J}J^{S-1}.
\nonumber
\end{eqnarray}
\end{lemma}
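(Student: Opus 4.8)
The plan is to turn coherency into a statement about real spans and then read off the imaginary part of ${\sf L}^*{\sf X}$ for free. By Theorem \ref{theorem:coherent:spaniL}, coherency at $\theta$ is equivalent to $span_{\R}\{{\sf L},i{\sf L}\}=span_{\R}{\sf L}$, so $\mathcal{S}:=span_{\R}{\sf L}$ is closed under multiplication by $i$, i.e. it is a genuine complex subspace of ${\cal H}$, of complex dimension $m/2$ and real dimension $m$. Since the parameters are non-redundant, the $m$ vectors $|l_1\rangle,\dots,|l_m\rangle$ are $\R$-linearly independent and hence form a real basis of $\mathcal{S}$. This is the only place coherency is used, and it is exactly what makes the coordinates below real rather than complex.

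First I would note that $({\sf L}^*{\sf X})_{jk}=\langle l_j|x^k\rangle$ depends only on the orthogonal projection of each column $|x^k\rangle$ onto $\mathcal{S}$, because $|l_j\rangle\in\mathcal{S}$. Replacing ${\sf X}$ by that projection changes neither the hypothesis ${\rm Re}\,{\sf L}^*{\sf X}=I_m$ nor the quantity ${\sf L}^*{\sf X}$ we wish to compute, so without loss of generality every $|x^k\rangle$ lies in $\mathcal{S}=span_{\R}{\sf L}$. Expanding in the real basis gives ${\sf X}={\sf L}R$ for a unique real $m\times m$ matrix $R$, whence ${\sf L}^*{\sf X}={\sf L}^*{\sf L}R=(J^S+i\tilde{J})R$, using ${\sf L}^*{\sf L}=J^S+i\tilde{J}$. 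Now I would take real parts: since $R$ is real, ${\rm Re}\,{\sf L}^*{\sf X}=J^SR$, and the hypothesis (equivalently ${\rm Re}\,{\sf L}^*({\sf X}-{\sf L}J^{S-1})=0$, via ${\rm Re}\,{\sf L}^*{\sf L}=J^S$) forces $J^SR=I_m$, i.e. $R=J^{S-1}$. Substituting back yields ${\sf L}^*{\sf X}=(J^S+i\tilde{J})J^{S-1}=I_m+i\tilde{J}J^{S-1}$, which is the claim.

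The only delicate step is the reality of $R$, and this is precisely where coherency is indispensable: in a non-coherent model $span_{\R}{\sf L}$ is not $i$-invariant, the $\mathcal{S}$-projection of ${\sf X}$ acquires complex coordinates in the basis ${\sf L}$, and the clean split ${\rm Re}\,{\sf L}^*{\sf X}=J^SR$ breaks down. To make the $i$-invariance fully explicit I would, if needed, verify that the real matrix $A:=-J^{S-1}\tilde{J}$ implements multiplication by $i$ on the generating system, $i{\sf L}={\sf L}A$, which is consistent because $(J^{S-1}\tilde{J})^2=-I_m$ gives $\tilde{J}J^{S-1}\tilde{J}=-J^S$; this confirms that $\mathcal{S}$ is indeed a complex subspace and legitimizes writing ${\sf X}={\sf L}R$ with $R$ real.
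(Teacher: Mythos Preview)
Your proof is correct and follows essentially the same route as the paper: both hinge on Theorem \ref{theorem:coherent:spaniL}, which says $span_{\R}\{i{\sf L}\}=span_{\R}\{{\sf L}\}$ under coherency. The paper compresses your projection-and-real-basis argument into the single observation that ${\rm Im}\,{\sf L}^*({\sf X}-{\sf L}J^{S-1})=-{\rm Re}\,i{\sf L}^*({\sf X}-{\sf L}J^{S-1})=0$ because $i{\sf L}$ is a real linear combination of ${\sf L}$, so your extra step of projecting ${\sf X}$ onto the complex subspace $\mathcal{S}$ is not needed but does no harm.
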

\begin{proof}
\begin{eqnarray}
{\rm Im}{\sf L}^*({\sf X}-{\sf L}J^{S-1})&=&
-{\rm Re}i{\sf L}^*({\sf X}-{\sf L}J^{S-1})\nonumber\\
&=&0.
\label{eqn:coherent:imlx0=0}
\end{eqnarray}
Here, $span_{\R}{\sf L}=span_{\R}i{\sf L}$ is 
used to deduce the last equality.
$(\ref{eqn:coherent:imlx0=0})$ and 
${\rm Re}{\sf L}^*({\sf X}-{\sf L}J^{S-1})=0$
implies
\begin{eqnarray}
{\sf L}^*{\sf X}={\sf L}^*{\sf L}J^{S-1}=I_m+i{\tilde J}J^{S-1}.
\nonumber
\end{eqnarray}
\end{proof}

Multiplication of  ${\sf L}^*$ to the both sides of $(\ref{eqn:basic0.1})$,
together with the lemma  presented above, yields
\begin{eqnarray}
(I_m+i{\tilde J}J^{S-1})(G-i\Lambda)=(J^S+i\tilde{J})VG.
\label{eqn:coherent:g-il}
\end{eqnarray}
By virtue of $(\ref{eqn:dd=-1})$,
both of the real part and the imaginary part of 
$(\ref{eqn:coherent:g-il})$ give the same equation,
\begin{eqnarray}
G+{\tilde J}J^{S-1}\Lambda=J^SVG,
\nonumber
\end{eqnarray}
or
\begin{eqnarray}
G^{1/2}VG^{1/2}-
G^{1/2}J^{S-1/2}G^{1/2}=
\left(\, G^{1/2}J^{S-1}{\tilde J}J^{S-1}G^{1/2}\,\right)
\left(\, G^{-1/2}\Lambda G^{-1/2}\,\right).
\nonumber\\
\label{eqn:vj=jl}
\end{eqnarray}
Therefore, letting $a_i$ and $b_i$ denote  the eigenvalues of 
$G^{1/2}J^{S-1}{\tilde J}J^{S-1}G^{1/2}$ and
$G^{-1/2}\Lambda G^{-1/2}$ respectively, we have
\begin{eqnarray}
&&\left[\, (G^{1/2}J^{S-1}{\tilde J}J^{S-1}G^{1/2}),\,
(G^{-1/2}\Lambda G^{-1/2})\,\right]=0\nonumber\\
&&\Tr\,\left\{\,(G^{1/2}J^{S-1}{\tilde J}J^{S-1}G^{1/2})
(G^{-1/2}\Lambda G^{-1/2})\,\right\}
= \sum_i |a_i||b_i|,
\nonumber
\end{eqnarray}
because $(\ref{eqn:vj=jl})$ and  SLD CR inequality implies 
that $(G^{1/2}J^{S-1}{\tilde J}J^{S-1}G^{1/2})(G^{-1/2}\Lambda G^{-1/2})$
is positive Hermitian.

On the other hand, $(\ref{eqn:basic0.1})$ or its equivalence,
\begin{eqnarray}
{\sf X}G^{1/2}(I_m-iG^{-1/2}\Lambda G^{-1/2})
={\sf L}VG^{1/2},
\label{eqn:coherent:i-iglg}
\end{eqnarray}
implies 
$|b_i|=1\,(i=1,...,m)$, because
the rank of $I_m-iG^{-1/2}\Lambda G^{-1/2}$ is
shown to be $m/2$
from $(\ref{eqn:coherent:i-iglg})$, and
the rank of matrices ${\sf L}$, ${\sf X}$, and $V$.
\begin{eqnarray}
\rank_{\C}{\sf X}=\rank_{\R}{\sf X}=m,
\nonumber
\end{eqnarray}
hold true,
where the last equation is valid by virtue of ${\rm Im}{\sf X}^*{\sf X}=0$.

After all, letting $\Tr\abs A$ denote
 the sum of the absolute values of the eigenvalues of $A$, 
we have the following theorem.

\begin{theorem}
\begin{eqnarray}
\CR (G)
=\Tr\, GJ^{S-1}+\Tr\abs\, GJ^{S-1}{\tilde J}J^{S-1},
\nonumber
\end{eqnarray}
where letting $|A|=(AA^*)^{1/2}$,
the covariance matrix $V$ such that
\begin{eqnarray}
V=J^{S-1}+G^{-1/2}\left|G^{1/2}J^{S-1}{\tilde J}J^{S-1}G^{1/2}\right|G^{-1/2}.
\nonumber
\end{eqnarray}
attain the minimum.
\end{theorem}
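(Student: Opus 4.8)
The plan is to extract $\CR(G)$ directly from the stationarity relation $(\ref{eqn:vj=jl})$, whose content the preceding analysis has already reduced to a scalar eigenvalue problem, and then to secure attainability by producing an explicit ordered pair $\mathsf{X}$ that realizes the claimed minimizer. Throughout I take $G$ strictly positive, as the appearance of $G^{-1/2}$ in the asserted minimizer requires.

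First I would evaluate the objective. Writing $\Tr GV = \Tr G^{1/2}VG^{1/2}$ and inserting $(\ref{eqn:vj=jl})$ gives
\[
\Tr GV = \Tr GJ^{S-1} + \Tr(AB), \qquad A := G^{1/2}J^{S-1}\tilde{J}J^{S-1}G^{1/2}, \quad B := G^{-1/2}\Lambda G^{-1/2}.
\]
Both $A$ and $B$ are real antisymmetric, hence anti-Hermitian with purely imaginary eigenvalues $a_i,b_i$. The SLD CR inequality $V\ge J^{S-1}$ together with $(\ref{eqn:vj=jl})$ shows that $AB = G^{1/2}(V-J^{S-1})G^{1/2}$ is real symmetric and positive semidefinite; symmetry of $AB$ forces $BA=AB$, so $A$ and $B$ are simultaneously diagonalizable and $\Tr(AB)=\sum_i|a_i||b_i|$.

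Next I would pin down the $|b_i|$ from $(\ref{eqn:coherent:i-iglg})$, which reads $\mathsf{X}G^{1/2}(I_m-iB)=\mathsf{L}VG^{1/2}$: since $\mathsf{X}$ has full complex rank $m$ while the right-hand side has complex rank $\dim_{\C}span_{\C}\mathsf{L}=m/2$, the factor $I_m-iB$ must have rank $m/2$. As $B$ is real antisymmetric its eigenvalues occur in conjugate pairs $\pm i|\beta|$, and $I_m-iB$ gains a zero eigenvalue within such a pair precisely when $|\beta|=1$; demanding rank $m/2$ therefore forces $|b_i|=1$ for every $i$. Hence $\Tr(AB)=\sum_i|a_i|=\Tr\abs A$, and since $GJ^{S-1}\tilde{J}J^{S-1}=G^{1/2}AG^{-1/2}$ is similar to $A$ the two share eigenvalues, giving $\CR(G)=\Tr GJ^{S-1}+\Tr\abs(GJ^{S-1}\tilde{J}J^{S-1})$. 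In the common eigenbasis the constraints $AB\ge 0$, $|b_i|=1$, and the invertibility of $A$ (guaranteed by the coherency identity $(\ref{eqn:dd=-1})$) leave no freedom in the signs and force $AB=|A|=(AA^*)^{1/2}$; solving $(\ref{eqn:vj=jl})$ for $V$ then reproduces $V=J^{S-1}+G^{-1/2}|G^{1/2}J^{S-1}\tilde{J}J^{S-1}G^{1/2}|G^{-1/2}$, which is moreover the unique stationary covariance matrix.

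The step I expect to be the main obstacle is attainability, since everything above is computed at a Lagrange stationary point and therefore certifies the value only once I know the minimum of $\Tr GV$ over the convex set ${\cal V}$ is actually realized there and that this $V$ lies in ${\cal V}$. Uniqueness of the stationary point, convexity of ${\cal V}$, and its unboundedness from above already identify the candidate as the only possible minimizer; to finish I would exhibit $\mathsf{X}=\mathsf{L}VG(G-i\Lambda)^{-1}$ with the antisymmetric $\Lambda$ selected above, check by direct substitution that it obeys the unbiasedness constraint $(\ref{eqn:lagrange:restriction1})$ and the commuting constraint $(\ref{eqn:imxx=0})$ with $\mathrm{Re}\,\mathsf{X}^*\mathsf{X}=V$, and then invoke the reduction theorem \ref{theorem:simple} (with corollary \ref{corollary:ve=vm1}) to convert this finite-dimensional datum into a genuine simple measurement of covariance $V$. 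The delicate bookkeeping is verifying $(\ref{eqn:imxx=0})$ for this $\mathsf{X}$: here the coherency identity $(\ref{eqn:dd=-1})$, which collapses the real and imaginary parts of $(\ref{eqn:coherent:g-il})$ into one real equation, is exactly what makes the candidate consistent with both unbiasedness and the vanishing of $\mathrm{Im}\,\mathsf{X}^*\mathsf{X}$, and hence makes the stationary bound achievable.
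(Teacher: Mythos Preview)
Your proposal is correct and follows essentially the same route as the paper: you evaluate $\Tr GV$ from the stationarity relation $(\ref{eqn:vj=jl})$, identify $A=G^{1/2}J^{S-1}\tilde{J}J^{S-1}G^{1/2}$ and $B=G^{-1/2}\Lambda G^{-1/2}$, use positivity of $AB$ to force commutation and $\Tr(AB)=\sum_i|a_i||b_i|$, and then invoke the rank drop in $(\ref{eqn:coherent:i-iglg})$ together with $\rank_{\C}\mathsf{L}=m/2$ to pin down $|b_i|=1$. The paper's derivation (which precedes the theorem statement rather than following it) is organized identically; your explicit treatment of attainability via constructing $\mathsf{X}$ and appealing to the reduction theorem is a welcome addition that the paper leaves implicit.
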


To check the coherency of the model, the following theorem,
which is induced from 
theorem $\ref{theorem:coherent:beta_i}$, is
useful.
\begin{theorem}
the model is coherent at $\theta$
iff 
\begin{eqnarray}
|{\rm det} J^S|=|{\rm det} \tilde{J}|.
\nonumber
\end{eqnarray}
\label{theorem:coherent:check}
\end{theorem}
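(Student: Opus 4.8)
The plan is to reduce the determinant identity to a statement about the eigenvalues of $J^{S-1}\tilde J$, which are already controlled by Theorem \ref{theorem:coherent:beta_i}. First I would record the structural fact underlying everything: since $J^S$ and $\tilde J$ are the real and imaginary parts of the Hermitian matrix ${\sf L}^*{\sf L}=[\,\lgl l_i|l_j\rgl\,]$, the matrix $J^S$ is real symmetric (and positive definite, the parameters being non-redundant) while $\tilde J={\rm Im}\,{\sf L}^*{\sf L}$ is real \emph{antisymmetric}. Writing $J^S=((J^S)^{1/2})^2$ with $(J^S)^{1/2}$ symmetric positive definite, the matrix $J^{S-1}\tilde J$ is similar to $K\equiv (J^S)^{-1/2}\tilde J (J^S)^{-1/2}$, and $K$ is again real antisymmetric. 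Hence the eigenvalues of $J^{S-1}\tilde J$ coincide with those of $K$: they are purely imaginary and occur in conjugate pairs $\pm i\mu_1,\dots,\pm i\mu_k$ (together with a zero eigenvalue whenever $m$ is odd or $K$ is singular), with each $\mu_j\ge 0$ real.

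Next I would evaluate both determinants through these eigenvalues. Since $\det(J^{S-1}\tilde J)=\det\tilde J/\det J^S$ and the nonzero eigenvalues pair up, one obtains $\det\tilde J/\det J^S=\prod_j (i\mu_j)(-i\mu_j)=\prod_j\mu_j^2\ge 0$ when $m$ is even and all eigenvalues are nonzero, whereas $\det\tilde J=0$ the moment there is a vanishing eigenvalue (in particular whenever $m$ is odd, where antisymmetry forces a zero eigenvalue). Because $\det J^S>0$, this yields $|\det\tilde J|/|\det J^S|=\prod_j\mu_j^2$ in all cases, with the degenerate situations covered by setting the product to include the vanishing factors.

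The decisive input is Theorem \ref{theorem:coherent:beta_i}, which states that every eigenvalue of $J^{S-1}\tilde J$ has absolute value at most $1$; equivalently $0\le\mu_j\le 1$ for all $j$. Consequently $\prod_j\mu_j^2\le 1$, with equality if and only if every $\mu_j$ equals $1$ and no zero eigenvalue is present, that is, if and only if the eigenvalues of $J^{S-1}\tilde J$ are all $\pm i$. By the definition of coherency this last condition is exactly coherency at $\theta$, so $|\det J^S|=|\det\tilde J|$ holds precisely when the model is coherent.

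The only delicate point, and the one I would handle with care, is the bookkeeping in the degenerate and odd-dimensional cases: I must confirm that the vanishing of a single $\mu_j$ forces $\det\tilde J=0$ and hence the \emph{strict} inequality $|\det\tilde J|<|\det J^S|$, so that the equality genuinely singles out the coherent case and excludes the partially-coherent and odd-$m$ situations. Everything else is routine spectral analysis of the antisymmetric matrix $K$, and no estimate beyond Theorem \ref{theorem:coherent:beta_i} is required.
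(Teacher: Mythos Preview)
Your proposal is correct and follows exactly the route the paper indicates: the paper merely remarks that the theorem ``is induced from theorem~\ref{theorem:coherent:beta_i}'' without spelling out details, and your argument---reducing to the similar antisymmetric matrix $K=(J^S)^{-1/2}\tilde J(J^S)^{-1/2}$, reading off $|\det\tilde J|/|\det J^S|=\prod_j\mu_j^2$, and using $\mu_j\le 1$ from Theorem~\ref{theorem:coherent:beta_i} to force equality exactly when every $\mu_j=1$---is precisely the intended elaboration. Your handling of the odd-$m$ and degenerate cases is also correct and completes what the paper leaves implicit.
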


\begin{example}(squeezed state model)
Squeezed state model, which has four parameters, is defined by
\begin{eqnarray}
{\cal M}=\{\rho(z, \xi)\:|\; \rho(z,\xi)=\pi(|z,\xi\rgl),\; z,\xi\in\C\},
\nonumber
\end{eqnarray}
where
\begin{eqnarray}
|z,\xi\rgl&=&D(z)S(\xi)|0\rgl,\nonumber\\
D(z)&=&\exp(za^{\dagger}-\ol{z} a),\nonumber\\
S(\xi)&=&
\exp\left\{\frac{1}{2}(\, \xi a^{\dagger2}-\ol{\xi} a^2\,)\right\}.
\label{eqn:defsqz}
\end{eqnarray}
Here, the operator $a$ is defined as $a=(Q+iP)/\sqrt{2\hbar}$
where $P$ and $Q$ satisfy the canonical commutation relation
$[P,\, Q]=-i\hbar$.
Letting $z=\sqrt{\frac{2}{\hbar}}(\theta^1+i\theta^2)$, 
and $\xi=\theta^3 e^{-2i\theta^4}\;(0\leq \theta^3, 0\leq\theta^4 \leq 2\pi)$,
we have
\begin{eqnarray}
{\bf M}\left(\hlift_{|z,\xi\rgl}\partial_{\theta^1}\right)
&=&\frac{2i}{\hbar}(-P+\theta^2)|z,\xi\rgl,
\nonumber\\
{\bf M}\left(\hlift_{|z,\xi\rgl}\partial_{\theta^2}\right)
&=&\frac{2i}{\hbar}(Q-\theta^1)|z,\xi\rgl,
\nonumber\\
{\bf M}\left(\hlift_{|z,\xi\rgl}\partial_{\theta^3}\right)
&=&-i(e^{2i\theta^4}a^2- e^{-2i\theta^4}a^{\dagger 2})|z,\xi\rgl,
\nonumber\\
{\bf M}\left(\hlift_{|z,\xi\rgl}\partial_{\theta^4}\right)&=&
i4(\sinh^2 \theta^3)(a^{\dagger}a+\frac{1}{2})|z,\xi\rgl,\nonumber\\
&&-i2(\sinh\theta^3 )(\cosh\theta^3)(e^{2i\theta^4}a^2-
+e^{-2i\theta^4}a^{\dagger 2}))
|z,\xi\rgl,\nonumber\\
\end{eqnarray}
and
\begin{eqnarray}
J^S&=&\frac{2}{\hbar}\left[
\begin{array}{cccc}
\cosh 2\theta^3-\sinh 2\theta^3 \cos 2\theta^4 &
\sinh 2\theta^3\sin 2\theta^4 &0&0\\
\sinh 2\theta^3\sin 2\theta^4 & 
\cosh 2\theta^3+\sinh 2\theta^3 \cos 2\theta^4 &0&0\\
0&0&\hbar&0\\
0&0&0&\hbar\sinh^2 2\theta^3
\end{array}
\right],\nonumber\\
\tilde{J}&=&\frac{2}{\hbar}\left[
\begin{array}{cccc}
0&1&0&0\\
-1&0&0&0\\
0&0&0&-\frac{\hbar}{2}\sinh 2\theta^3\\
0&0&\frac{\hbar}{2}\sinh 2\theta^3&0
\end{array}
\right]
\nonumber
\end{eqnarray}

Coherency of this model is easily checked 
by theorem \ref{theorem:coherent:check},
\begin{eqnarray}
|{\rm det} J^S|=|{\rm det} \tilde{J}|=\frac{4}{\hbar^2}\sinh^2 2\theta^3.
\nonumber
\end{eqnarray}
\end{example}

\begin{example}(spin coherent model)
As is pointed out by Fujiwara $\cite{FujiwaraNagaoka:1996}$,
{\it spin coherent model} ${\cal M}_{s,s}$,
where ${\cal M}_{s,m}$ is defined by $(\ref{eqn:gscoherent})$,
is coherent.
\end{example}

\begin{example}(total space model)
 {\it The total space model} is the space of all the pure state ${\cal P}_1$
in finite dimensional Hilbert space ${\cal H}$.
By virtue of theorem \ref{theorem:coherent:spaniL},
the coherency of the model is proved by
checking that $span_{\C}{\sf L}$ is invariant by the multiplication of 
the imaginary unit $i$.
Let $|l\rgl$ be a horizontal lift of a tangent vector at $|\phi\rgl$.
Then, $i|l\rgl$ is 
also a horizontal lift of another tangent vector at $|\phi\rgl$,
because $|\phi\rgl+i|l\rgl dt$ is a member of $\tilde{\cal H}$.
\end{example}

\section{Informationally exclusive, and independent parameters}

In a $m$-parameter model ${\cal M}$,
we say parameter $\theta^i$ and $\theta^j$ are 
{\it informationally independent} at $\theta_0$, iff
\begin{eqnarray}
{\rm Re}\lgl l_i|l_j\rgl |_{\theta=\theta_0}
={\rm Im}\lgl l_i|l_j\rgl |_{\theta=\theta_0}
=0,
\nonumber
\end{eqnarray}
because if the equation holds true,
in the estimation of  the parameters $\theta^1,\theta^2$ of 
the $2$-parameter submodel ${\cal M}(1,2|\theta_0)$ of ${\cal M}$,
where
\begin{eqnarray}
{\cal M}(1,2|\theta_0)\equiv
\left\{\rho(\theta)\:\left|\: 
\theta=(\,\theta^1,\,\theta^2,\,\theta^3_0 ,...,\,\theta^m_0\,),\,
(\theta^1,\,\theta^2\,)\in \R^2\right.\right\},
\label{eqn:calm12}
\end{eqnarray}
both of the parameters can be estimated up to the accuracy
which is achieved 
in the estimation of the parameter of the $1$-parameter submodels 
${\cal M}(1|\theta_0)$ and ${\cal M}(2|\theta_0)$,
where
\begin{eqnarray}
{\cal M}(1|\theta_0)&\equiv&
\{\rho(\theta)\:|\: \theta=(\theta^1,\theta^2_0,...,\theta^m_0),\,
\theta^1\in\R\},
\nonumber\\
{\cal M}(2|\theta_0)&\equiv&
\{\rho(\theta)\:|\: \theta=(\theta^1_0,\theta^2,...,\theta^m_0),\,
\theta^2\in\R\}.
\nonumber
\end{eqnarray}
On the other hand, iff
\begin{eqnarray}
{\rm Re}\lgl l_1|l_2\rgl |_{\theta=\theta_0}=0,
\label{eqn:def:excv}
\end{eqnarray}
and ${\cal M}(1,2|\theta_0)$ is coherent, or equivalently,
\begin{eqnarray}
{\rm Im}\lgl l_1|l_2\rgl |_{\theta=\theta_0}
=\left.(\lgl l_1|l_1\rgl\lgl l_2|l_2\rgl)^{1/2}\right|_{\theta=\theta_0}
\nonumber
\end{eqnarray} 
hold true,
we say the parameters are {\it informationally exclusive} at $\theta_0$.

Fujiwara and Nagaoka $\cite{FujiwaraNagaoka:1996}$
showed that in the coherent model with
two orthogonal parameters, the attainable CR type bound is
achieved by applying the best measurement for each parameter 
alternatively to the system.
This fact implies that if two parameters are informationally exclusive,
the one of them do not contain any information about the other.
In fact, we have the following theorem. 

\begin{theorem}
If two parameters $\theta^1$ and  $\theta^2$ are informationally exclusive,
any unbiased measurement $M$ in ${\cal M}(1,2|\theta^0)$
which estimates $\theta^1$ as accurately as possible,
{\it i.e.},
\begin{eqnarray}
\int(\hat\theta^1-\theta^1)^2\Tr\rho(\theta_0)M(d\hat\theta)
=\left[J^{S-1}\right]^{11}
\label{eqn:excv:1}
\end{eqnarray}
can extract no information about
$\theta^2$ from the system, {\it i.e.},
\begin{eqnarray}
\forall B\subset {\R}^2 \:\:\:
\Tr \left(M(B)
\left. \frac{\partial \rho}{\partial \theta^2}
\right|_{\theta=\theta_0}
\right)={\rm Re}\lgl\phi|M(B)|l_2\rgl=0,
\label{eqn:excv:2}
\end{eqnarray}
and vice versa.
\end{theorem}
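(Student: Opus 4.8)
The plan is to first extract the rigid algebraic consequence of coherency, which collapses the two estimation problems onto a single complex line, and then to convert the scalar optimality hypothesis into exact control of the whole POVM via a Naimark dilation. Write $a=\lgl l_1|l_1\rgl$ and $b=\lgl l_2|l_2\rgl$. The exclusivity hypotheses ${\rm Re}\lgl l_1|l_2\rgl=0$ and ${\rm Im}\lgl l_1|l_2\rgl=(ab)^{1/2}$ give $\lgl l_1|l_2\rgl=i(ab)^{1/2}$, so $|\lgl l_1|l_2\rgl|=\|l_1\|\,\|l_2\|$. The Cauchy--Schwarz equality then forces $|l_2\rgl=i(b/a)^{1/2}|l_1\rgl$, i.e. $span_{\C}{\sf L}$ is one-dimensional, consistent with theorem~\ref{theorem:coherent:spaniL}. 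Consequently, for every Borel set $B$,
\[
{\rm Re}\lgl\phi|M(B)|l_2\rgl=-(b/a)^{1/2}\,{\rm Im}\lgl\phi|M(B)|l_1\rgl ,
\]
so it suffices to prove that $\lgl\phi|M(B)|l_1\rgl$ is real for all $B$.

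Next I would pass to a projective realisation. By Naimark's theorem there is $\mathcal K\supset\mathcal H$ and a projection-valued measure $E$ with $M(B)=PE(B)P$, where $P$ projects onto $\mathcal H$. Set $|X\rgl\equiv\int(\hat\theta^1-\theta^1)\,E(d\hat\theta)|\phi\rgl\in\mathcal K$. Since $E$ is projective and $|\phi\rgl$ a unit vector, $E(d\hat\theta)E(d\hat\theta')$ concentrates on the diagonal, whence $\lgl X|X\rgl=\int(\hat\theta^1-\theta^1)^2\lgl\phi|E(d\hat\theta)|\phi\rgl=V^{11}[M]$ with no slack. As $P|\phi\rgl=|\phi\rgl$ and $P|l_j\rgl=|l_j\rgl$, the local unbiasedness conditions $(\ref{eqn:sld:hunbiased})$ transfer verbatim to $|X\rgl$: $\lgl\phi|X\rgl=0$, ${\rm Re}\lgl l_1|X\rgl=1$, ${\rm Re}\lgl l_2|X\rgl=0$.

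The crux is a norm-minimisation rigidity, and I expect it to be the main obstacle, since the hypothesis $(\ref{eqn:excv:1})$ constrains only a scalar second moment while the conclusion is about every $M(B)$. Viewing $\mathcal K$ as a real inner-product space under ${\rm Re}\lgl*,*\rgl$ and minimising $\|X\|^2$ under the single constraint ${\rm Re}\lgl l_1|X\rgl=1$, the decomposition $X=\alpha|l_1\rgl+X_\perp$ yields $\|X\|^2\ge({\rm Re}\,\alpha)^2a=1/a$ with unique minimiser $(1/a)|l_1\rgl$; note $[J^{S-1}]^{11}=1/a$ because $J^S=\diag(a,b)$ by ${\rm Re}\lgl l_1|l_2\rgl=0$. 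Because this minimiser already meets the remaining constraints, optimality $V^{11}[M]=[J^{S-1}]^{11}$ forces $\lgl X|X\rgl=1/a$ and hence $|X\rgl=(1/a)|l_1\rgl$, so $|l_1\rgl=a\int(\hat\theta^1-\theta^1)E(d\hat\theta)|\phi\rgl$. It is precisely the projective realisation that removes the slack in $V\ge Z$ (lemma~\ref{lemma:genVZ}), which over $\mathcal H$ alone would only pin down the projection $P|X\rgl$.

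Finally I would compute directly, using $E(B)E(d\hat\theta)=E(B\cap d\hat\theta)$ together with the reality of $\hat\theta^1-\theta^1$ and of the probability measure $dp(\hat\theta)=\lgl\phi|E(d\hat\theta)|\phi\rgl$:
\[
\lgl\phi|M(B)|l_1\rgl=\lgl\phi|E(B)|l_1\rgl=a\int_{B}(\hat\theta^1-\theta^1)\,dp(\hat\theta)\in\R .
\]
Thus ${\rm Im}\lgl\phi|M(B)|l_1\rgl=0$, and so ${\rm Re}\lgl\phi|M(B)|l_2\rgl=0$ for every $B$, which is $(\ref{eqn:excv:2})$. The converse follows by interchanging the indices $1$ and $2$ and using $|l_1\rgl=-i(a/b)^{1/2}|l_2\rgl$.
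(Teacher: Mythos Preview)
Your argument is correct and in fact strictly improves on the paper's own proof. Both you and the paper begin with the same key observation: coherency plus orthogonality forces $|l_2\rgl$ to be a purely imaginary scalar multiple of $|l_1\rgl$, so that $(\ref{eqn:excv:2})$ reduces to the reality of $\lgl\phi|M(B)|l_1\rgl$. From there the routes diverge. The paper explicitly restricts to POVMs of the form $M(B)=\int_B|\hat\theta\rgl\lgl\hat\theta|\,\mu(d\hat\theta)$, invokes the equality condition in the SLD Cramer--Rao chain (as in Holevo) to obtain the pointwise relation $\lgl\hat\theta|\bigl((\hat\theta^1-\theta^1)|\phi\rgl-(1/a)|l_1\rgl\bigr)=0$, and reads off the result; the general POVM case is deferred. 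Your Naimark lift does exactly what that deferral would require: passing to a projective $E$ makes $V^{11}[M]=\lgl X|X\rgl$ hold with no slack (this is precisely where the inequality $V\ge Z$ of lemma~\ref{lemma:genVZ} would otherwise obstruct the argument), so the elementary norm-minimisation pins down $|X\rgl=(1/a)|l_1\rgl$ in $\mathcal K$, not merely its projection onto $\mathcal H$. The identity $E(B)E(d\hat\theta)=E(B\cap d\hat\theta)$ then yields reality for every $B$. Thus your proof covers all locally unbiased POVMs, whereas the paper's covers only the rank-one absolutely continuous ones; the cost is the extra dilation step, which is routine here. Your reading of ``vice versa'' as the $1\leftrightarrow 2$ swap matches what the paper's proof actually establishes.
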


\begin{proof}
We prove the theorem only for the measurements which writes
\begin{eqnarray}
M(B)=\int_B |\hat\theta\rgl\lgl\hat\theta|\mu(d\hat\theta).
\nonumber
\end{eqnarray}
The proof for general case will be discussed elsewhere.
If $(\ref{eqn:excv:1})$ holds true,
as in the proof of lemma $\ref{lemma:genVZ}$
(see Ref.\cite{Holevo:1982}, p.88),
\begin{eqnarray}
\lgl\hat\theta|\left\{(\hat\theta^1-\theta^1)|\phi\rgl
-\frac{1}{\lgl l_1|l_1 \rgl}|l_1\rgl\right\}=0
\label{eqn:excv:3}
\end{eqnarray}
must hold.
On the other hand,
because of coherency of ${\cal M}(1,2|\theta^0)$ 
and $(\ref{eqn:def:excv})$,
for some real number $a$, we have
\begin{eqnarray} 
|l_2\rgl=ia|l_1\rgl,
\nonumber
\end{eqnarray}
by use of which it is shown that 
$(\ref{eqn:excv:2})$ is equivalent to
\begin{eqnarray}
{\rm Im}(\,\lgl\phi|\hat\theta\rgl\lgl\hat\theta|l_1\rgl\,)=0.
\nonumber
\end{eqnarray}
This equation is obviously true if 
$(\ref{eqn:excv:3})$ is true, and we have the theorem.
\end{proof}

\section{Direct sum of models}

For the submodels 
\begin{eqnarray}
{\cal M}_1\equiv{\cal M}(1,2,...,m_1\,|\,\theta_0),\:
{\cal M}_2\equiv{\cal M}(m_1,m_1+1,...,m\,|\,\theta_0) 
\nonumber
\end{eqnarray}
of ${\cal M}$,
where ${\cal M}(1,2,...,m_1\,|\,\theta_0)$ and 
${\cal M}(m_1,m_1+1,...,m\,|\,\theta_0)$ are defined almost 
in the same way as the definition $(\ref{eqn:calm12})$
${\cal M}(1,2\,|\,\theta_0)$,
we write 
\begin{eqnarray}
{\cal M}|_{\theta_0}={\cal M}_1\oplus {\cal M}_2|_{\theta_0},
\nonumber
\end{eqnarray}
and 
say that 
${\cal M}$ is sum of ${\cal M}$ and ${\cal M}$ at $\theta_0$.
$m-m_1$ is denoted by $m_2$.

\begin{lemma}
If any parameter of ${\cal M}_1$ is informationally independent
of any parameter of ${\cal M}_2$ at $\theta_0$, and
the weight matrix $G$ writes
\begin{eqnarray}
G=\left[
\begin{array}{cc}
G_1 & 0\\
0 & G_2
\end{array}
\right],
\nonumber
\end{eqnarray}
then
\begin{eqnarray}
\CR(G, \theta_0, {\cal M})=
\CR(G_1, \theta_0, {\cal M}_1)+\CR(G_2, \theta_0, {\cal M}_2)
\nonumber
\end{eqnarray}
\label{lemma:independent}
\end{lemma}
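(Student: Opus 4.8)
The plan is to work entirely inside the reduction-theorem picture of Section \ref{sec:naimark}, where ${\cal V}_{\theta_0}({\cal M})$ is realised as the set of matrices ${\rm Re}\,{\sf X}^*{\sf X}$ for ordered pairs ${\sf X}=[\,|x^1\rgl,\dots,|x^m\rgl\,]$ of vectors in a finite-dimensional space obeying the unbiasedness conditions $(\ref{eqn:naimark:unbiased})$ and the commuting condition $(\ref{eqn:imxx=0})$, and likewise for ${\cal M}_1,{\cal M}_2$. The first observation is that the informational-independence hypothesis says precisely $\lgl l_i|l_j\rgl=0$ whenever $i\le m_1<j$; writing ${\sf L}=[{\sf L}_1,{\sf L}_2]$ with ${\sf L}_1=[\,|l_1\rgl,\dots,|l_{m_1}\rgl\,]$ and ${\sf L}_2=[\,|l_{m_1+1}\rgl,\dots,|l_m\rgl\,]$, this means $span_{\C}{\sf L}_1\perp span_{\C}{\sf L}_2$, so that ${\sf L}^*{\sf L}$, and hence both $J^S$ and $\tilde J$, are block diagonal with blocks for ${\cal M}_1$ and ${\cal M}_2$.

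For the inequality $\CR(G)\ge\CR(G_1)+\CR(G_2)$ I would take an ${\sf X}=[{\sf X}_1,{\sf X}_2]$ realising $\CR(G,\theta_0,{\cal M})$ and simply restrict it. Since $G={\rm diag}(G_1,G_2)$, only the diagonal blocks of $V={\rm Re}\,{\sf X}^*{\sf X}$ enter $\Tr GV=\Tr G_1V_{11}+\Tr G_2V_{22}$, where $V_{kk}={\rm Re}\,{\sf X}_k^*{\sf X}_k$. The top-left block of $(\ref{eqn:naimark:unbiased})$ gives ${\rm Re}\,{\sf X}_1^*{\sf L}_1=I_{m_1}$, the top-left block of $(\ref{eqn:imxx=0})$ gives ${\rm Im}\,{\sf X}_1^*{\sf X}_1=0$, and the columns of ${\sf X}_1$ stay orthogonal to $|\phi\rgl$; hence ${\sf X}_1$ is an admissible estimator system for ${\cal M}_1$ and $V_{11}\in{\cal V}_{\theta_0}({\cal M}_1)$, so $\Tr G_1V_{11}\ge\CR(G_1)$, and symmetrically for block $2$. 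Note that this direction uses no independence at all; it is pure monotonicity under restriction to a submodel.

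The substantive half is the reverse inequality $\CR(G)\le\CR(G_1)+\CR(G_2)$, and this is where independence is essential. I would take optimal systems ${\sf X}_1$ (in a dilation space ${\cal K}_1$ containing $|\phi\rgl,|l_1\rgl,\dots,|l_{m_1}\rgl$) and ${\sf X}_2$ (in ${\cal K}_2$ containing $|\phi\rgl,|l_{m_1+1}\rgl,\dots,|l_m\rgl$) achieving $\CR(G_1)$ and $\CR(G_2)$, and glue them along the single shared vector $|\phi\rgl$: set ${\cal K}=\C|\phi\rgl\oplus({\cal K}_1\ominus\C|\phi\rgl)\oplus({\cal K}_2\ominus\C|\phi\rgl)$, so the two summands are orthogonal and $\dim{\cal K}=1+2m_1+2m_2=2m+1$. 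Because every $|x^i\rgl$ and every $|l_j\rgl$ is orthogonal to $|\phi\rgl$, the ${\cal M}_1$-data lands in one summand and the ${\cal M}_2$-data in the other, whence ${\sf X}_1\perp{\sf X}_2$, ${\sf X}_1\perp{\sf L}_2$ and ${\sf X}_2\perp{\sf L}_1$. Forming ${\sf X}=[{\sf X}_1,{\sf X}_2]$ then makes ${\sf X}^*{\sf L}$ and ${\sf X}^*{\sf X}$ block diagonal, so $(\ref{eqn:naimark:unbiased})$ and $(\ref{eqn:imxx=0})$ hold for the full model, $V={\rm Re}\,{\sf X}^*{\sf X}={\rm diag}(V_1,V_2)$ lies in ${\cal V}_{\theta_0}({\cal M})$, and $\Tr GV=\CR(G_1)+\CR(G_2)$.

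The main obstacle is exactly this gluing step: one must verify that ${\cal K}$ is a legitimate dilation space for ${\cal M}$, i.e. that the glued ${\sf L}=[{\sf L}_1,{\sf L}_2]$ reproduces the true Gram matrix of ${\cal M}$. The diagonal blocks match ${\cal M}_1,{\cal M}_2$ by construction, but the off-diagonal blocks $\lgl l_i|l_j\rgl\,(i\le m_1<j)$ are forced to vanish by the orthogonal-summand construction, and they must \emph{also} vanish in ${\cal M}$ itself — which is precisely the informational-independence hypothesis, so the two Gram matrices agree and the reduction-theorem characterisation of ${\cal V}_{\theta_0}({\cal M})$ applies to ${\sf X}$. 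I would also remark at the outset that the minima defining the three $\CR$'s may be taken attained (for $G>0$ by coercivity of $V\mapsto\Tr GV$ on the closed admissible set; in the boundary case $G\ge0$ the gluing construction itself produces an admissible ${\sf X}$, or one argues with $\varepsilon$-minimisers and lets $\varepsilon\to0$), so that reasoning with actual minimisers on both sides is justified.
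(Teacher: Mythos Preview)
Your proof is correct and follows essentially the same two-step strategy as the paper: restrict a full-model optimiser to obtain the inequality $\CR(G)\ge\CR(G_1)+\CR(G_2)$, then combine submodel optimisers block-diagonally for the reverse inequality, using informational independence to make the off-diagonal blocks of ${\sf L}^*{\sf L}$ vanish. The only difference is cosmetic: the paper argues the $\ge$ direction directly with locally unbiased estimators $(M,\hat\theta,\Omega)$ rather than in the ${\sf X}$-picture, whereas you stay in the reduction-theorem framework throughout and are more explicit about the dilation-space gluing ${\cal K}=\C|\phi\rgl\oplus({\cal K}_1\ominus\C|\phi\rgl)\oplus({\cal K}_2\ominus\C|\phi\rgl)$ and the Gram-matrix check that makes it legitimate.
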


When the assumption of the lemma is satisfied,
${\cal M}_1$ and ${\cal M}_2$ are said to be 
{\it informationally independent} at $\theta_0$.

\begin{proof}
Let  $\hat\theta_1(\omega)$ and  $\hat\theta_2(\omega)$
be the vector whose components are
the estimates of  $\theta^1, \theta^2,...,\theta^{m_1}$,
and $\theta^{m_1+1}, \theta^{m_1+2},...,\theta^{m}$,
\begin{eqnarray}
\hat\theta_1(\omega)
&=&\left(\,\hat\theta^1(\omega),\,\hat\theta^2(\omega),..., 
\,\hat\theta^{m_1}(\omega)\,\right),
\nonumber\\
\hat\theta_2(\omega)
&=&
\left(\,\hat\theta^{m_1+1}(\omega),\,\hat\theta^2(\omega),..., 
\,\hat\theta^{m}(\omega)\,\right).
\nonumber
\end{eqnarray}
Then, if $(M,\hat\theta, \Omega)$ is locally unbiased,
$(M,\hat\theta_i, \Omega)$ is locally unbiased.
Therefore, we have,
\begin{eqnarray}
\{ M\, |\, \mbox{$(M,\hat\theta_i, \Omega)$ is locally unbiased}\}
\subset\{ M\, |\, \mbox{$(M,\hat\theta, \Omega)$ is locally unbiased}\}
\nonumber
\end{eqnarray}
which yields,
\begin{eqnarray}
& &\min\left\{\left.\Tr GV[\hat{\theta}|M]\, \right|\, 
	\mbox{$(M,\hat\theta, \Omega)$ is locally unbiased}\right\}\nonumber\\
&=&\min\left\{\left.\Tr G_1 V[\hat{\theta}_1 |M]\,\right|\, 
	\mbox{$(M,\hat\theta, \Omega)$ is locally unbiased}\right\}\nonumber\\
& &+\min\left\{\left.\Tr G_2 V[\hat{\theta}_2 |M]\,\right|\, 
   \mbox{$(M,\hat\theta, \Omega)$ is locally unbiased}\right\}\nonumber\\
&\ge& \min\left\{\left.\Tr G_1 V[\hat{\theta}_1 |M]\,\right|\, 
	\mbox{$(M,\hat\theta_1, \Omega)$ is locally unbiased}\right\}\nonumber\\
& &+\min\left\{\left.\Tr G_2 V[\hat{\theta}_2 |M]\, \right|\, 
	\mbox{$(M,\hat\theta_2, \Omega)$ is locally unbiased}\right\},\nonumber
\end{eqnarray}
or its equivalence,
\begin{eqnarray}
\CR(G, {\cal M})\ge
\CR(G_1, {\cal M}_1)+\CR(G_2, {\cal M}_2).
\label{eqn:cr>cr+cr}
\end{eqnarray}
Because ${\cal M}_1$ and ${\cal M}_2$ are informationally independent,
${\sf L}$ for ${\cal M}$ writes
\begin{eqnarray}
{\sf L}=\left[
\begin{array}{cc}
{\sf L}_1 & 0 \\
0 & {\sf L}_2
\end{array}
\right],
\nonumber
\end{eqnarray}
in the appropriate coordinate,
where ${\sf L}_1=[|l_1\rgl,|l_2\rgl,...,|l_{m_1}\rgl]$,
and ${\sf L}_2$ is in the same manner.
In that coordinate,  ${\sf X}$ writes
\begin{eqnarray}
{\sf X}=\left[
\begin{array}{cc}
{\sf X}_1 & {\sf X}_{12}\\
{\sf X}_{21} & {\sf X}_2
\end{array}
\right].
\nonumber
\end{eqnarray}
Therefore, if 
\begin{eqnarray}
{\rm Re} {\sf X}_i^*{\sf L}_i=I_{m_i},\,
{\rm Im}{\sf X}_i^*{\sf X}_i=0\,(i=1,2)
\nonumber
\end{eqnarray}
holds true,
the measurements corresponding to ${\sf X}$ is locally unbiased,
and
\begin{eqnarray}
& &\CR(G, {\cal M})\nonumber\\
&=&\min\left\{\, \Tr G {\sf X}^*{\sf X}\,\left|\,
	{\rm Re} {\sf X}^*{\sf L}=I_m,\,{\rm Im}{\sf X}^*{\sf X}=0\right. 
	\,\right\}
\nonumber\\
&\leq&\min\left\{\left. \sum_{i=1}^2\Tr G_i {\sf X}_i^*{\sf X_i}\, \right| \,
	{\rm Re} {\sf X}_i^*{\sf L_i}=I_{i_m},\,{\rm Im}{\sf X}_i^*{\sf X}_i =0,
		\,(i=1,2)\,\right\}
\nonumber\\
&=&\CR(G_1, {\cal M}_1)+\CR(G_2, {\cal M}_2),
\nonumber
\end{eqnarray}
which, mixed with $(\ref{eqn:cr>cr+cr})$ leads to the lemma.
\end{proof}

\chapter{Berry's phase in quantum estimation theory}
\section{Berry's phase}
In this section, we review the geometrical theory of Berry's phase.

Berry's phase was discovered by M.V. Berry in 1984\cite{Berry}, 
and confirmed by
many experimental facts\cite{Shapere}.
In 1987,  Aharonov and Anandan \cite{AA} pointed out that
Berry's phase is naturally interpreted as a curvature
in the fiber bundle over ${\cal P}_1$.
Actually, Berry's phase is nothing but the Uhlmann's curvature
restricted to ${\cal P}_1$ \cite{Uhlmann:1986}\cite{Uhlmann:1993}.

Uhlmann's RPF in the space ${\cal P}_1$ takes value 
in the set of unimodular complex number.
On the other hand,
Berry's phase takes value in real numbers.
They are related as
\begin{eqnarray}
\mbox{Berry's phase}=-i\ln (\mbox{Uhlmann's RPF}).
\nonumber
\end{eqnarray}

The Berry's phase for the infinitesimal loop $(\ref{loop})$
is calculated up to the second order of $d\theta$
as 
\begin{eqnarray}
\frac{1}{2i}\lgl\phi(0)|F_{ij}|\phi(0)\rgl d\theta^i d\theta^j + o(d\theta)^2
\nonumber
\end{eqnarray}
Because Berry's phase is independent of the choice of SLD,
we can take $L^S_i$ to be $2\partial_i\rho$.
Then, the phase is equal to 
\begin{eqnarray}
\frac{1}{2}\tilde{J}_{ij}  d\theta^i d\theta^j + o(d\theta)^2,
\nonumber
\end{eqnarray}
where $\tilde{J}_{ij}$ is equal to
$\frac{1}{2}{\rm Im}\lgl l_i|l_j \rgl$ .

Mathematically,  
\begin{eqnarray}
\sum _{i,j}\tilde{J}_{ij}d\theta^i d\theta^j
\label{eqn:berrycurv}
\end{eqnarray}
corresponds to the curvature form.

\section{Berry's phase in quantum estimation theory}
It must be noted that the curvature form is 
deeply related to the multiplication of the imaginary unit ${\bf D}$.
Actually, The curvature form $(\ref{eqn:berrycurv})$ 
is identical to a map from 
${\cal T}_{\theta}({\cal M})\times {\cal T}_{\theta}({\cal M})$
to $\R$ such that
\begin{eqnarray}
\ds{\sum_{ij}}\tilde{J}_{ij}d\theta^id\theta^j\: :\: 
(\partial_i,\partial_j)\longrightarrow
\left\lgl \partial_i,\frac{1}{2}{\bf D}\partial_j \right\rgl.
\nonumber
\end{eqnarray}

Hence, the eigenvalues of ${\bf D}$
can be interpreted in terms of Berry's phase.
Concretely speaking,
taking the coordinate system which is orthonormal at $\theta$ 
in terms of the metric $\lgl*,*\rgl_{\theta}$,
they are the half of the Berry's phase obtained when the state
goes around the infinitesimal loop $(\ref{loop})$.
Especially, when the model is only with two  parameters,
the eigenvalues of ${\bf D}$ are the half of the Berry's phase per
unit area, where unit of the area is naturally induced from
the metric $\lgl *,* \rgl_{\theta}$.

Therefore, we can roughly say that
the more the Berry's phase for  the loop $(\ref{loop})$, 
the harder it is to estimate $\theta^i$ and $\theta^j$ simultaneously.
Namely, $\theta^i$ and $\theta^j$ are informationally independent
iff the Berry's phase for  the loop $(\ref{loop}g)$ vanishes and 
$\lgl \partial_i, \partial_j\rgl_{\theta}=0$ holds.
On the other hand, iff the Berry's phase for  the loop $(\ref{loop})$
is maximal and $\lgl \partial_i, \partial_j\rgl_{\theta}=0$ holds, 
{\it i.e.,} $\beta=1$, the two parameters
are informationally exclusive.

This discussion is parallel to that 
in the section $\ref{sec:estphase}$,
which was about relations between Uhlmann's parallelism and
the noncommutative nature of 
the quantum estimation theory of the faithful model.
Because Berry's phase is nothing but the restriction of
Uhlmann's RPF to the pure state model, this parallelism is
natural.

What about the models with arbitrary number of parameters?
By virtue of theorem $\ref{theorem:purelsls0}$,
if Berry's phase for any closed loop vanishes,
the model is localy quasi-classical.
For general pure state models, we have the following theorem.

\begin{theorem}
For any pure state model,
\begin{eqnarray}
&&\CR(J^{S})\nonumber\\
&=&\Tr\{{\rm Re}(I_m+i J^{S-1/2}\tilde{J}J^{S-1/2})^{1/2}\}^{-2}\nonumber\\
&=&\sum_{\alpha \in\{\mbox{eigenvalues of ${\bf D}$}\}} 
\frac{2}{1+(1-|\alpha|^2)^{1/2}}.
\nonumber
\end{eqnarray}
\end{theorem}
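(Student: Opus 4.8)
The plan is to exploit the special choice of weight $G=J^S$ together with the real normal form of the antisymmetric matrix $\tilde J$, so as to reduce the whole problem to the already-solved $2$-parameter case rather than to attack the Lagrange equations $(\ref{eqn:basic0.1})$ directly (a direct attack is awkward here because, unlike the coherent case, the real and imaginary parts of the basic equation do not collapse into one). First I would pass to a coordinate system in which $J^S=I_m$, using the transformation $(\ref{eqn:2para:trans1})$. Since $\Tr J^S V$ is invariant under $(\ref{eqn:2para:trans2})$ — indeed $\Tr I_m\,(J^S)^{1/2}V(J^S)^{1/2}=\Tr J^S V$ — the number $\CR(J^S)$ is unchanged, and in these coordinates the operator ${\bf D}$ is represented by $\tilde J$ itself, a real antisymmetric matrix whose eigenvalues are exactly the numbers $\pm i\beta_j$ (together with some zeros) that appear in the right-hand side, with $|\beta_j|\leq 1$ by theorem \ref{theorem:coherent:beta_i}.

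Next I would bring $\tilde J$ into canonical block form by a real orthogonal change of coordinates $O$. Because $O^T I_m O=I_m$, this preserves $J^S=I_m$ and preserves $\Tr V$, while decomposing $\tilde J$ into a direct sum of $2\times 2$ blocks having $\mp\beta_j$ as off-diagonal entries and of $1\times 1$ zero blocks. In these coordinates $\lgl l_i|l_j\rgl=\delta_{ij}+i\tilde J_{ij}$, so parameters lying in different blocks have both real and imaginary parts of $\lgl l_i|l_j\rgl$ equal to zero, i.e. they are informationally independent. Applying Lemma \ref{lemma:independent} iteratively, with the block-diagonal weight $G=I_m$, splits $\CR(J^S)$ into a sum of the CR type bounds of the $2$-parameter submodels (each with weight $I_2$) and of the $1$-parameter submodels (each with weight $1$). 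For a $2$-dimensional block with parameter $\beta_j$ the value is $4/(1+(1-\beta_j^2)^{1/2})$ by $(\ref{eqn:minvv})$, which is precisely the combined contribution $2\cdot 2/(1+(1-\beta_j^2)^{1/2})$ of the eigenvalue pair $\pm i\beta_j$; for a zero block the submodel is locally quasi-classical, so by theorem \ref{theorem:purelsls0} the SLD CR bound is attained and the block contributes $(J^S)^{-1}=1=2/(1+(1-0)^{1/2})$, matching the eigenvalue $\alpha=0$. Summing over all blocks yields the second equality.

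To obtain the first equality I would compute the spectrum of ${\rm Re}(I_m+iJ^{S-1/2}\tilde J J^{S-1/2})^{1/2}$ block by block, noting that $J^{S-1/2}\tilde J J^{S-1/2}$ is similar to ${\bf D}$ and hence shares its eigenvalues. On a $2\times 2$ block the Hermitian matrix $I_2+i\tilde J_j$ has eigenvalues $1\pm\beta_j$ with eigenvectors proportional to $(1,\pm i)^T$, so its square root is $(1+\beta_j)^{1/2}P_+ +(1-\beta_j)^{1/2}P_-$, where $P_\pm$ have real diagonal entries $\tfrac12$ and purely imaginary off-diagonal entries; therefore the componentwise real part is the scalar matrix $\tfrac12((1+\beta_j)^{1/2}+(1-\beta_j)^{1/2})\,I_2$. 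Inverting and squaring gives $2/(1+(1-\beta_j^2)^{1/2})$ on each diagonal entry, while a zero block gives ${\rm Re}(1)^{1/2}=1$ and hence $1$; taking the trace reproduces the eigenvalue sum. Since both sides are expressed through the coordinate-independent eigenvalues of ${\bf D}$, the identity then holds in any coordinate.

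I expect the main obstacle to be essentially careful bookkeeping rather than a conceptual hurdle: I must verify that the reduction theorem (theorem \ref{theorem:simple}) and the direct sum lemma may legitimately be applied blockwise — in particular that each $2$-dimensional submodel, once embedded in its own $2m+1$-dimensional ${\cal K}_\theta$, genuinely realizes the bound $(\ref{eqn:minvv})$ — and I must pin down the precise meaning of ${\rm Re}$ applied to a matrix square root, so that the componentwise real part in the canonical basis agrees with the spectral computation. The one genuinely delicate spectral point is confirming that the off-diagonal entries of $(I_2+i\tilde J_j)^{1/2}$ are purely imaginary, which is what makes the real part collapse to a multiple of the identity and the whole decomposition go through.
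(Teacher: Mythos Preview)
Your proposal is correct and follows essentially the same route as the paper: pass to coordinates with $J^S=I_m$, bring $\tilde J$ to its real block-diagonal normal form, observe that distinct blocks are informationally independent, and then invoke Lemma~\ref{lemma:independent} together with $(\ref{eqn:minvv})$ for the $2\times 2$ blocks (and the locally quasi-classical bound for the zero blocks). Your additional blockwise spectral verification of the first equality, which the paper leaves implicit, is a welcome supplement and your identification of the ``${\rm Re}$ of a matrix square root'' issue is exactly the right point to be careful about.
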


The estimation theoretical siginificance of $\CR(J^S)$
is hard to verify. However, this value remains invariant under
any transform of the coordinate in the model ${\cal M}$,
and can be an good index of distance between ${\cal V}$ and $J^{S-1}$.

\begin{proof} 
Because 
 $\CR(J^S)$
is invariant by any affine coordinate transform in the model ${\cal M}$,
we choose a coordinate in which
$J^S$ writes $I_m$ and $\tilde{J}$ writes
\begin{eqnarray}
\tilde{J}=\left[
\begin{array}{ccccccccc}
0& -\beta_1& 		0&\cdots&0&0&0&\cdots&0\\
\beta_1&0 & 		0&\cdots&0&0&0&\cdots&0\\
0& 0& 		\ddots&\ddots&\vdots&\vdots&\vdots&\ddots&\vdots\\
\vdots&\vdots&\ddots&\ddots&0&0&0&\cdots&0\\
0&0&		\cdots &0&0& -\beta_{l}&0&\cdots&0\\
0&0&		\cdots &0&\beta_{l}&0 &0&\cdots&0\\
0&0&		\cdots &0&    0    &0&0&\ddots&\vdots\\
\vdots&\vdots&\ddots&\vdots&\vdots&\vdots&\ddots&\ddots&\vdots\\
0&0&		\cdots &0&    0    &0 &\cdots&\cdots&0
\end{array}
\right]. 
\nonumber
\end{eqnarray}
Then, The model  ${\cal M}$ is decomposed into
the direct sum of the submodels one or two parameter ${\cal M}_{\kappa}$,
\begin{eqnarray}
{\cal M}=\bigoplus_{\kappa} {\cal M}_{\kappa},
\nonumber
\end{eqnarray}
where 
any two submodels  ${\cal M}_{\kappa}$ and ${\cal M}_{\kappa'}$
are informationally independent, and 
$\tilde{J}$ of a two parameter submodel ${\cal M}_{\kappa}$
is 
\begin{eqnarray}
\left[
\begin{array}{cc}
0& -\beta_{\kappa} \\
\beta_{\kappa}&0 
\end{array}
\right].
\nonumber
\end{eqnarray}
Because the weight matrix $J^S=I_m$ writes in the form of 
direct some of the weight matrix on the model ${\cal M}_{\kappa}$, 
by virtue of lemma $\ref{lemma:independent}$ 
and the equation $(\ref{eqn:minvv})$,
we have the theorem.
\end{proof}

\section{Berry's phase in the global theory of quantum estimation}

In this section, we present a geometrical sufficient condition 
for the pure state model ${\cal M}$ to be quasi-classical
in the sense of section $\ref{sec:questf}$.

For simplicity,
we say that the manifold ${\cal N}$ in ${\cal P}_1$ 
is a {\it horizontal lift} of the model ${\cal M}$ if
\begin{eqnarray}
\pi({\cal N})&=&{\cal M},\\
\forall |\phi(\theta)\rgl\in {\cal N},& &
\frac{1}{2}\frac{\partial}{\partial \theta^i}|\phi(\theta)\rgl
\in{\cal LS}_{|\phi(\theta)\rgl}.
\nonumber
\end{eqnarray} 
The horizontal lift ${\cal N}$ exists iff ${\cal M}$ 
is quasi-classical.

\begin{theorem}
If  the model ${\cal M}$ is parallel, that model is  
quasi- classical
in the sense of section $\ref{sec:questf}$.
\label{th:p-plgl1}
\end{theorem}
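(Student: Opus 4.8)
The plan is to read the theorem off the characterization stated immediately above it: that $\mathcal M$ is quasi-classical if and only if it admits a globally single-valued horizontal lift $\mathcal N\subset\tilde{\mathcal H}$ (a smooth section of $\pi$ with $\tfrac12\partial_i|\phi(\theta)\rgl\in\mathcal{LS}_{|\phi(\theta)\rgl}$). Thus it suffices to show that parallelism of $\mathcal M$ forces the horizontal lift to exist as an honest section rather than a multivalued one. I would not re-prove the equivalence quasi-classical $\Leftrightarrow$ existence of $\mathcal N$, since it is asserted just before the statement and may be assumed.

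First I would fix a base point $\theta_0$ and a representative $|\phi_0\rgl$ with $\pi(|\phi_0\rgl)=\rho(\theta_0)$, and for each $\theta$ define $|\phi(\theta)\rgl$ by solving the horizontal transport equation $(\ref{horizontal})$ along an arbitrary curve from $\theta_0$ to $\theta$. The crux is well-definedness. Given two curves $C_1,C_2$ from $\theta_0$ to $\theta$, their horizontal lifts end at two representatives of $\rho(\theta)$ differing by a phase $e^{i\delta}$; concatenating $C_1$ with the reverse of $C_2$ gives a loop at $\theta_0$ whose $U(1)$-holonomy is exactly $e^{-i\delta}$. Because the base-point overlap $\lgl\phi_0|\phi_0\rgl=1\neq 0$, the in-phase representative of $\rho(\theta_0)$ relative to $|\phi_0\rgl$ (in the sense of Section $\ref{sec:defuhlmann}$) is $|\phi_0\rgl$ itself, so the RPF around this loop coincides with the holonomy phase. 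Parallelism makes every such RPF trivial, forcing $\delta=0$ and hence $|\phi_{C_1}(\theta)\rgl=|\phi_{C_2}(\theta)\rgl$. This path-independence makes $\mathcal N=\{|\phi(\theta)\rgl\}$ a well-defined smooth section (smoothness coming from smooth dependence of solutions of $(\ref{horizontal})$ on the data), and using the coordinate curves as transport paths shows $\tfrac12\partial_i|\phi(\theta)\rgl\in\mathcal{LS}_{|\phi(\theta)\rgl}$ for every $i$, so $\mathcal N$ is a horizontal lift. Invoking the stated fact then yields quasi-classicality.

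As a complementary check that also exhibits the explicit structure, I note that once a single-valued horizontal lift is in hand the argument in the proof of Theorem $\ref{theorem:m-inphase}$, applied with $W_0=|\phi_0\rgl$, writes $|\phi(\theta)\rgl=M(\theta)|\phi_0\rgl$ with $\{M(\theta)\}$ a commuting family of Hermitian matrices. Diagonalizing them in a common orthonormal basis and absorbing the phases of the components of $|\phi_0\rgl$ into that basis shows every $|\phi(\theta)\rgl$ has real coordinates in one fixed orthonormal basis; by the Example of Section $\ref{sec:classical}$ this produces a single projection-valued measurement that is optimal at every $\theta$, which is exactly quasi-classicality in the sense of Section $\ref{sec:questf}$.

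The main obstacle is the rigorous identification of ``RPF around the loop is trivial'' with ``the $U(1)$-holonomy is trivial,'' i.e.\ tracking the in-phase normalization convention of Section $\ref{sec:defuhlmann}$ correctly through the concatenation and reversal of curves. The delicate point is the possible degeneracy where a transported state becomes orthogonal to $|\phi_0\rgl$, so that the in-phase condition no longer pins down the fiber; this must be sidestepped by computing the holonomy at $\theta_0$, where the overlap equals $1$ and the in-phase representative is unambiguously $|\phi_0\rgl$.
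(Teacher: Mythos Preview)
Your argument is correct and close in spirit to the paper's, but the emphasis is inverted. You spend most of your effort on the step ``parallel $\Rightarrow$ a single-valued horizontal lift $\mathcal N$ exists'' (path-independence of the $U(1)$ holonomy), and then invoke the stated equivalence ``$\mathcal N$ exists $\Leftrightarrow$ quasi-classical'' to finish. The paper does the opposite: it takes the existence of $\mathcal N$ essentially for granted and spends the proof on the second step, constructing the globally optimal measurement explicitly. Concretely, the paper applies Schmidt orthonormalization to $\mathcal N$ to obtain an orthonormal basis $\{|e_i\rgl\}$ in whose real span all $|\phi(\theta)\rgl$ lie (this uses that $\lgl\phi(\theta)|\phi(\theta')\rgl\in\R$, which is exactly parallelism), then embeds $\mathcal H$ into $L^2(\R)$ by sending $|e_i\rgl$ to real functions $\psi_i(x)$, and checks that the position measurement $E(dx)=|x\rgl\lgl x|\,dx$ induces a classical model whose Fisher information equals $J^S$ at every $\theta$.

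Your ``complementary check'' is essentially the same construction phrased differently (commuting Hermitian $M(\theta)$ via Theorem~\ref{theorem:m-inphase}, simultaneous diagonalization, real coordinates). One caution: the Example you cite in Section~\ref{sec:classical} only asserts that such a model is \emph{locally} quasi-classical at every point; it does not by itself hand you a single measurement optimal for all $\theta$. To close that gap you need the Fisher-information calculation the paper performs, namely that for a family $|\phi(\theta)\rgl$ with real coefficients in a fixed basis, the classical Fisher information of $p(\,\cdot\,|\theta)=|\lgl\cdot|\phi(\theta)\rgl|^2$ coincides with $J^S(\theta)$. Also note that Theorem~\ref{theorem:m-inphase} sits in the faithful-model chapter; its $(1)\Rightarrow(2)$ argument does go through in the general ${\cal W}_r$ framework and hence for $r=1$, but it is worth saying so explicitly.
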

\begin{proof}
First, apply  Schmidt's orthonormalization to
the horizontal lift ${\cal N}$ of ${\cal M}$,
to obtain the orthonomal basis ${\bf B}=\{ |e_i\rgl\,|\, i=1, 2, ...\}$
such that ${\cal N}$ is a subset of the real span of ${\bf B}$.
We immerse Hilbert space ${\cal H}$ into $L^2(\R,\,\C)$
as 
\begin{eqnarray}
|\phi(\theta)\rgl
=\sum_i a_i|e_i\rgl\mapsto {\bf \iota}\sum_i a_i(\theta)\psi_i(x),
\nonumber
\end{eqnarray}
where $\{\ psi_i(x)\,|\, i=1, 2, ...\}$ is an orthonormal basis
in $L^2(\R,\,\C)$. Then, letting $E(dx)=|x\rgl\lgl x|dx$,
the triplet $(\hat\theta(\theta),\, E\, , \R)$ is 
one of the best estimators.
This assertion is easily proved by caluculating 
the Fisher information matrix of the family,
\begin{eqnarray}
\{ p(x\, |\, \theta)=\sum_i (a_i(\theta))^2 |\psi_i(x)|^2\}
\end{eqnarray}
of probability distributions.
\end{proof}

The converse of the latter theorem is, however, not true, because
the following counter-examples exist.

\begin{example}
We consider the {\it position shifted model} which is defined by 
\begin{eqnarray} 
&&{\cal M}_x=\pi({\cal N}_x)\nonumber\\
&&{\cal N}_x=\{ |\phi(\theta)\rgl\: |\: 
		|\phi(\theta)\rgl=
		 \mbox{\it const.}\times(x-\theta)^2 e^{-(x-\theta)^2+ig(x-\theta)},\:
		\theta\in\R \},
\nonumber
\end{eqnarray}
where $c$ is a normalizing constant,
$g$ the function such that
\begin{eqnarray}
g(x)=\left\{
\begin{array}{cc}
0 & (x\geq 0),\\
\alpha & (x< 0).
\end{array}
\right.
\nonumber
\end{eqnarray}

Then, as easily checked,
${\cal N}_x$ is a horizontal lift of the model ${\cal M}_x$,
and 
$\lgl\phi(\theta)|\phi(\theta')\rgl$ is not real 
unless $\alpha=n\pi\,(n=0,1,...)$.
However, SLD CR bound is uniformly attained by
the measurement obtained 
by the spectral decomposition $E_x(dx)=|x\rgl\lgl x| dx$ 
of the position operator,where $|x_0\rgl=\delta(x-x_0)$.
as is  checked by
comparing SLD Fisher information of the model ${\cal M}_x$
and the classical Fisher information of 
the probability distribution family
\begin{eqnarray}
\{ p(x|\theta)\,|\, 
	p(x|\theta)=|\lgl\phi(\theta)|x\rgl|^2,\;
	\theta\in\R\}.
\nonumber
\end{eqnarray}

Note that $|\phi(\theta)\rgl$ is 
an eigenstate of the Hamiltonian 
\begin{eqnarray}
H(\theta)=-\frac{\hbar^2}{2m}\frac{d^2}{dx^2}
	+\frac{\hbar^2}{m}\left( 2(x-\theta)^2+\frac{1}{(x-\theta)^2}\right),
\nonumber
\end{eqnarray}
whose potential  has two wells
with infinite height of wall between them.
\end{example}

\begin{example}
Let ${\cal H}$ be $L^2([0,\,2\pi],\,\C)$,
and  define a one parameter model ${\cal M}$ such that,
\begin{eqnarray}
&&{\cal M}=\pi({\cal N})\nonumber\\
&&{\cal N}=\{|\phi(\theta)\rgl\: |\: 
				|\phi(\theta)\rgl=\mbox{\it const.}\times(2-\cos\omega)\, 
	e^{i\alpha(f(\omega-\theta)+\theta)},\:(0\leq\omega, \theta< 2\pi) \},
\nonumber\\
\label{eqn:ring}
\end{eqnarray}
where $\alpha$ is a real number and
$f$ the function defined by
\begin{eqnarray}
f(\omega-\theta)=\left\{
\begin{array}{cc}
\omega-\theta & (\omega-\theta\geq 0)\\
\omega+2\pi-\theta &(\omega-\theta < 0)
\end{array}
\right. .
\nonumber
\end{eqnarray}
Physically, $(\ref{eqn:ring})$ is an eigenstate of the Hamiltonian $H$
such that,
\begin{eqnarray}
H(\theta)=-\frac{\hbar^2}{2m}\left(\frac{d}{d\omega}-i\alpha\right)^2
+\frac{A-B\cos(\omega-\theta)}{2-\cos(\omega-\theta)},
\nonumber
\end{eqnarray}
which characterize the dynamics of an electron
confined to the one-dimensional ring which encircles magnetic flax
$\Phi=2\pi\alpha c/e$, where $m$ is the mass of the electron,
$-e$ the charge of the electron, $c$ the velocity of light,
and $A$, $B$ the appropriately chosen constant.

It is easily checked that ${\cal N}$ is  a  horizontal lift of the model
${\cal M}$,
and that 
the model ${\cal M}$ is not parallel unless $\alpha= n\pi\,(n=0,1,...)$.
However, consider
the projection valued measure $E_{\omega}$ such that
\begin{eqnarray}
E_{\omega}(d\omega)=|\omega\rgl\lgl\omega| d\omega,
\nonumber
\end{eqnarray}
where $|\omega_0\rgl=\delta(\omega-\omega_0)$.
Then, it is easily checked that
the classical Fisher information of 
the probability distribution family
\begin{eqnarray}
\{ p(\omega|\theta)\,|\,  
p(\omega|\theta)=|\lgl \phi(\theta)|\omega\rgl |^2,\,
0\leq\omega,\theta< 2\pi\}
\nonumber
\end{eqnarray}
is equal to the SLD Fisher information of ${\cal M}$.
\end{example}

\section{Antiunitary operators}

The  transformation $A$ 
\begin{eqnarray}
|\tilde{a}\rgl=A|a\rgl,\:\:\:  |\tilde{b}\rgl=A|b\rgl
\nonumber
\end{eqnarray}
is said to be {\it antiunitary} iff
\begin{eqnarray}
\lgl \tilde{a} |\tilde{b}\rgl&=&\ol{\lgl a| b \rgl},\nonumber\\
A(\alpha|a\rgl+\beta|b\rgl)&=&\ol{\alpha}A|a\rgl+\ol{\beta}A|b\rgl,
\nonumber
\end{eqnarray}
where $\ol{*}$ means complex conjugate.

Fix  an orthonormal basis ${\sf B}=\{|i\rgl\,|\, i=1,2, ... ,d\}$,
and we can then define antiunitary operator $K_{\sf B}$ which
takes complex conjugate of any components in this basis, 
\begin{eqnarray}
K_{\sf B}=\sum_i\alpha_i |i\rgl= \sum_i \ol{\alpha}_i|i\rgl.
\nonumber
\end{eqnarray}
For different basis ${\sf B}, {\sf B}'$,
we have 
\begin{eqnarray}
K_{\sf B}=U K_{\sf B}'U^*,
\nonumber
\end{eqnarray}
where $U$ is a unitary operator corresponding 
to the change of the basis.

Suppose that any member of the manifold 
${\cal N}=\{|\phi \rgl\}$ in $\tilde{\cal H}$ 
is invariant by the antiunitary operator $A$,
and let $|\tilde{\phi}\rgl=A|\phi\rgl,\,|\tilde{\phi}'\rgl=A|\phi'\rgl$.
Then, we have
\begin{eqnarray}
\lgl \phi |\phi'\rgl=\lgl \tilde{\phi'}|\tilde{\phi}\rgl=\lgl \phi'|\phi\rgl
\in\R.
\nonumber
\end{eqnarray}
Conversely, if $\lgl \phi|\phi'\rgl$ is real for any 
$|\phi\rgl,\,|\phi'\rgl\in{\cal N}$,
by Schmidt's orthonormalization, we can obtain the basis ${\sf B}$
such that  ${\cal N}$ is subset of the real span of ${\sf B}$,
which means any member of ${\cal N}$ is invariant by 
the antiunitary operator $K_{\sf B}$.

Therefore, the premise of the statement of 
theorems $\ref{th:p-plgl1}$-$\ref{th:p-plgl2}$
is satisfied iff the horizontal lift of the model is 
invariant by some antiunitary operator.

\section{Time reversal symmetry}
As an example of the antiunitary operator,
we discuss {\it time reversal operator} 
(see Ref.\cite{Sakurai}, pp. 266-282).
The time reversal operator $T$ is an antiunitary operator
in $L^2(\R^3,\,\C)$ which transforms 
the wave function $\psi(x)\in L^2(\R^3,\,\C)$ as:
\begin{eqnarray}
T \psi(x)=\ol{\psi(x)}=K_{\{|{\bf x}\rgl\}}\psi(x).
\nonumber
\end{eqnarray}
The term `time reversal' came from the fact
that if  $\psi({\bf x}, t)$ is a solution of
the Sch\"odinger equation
\begin{eqnarray}
i\hbar\frac{\partial\psi}{\partial t}=
\left(-\frac{\hbar^2}{2m}\nabla^2+V\right)\psi,
\nonumber
\end{eqnarray}
then $\ol{\psi({\bf x}, -t)}$ is also its solution.

The operator $T$ is sometimes called {\it motion reversal operator},
since it transforms the momentum eigenstate 
$e^{i{\bf p}\cdot{\bf x}/\hbar}$ corresponding to eigenvalue ${\bf p}$
to the eigenstate  $e^{-i{\bf p}\cdot{\bf x}/\hbar}$
corresponding to eigenvalue $-{\bf p}$.

Define the {\it position shifted model} by
\begin{eqnarray}
{\cal M}_{\bf x}=
\{\rho(\theta)\,|\, 
\rho(\theta)=\pi(\psi({\bf x}-{\bf x}_0)\,),\,{\bf x}_0 \in\R^3\},
\nonumber
\end{eqnarray}
and suppose that any member of 
the horizontal lift ${\cal N}_{\bf x}$
of the model ${\cal M}_{\bf x}$
has time reversal symmetry.
Then, since time reversal operator $T$ is antiunitary,
the model  ${\cal M}_{\bf x}$
is quasi-classical in the wider sense.
The spectral decomposition of the position operator
gives optimal measurement.

Now, we discuss the generalization of time reversal operator.
The antiunitary transform
\begin{eqnarray}
T_{\alpha}\, : \, e^{i{\bf p}\cdot{\bf x}/\hbar}
\rightarrow
e^{i\alpha({\bf p})}\,e^{-i{\bf p}\cdot{\bf x}/\hbar}
\nonumber
\end{eqnarray}
is also called
motion reversal operator, or time reversal operator.

If any member $\psi({\bf x}-{\bf x}_0)$ of the horizontal lift 
${\cal N}_{\bf x}$ 
of the position shifted model ${\cal M}_{\bf x}$
is invariant by the time reversal operator $T_{\alpha}$,
\begin{eqnarray}
\int_{\R^3} 
\psi({\bf x}-{\bf x}_0)\,\ol{\psi({\bf x}-{\bf x}'_0)}\,
d{\bf x}
\in\R
\label{eqn:psipsi}
\end{eqnarray}
holds true for any ${\bf x}_0,\, {\bf x}'_0$,
which is equivalent to 
the premise of theorems $\ref{th:p-plgl1}$-$\ref{th:p-plgl2}$.

Conversely, if $(\ref{eqn:psipsi})$ holds true,
Fourier transform of $(\ref{eqn:psipsi})$ leads to
\begin{eqnarray}
|\Psi({\bf p})|^2=|\Psi(-{\bf p})|^2,
\nonumber
\end{eqnarray}
where
\begin{eqnarray} 
\Psi({\bf p})=
\frac{1}{\sqrt{2\pi}}
\int\psi({\bf x})e^{-i{\bf p}\cdot{\bf x}/\hbar} d{\bf x}.
\nonumber
\end{eqnarray}
Therefore,
any member of ${\cal N}_{\bf x}$ is transformed  to itself
by the time reversal operator $T_{\alpha}$
such that
\begin{eqnarray}
T_{\alpha}\, : \, e^{i{\bf p}\cdot{\bf x}/\hbar}
\rightarrow
e^{i(\beta({\bf p})+\beta(-{\bf p})\,)}\,e^{-i{\bf p}\cdot{\bf x}/\hbar},
\nonumber
\end{eqnarray}
where
\begin{eqnarray} 
e^{i\beta({\bf p})}=\frac{\Psi({\bf p})}{|\Psi({\bf p})|}.
\nonumber
\end{eqnarray}

\begin{theorem}
 $(\ref{eqn:psipsi})$ is equivalent to the existence of
the time reversal operator which transforms
any member of the horizontal lift ${\cal N}_{\bf x}$ to itself.
\end{theorem}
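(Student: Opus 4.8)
The plan is to prove the two implications separately, reducing everything to a condition in momentum space on the single generating wave function $\psi$. First I would record that every member of ${\cal N}_{\bf x}$ is a translate $\psi({\bf x}-{\bf x}_0)$, so that, after the substitution ${\bf y}={\bf x}-{\bf x}_0$, the quantity in $(\ref{eqn:psipsi})$ is the autocorrelation $C({\bf a})=\int_{\R^3}\psi({\bf y})\,\ol{\psi({\bf y}-{\bf a})}\,d{\bf y}$ with ${\bf a}={\bf x}'_0-{\bf x}_0$. Thus $(\ref{eqn:psipsi})$ holding for all ${\bf x}_0,{\bf x}'_0$ is equivalent to $C({\bf a})\in\R$ for every ${\bf a}$, and since $C(-{\bf a})=\ol{C({\bf a})}$ automatically, this in turn says $C$ is even.

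For the easy direction, suppose a time reversal operator $T_{\alpha}$ fixes every member of ${\cal N}_{\bf x}$. Because $T_{\alpha}$ is antiunitary, $\lgl\psi_{{\bf x}_0}|\psi_{{\bf x}'_0}\rgl=\lgl T_{\alpha}\psi_{{\bf x}_0}|T_{\alpha}\psi_{{\bf x}'_0}\rgl=\ol{\lgl\psi_{{\bf x}_0}|\psi_{{\bf x}'_0}\rgl}$, so each overlap is real; this is precisely the general fact about antiunitary invariance already established in the section on antiunitary operators, here applied to the particular operator $T_{\alpha}$. Hence $(\ref{eqn:psipsi})$ follows.

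For the harder direction I would pass to momentum space. The Fourier transform of $C$ is $|\Psi({\bf p})|^2$, which is already real, so the evenness of $C$ collapses to $|\Psi({\bf p})|^2=|\Psi(-{\bf p})|^2$. Writing $\Psi({\bf p})=|\Psi({\bf p})|\,e^{i\beta({\bf p})}$ (the phase chosen arbitrarily on the null set where $\Psi$ vanishes), I would define $T_{\alpha}$ on the momentum eigenbasis with $\alpha({\bf p})=\beta({\bf p})+\beta(-{\bf p})$ and verify $T_{\alpha}\psi=\psi$: expanding $\psi$ over momentum eigenstates, applying the antilinear $T_{\alpha}$ termwise, and substituting ${\bf p}\to-{\bf p}$ reduces the claim to the identity $\ol{\Psi(-{\bf p})}\,e^{i\alpha({\bf p})}=\Psi({\bf p})$, which holds because $\alpha$ is even and $|\Psi(-{\bf p})|=|\Psi({\bf p})|$.

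The step needing the most care, and the one I expect to be the main obstacle, is showing that this single $T_{\alpha}$ fixes not merely $\psi$ but every translate $\psi({\bf x}-{\bf x}_0)$, so that it genuinely fixes all of ${\cal N}_{\bf x}$. Here I would use that the translate has momentum representation $\Psi({\bf p})\,e^{-i{\bf p}\cdot{\bf x}_0/\hbar}$; applying $T_{\alpha}$ conjugates this extra phase, and after ${\bf p}\to-{\bf p}$ the evenness of $\alpha$ restores exactly $e^{-i{\bf p}\cdot{\bf x}_0/\hbar}$, so the same identity $\ol{\Psi(-{\bf p})}\,e^{i\alpha({\bf p})}=\Psi({\bf p})$ gives $T_{\alpha}\psi_{{\bf x}_0}=\psi_{{\bf x}_0}$. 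The remaining routine points, namely that $T_{\alpha}$ so prescribed extends to a genuine antiunitary operator on $L^2(\R^3,\C)$ and that the phase ambiguity on $\{\Psi=0\}$ is harmless, I would dispatch briefly.
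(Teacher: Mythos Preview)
Your proposal is correct and follows essentially the same route as the paper: the forward direction uses antiunitarity to force real overlaps, and the converse passes to momentum space, deduces $|\Psi({\bf p})|^2=|\Psi(-{\bf p})|^2$ from reality of the autocorrelation, and then builds $T_{\alpha}$ with $\alpha({\bf p})=\beta({\bf p})+\beta(-{\bf p})$. Your write-up is in fact more careful than the paper's on the point you flagged as the main obstacle---that the single $T_{\alpha}$ fixes every translate, not just $\psi$---which the paper asserts without spelling out the momentum-space computation.
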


\chapter{Uncertainty principle in view of quantum estimation theory}
\section{The position-momentum shift model}
In this section and the next, we examine the position-momentum uncertainty
in view of quantum estimation theory.

First, it must be emphasized that 
so-called `Heisenberg's uncertainty',
\begin{eqnarray}
\lgl (\Delta X)^2\rgl\lgl (\Delta P)^2\rgl \ge \frac{\hbar^2}{4},
\label{eqn:robertson}
\end{eqnarray}
where $\lgl (\Delta X)^2\rgl$ stands for
\begin{eqnarray} 
\lgl\phi|(X-\lgl\phi|X|\phi\rgl)^2|\phi\rgl,
\nonumber
\end{eqnarray}
has nothing to do with the Heisenberg's gedanken experiment
which deals with the simultaneous measurement of the position
and the momentum.

$\lgl (\Delta X)^2\rgl$ (, or $\lgl (\Delta P)^2\rgl$) 
in $(\ref{eqn:robertson})$
is the variance of the data when
only position (, or momentum) is measured. 
Therefore, 
$(\ref{eqn:robertson})$ corresponds to the
experiment where 
the position is measured for the one of the group of identical particles
and the momentum for the other group of identical particles.
As a matter of fact, 
the inequality $(\ref{eqn:robertson})$,
is  derived by H. P. Robertson $\cite{Robertson}$
and some careful researcher call
the inequality Robertson's uncertainty
(Heisenberg himself had nothing to do with the inequality).

The purpose of this  chapter is 
to examine the simultaneous measurement of the position
and the momentum from the estimation theoretical viewpoint.
However, since
 the measurement obtained by the spectral decomposition of 
 the position operator
differs from that of the momentum operator,
the simultaneous measurement in exact sense is impossible.
Here, we formulate the problem as a estimation of the {\it shift parameters}
$x_0$ and $p_0$,
in the {\it position-momentum shifted model}
\begin{eqnarray}
{\cal M}_{xp}=\{\rho(\theta)\: |\: 
		\rho(x_0,p_0)=\pi(D(x_0,p_0)|\phi_0\rgl),\,
		\theta=(x_0,p_0)\in\R^2 \},
\nonumber
\end{eqnarray}
where 
\begin{eqnarray}
D(x_0,p_0)=\exp \frac{i}{\hbar}(p_0 X- x_0 P)
\label{eqn:weilop}
\end{eqnarray}
and $|\phi_0\rgl=\phi_0(x)$ is a member of $L^2(\R,\C)$ such that
\begin{eqnarray}
\lgl\phi_0|X|\phi_0\rgl=\lgl\phi_0|P|\phi_0\rgl=0.
\nonumber
\end{eqnarray}

\section{The estimation of the shift parameters}
\label{sec:estshift}
Our purpose is to examine how efficiently we can estimate 
the shift parameters $\theta=(x_0,p_0)$.

The  horizontal lifts of 
$\partial/\partial x_0$, $\partial/\partial p_0$ are
\begin{eqnarray}
\hlift_{|\phi(\theta)\rgl}
\left(\frac{\partial}{\partial x_0}\right)
&=&-\frac{2i}{\hbar}\Delta P_{\theta}|\phi(\theta)\rgl,\nonumber\\
\hlift_{|\phi(\theta)\rgl}
\left(\frac{\partial}{\partial p_0}\right)
&=&\frac{2i}{\hbar}\Delta X_{\theta}|\phi(\theta)\rgl,
\nonumber
\end{eqnarray}
where 
\begin{eqnarray}
\lgl  A \rgl_{\theta}&\equiv& \lgl\phi(\theta)| A |\phi(\theta)\rgl,\nonumber\\
\Delta A_{\theta} &\equiv& A -\lgl  A \rgl_{\theta}
\nonumber
\end{eqnarray}
and the SLD Fisher information matrix $J^S(\theta)$ is,
\begin{eqnarray}
&\left[J^S(\theta)\right]_{x_0,x_0}&
=\frac{4}{\hbar^2}\lgl (\Delta P_{\theta})^2 \rgl,\nonumber\\
&&=\frac{1}{\hbar^2}\lgl (\Delta P_{(0,0)})^2 \rgl,\nonumber\\
&\left[J^S(\theta)\right]_{p_0,p_0}&=
\frac{4}{\hbar^2}\lgl (\Delta X_{(0,0)})^2 \rgl, \nonumber\\
&\left[J^S(\theta)\right]_{x_0,p_0}&=
\frac{4}{\hbar^2} Cov(X,P)_{(0,0)},
\nonumber
\end{eqnarray}
where 
$Cov(X,P)_{\theta}\equiv 
\frac{1}{2}
\lgl (\Delta X_{\theta}\Delta P_{\theta} 
			+\Delta P_{\theta}\Delta X_{\theta}) $.
The absolute value $\beta(\theta)$ 
of the eigenvalue of ${\bf D}$ at $\theta$ is calculated as,
\begin{eqnarray}
\beta(\theta)&=&
\frac{1}{2i}\frac{\lgl [P,X]\rgl_{(0,0)}}
{\sqrt{\lgl(\Delta X_{(0,0)})^2 \rgl\lgl(\Delta P_{(0,0)})^2 \rgl
-(Cov(X,P)_{(0,0)})^2}}
\nonumber\\
&=&\frac{-\hbar}
{2\sqrt{\lgl(\Delta X_{\theta})^2\rgl\lgl (\Delta P_{\theta})^2 \rgl
-(Cov(X,P)_{\theta})^2}}.
\label{eqn:sftb}
\end{eqnarray}
Notice that $J^S(\theta)$ and $\beta(\theta)$  are independent of 
the true value of parameters.

We are interested in 
the attainable CR type bound
and in the index $\beta$ of noncommutative nature of the model. 

As for $\beta$, $(\ref{eqn:sftb})$ indicates that
the larger the formal `covariance matrix'
\footnote{ Note that this formal `covariance matrix' is not
equal to 
the covariance matrix of any measurement related to position or momentum }
 of $X$ and $P$
\begin{eqnarray}
\left[
\begin{array}{cc}
\lgl (\Delta P_{(0,0)})^2\rgl & Cov(X,P)_{(0,0)}\\
Cov(X,P)_{(0,0)} & \lgl (\Delta X_{(0,0)})^2\rgl 
\end{array}
\right]
\nonumber
\end{eqnarray}
is,
the smaller the noncommutative nature between $x_0$ and $p_0$.

As for the attainable CR type bound,
because  
the SLD Fisher information matrix is proportional to the
formal `covariance matrix' and $\beta$ decreases as 
the determinant of the `covariance matrix' increases, 
we can metaphorically say that
the larger the `covariance matrix' implies the possibility
of more efficient estimate of $x_0$ and $p_0$,
which is seemingly paradoxical.

We examine these points in  the {\it shifted harmonic oscillator  model}
 ${\cal M}_{xp,n}$, which is defined to be 
the position-momentum shifted model in which 
 $\phi_0(x)$ is equal to 
the $n$th eigenstate $|n\rgl$ of the harmonic oscillator.
For ${\cal M}_{xp,n}$, we have 
\begin{eqnarray}
&[J^S(\theta)]_{x_0,x_0}&=\frac{4}{\hbar} \left(n+\frac{1}{2}\right), 
\nonumber\\
&[J^S(\theta)]_{p_0,p_0}&=\frac{4}{\hbar}\left(n+\frac{1}{2}\right), 
\nonumber\\
&[J^S(\theta)]_{x_0,p_0}&=[J^S(\theta)]_{(x_0,p_0)}
=0,
\nonumber
\end{eqnarray}
and
\begin{eqnarray}
\beta(\theta)=\frac{-1}{2(n+1/2)}.
\label{eqn:sfnb}
\nonumber
\end{eqnarray}
Hence, if $n$ is large, 
`noncommutative nature' of the parameters is small.

What about the efficiency of the estimate ?
We define ${\cal M}'_{xp,n}$ by
normalizing the parameters in ${\cal M}_{xp,n}$ as
\begin{eqnarray}
\theta=(x_0, p_0)\longrightarrow
\theta=(n+1/2)^{-1/2}x_0,(n+1/2)^{-1/2} p_0).
\nonumber
\end{eqnarray}
Then, directly from the definition, 
\begin{eqnarray}
\CR (G,\theta,{\cal M}_{xp,n})=\frac{1}{n+1/2}\CR(G,\theta,{\cal M}'_{xp,n}),
\label{eqn:crmcrm}
\end{eqnarray}
and the SLD Fisher information matrix of  ${\cal M}'_{xp,n}$
is equal to $\frac{4}{\hbar}I_m$ for any $n$.

Because the index $\beta$ is unchanged by the change of the parameter, 
\begin{eqnarray}
\CR (G,\theta,{\cal M}'_{xp,0})\ge 
\CR (G,\theta,{\cal M}'_{xp,1})\ge
... \ge \CR (G,\theta,{\cal M}'_{xp,n})\ge ...,
\nonumber
\end{eqnarray}
which, combined with $(\ref{eqn:crmcrm})$ leads to
\begin{eqnarray}
\CR (G,\theta,{\cal M}_{xp,0})\ge 
\CR (G,\theta,{\cal M}_{xp,1})\ge
... \ge \CR (G,\theta,{\cal M}_{xp,n})\ge ...,
\nonumber
\end{eqnarray}
for any $\theta$ and any  $G$.
Therefore, 
if the `covariance matrix' larger,
the more efficient estimate of the parameter is possible.
Especially, when $n=0$, or in the case of the so-called 
`minimum uncertainty state', the efficiency of the estimation is
the lowest.

Especially, when $n=0$, or in the case of the so-called 
`minimum uncertain state',
the position parameter $x_0$ and the momentum parameter $p_0$
are maximally `noncommutative' in the sense $\beta$ is larger than
that of any other ${\cal M}_{xp,n}\,(n\neq 0)$.
It is easily shown that $\beta$ is maximal, or coherent,
iff $|\phi_0\rgl$ is in the squeezed state, or
$|\phi_0\rgl=S(\xi)|0\rgl$,
where $S(\xi)$ is 
the operator defined by $(\ref{eqn:defsqz})$.
In addition, the efficiency of the estimation is
lower than any other ${\cal M}_{xp,n}\,(n\neq 0)$.

If $n$ is very  large, how efficiently can we estimate? 
Given $N$ particles, we divide them into two groups,
to one of which we apply the best measurement for
$x_0$ and to the other of which we apply the best measurement for $p_0$ .  
The parameter $x_0$ and $p_0$ is estimated only from the data from the
first group and the second group, respectively.
Then, the attained efficiency of the estimation of
$x_0$ is
\begin{eqnarray}
\frac{\hbar}{(N/2)\,4(n+1/2)}
\nonumber
\end{eqnarray}
and the efficiency of the estimation of $p_0$ is
\begin{eqnarray}
\frac{\hbar}{(N/2)\,4(n+1/2)}
\nonumber
\end{eqnarray}
which are combined to yield the efficiency 
of this estimate par sample,
\begin{eqnarray}
 g_{x_0} [V[M]]_{x_0,x_0}+g_{p_0} [V[M]]_{p_0,p_0}
=\frac{\hbar g_{x_0}}{2(n+1/2)}
+\frac{\hbar g_{p_0}}{2(n+1/2)},
\label{eqn:effmle}
\end{eqnarray}
in this estimation scheme.
Therefore, we have
\begin{eqnarray}
\CR (\diag (g_1,g_2),{\cal M}_{xp,n})
\leq
\frac{\hbar g_{x_0}}{2(n+1/2)}
+\frac{\hbar g_{p_0}}{2(n+1/2)},
\nonumber
\end{eqnarray}
which implies that arbitrarily precise estimate is possible
if ${\cal M}_{xp,n}$ with large enough $n$ is fortunately given.

The efficiency $(\ref{eqn:effmle})$
is achievable by the following maximum likelihood estimator
up to the first order of $1/N$:
\begin{eqnarray}
\hat{x}_0 &=& 
\argmax_{x_0} \sum_{j=1}^{N/2} \ln p(x_j - x_0),\nonumber\\
\hat{p}_0 &=& 
\argmax_{p_0} \sum_{j=1}^{N/2} \ln \tilde{p}(p_j - p_0),
\label{eqn:mlexp}
\end{eqnarray}
where
$x_1,x_2,..., x_{N/2}$ 
and
 $p_1,p_2,..., p_{N/2},$ be 
is 
data produced by the measurement of 
the position and the momentum of the given states,
and  their probability distribution is denoted by
$p(x)$ and $\tilde{p}(p)$, respectively.

\section{Planck's constant and Uncertainty}
In this section, we focus on Planck's constant.
Let
\begin{eqnarray}
P'=\hbar^{-1/2} P,\, X'=\hbar^{-1/2} X,\nonumber\\
p'_0=\hbar^{-1/2} p_0,\, x'_0=\hbar^{-1/2} x_0,
\nonumber
\end{eqnarray}
and define
\begin{eqnarray}
{\cal M}'_{xp}=\{\rho(p'_0, x'_0)\,|\, 
			\rho(p'_0, x'_0)=\pi(\exp i(p'_0 X'-x'_0 P')|\phi_0\rgl\,)\}.
\nonumber
\end{eqnarray}
Since Planck's constant does not appear
in the commutation relation $[P', X']= -i$,
the attainable CR type bound of ${\cal M}_{xp}'$ is
not dependent on $\hbar$ 
if definition of $|\phi_0\rgl$ does not include $\hbar$.

If $\CR({\cal M}'_{xp})$ has some finite value,
the identity
\begin{eqnarray}
\CR({\cal M}_{xp})=\hbar\CR({\cal M}'_{xp})
\nonumber
\end{eqnarray}
implies 
\begin{eqnarray}
\lim_{\hbar\rightarrow 0} \CR({\cal M}_{xp})=0.
\nonumber
\end{eqnarray}
Therefore, in the limit of $\hbar\rightarrow 0$,
position and momentum can be simultaneously 
measured as precisely as needed.

However, it must be noticed that
the attainable CR type bound of the {\it position shifted model}
\begin{eqnarray}
{\cal M}_x=\{\rho(x_0)\,|\, 
	\rho(x_0)=\pi(\exp (-ix_0 P)|\phi_0\rgl\,)\}
\nonumber
\end{eqnarray}
and of the {\it momentum  shifted model}
\begin{eqnarray}
{\cal M}_p=\{\rho(p_0)\,|\, 
	\rho(p_0)=\pi(\exp (ip_0 P)|\phi_0\rgl\,)\}
\nonumber
\end{eqnarray}
also tends to zero as $\hbar\rightarrow 0$,
and that the ratio
\begin{eqnarray}
\frac{\CR({\cal M}_{xp})}{\sqrt{\CR({\cal M}_x)\:\CR({\cal M}_p)}}
=
\frac{\CR({\cal M}_{xp}')}{\sqrt{\CR({\cal M}'_x)\:\CR({\cal M}'_p)}}
\nonumber
\end{eqnarray}
is independent of $\hbar$, where
${\cal M}'_x$ and ${\cal M}'_p$ are defined
in the same manner as ${\cal M}'_{xp}$.
Therefore, noncommutative nature of the model is unchanged
even if $\hbar$ tends to $0$.
Actually, as in $(\ref{eqn:sfnb})$the index $\beta$ of the noncommutative nature of the model
is independent of $\hbar$.

\begin{remark}
Notice the discussion in this section is essentially valid for
the mixed position-momentum shifted model,
\begin{eqnarray}
\{\rho(p_0, x_0)\,|\, 
	\rho(p_0, x_0)
=D(x_0,\,p_0\,)
\rho_0 D^*(x_0,\,p_0\,),\,(x_0,\,p_0)\in\R^2\},
\nonumber
\end{eqnarray}
where the state $\rho_0$ is mixed, 
and $D(x_0,\,p_0\,)$ is the operator defined by $(\ref{eqn:weilop})$.
\end{remark}

\section{Semiparametric estimation of the shift parameters}
In section $\ref{sec:estshift}$, our conclusion is that
if we are fortunate enough, we can estimate 
the average of the position and the momentum 
with arbitrary accuracy at the same time.

One may argue that this is because we make
full use of knowledge about the shape of the wave function
of the given state. However, this argument is not thoroughly true.

In the classical estimation theory, we have the following very strong result.
Suppose that we are intersected in the mean value $\theta\in\R$ 
of the probability distribution,
and that  the shape of the probability distribution is unknown 
except it is  symmetric around $\theta$. 
In other words, we set up the {\it semiparametric model} such that,
\begin{eqnarray}
\{ p(x\, |\, \theta, g)\: |\: 
				p(x\, |\, \theta, g)=g(x-\theta),\: \theta\in\R,\:
				\mbox{$g(x)$ is symmetric around $0$}\},\nonumber\\
\end{eqnarray}
and estimate the parameter $\theta\in\R$ from
the data $x_1,x_2,...,x_N\in\R$.

If $g(x)$ is known, the variance of  the best consistent estimator
is given by
\begin{eqnarray}
\frac{1}{N J}+o\left(\frac{1}{N}\right)
\label{eqn:semicr}
\end{eqnarray}
where $J$ is the Fisher information,
\begin{eqnarray}
J=\int \left(\frac{d}{dx}\ln g(x)\right)^2 g(x)dx.
\end{eqnarray}
In the case where  $g(x)$ is not known,
the {\it theorem} 2.2 in the  Ref. $\cite{Bickel}$ insists
that the bound $(\ref{eqn:semicr})$ is attainable:

\begin{theorem}
If $g(x)$ is absolutely continuous, 
the bound $(\ref{eqn:semicr})$ is attainable
by some consistent estimate 
(see pp. 649-650 in th Ref $\cite{Bickel}$).
\end{theorem}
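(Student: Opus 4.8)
The plan is to exploit the special geometry of the symmetric location model, in which ignorance of the shape $g$ turns out to be asymptotically costless --- the defining feature of an \emph{adaptive} estimation problem. I would first identify the semiparametric information bound and expose its nuisance-free structure, and then construct an explicit adaptive estimator that attains it.

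First I would set up the tangent-space decomposition at a fixed pair $(\theta,g)$. The score for the location parameter is $\ell_\theta(x)=-(\ln g)'(x-\theta)$; since $g$ is symmetric about $0$, the function $(\ln g)'$ is odd, so $\ell_\theta$ is odd about $x=\theta$. The nuisance tangent space --- the closed span of scores generated by symmetry-preserving perturbations $g_t$ of $g$ --- consists of functions even about $\theta$ with mean zero under $P_\theta$. In $L^2(P_\theta)$ odd and even functions are orthogonal, so a change of variables shows
\begin{eqnarray}
\int (\ln g)'(u)\,a(u)\,g(u)\,du=0
\nonumber
\end{eqnarray}
for every even nuisance score $a$, whence $\ell_\theta$ is already orthogonal to the nuisance tangent space. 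Therefore the efficient score coincides with $\ell_\theta$ itself, and the semiparametric efficiency bound is exactly $J^{-1}$, matching the parametric bound $(\ref{eqn:semicr})$. This orthogonality is the conceptual heart of the statement.

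Second, I would construct an estimator attaining $J^{-1}$. Take a preliminary $\sqrt{N}$-consistent, symmetry-robust estimator $\tilde\theta$ (the sample median suffices). Using sample splitting --- and a discretization of $\tilde\theta$ onto a grid to decouple the randomness --- I would form a nonparametric estimate $\hat\ell$ of the score from a kernel density estimate $\hat g$ and its derivative, symmetrizing $\hat g$ about $\tilde\theta$ so that the known evenness of $g$ is imposed, together with a plug-in estimate $\hat J$ of the information. One Newton step,
\begin{eqnarray}
\hat\theta=\tilde\theta+\frac{1}{N}\,\hat J^{-1}\sum_{i=1}^N \hat\ell(x_i-\tilde\theta),
\nonumber
\end{eqnarray}
then yields the candidate estimator.

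Third, I would establish the asymptotic expansion $\sqrt{N}(\hat\theta-\theta)=J^{-1}N^{-1/2}\sum_i \ell_\theta(x_i)+o_P(1)$, giving asymptotic normality with variance $J^{-1}$ and hence attainment of $(\ref{eqn:semicr})$. The main obstacle, and the genuinely delicate part of the argument, is to show that replacing $\ell_\theta$ by the nonparametric estimate $\hat\ell$ does not inflate the asymptotic variance. Here the symmetrization is essential: because the efficient score is orthogonal to the nuisance directions, the first-order contribution of the estimation error in $\hat g$ cancels, leaving only a second-order remainder. Controlling that remainder requires the absolute continuity of $g$ --- so that $(\ln g)'$ is a bona fide $L^2(P_\theta)$ object and the kernel derivative estimates converge --- together with careful bias--variance bounds on $\hat g$ and its derivative; this is precisely the analysis carried out on pp.~649--650 of Ref.~\cite{Bickel}.
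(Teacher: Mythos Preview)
The paper does not actually prove this theorem; it is stated as a quotation of a result from classical semiparametric statistics, with the proof delegated entirely to pp.~649--650 of Ref.~\cite{Bickel}. There is therefore no ``paper's own proof'' to compare against.

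That said, your outline is a faithful sketch of the argument in the cited reference. The key structural observation --- that in the symmetric location model the parametric score $-(\ln g)'(x-\theta)$ is odd about $\theta$ while nuisance scores from symmetry-preserving perturbations of $g$ are even, so the efficient score equals the full score and the semiparametric bound collapses to $J^{-1}$ --- is exactly Stein's original insight that Bickel builds on. Your construction (preliminary $\sqrt{N}$-consistent estimate, sample splitting and discretization, symmetrized kernel estimate of the score, one Newton step) and your identification of the delicate point (showing that the plug-in error in $\hat\ell$ contributes only $o_P(N^{-1/2})$, which is where absolute continuity of $g$ enters) match the architecture of Bickel's proof. So your proposal is correct in spirit and aligned with the source the paper defers to; it simply supplies what the paper itself omits.
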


By the use of this theorem, 
if $p(x)$ and $\tilde{p}(p)$ defined 
in the end of section $\ref{sec:estshift}$ are 
symmetric about $x_0$ and $p_0$ respectively,
we can use the semiparametric estimates, instead 
of the maximum likelihood estimates $(\ref{eqn:mlexp})$,
and can achieve the same efficiency as  $(\ref{eqn:mlexp})$.
Then, if we are so fortunate that the $|\phi_0\rgl$
is happen to be $|n\rgl$ with very large $n$,
our estimate is quite accurate.

\chapter{Time-energy uncertainty in view of hypothesis test}
\section{Conventional discussion about time-energy uncertainty}
This chapter is the result of the joint research with Mr. S. Osawa of
Tokyo Institute of Technology,
which aimed at a consistent and physically meaningful formulation of 
the time-energy uncertainty.
  
  There are various formulation of the time-energy uncertainty relation,
corresponding to the variety of the interpretation of 
uncertainty $\Delta t$ of time $t$.
Some authors introduce `time operator' which forms canonical pair with
the Hamiltonian $H$ of the system.
In this chapter, it is shown that the formulation of 
the time-energy uncertainty is quite reasonable, consistent, and physically
meaningful.

An acceptable interpretation of  $\Delta t$  is the
time interval during which 
the state of a system can hardly 
be distinguished from the initial state. 
For example, it is derived in the explanation  
of the sudden approximation in Messiah $\cite{Messiah}$.   
The outline is as follows.

\vspace{1cm} 
We suppose the Hamiltonian to change-over in a continuous way 
from a certain initial time $t_0$ to a certain final time $t_1$. We put
\begin{eqnarray}
    \Delta t \equiv t_1-t_0
\nonumber
\end{eqnarray}
  and denote by $H(t)$ the Hamiltonian at time $t$. 

 Let $|0 \rangle $ denote the state vector of the system at time $t_0$ , 
and $U(t_1, t_0)$ the time evolution operator 
from $t_0$ to $t_1$. 
 
  The sudden approximation consists in writing 
\begin{eqnarray}
U(t_1,t_0)|0\rangle \approx |0\rangle .
\nonumber
\end{eqnarray}
Messiah regarded
\begin{eqnarray}
   w\equiv  
\lgl 0 | U^ {\dag}(t_1, t_0)\,(I-|0\rgl\lgl 0|)\,U(t_1, t_0)|0\rgl 
\label{eq:w},
\end{eqnarray}
as  a `probability of finding 
the system in a state other than the initial state'.
One obtains the expansion of $w$ in powers of $\Delta t$ 
by the perturbation method. Put 
\begin{eqnarray}
   \overline {H} = \frac{1}{ \Delta t} \int_{t_0}^{t_1} H(t) dt.
\nonumber
\end{eqnarray}
We then have 
\begin{eqnarray}
 w= 
\frac {\Delta t^2}{\hbar^2}\lgl 0 |\overline{H}
(I-|0\rgl\lgl 0|) \overline{H} |0 \rgl 
+ O(\Delta t)^3. 
\nonumber
\end{eqnarray}
Since 
\begin{eqnarray}
\langle 0 |\overline{H}Q_0 \overline{H} |0 \rangle 
=\langle 0|\overline{H}^2|0 \rangle - \langle 0|\overline{H}|0 \rangle ^2
=\lgl(\Delta \overline{H})^2\rgl,
\nonumber
\end{eqnarray}
we have
\begin{eqnarray}
 w= \frac{\Delta t^2 \lgl(\Delta \overline{H})^2\rgl }{\hbar^2}+ O(\Delta t)^3.
\nonumber
\end{eqnarray}
 Thus the condition for the validity of the sudden approximation, 
$w\ll1$, requires that 
\begin{equation}
 \Delta t \ll \frac{\hbar}{\sqrt{\lgl(\Delta \overline{H})^2\rgl}}
\label{eqn:messTE}
\end{equation}
 
\vspace{1cm}
We can point out a  defect in  this discussion. 
In Messiah's  discussion,
the following testing scheme is 
 implicitly assumed;
Let $M_{ms}$ the measurement which takes value 
on the set $\Omega=\{0,\, 1\}$ such that,
\begin{eqnarray}
M_{ms}(\{ 0 \})=|0\rgl\lgl 0|,\;\;
M_{ms}(\{ 1 \})=I-|0\rgl\lgl 0|\;\; ;
\nonumber
\end{eqnarray}
If  the outcome of the measurement 
is  $0$, we accept that the system is in the initial state.
However, if there is a testing scheme which works better than this,
$\Delta t$ should be smaller than  $(\ref{eqn:messTE})$ implies.
Hence, for the Messiah's discussion to be valid,
the optimality of the testing scheme must be shown.
In this study, we investigate this point  
from the viewpoint of hypothesis testing.

\section{Time-energy uncertainty as a hypothesis test}
Here, we formulate the problem as a hypothesis testing 
(see Ref. \cite{Lehmann:1986}).
Consider the model
\begin{eqnarray}
{\cal M}=\{\rho(t)\,|\, 
		\rho(t)=U(t_1,t_0)\,\rho(t_0)\,U^{\dagger}(t_1,t_0),\, t\in\R\}
\nonumber
\end{eqnarray}
and the hypothesises
\begin{eqnarray} 
\begin{array}{ccc}
 H_0: & \rho(t)=\rho(t_0) & \mbox{( null hypothesis)}, \\
 H_1: & \rho(t)=\rho(t_1) & \mbox{ (alternative hypothesis)}.
\end{array}
\label{eqn:TEqutest}
\end{eqnarray}

Then, we choose a test
which maximize the probability $\gamma$ 
of 
when $H_1$ is really true.
The maximization is to be done 
under the restriction
that the probability of rejecting   $H_0$  
when $H_0$ holds true is smaller than the {\it significance level} $\alpha$,
for, otherwise, the test which always reject $H_0$ is chosen. 
We call $\gamma$ the {\it power of the test}.

Two steps are needed to maximize 
the power $\gamma$ of  the test. 
The first step is to find the most powerful tests 
of the following classical hypothesis testing of the parameter family
\begin{eqnarray}
{\cal M}(M)
=\{ p_M(x\,|\, t)\, |\,  
	p_M(x\,|\, t)dx=\tr \rho(t) M(dx),\;\; t\in\R\}
\nonumber
\end{eqnarray}
of the probability distributions which is deduced from ${\cal M}$
by the measurement $M$; The hypothesis are set to be
\begin{eqnarray} 
\begin{array}{ccc}
H_0: &  p_M (x | t)= p_M (x | t_0) & \mbox{( null hypothesis)}, \\
H_1: &  p_M (x | t)=p_M (x | t_1) &\mbox{ (alternative hypothesis)},
\end{array}
\label{eq:TEctest}
\end{eqnarray}
corresponding to $(\ref{eqn:TEqutest})$.
We denote by $\gamma_M$ the power of the most powerful tests
in this hypothesis testing.
In the second step, we adjust the measurement $M$ to maximize $\gamma_M$,
and obtain the optimal tests and its power.

\section{Power of the test}
 Let us consider the power of test and the optimum measurement
when we are given $N$ copies of the state.
 To begin with, consider the first step. 

Whatever the significance levle of the test is,
Stein's lemma in the  classical statistics
gives the maximum power $\gamma_M$ in the classical
hypothesis testing $(\ref{eq:TEctest})$ as
\begin{eqnarray}
\lim_{N\rightarrow\infty}\frac{1}{N}\ln(1-\gamma_M(N)\,)
=-D(\,p_M(x | t_0\,)\,||\,p_M(x | t_1)\,),
\label{eqn:gM}
\end{eqnarray}
where $D(\,p(x)\,||\,q(x)\,)$
is {\it Kullback Divergence} defined by  
\begin{eqnarray}
D(\,p(x)\,||\,q(x)\,)\equiv
\int p(x)\ln\frac{p(x)}{q(x)}dx.
\nonumber
\end{eqnarray}
When $N$ is very large, $(\ref{eqn:gM})$ roughly writes
\begin{equation}
  \gamma_M  \approx 1- \exp[-N D(\,p_M(x | t_0\,)\,||\,p_M(x | t_1)\,)\,],
\label{eqn:gM2}
\end{equation}
where the argument $N$ in $\gamma_M(N)$ is dropped for notational simplicity.

Expansion of $D(\,p_M(x | t_0\,)\,||\,p_M(x | t_1)\,)$ in powers of
$\Delta t=t_1 -t_0$ gives
\begin{eqnarray}
D(\,p_M(x | t_0\,)\,||\,p_M(x | t_1)\,)
=\frac{1}{2}J_M(\theta_0)\,(\Delta t)^2 + o(\Delta t)^2,
\nonumber
\end{eqnarray}
where $J_M(\theta_0)$ is the classical Fisher information
of the classical model ${\cal M}(M)$.
Hence, when $\delta t$ is very small and $N$ is very large,
$\gamma_M$ can be written roughly as
\begin{equation}
\gamma_M 
\approx 1-\exp\left(-\frac{N}{2}J_M(t_0)(\Delta t)^2\,\right)
+o(\Delta t)^2
\qquad(\Delta t \ll 1),
\label{eqn:gM3}
\end{equation}

Let us move to the second step.
By virtue of $(\ref{eqn:gM2})$, 
when $\Delta t$ is very small, the maximization of
$\gamma_M$ is equivalent to that of $J^M(\theta_0)$,
the answer of which is given as
\begin{eqnarray}
\max_M J_M(\theta_0)=J^S(\theta_0),
\nonumber
\end{eqnarray}
where $J^S(\theta_0)$ is the SLD Fisher information of 
the model ${\cal M}$ (see Theorem 1 in Ref. \cite{Nagaoka:1989}).
$J^S(\theta_0)$ is easily calculated as
\begin{eqnarray}
J^S(\theta_0)=\frac{4}{\hbar^2}\lgl\Delta H^2\rgl,
\label{eqn:TEjs}
\end{eqnarray}
and finally we obtain the power $\gamma_max$ of the optimum test
when $N$ is very large and $\Delta t$ is very small;
\begin{eqnarray}
\gamma_{max}
\approx 
1-\exp\left(-\frac{2N}{\hbar^2}(\Delta t)^2\lgl(\Delta H)^2\rgl\right)
+o(\Delta t)^2
\qquad(\Delta t \ll 1,\,N\gg 1).
\nonumber
\end{eqnarray}

Now we can show the condition that $\rho(t)\,(t\ge t_1)$ 
can hardly be distinguished from $\rho(t_0)$  using $N$ data 
when $\Delta t \ll 1$ and $N \gg 1$ are satisfied;
the condition writes
\begin{eqnarray}
\gamma_{max}\approx 
1-
\exp\left(-\frac{2N}{\hbar^2}(\Delta t)^2\lgl(\Delta H)^2\rgl\right)
+o(\Delta t)^2 
\ll 1,
\nonumber
\end{eqnarray}
or equivalently,
\begin{eqnarray}
\frac{2N \Delta t^2\lgl(\Delta H)^2\rgl}{\hbar^2}\ll 1
\quad(N\gg 1,\,\Delta t\ll 1). 
\label{eqn:htestTE} 
\end{eqnarray}
Notice that Messiah's condition $(\ref{eqn:messTE})$
is identical to $(\ref{eqn:htestTE})$ when $\Delta t$ is very small.

\section{The optimal measurement}
In this section, it is shown that the test 
based on the measurement $M_{ms}$ is one of the optimal tests.

The classical Fisher information $J_{M_{ms}}(t_0)$ 
of the model ${\cal M}(M_{ms})$ 
is given by
\begin{eqnarray}
J_{M_{ms}}(t_0)=\lim_{t \to t_0} 
\left(\frac{\dot{p}(0|t)^2}{p(0|t)}
+\frac{\dot{p}(1|t)^2}{p(1|t)}\right),
\nonumber
\end{eqnarray}
where
\begin{eqnarray}
p(0|t)&=&\tr \rho(t) M_{ms}({0})\nonumber\\
p(1|t)&=&1-p(0|t).
\nonumber
\end{eqnarray} 
Expansion of $p(0|t)$ in powers of $t-t_0$ gives
\begin{eqnarray}
p(0|t)&=&\dot{p}(0|t_0)(t-t_0)+\frac{1}{2}\ddot{p}(0|t_0)(t-t_0)^2
			+ o(t-t_0)^2\nonumber\\
      &=&-\frac{1}{\hbar^2}(\lgl(\Delta H)^2\rgl)\,(t-t_0)^2 +o(t-t_0)^2,
\nonumber
\end{eqnarray}
which leads to
\begin{eqnarray}
J_{M_{ms}}(t_0)=\frac{4}{\hbar^2}\Delta H^2.
\label{eqn:TEjm}
\end{eqnarray}
From $(\ref{eqn:TEjs})$ and $(\ref{eqn:TEjm})$, 
$M_{ms}$ is one of the optimum measurements.

\part*{Part III\\ 
The general  model theory}
\addcontentsline{toc}{chapter}%
{Part III :
The general  model theory}
\chapter{Geometrical structure}
\section{$w$-connection and $e$-connection in the wider sense}
\label{sec:gentwt}
This chapter presents the results obtained in the joint research with
Dr. A. Fujiwara of Osaka University.
In this section, we treat with the generalization of 
 the $w$-connection in the faithful model theory
to the general case.

To define the generalized $w$-connection, we need to specify 
the logarithmic derivative ${\bf L}(X)$ among 
the various roots of the matrix equation $(\ref{eqn:deflogd})$. 
Moreover, we need to check whether 
a  connection can be defined by $\ref{eqn:defpttw}$ or not.
For the system of vectors
\begin{eqnarray}
\{
{\bf L}(\partial_{\zeta^i}|_{W_1})W_2
-(\tr\pi(W_2) 
{\bf L}(\partial_{\zeta^i}|_{W_1})\,) I\: 
|\: i=1,2,..., 2dr-1
\}
\label{eqn:spantrans}
\end{eqnarray}
might not be a linearly independent.
However, for any choice
of the logarithmic derivative, 
if $W_2$ is  near enough to $W_1$,
 $(\ref{eqn:spantrans})$ is linearly independent,
because the mapping 
\begin{eqnarray}
L\longrightarrow LW-\tr(\pi(W)L)I
\nonumber
\end{eqnarray}
is continuous. 

Therefore, we define
\begin{eqnarray}
{\cal N}(W_1, \varepsilon)
=\{ W\: |\:  W\in{\cal N}, W\in B(W_1, \varepsilon)\},
\nonumber
\end{eqnarray}
where  
\begin{eqnarray}
B(W_1, \varepsilon)
\equiv
\left\{ W\: 
\left|\: \sqrt{\lgll W-W_1, W-W_1\rgll} \leq\varepsilon
\right.
\right\},
\nonumber
\end{eqnarray}
and we restrict ourselves to the consideration of
 ${\cal N}(W_1, \varepsilon)$ with enough small $\varepsilon$.
Then, a choice of the logarithmic derivative
defines a generalization of the $w$-connection.

In the almost same way, we can define generalized $e$-connection
in the manifold 
\begin{eqnarray}
{\cal M}(\theta, \varepsilon)
=\{ \rho\: |\:  \rho\in{\cal M}, W\in B(\rho_1, \varepsilon)\},
\nonumber
\end{eqnarray}
with small $\varepsilon$.
Or, we can deduce a generalized $e$-connection
from a $w$-connection as in $(\ref{eqn:ptwte})$.
Obviously, the equation$(\ref{eqn:etortion2})$ 
and 
$(\ref{eqn:mttw})$ 
hold true in any generalization 
of the $w$-connection and $e$-connection.
The generalized $w$-connection and the generalized $e$-connection
deduced from that $w$- connection satisfies theorem $\ref{theorem:lslsdecom}$.

\section{Vanishing conditions}
\label{sec:genvanish}
In this section, we examine
the conditions  that $e$-tortion,
$w$-tortion, and/or  Uhlmann curvature would vanish. 

We consider following conditions
for the model ${\cal M}={\cal P}_r$ and 
for the manifold ${\cal N}={\cal W}_r$.
Here, $W$ denotes a member of ${\cal W}_r$.

\begin{itemize}
\item[(A)] Algebraic conditions
	\begin{enumerate}
		\item The SLD can be chosen to satisfy
				$W^*[L^S_{X}, L^S_{Y}]W=0$.
		\item The SLD can be chosen to satisfy
				$[L^S_{X}, L^S_{Y}]W=0$.
		\item The SLD can be chosen to satisfy
				$[L^S_{X}, L^S_{Y}]=0$.
	\end{enumerate}
\item[(G)] Geometric conditions
	\begin{enumerate}
		\item Uhlmann curvature vanishes at $\pi(W)$.
		\item There is a generalized $w$-tortion which vanishes at $W$
				for horizontal vectors.
	\end{enumerate}
\end{itemize}

\begin{lemma}
\begin{itemize}
\item[(1)]  G1 is equivalent to  A1
\item[(2)]  G2 is equivalent to  A2
\end{itemize}
\end{lemma}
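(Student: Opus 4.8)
The plan is to convert both geometric vanishing conditions into algebraic ones by working in the block decomposition of operators with respect to the splitting $\mathcal{H}=\mathrm{ran}(\rho)\oplus\ker(\rho)$, where $\rho=\pi(W)$; here $W$ has full column rank $r$, its restriction $W_+$ onto $\mathrm{ran}(\rho)$ is invertible, and $W^*W>0$. The organizing tool is Theorem~\ref{theorem:lslsdecom}: on horizontal arguments the $w$-tortion of (\ref{eqn:mttw}) splits orthogonally into a horizontal piece (the $e$-tortion, governed by (\ref{eqn:etortion2})) and a vertical piece, which is the negative of the Uhlmann curvature $F_{ij}$ of (\ref{eqn:Fij}). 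Part~(1) concerns the vanishing of the vertical piece alone (G1), part~(2) the vanishing of the whole horizontal tortion (G2).

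For (1), I would first note that the vertical curvature vector ${\bf L}^{-1}(F_{ij})$ has matrix representation $F_{ij}W$, so, since $W^*W$ is invertible, G1 is equivalent to $W^*F_{ij}W=0$. As $F_{ij}=(\partial_iL^S_j-\partial_jL^S_i)-\tfrac12[L^S_i,L^S_j]$ is the sum of a Hermitian and an anti-Hermitian operator, the equation $W^*F_{ij}W=0$ forces each sandwiched part to vanish separately; the anti-Hermitian one is exactly $\tfrac12W^*[L^S_i,L^S_j]W$, which is A1. Conversely, assuming A1, the integrability identity of Theorem~\ref{theorem:f0lsls0} coming from $\partial_i\partial_j\rho=\partial_j\partial_i\rho$, sandwiched between $W^*$ and $W$, becomes a Lyapunov equation $sG+Gs=0$ for the Hermitian part $s=W^*(\partial_iL^S_j-\partial_jL^S_i)W$ with $G=W^*W>0$, forcing $s=0$ and hence G1. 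A clean auxiliary point I would record is that adding to a non-faithful SLD any Hermitian operator supported on $\ker\rho$ leaves $W^*[L^S_i,L^S_j]W$ unchanged, so the quantifier ``can be chosen'' in A1 is automatic and need not be argued.

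For (2), the easy direction A2~$\Rightarrow$~G2 I would settle by taking the canonical logarithmic derivative $L_X={\bf L}(\hlift_W X)=L^S_X$ in (\ref{eqn:mttw}): the commutator term vanishes by A2, and the scalar term $\mathrm{Re}\,\tr\{\pi(W)(\hlift X\,L_Y-\hlift Y\,L_X)\}$ is zero, because differentiating the identity $\tr(\rho L^S_i)=0$ gives $\tr(\rho\,XL^S_Y)=-\tfrac12\tr(\rho\{L^S_X,L^S_Y\})$, which is real and symmetric in $X,Y$. Thus the canonical $w$-tortion vanishes on horizontal vectors. In fact the same block computation shows that, for the canonical connection, vanishing tortion on horizontal vectors is \emph{exactly} A2, so the real content of G2~$\Rightarrow$~A2 is that enlarging the class of admissible connections does not enlarge the class of models: if some generalized connection has vanishing horizontal tortion, then already A2 holds. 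Writing a general admissible logarithmic derivative as $L_X=L^S_X+D_X$ with $D_XW=0$, vanishing of the tortion yields, in blocks, $[L^S_i,L^S_j]_{++}=0$ (which is A1) together with a representation of the off-diagonal block $[L^S_i,L^S_j]_{0+}$ through the $\ker\rho$-blocks $(D_X)_{00}$.

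I expect the main obstacle to be precisely this last step. The logarithmic-derivative freedom $D_X$ consists of arbitrary matrices annihilated by $W$, whereas the SLD freedom available for certifying A2 consists only of \emph{Hermitian} operators supported on $\ker\rho$. The crux is therefore to show that the off-diagonal block $[L^S_i,L^S_j]_{0+}$, realized via the possibly non-Hermitian $(D_X)_{00}$, can be re-realized with Hermitian coefficients, i.e.\ that the extra anti-Hermitian freedom in $D_X$ is dispensable. I plan to obtain this from the anti-Hermiticity of $[L^S_i,L^S_j]$ together with the already-established vanishing of the vertical part (which, via $W^*F_{ij}W=0$, ties $(\partial_iL^S_j-\partial_jL^S_i)_{0+}$ to $[L^S_i,L^S_j]_{0+}$), reducing A2 to a solvable linear condition for a Hermitian correction supported on $\ker\rho$.
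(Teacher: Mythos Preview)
For part~(1) your argument is correct but takes a longer route than the paper's. The paper never brings in $F_{ij}$ at all: it uses directly that the Uhlmann curvature is the vertical component of the $w$-tortion, whose matrix representation on horizontal arguments is $\tfrac14[L^S_X,L^S_Y]W$ (the scalar term in $(\ref{eqn:mttw})$ is a multiple of $W$, hence horizontal, and drops out of the vertical projection). Vanishing of the vertical component is then tested by pairing with a generic vertical vector $WA$, $A^*=-A$, giving ${\rm Re}\,\tr\!\big((WA)^*[L^S_X,L^S_Y]W\big)=-{\rm Re}\,\tr\!\big(A\,W^*[L^S_X,L^S_Y]W\big)$; since $W^*[L^S_X,L^S_Y]W$ is itself anti-Hermitian, this vanishes for all skew $A$ precisely when $W^*[L^S_X,L^S_Y]W=0$. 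No integrability identity and no Lyapunov step are needed. Your route through $W^*F_{ij}W=0$ works too, and as a by-product also yields $W^*(\partial_iL^S_j-\partial_jL^S_i)W=0$, but that extra conclusion is not used anywhere.

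For part~(2) the paper's entire proof is the sentence ``derived directly from $(\ref{eqn:mttw})$'': with the choice $L_X=L^S_X$ on horizontal vectors the tortion reads $\tfrac12[L^S_X,L^S_Y]W$ plus the scalar term you correctly show to vanish, so the existence of a generalized $w$-connection with zero horizontal tortion is literally the existence of SLDs with $[L^S_X,L^S_Y]W=0$. In other words, the paper is implicitly identifying the freedom in the generalized $w$-connection \emph{on horizontal vectors} with the freedom in choosing SLDs. Your sharper reading---allowing arbitrary (non-Hermitian) logarithmic derivatives $L_X=L^S_X+D_X$ with $D_XW=0$---raises a genuine question that the paper's one-liner does not address. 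Under that reading you correctly reduce G2$\Rightarrow$A2 to two block conditions; the $++$ block is indeed handled, since G2$\Rightarrow$G1$\Rightarrow$A1 forces $[L^S_X,L^S_Y]_{++}=0$. But the remaining step---replacing a complex solution $(D_X)_{00},(D_Y)_{00}$ of
\[
C_X\,(L^S_Y)_{0+}-C_Y\,(L^S_X)_{0+}=-[L^S_X,L^S_Y]_{0+}
\]
by a Hermitian one---is exactly what you flag as the obstacle, and your sketch (``anti-Hermiticity plus the vanishing vertical part'') does not yet close it. So relative to the paper you have done strictly more work on~(2), but under your own broader interpretation of G2 the proof is still incomplete.
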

\begin{proof}
(1) : Since Uhlmann curvature is the vertical component of
$w$-tortion 
${\bf M}\Ttw(\hlift(X),\hlift(Y))=(1/4)[L_X^S,L_Y^S]W$,
Uhlmann curvature vanishes iff
$(1/4)[L_X^S,L_Y^S]W$ is orthogonal to
the vertical subspace,
or equivalently, iff
for any skew Hermitian matrix $A$,
\begin{eqnarray}
{\rm Re}\,\tr \left\{(WA)^*(1/4)[L_X^S,L_Y^S]W\right\}
=(-1/4){\rm Re}\,\tr \left\{AW^*[L_X^S,L_Y^S]W\right\}
=0
\nonumber
\end{eqnarray}
holds true,
which is equivalent to $W^*[L_X^S,L_Y^S]W=0$.

(2) : The statement is derived directly from $(\ref{eqn:mttw})$.
\end{proof}

\begin{lemma}
As for the Algebraic conditions,
\begin{eqnarray}
A1\,\Leftarrow\, A2\Leftarrow\, A3,\nonumber
\nonumber
\end{eqnarray}
but non of the converses do not hold true.
\end{lemma}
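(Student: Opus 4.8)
The plan is to notice that the forward chain is one line and that everything of substance sits in the failure of the converses. For the implications: if a choice of SLD's makes $[L^S_X,L^S_Y]=0$, then for that very same choice $[L^S_X,L^S_Y]W=0$, and then $W^*[L^S_X,L^S_Y]W=0$; hence A3$\Rightarrow$A2$\Rightarrow$A1. I would then record the complementary fact that in the \emph{faithful} case the three conditions collapse: there $r=d$, so $W$ is a square invertible matrix and, as remarked after $(\ref{eqn:defsldx})$, the SLD is unique; multiplying $W^*[L^S_X,L^S_Y]W=0$ by $(W^*)^{-1}$ and $W^{-1}$ gives $[L^S_X,L^S_Y]=0$, so A1$\Leftrightarrow$A2$\Leftrightarrow$A3. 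Consequently any separating example must be \emph{non-faithful} ($r<d$), and the whole mechanism is driven by the non-uniqueness of the SLD on $\ker\rho$.

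First I would pin down that non-uniqueness. Let $P$ be the orthogonal projection onto ${\rm ran}\,\rho={\rm ran}\,W$ and $Q=I-P$. Solving $X\rho=\frac12(L^S_X\rho+\rho L^S_X)$ in the eigenbasis of $\rho$ determines every block of $L^S_X$ except $C_X:=QL^S_XQ$: tangency of $X$ forces the $\ker\rho$--$\ker\rho$ block of $\partial_X\rho$ to vanish while the corresponding denominator $p_j+p_k$ also vanishes, so $C_X$ is an arbitrary Hermitian operator on $\ker\rho$. Writing also $A_X=PL^S_XP$ and $B_X=PL^S_XQ$, the block expansion of the commutator gives $P[L^S_X,L^S_Y]P=[A_X,A_Y]+B_XB_Y^*-B_YB_X^*$, which is built from the determined blocks only and is therefore \emph{independent} of the free $C$, whereas $Q[L^S_X,L^S_Y]P$ and $Q[L^S_X,L^S_Y]Q$ do involve $C$. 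Since $W^*[L^S_X,L^S_Y]W=0\Leftrightarrow P[L^S_X,L^S_Y]P=0$ and $[L^S_X,L^S_Y]W=0\Leftrightarrow[L^S_X,L^S_Y]P=0$ (the columns of $W$ span ${\rm ran}\,P$, and a skew-Hermitian operator annihilates $P$ iff its $PP$- and $QP$-blocks vanish), the dictionary is: A1 $\Leftrightarrow$ the fixed block $P[L^S_X,L^S_Y]P$ vanishes; A2 $\Leftrightarrow$ in addition $Q[L^S_X,L^S_Y]P$ can be killed by some choice of $C$; A3 $\Leftrightarrow$ the whole commutator can be so killed.

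With this dictionary I would build both counterexamples in the smallest nontrivial case $d=3$, $r=2$, so that $\ker\rho={\rm span}\{|3\rgl\}$ is one-dimensional, the freedom reduces to two real scalars $c_X,c_Y$ (with $C_X=c_X|3\rgl\lgl3|$ and automatically $[C_X,C_Y]=0$), and $B_X=|b_X\rgl\lgl3|$. Taking $\rho=\diag(p_1,p_2,0)$ and choosing $\partial_X\rho,\partial_Y\rho$ (Hermitian, trace-zero, vanishing $33$-entry), one computes the $c$-independent quantities $P[L^S_X,L^S_Y]P=[A_X,A_Y]+|b_X\rgl\lgl b_Y|-|b_Y\rgl\lgl b_X|$ and $\lgl3|[L^S_X,L^S_Y]|3\rgl=2i\,{\rm Im}\lgl b_X|b_Y\rgl$, while $Q[L^S_X,L^S_Y]P$ is the covector $\lgl b_X|A_Y-\lgl b_Y|A_X+c_X\lgl b_Y|-c_Y\lgl b_X|$. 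For A2$\not\Rightarrow$A3 I would fix the data so that $P[L^S_X,L^S_Y]P=0$ and the single covector equation $Q[L^S_X,L^S_Y]P=0$ is solvable in real $(c_X,c_Y)$ (so A2 holds), yet ${\rm Im}\lgl b_X|b_Y\rgl\neq0$, so the fixed entry $\lgl3|[L^S_X,L^S_Y]|3\rgl\neq0$ can never be removed and A3 fails. For A1$\not\Rightarrow$A2 I would instead keep $P[L^S_X,L^S_Y]P=0$ but arrange that the fixed part $\lgl b_X|A_Y-\lgl b_Y|A_X$, expanded in the basis $\{\lgl b_X|,\lgl b_Y|\}$, has a non-real coefficient, so no \emph{real} $(c_X,c_Y)$ solves $Q[L^S_X,L^S_Y]P=0$ and A2 fails. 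The main obstacle is not any individual computation but honouring the quantifier ``for some/every choice of SLD'': the argument must show the obstruction persists over the entire Hermitian kernel block, which is precisely why it is organised around the $c$-independent blocks $P[L^S_X,L^S_Y]P$ and (when $\dim\ker\rho=1$) $\lgl3|[L^S_X,L^S_Y]|3\rgl$. A secondary but necessary check is that the chosen $\partial_X\rho,\partial_Y\rho$ are genuine tangent vectors to $\mathcal{P}_r$, i.e.\ Hermitian with vanishing $\ker\rho$--$\ker\rho$ block, which keeps the construction legitimate.
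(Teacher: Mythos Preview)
Your block–decomposition framework is cleaner and more conceptual than the paper's ad hoc matrices, and your treatment of A3$\Rightarrow$A2$\Rightarrow$A1 together with the A1$\not\Rightarrow$A2 counterexample plan is fine (the paper's own A1$\not\Rightarrow$A2 example lives exactly in $d=r+1$, with $b_Y=0$, $A_X=0$, and $b_X$ not an eigenvector of $A_Y$). However, your A2$\not\Rightarrow$A3 plan has a genuine gap: it cannot be carried out in $d=3$, $r=2$.

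The obstruction is hidden in your own formulas. Since $[L^S_X,L^S_Y]$ is traceless, $\tr\bigl(P[L^S_X,L^S_Y]P\bigr)=-\tr\bigl(Q[L^S_X,L^S_Y]Q\bigr)$. When $\dim\ker\rho=1$ the $QQ$-block is the scalar $2i\,{\rm Im}\lgl b_X|b_Y\rgl$, so $P[L^S_X,L^S_Y]P=0$ forces ${\rm Im}\lgl b_X|b_Y\rgl=0$ and hence $Q[L^S_X,L^S_Y]Q=0$. Thus in $d=3$, $r=2$ condition A2 (which already gives $PP=0$ and $QP=0$, hence also $PQ=0$ by skew-Hermiticity) automatically yields the full $[L^S_X,L^S_Y]=0$, i.e.\ A3. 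In other words, A2$\Leftrightarrow$A3 whenever $\dim\ker\rho=1$, and your proposed data ``$PP=0$ yet ${\rm Im}\lgl b_X|b_Y\rgl\neq0$'' is internally inconsistent.

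The paper's counterexample for A2$\not\Rightarrow$A3 accordingly sits in $d\ge r+2$: with $\dim\ker\rho\ge2$ the $QQ$-block is a genuine matrix whose vanishing is not implied by the scalar constraint $\tr(QQ)=0$, and the freedom $[C_X,C_Y]$ is no longer trivially zero. Concretely the paper takes $A_1=A_2=0$, $B_1^*=B$, $B_2^*=CB$ with $C$ Hermitian and $[C,BB^*]\neq0$; then $PP=0$, the choice $C_1=I$, $C_2=C$ kills $QP$, but the $QQ$-block contains the fixed piece $BB^*C-CBB^*\neq0$ which no choice of $C_1,C_2$ can cancel consistently with $QP=0$. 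Your framework adapts immediately once you move to $\dim\ker\rho\ge2$; only the choice of ``smallest nontrivial case'' needs correcting.
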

\begin{proof}
The former half of the assertion is trivial.
As for the $A1\Rightarrow A2$,
 we have the following counter-example:
\begin{eqnarray}
W=\left(
\begin{array}{cc}
\sigma\\
0
\end{array}
\right),\:\:\:
L_{1}^SW=\left(
\begin{array}{cc}
0\\
{\bf a}^*\sigma
\end{array}
\right),\:\:\:
L_{2}^SW=\left(
\begin{array}{cc}
B\sigma\\
{\bf 0}^*
\end{array}
\right),
\nonumber
\end{eqnarray}
where $\sigma$ is a $r\times r$ reversible matrix, 
${\bf a}$ a member of $R^m$ which is not eigenvector of 
$r\times r$ matrix $B$, and $0$ the vector all of whose components are zeros.
The following is the counter-example of $A2\Rightarrow A3$:
\begin{eqnarray}
W=\left(
\begin{array}{cc}
\sigma\\
0
\end{array}
\right),\:\:\:
L_1^SW=\left(
\begin{array}{cc}
0\\
B\sigma
\end{array}
\right),\:\:\:
L_2^SW=\left(
\begin{array}{cc}
0\\
CB\sigma
\end{array}
\right),
\label{eqn:gence23}
\end{eqnarray}
where $d$ is larger than or equal to $r+2$, 
$B$ is a $(d-r)\times r$ full rank matrix, 
$C$ is a $(d-r)\times (d-r)$ matrix,
 and $BB^*$ does not commute with $C$.
\end{proof}

If the model is faithful or pure, the conditions A1-A3 are equivalent.
Therefore, not only G1$\Leftarrow$G2,
but also G1$\Rightarrow$G2 holds true.
However, as is understood by the above two lemmas,
this is not the case generally:

\begin{theorem}
 G2 implies G1, but not vice versa. 
\end{theorem}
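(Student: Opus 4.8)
The plan is to deduce the theorem formally from the two preceding lemmas, which already express the geometric conditions G1 and G2 purely algebraically in terms of the SLDs. First I would invoke the first lemma, giving the two equivalences $\mathrm{G1}\Leftrightarrow \mathrm{A1}$ and $\mathrm{G2}\Leftrightarrow \mathrm{A2}$. After this translation the entire statement becomes a restatement, in geometric language, of the implication structure between A1 and A2 recorded in the second lemma, so no new geometric computation is required beyond what those lemmas supply.

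For the forward implication $\mathrm{G2}\Rightarrow \mathrm{G1}$ I would argue as follows. Assuming G2, the equivalence $\mathrm{G2}\Leftrightarrow \mathrm{A2}$ furnishes a choice of SLDs with $[L^S_X,L^S_Y]W=0$. Multiplying on the left by $W^*$ gives $W^*[L^S_X,L^S_Y]W=0$ for the \emph{same} choice of SLDs, which is exactly A1; this is the trivial implication $\mathrm{A2}\Rightarrow \mathrm{A1}$ of the second lemma. Applying the equivalence $\mathrm{A1}\Leftrightarrow \mathrm{G1}$ then yields G1, completing this direction. The point worth emphasizing is that A1 and A2 are \emph{existential} over the (non-unique) admissible SLDs in the general model, but here the witnessing SLD for A2 serves directly as a witness for A1, so the reduction is immediate.

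For the failure of the converse I would produce a point $W\in{\cal W}_r$ at which G1 holds but G2 fails, i.e.\ at which A1 is achievable while A2 cannot be achieved by \emph{any} admissible choice of SLD. This is precisely the $\mathrm{A1}\not\Rightarrow \mathrm{A2}$ assertion of the second lemma, whose counter-example I would adopt verbatim: taking
\begin{eqnarray}
W=\left(
\begin{array}{c}
\sigma\\
0
\end{array}
\right),\quad
L_1^S W=\left(
\begin{array}{c}
0\\
{\bf a}^*\sigma
\end{array}
\right),\quad
L_2^S W=\left(
\begin{array}{c}
B\sigma\\
{\bf 0}^*
\end{array}
\right),
\nonumber
\end{eqnarray}
the blocks $L_1^S W$ and $L_2^S W$ sit in complementary components relative to $W$, so the sandwiched commutator $W^*[L_1^S,L_2^S]W$ vanishes (giving A1, hence G1) while $[L_1^S,L_2^S]W$ cannot be made to vanish for any SLD choice (so A2, hence G2, fails). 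Combined with the first lemma this gives a point where Uhlmann curvature vanishes yet no generalized $w$-torsion vanishes on horizontal vectors, establishing that G1 does not imply G2.

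The only real obstacle is in this last step, and it is a verification rather than a construction: one must confirm that the chosen datum genuinely realizes a situation in ${\cal W}_r$ — that $\sigma$ being reversible makes $W$ full rank, that $\bf a$ failing to be an eigenvector of $B$ is what obstructs A2 for every admissible SLD (not merely for the displayed one), and that the failure of A2 is stable under the freedom in choosing the SLDs for a non-faithful model. Since this checking is exactly the content already carried out in the proof of the second lemma, for the theorem itself I would simply cite it, so the proof proper is short.
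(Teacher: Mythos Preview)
Your proposal is correct and matches the paper's approach exactly: the paper presents the theorem as an immediate consequence of the two preceding lemmas (it writes ``as is understood by the above two lemmas'' and gives no separate proof), and you have simply spelled out that chain of implications $\mathrm{G2}\Leftrightarrow\mathrm{A2}\Rightarrow\mathrm{A1}\Leftrightarrow\mathrm{G1}$ together with the counter-example to $\mathrm{A1}\Rightarrow\mathrm{A2}$.
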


\section{Pure state model revisited}
\label{sec:puret0}
In the pure state model, we can also introduce
the generalized $e$- and $w$- connections.
One remarkable fact about the pure state model is 
that the generalized $e$-connection in  the total space model ${\cal P}_1$ 
can be chosen so that the $e$-tortion vanishes. 

Actually, by taking $L^S_X=2X\rho$, we have
\begin{eqnarray}
\frac{1}{4}[[L^S_X, L^S_Y],\rho]
&=&[[X\rho, Y\rho],\rho]\nonumber\\
&=& \frac{1}{4}
[[(|l_X\rgl\lgl\phi|+|\phi\rgl\lgl l_X|),
(|l_Y\rgl\lgl\phi|+|\phi\rgl\lgl l_Y|) ], |\phi\rgl\lgl\phi|]\nonumber\\
&=&0.
\nonumber
\end{eqnarray}
Therefore,
if we  choose  SLD and the logarithmic derivative 
for horizontal vectors this way, 
Uhlamnn curvature is equal to the $w$-tortion for the horizontal vectors,
which means that  
the pure state model is locally quasi-classical 
iff there is a generalized $w$-connection whose tortion vanishes.

\chapter{Attainability of SLD CR bound}
\section{Commutative SLD and attainability of the bound}
In the faithful model,
a necessary and sufficient condition for SLD CR bound  
to be attained
is that the SLD's are commutative,
and in the pure state model, that condition 
is the existence of the commutative SLD's.
Hence, 
one might come up with the conjecture that
in general, attainability of SLD CR bound is
equivalent to the existence of commutative SLD's,
which is the algebraic condition A1 in section $\ref{sec:genvanish}$.

Actually, it is easily shown that SLD CR bound can be achieved
if SLD's commute. The optimal measurement
is the one in the theorem $\ref{theorem:mpCR}$,
that is, the simultaneous spectral decomposition of 
commutative SLD's.
However, in the followings, the case of $(\ref{eqn:gence23})$ 
is shown to be a counter-example of 
the converse of the statement. Therefore, 
the algebraic condition A1 is only a sufficient condition
for the attainable SLD CR bound.

Let 
${\cal K}$ be a $d+\max\{r,d-r\}$-dimensional complex vector space
such that ${\cal H}$ is a its subspace,
$P$ the  projection from 
${\cal K}$ onto ${\cal H}$,
$L^j\:(j=1,2)$ the matrices which satisfy 
\begin{eqnarray}
L^j W=
\sum_{k=1}^2[J^{S-1}]^{j,k}
{\bf M}\left(\hlift_W \left(\frac{\partial}{\partial\theta^k}\right)\right),
\label{eqn:defL^j}
\end{eqnarray}
$\tilde{L}^j\:(j=1,2)$ the matrices in ${\cal K}$
such that
\begin{eqnarray}
\tilde{L}^1\equiv\left(
\begin{array}{ccc}
0 & B^*C_1 & 0\\
C_1B &0 & C_1D \\
0 & D^*C_1 &0
\end{array}
\right),\:
\tilde{L}^2 \equiv \left(
\begin{array}{ccc}
0 & B^*C_2 & 0\\
C_2B &0 & C_2D \\
0 & D^*C_2 &0
\end{array}
\right),
\nonumber
\end{eqnarray}
where 
\begin{eqnarray}
C_1&\equiv&[J^{S-1}]^{1,1}I_{d-r}+[J^{S-1}]^{1,2}C,\nonumber\\
C_2&\equiv&[J^{S-1}]^{2,1}C+[J^{S-1}]^{1,1}I_{d-r},
\label{eqn:c1c2}
\end{eqnarray}
and $D$ will be defined soon.
Notice 
$\tilde{L}^j\:(j=1,2)$ are defined so that
they satisfy
\begin{eqnarray}
P\tilde{L}^j W=L^j W
\: (j=1,2),
\nonumber
\end{eqnarray}
where $W$ in the left hand side of the equation means
\begin{eqnarray}
\left(
\begin{array}{c}
W\\
0
\end{array}
\right).
\label{eqn:w=wo}
\end{eqnarray}

By virtue of
\begin{eqnarray}
\tilde{L}^1\tilde{L}^2=
\left(
\begin{array}{ccc}
B^*C_1C_2B & 0 & B^*C_1C_2D\\
0 & C_1(BB^*+DD^*)C_2  & 0\\
D^*C_1C_2B & 0 & D^*C_1C_2D
\end{array}
\right)
\nonumber
\end{eqnarray}
and  $(\ref{eqn:c1c2})$,
if we choose $D$ such that
\begin{eqnarray}
D=(\alpha I - BB^*)^{1/2},
\nonumber
\end{eqnarray}
where $\alpha$ is the maximum eigenvalue of the matrix $BB^*$,
$\tilde{L}^1$ and $\tilde{L}^2$ commute.

Let $E$ be a projection valued measurement such that
\begin{eqnarray}
\tilde{L}^i=\int \hat\theta^i E(d\hat\theta)\: (i=1,2),
\nonumber
\end{eqnarray}
and $M$ a measurement deduced from $E$ as
\begin{eqnarray}
M(B)=PE(B)P.
\nonumber
\end{eqnarray}
Then, we have
\begin{eqnarray}
V_{\pi(W)}[M]&=&V_{\pi(\tilde{W})}[E]\nonumber\\
&=&[{\rm Re}\,\tr \pi(\tilde{W}) \tilde{L}^j \tilde{L}^k]\nonumber\\
&=&[{\rm Re}\,\tr \pi(W) L^j L^k]\nonumber\\
&=&J^{S-1}.
\nonumber
\end{eqnarray}
Because, as is shown in the previous section,
any  matrices $A_1, A_2$ which satisfy
\begin{eqnarray}
A_j W=
{\bf M}\left(\hlift_W \left(\frac{\partial}{\partial\theta^j}\right)\right)\: 
(j=1,2)\nonumber
\end{eqnarray}
do not commute with each other,
$(\ref{eqn:gence23})$ is a counter-example of the conjecture.

\section{A necessary condition and the main conjectures}
\begin{theorem}
If SLD CR bound is achieved,
G2($\Leftrightarrow$A1) in section $\ref{sec:genvanish}$
holds true.
\end{theorem}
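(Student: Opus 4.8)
The plan is to show that attainment forces the optimal estimator to be assembled from a \emph{commuting} family of symmetric logarithmic derivatives, and then to read condition G2 (equivalently A2 of section \ref{sec:genvanish}) off the commutator acting on $W$. First I would exploit that attaining the bound, $V[\hat\theta\,|\,M]=J^{S-1}$, collapses the whole chain of inequalities of Lemmas \ref{lemma:genVZ} and \ref{lemma:genZJ}: since $J^{S-1}=V\ge{\rm Re}\,Z\ge J^{S-1}$, both inequalities are equalities, so ${\rm Re}\,Z=V=J^{S-1}$. The equality clause of Lemma \ref{lemma:genZJ} then pins the estimation vectors down to ${\bf M}^i(\hat\theta,M,W)=\sum_k[J^{S-1}]^{ik}L^S_kW$. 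Writing $X^i\equiv\int(\hat\theta^i-\theta^i)M(d\omega)$, which is Hermitian because $M$ is a POVM and $\hat\theta^i$ is real, this says $X^iW=\sum_k[J^{S-1}]^{ik}L^S_kW$; multiplying on the right by $W^*$ gives $X^i\rho=\bar L^i\rho$ with $\bar L^i=\sum_k[J^{S-1}]^{ik}L^S_k$, and taking adjoints gives $\rho X^i=\rho\bar L^i$, so each Hermitian $X^i$ is itself a legitimate SLD for the tangent vector $\sum_k[J^{S-1}]^{ik}\partial_k$.

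Next I would dilate. By Naimark's theorem write $M(\cdot)=PE(\cdot)P$ with $E$ a projection-valued measure on $\mathcal{K}\supset\mathcal{H}$ and $P$ the projection onto $\mathcal{H}$, and set $\hat X^i\equiv\int(\hat\theta^i-\theta^i)E(d\omega)$. Since the $\hat X^i$ are functions of a single PVM they mutually commute. The essential use of the \emph{first} equality ${\rm Re}\,Z=V$ is the identity, for every real vector $a$ with $\hat X_a=\sum_ia_i\hat X^i$ and $\tilde W$ the embedding of $W$ into $\mathcal{K}$ as in $(\ref{eqn:w=wo})$,
\[
a^TVa=\tr(\hat X_a\tilde W)^*(\hat X_a\tilde W),\qquad
a^T({\rm Re}\,Z)a=\tr(P\hat X_a\tilde W)^*(P\hat X_a\tilde W),
\]
whence $a^T(V-{\rm Re}\,Z)a=\tr\big((I-P)\hat X_a\tilde W\big)^*\big((I-P)\hat X_a\tilde W\big)\ge0$. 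Equality therefore forces $(I-P)\hat X^i\tilde W=0$ for every $i$; that is, $\hat X^i\tilde W$ leaks nothing into the dilation and equals the embedding of $X^iW$.

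Finally I would transfer commutativity downstairs. Applying $[\hat X^i,\hat X^j]=0$ to $\tilde W$ gives $\hat X^i\hat X^j\tilde W=\hat X^j\hat X^i\tilde W$; projecting with $P$, using $P\hat X^iP=X^i$ and the no-leakage relation $\hat X^j\tilde W=$ (the embedding of) $X^jW$, collapses this to $X^iX^jW=X^jX^iW$, i.e.\ $[X^i,X^j]W=0$. Setting $L^S_k\equiv\sum_iJ^S_{ki}X^i$ yields SLDs for $\partial_k$ (by linearity of the SLD correspondence and $J^SJ^{S-1}=I$) obeying $[L^S_k,L^S_l]W=0$, which is precisely the algebraic condition A2; by the lemma of section \ref{sec:genvanish} equating A2 with G2 through the $w$-tortion formula $(\ref{eqn:mttw})$, G2 holds.

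I expect the third step to be the delicate one: the argument lives or dies on using \emph{both} equalities in Lemma \ref{lemma:genVZ} at once — the second, $V\ge Z$, delivering ${\rm Im}\,Z=0$ and the first, $V\ge{\rm Re}\,Z$, delivering the no-leakage relation $\hat X^i\tilde W\in\mathcal{H}\otimes\mathcal{H}'$ — and on handling the rank-deficiency of $W$ carefully, so that the constructed $X^i$ are genuine SLDs despite the non-uniqueness of the SLD off the support of $\rho$. I would also stress that the method produces only $[L^S_X,L^S_Y]W=0$ and not $[L^S_X,L^S_Y]=0$: the example $(\ref{eqn:gence23})$ already exhibits an attainable model for which the latter fails, so A2 (equivalently G2), rather than the stronger A3, is the sharp necessary condition.
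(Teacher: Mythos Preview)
Your proposal is correct and follows essentially the same line as the paper's proof: Naimark-dilate the optimal measurement to a PVM $E$, use the equality $V={\rm Re}\,Z$ to force the no-leakage relation $(I-P)\hat X^i\tilde W=0$, and then push the commutativity of the spectral integrals $\hat X^i$ down through $P$ to obtain $[X^i,X^j]W=0$, hence $[L^S_k,L^S_l]W=0$ for the SLD choice $L^S_k=\sum_iJ^S_{ki}X^i$. Your exposition is in fact cleaner than the paper's (which has an apparent slip, restating $P\tilde L^jW=L^jW$ where the stronger $\tilde L^jW=L^jW$ is meant), and you correctly identify the conclusion as A2$\Leftrightarrow$G2 rather than the misprinted ``A1'' in the theorem heading.
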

\begin{proof}
Let $M$ be a measurement which satisfies $(\ref{eqn:genxx=J})$,
$E$ the Naimark dilation of $M$,
${\cal K}\supset{\cal H}$ the Hilbert space which $E$ lives in,
$P$ the projection from ${\cal H}$ onto ${\cal K}$,
$W$ a member of ${\cal W}_d$ such that $\rho(\theta)=\pi(W)$,
$L^j\,(j=1,...,m)$ the operators in ${\cal H}$
which satisfy $(\ref{eqn:defL^j})$,
$\tilde{L}^j\,(j=1,...,m)$ the operators ${\cal K}$ which satisfy
\begin{eqnarray}
{\bf M}^j(\hat\theta, E, W)=\tilde{L}^j W,
\nonumber
\end{eqnarray}
where $W$ in the equation means $(\ref{eqn:w=wo})$.
Then, for SLD CR bound to be attained,
\begin{eqnarray}
[V[E]]_{jj}(=[V[M]]_{jj})=tr \rho(\theta)(L^j)^2(=[J^{S-1}]^{jj}),
\nonumber
\end{eqnarray}
or, its equivalence,
\begin{eqnarray}
\tr \rho(\theta)(\tilde{L}^j)^2=\tr \rho(\theta)(L^j)^2
\label{eqn:extL=L}
\end{eqnarray}
must hold true.
Since 
\begin{eqnarray}
P\tilde{L}^j W= L^j W
\nonumber
\end{eqnarray}
follows directly from their definitions,
we have
\begin{eqnarray}
\tr W^*(\tilde{L}^j)^2 W 
&=&\tr W^*(\tilde{L}^j)P(\tilde{L}^j)+\tr W^*(\tilde{L}^j)(1-P)(\tilde{L}^j)
\nonumber\\
&=&\tr W^*(L^j)^2W=\tr \rho(\theta)(L^j)^2,
\nonumber
\end{eqnarray}
which, combined with $(\ref{eqn:extL=L})$, leads to
\begin{eqnarray}
P\tilde{L}^j W=L^j W.
\nonumber
\end{eqnarray}
Hence, we have
\begin{eqnarray}
&&P(\tilde{L}^j \tilde{L}^k-\tilde{L}^k \tilde{L}^j)W
\nonumber\\
&=&P(\tilde{L}^j P\tilde{L}^k-\tilde{L}^k P\tilde{L}^j)W
\nonumber\\
&=&P(\tilde{L}^j P\tilde{L}^k P-\tilde{L}^k P\tilde{L}^j P)W.
\nonumber
\end{eqnarray}
Since 
$[\tilde{L}^j, \tilde{L}^k]=0$ follows from the definition,
this means
\begin{eqnarray}
[P\tilde{L}^j P, P\tilde{L}^k P]W=0.
\nonumber
\end{eqnarray}
Because we can take the SLD's such that
\begin{eqnarray}
P\tilde{L}^j P =\sum_k [J^{S-1}]^{j,k}L^S_k,
\nonumber
\end{eqnarray}
we have the theorem.
\end{proof}

\begin{conjecture}
If G2($\Leftrightarrow$A1) in section $\ref{sec:genvanish}$
holds true, the equality in CR inequality can be achieved.
\end{conjecture}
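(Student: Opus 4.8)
The plan is to prove sufficiency by reversing the argument of the preceding necessity theorem and actually constructing the measurement that that proof only extracted. By Lemmas~\ref{lemma:genVZ} and \ref{lemma:genZJ}, the SLD CR bound is attained at $\theta$ precisely when there is a locally unbiased $M$ whose estimation vectors satisfy $(\ref{eqn:genxx=J})$ together with the equality $V_{\theta}[M]={\rm Re}\,Z[\hat\theta,M]$, which by the inequality $(\ref{eqn:genVZ2})$ forces ${\rm Im}\,Z[\hat\theta,M]=0$. Writing $L^{j}\equiv\sum_{k}[J^{S-1}]^{j,k}L^{S}_{k}$, the hypothesis A2 ($\Leftrightarrow$ G2) of section~\ref{sec:genvanish} gives $[L^{j},L^{k}]W=0$ for all $j,k$; since for the target vectors ${\bf M}^{j}=L^{j}W$ one has $Z^{jk}=\tr\rho\,L^{k}L^{j}$ and hence ${\rm Im}\,Z^{jk}$ proportional to $\tr W^{*}[L^{j},L^{k}]W$, the imaginary part of $Z$ vanishes automatically. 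Thus the only remaining issue is to exhibit a genuine measurement realizing the vectors $L^{j}W$ with covariance exactly $J^{S-1}$.

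First I would reduce the whole problem to a commuting-dilation statement, exactly as in the explicit construction of the previous section. That is, I would seek a finite-dimensional Hilbert space $\mathcal{K}\supset\mathcal{H}$, with $P$ the projection of $\mathcal{K}$ onto $\mathcal{H}$, together with a commuting family of Hermitian operators $\tilde{L}^{1},\dots,\tilde{L}^{m}$ on $\mathcal{K}$ such that $\tilde{L}^{j}\tilde{W}=L^{j}W$, where $\tilde{W}$ denotes $W$ regarded as an element of $\mathcal{K}$, so that $\tilde{L}^{j}\tilde{W}$ has no component orthogonal to $\mathcal{H}$. Given such a family, its joint spectral measure $E$ is a projection valued measurement on $\R^{m}$, and setting $M(B)=PE(B)P$ with $\hat\theta^{j}$ equal to the $j$-th spectral coordinate yields a locally unbiased estimator, because $P\tilde{L}^{j}\tilde{W}=L^{j}W$ reproduces $(\ref{eqn:genxx=J})$. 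Its covariance is then computed exactly as in the preceding section,
\[
V_{\pi(W)}[M]=V_{\pi(\tilde{W})}[E]=\left[{\rm Re}\,\tr\pi(\tilde{W})\tilde{L}^{j}\tilde{L}^{k}\right]=\left[{\rm Re}\,\tr\rho\,L^{j}L^{k}\right]=J^{S-1},
\]
where the middle equalities use $\tilde{L}^{j}\tilde{W}=L^{j}W$ and the Hermiticity of the $\tilde{L}^{j}$. Hence the bound is attained.

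The hard part will be constructing this commuting Hermitian dilation from the hypothesis alone. Decomposing $\mathcal{H}={\rm supp}\,\rho\oplus\ker\rho$ and writing each $L^{j}$ in the corresponding block form, the assumption $[L^{j},L^{k}]W=0$ says precisely that every commutator is supported off the range of $W$, so the obstruction to global commutativity lives entirely in the finite-dimensional corner $\ker\rho$. The task is then a commuting matrix-completion problem: enlarge $\mathcal{H}$ and correct the off-diagonal blocks so that the extended operators commute while their action on ${\rm supp}\,\rho$ is unchanged. The counterexample $(\ref{eqn:gence23})$ was resolved by the explicit device $D=(\alpha I-BB^{*})^{1/2}$, which cancels the single obstruction in the case $m=2$ with special structure; the difficulty is to produce such a correction uniformly, for arbitrary $m$ and arbitrary block data, which appears to require either an inductive dilation argument or an abstract operator-system embedding rather than a closed-form trick. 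I expect this dilation step to be the genuine obstacle, and it is the reason the statement is posed as a conjecture rather than a theorem.
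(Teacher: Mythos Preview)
The paper does not prove this statement; it is explicitly posed as a Conjecture and left open. The only remarks the paper makes are that the conjecture holds in the two extreme cases already settled earlier --- pure state models (via section~\ref{sec:puret0} and theorem~\ref{theorem:purelsls0}) and faithful models (via $G1\Leftrightarrow G2$ together with theorem~\ref{th:estuhl1}) --- with no argument offered for general rank.

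Your proposal is therefore an attack on an open problem rather than something to be compared against an existing proof. The reduction you carry out is sound and is exactly the converse of the necessity argument in the preceding theorem: the requirement $\tilde{L}^{j}\tilde{W}=L^{j}W$ (no component outside $\mathcal{H}$) together with Hermiticity gives $\tr\pi(\tilde{W})\tilde{L}^{j}\tilde{L}^{k}=\tr(\tilde{L}^{j}\tilde{W})^{*}(\tilde{L}^{k}\tilde{W})=\tr\rho\,L^{j}L^{k}$, so the covariance collapses to $J^{S-1}$ as you claim, and the explicit construction preceding the conjecture resolves $(\ref{eqn:gence23})$ by precisely such a dilation. You have correctly identified the commuting-Hermitian-dilation step as the genuine obstruction, and the paper gives no hint toward it beyond the single worked example; that is why the statement remains a conjecture.

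One small point: the label ``G2($\Leftrightarrow$A1)'' in the statement (and in the preceding theorem) is a slip in the paper --- the lemma in section~\ref{sec:genvanish} actually gives $G2\Leftrightarrow A2$ and $G1\Leftrightarrow A1$. You silently corrected this by working with A2, i.e.\ $[L^{j},L^{k}]W=0$, which is the right hypothesis and is what the necessity proof actually establishes.
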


In the pure state model, 
this conjecture is true as is mentioned 
in the end of section $\ref{sec:gentwt}$.
Because of $G1\Leftrightarrow G2$ and theorem $\ref{th:estuhl1}$,
the conjecture is valid also in the faithful model.

\begin{conjecture}
$w$-tortion is a good index of noncommutative nature of the model.
\end{conjecture}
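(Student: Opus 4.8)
The plan is to turn the informal assertion into a precise, checkable statement before attempting anything. The natural quantitative measure of the ``noncommutative nature'' of the model is the gap between the attainable CR type bound and the SLD CR bound, namely $\CR(G,\theta,{\cal M})-\Tr GJ^{S-1}$, together with the spectrum $\{\pm i\beta_j\}$ of the operator ${\bf D}$ (represented by $J^{S-1}\tilde{J}$). I would then declare the $w$-torsion a \emph{good index} precisely when two things hold: (i) it vanishes exactly on the locally quasi-classical locus, and (ii) a suitable invariant norm of it is monotonically tied to that gap. The whole difficulty is that, by Theorem \ref{theorem:lslsdecom}, the $w$-torsion $\Ttw(\hlift(X),\hlift(Y))$ is not a single object but splits into a horizontal part (the $e$-torsion ${\bf L}^S\Te(X,Y)$) and a vertical part (the Uhlmann curvature $F_{XY}=\sum_{i,j}F_{ij}x^iy^j$), so the first task is to decide which part is meant to be the index.

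For the qualitative half I would use exactly this decomposition. The vertical component of the $w$-torsion is the curvature form, and by Theorem \ref{theorem:f0lsls0} in the faithful case and Theorem \ref{theorem:purelsls0} in the pure case one has $F_{ij}=0\Leftrightarrow[L^S_i,L^S_j]=0\Leftrightarrow$ attainability of the SLD CR bound. Hence the vertical part of the $w$-torsion vanishes iff the model is locally quasi-classical, which already isolates it as the correct carrier of noncommutativity and discharges requirement (i). The theorem just proved above, that attainability forces A1 ($\Leftrightarrow$G2 of section \ref{sec:genvanish}), extends the qualitative direction to the general model, since G2 is the statement that a generalized $w$-torsion vanishes on horizontal vectors.

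For the quantitative half I would compute the gap explicitly on the model classes whose bound is already known, and read it off as a function of the vertical torsion. In the two-parameter pure state model the curvature collapses to the single invariant $\beta$, and $(\ref{eqn:minvv})$ gives $\CR(J^S)=4/(1+\sqrt{1-|\beta|^2})$, strictly increasing in $|\beta|$; Theorem \ref{theorem:2pcalv} then reads ``more vertical torsion shrinks ${\cal V}$.'' In the coherent model the closed form $\CR(G)=\Tr GJ^{S-1}+\Tr\abs\,GJ^{S-1}\tilde{J}J^{S-1}$ expresses the entire gap as a trace-norm of a congruence of $\tilde{J}$, i.e.\ of the vertical $w$-torsion; and the general pure-state formula for $\CR(J^S)$ as a sum of terms $2/(1+\sqrt{1-|\alpha|^2})$ over the eigenvalues $\alpha$ of ${\bf D}$ does the same. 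Assembling these, I would state: on every model for which the attainable bound is known, the gap is a monotone function of the eigenvalues of the vertical $w$-torsion operator, vanishing iff that operator is zero.

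The hard part is twofold. First, the horizontal piece of the $w$-torsion, the $e$-torsion, carries no evident estimation-theoretic meaning; I would have to argue that it is a gauge artefact — it depends on the freedom in choosing the logarithmic derivative discussed in section \ref{sec:gentwt} and section \ref{sec:genvanish} — and therefore must be quotiented out, so that only the vertical Uhlmann part is retained as the index. Second, and more seriously, for genuinely mixed, non-faithful models the attainable CR type bound is unknown, so no exact monotonicity can be established there; at best one verifies the qualitative equivalence (vanishing vertical torsion $\Leftrightarrow$ A1/G2) and conjectures the quantitative refinement. The very result proved just above, that A1$\Leftrightarrow$G2 while G1 and G2 now diverge, shows that even the qualitative link is gauge-sensitive in the general case, and pinning down which $w$-torsion is ``the'' index is exactly the point at which the conjecture remains open.
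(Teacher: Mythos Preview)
The statement you are addressing is not a theorem in the paper but a \emph{conjecture}. The paper does not supply a proof; it merely follows the conjecture with the one-line remark ``This statement is proved to be true in the 2-parameter pure state model,'' referring back to Theorem~\ref{theorem:2pcalv} and the surrounding discussion in section~\ref{sec:2parameter}. There is therefore no ``paper's own proof'' to compare your proposal against.

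Your proposal is considerably more ambitious than what the paper claims. You have correctly recognized that the assertion is informal and needs a precise formulation, and your two-pronged plan (qualitative vanishing criterion plus quantitative monotonicity in the known cases) is a sensible way to organize the supporting evidence scattered through the thesis: the 2-parameter pure state result, the coherent model formula, the general pure-state $\CR(J^S)$ formula, and the faithful-model equivalence $F_{ij}=0\Leftrightarrow[L^S_i,L^S_j]=0$. You are also right that the horizontal/vertical split of Theorem~\ref{theorem:lslsdecom} is the relevant structural tool and that the gauge-dependence of the generalized $w$-torsion in the non-faithful, non-pure case (section~\ref{sec:gentwt}) is precisely where the conjecture remains genuinely open.

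What you should understand is that you have not proved the conjecture, nor does the paper; you have written a research outline that collects the known special cases and isolates the obstruction. That is appropriate for a conjecture, but it is not a proof, and you should present it as such rather than as a proof proposal.
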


This statement is proved to be true in the 2-parameter pure state model.



\chapter{Conclusions}

As for a geometrical side of the thesis, $w$-connection is proposed 
as a medium to unify Nagaoka's information geometry 
and Uhlmann's parallelism.

Our conjecture is that the tortion of 
$w$-connection is a good measure of noncommutative nature of the model.
This conjecture is proved 
for the 2-parameter pure state model.
The following seem to support the conjecture:
\begin{itemize}
\item[(1)] The attainable CR type bound of 
			the general pure state model 
			with the weight matrix $J^S$.
\item[(2)] The condition for the pure state model and the faithful model 
			to be locally quasi-classical.
\end{itemize}
The proof (or disproof) of the conjecture is an open problem.

As for the global property of the model,
the faithful model is quasi-classical iff the model is parallel.
However, in the pure state model,
being parallel is sufficient condition but not necessary condition.
Therefore, Uhlmann's RPF might not characterize the global
property of the model in general.
However, the condition for being parallel seems to have
intrinsic relation with some kind of symmetry.

As for the determination of the attainable CR type bound,
we succeeded in the case of the 2-parameter pure state model 
and the coherent model. 
The CR type bound with weight matrix $J^S$ is also 
calculated for arbitrary pure state model.

We successfully applied the result in the estimation theory
to the analysis of the position-momentum uncertainty. 
The main points are that the mean value of the position and the momentum
can be estimated up to arbitrary efficiency and 
that Planck's constant has nothing 
to do with noncommutative nature of the position-momentum shifted model.
As for the analysis of the time-energy uncertainty,
we succeeded in the formulation of the problem
in a good shape without the help of `time operator'.

\end{document}